\newcommand*\samethanks[1][\value{footnote}]{\footnotemark[#1]}
\newtheorem{fact}{Fact}[section]
\newtheorem{claim}{Claim}[section]
\newtheorem{remark}{Remark}[section]
\newtheorem{theorem}{Theorem}[section]
\newtheorem{lemma}{Lemma}[section]
\newtheorem{corollary}{Corollary}[section]
\newtheorem{definition}{Definition}[section]
\newtheorem{proposition}{Proposition}[section]
\newcommand{\eps}{\varepsilon}
\newcommand{\e}{\eps}
\newcommand{\eat}[1]{}
\newcommand{\R}{\mathbb{R}}
\newcommand{\E}{\ensuremath{\mathbb{E}}}
\newcommand{\Z}{\mathbb{Z}}
\newcommand{\poly}{\operatorname{poly}}
\newcommand{\diam}{\ensuremath{\mathsf{diam}}\xspace}
\newcommand{\Exp}{\ensuremath{\mathsf{Exp}}\xspace}
\newcommand{\Dim}{\ensuremath{\mathsf{dim}}\xspace}
\newcommand{\DDim}{\ensuremath{\mathsf{ddim}}\xspace}
\newcommand{\PDim}{\ensuremath{\mathsf{pdim}}\xspace}
\newcommand{\Par}{\ensuremath{\mathsf{par}}\xspace}
\newcommand{\Des}{\ensuremath{\mathsf{des}}\xspace}
\newcommand{\cost}{\ensuremath{\mathsf{cost}}\xspace}
\newcommand{\range}{\ensuremath{\mathsf{range}}\xspace}
\newcommand{\ranges}{\ensuremath{\mathsf{ranges}}\xspace}
\newcommand{\Bdt}{\ensuremath{B^{\delta}}\xspace}
\newcommand{\Bd}{\ensuremath{B^{d}}\xspace}
\newcommand{\Bdn}{\ensuremath{B^{d_n}}\xspace}
\newcommand{\FL}{Feldman-Langberg\xspace}
\newcommand{\ignore}[1]{}
\newcommand{\shaofeng}[1]{{\footnotesize\color{purple}[Shaofeng: #1]}}
\newcommand{\xuan}[1]{{\footnotesize\color{magenta}[Xuan: #1]}}
\newcommand{\jian}[1]{{\footnotesize\color{blue}[Jian: #1]}}
\newcommand{\lingxiao}[1]{{\footnotesize\color{red}[LH: #1]}}
\newcommand{\shaofeng}[1]{}
\newcommand{\xuan}[1]{}
\newcommand{\jian}[1]{}
\newcommand{\lingxiao}[1]{}
\newcommand{\kdist}{{\mathcal{K}}}
\newcommand{\calA}{{\mathcal{A}}}
\newcommand{\calD}{{\mathcal{D}}}
\newcommand{\calF}{{\mathcal{F}}}
\newcommand{\calG}{{\mathcal{G}}}
\newcommand{\calH}{{\mathcal{H}}}
\newcommand{\calO}{{[X]^k}}
\newcommand{\calP}{{\mathcal{P}}}
\newcommand{\calR}{{\mathcal{R}}}
\newcommand{\calS}{{\mathcal{S}}}
\newcommand{\calZ}{{\mathcal{Z}}}
\title{	$\varepsilon$-Coresets for Clustering (with Outliers)
	in Doubling Metrics}
\date{}
\author{
	Lingxiao Huang\thanks{\'{E}cole polytechnique f\'{e}d\'{e}rale de Lausanne.~\texttt{lingxiao.huang@epfl.ch}}
		\and Shaofeng H.-C. Jiang\thanks{Weizmann Institute of Science.~\texttt{shaofeng.jiang@weizmann.ac.il}} 
		\and Jian Li\thanks{
		Institute for Interdisciplinary Information Sciences, Tsinghua University. \texttt{lijian83@mail.tsinghua.edu.cn, wu3412790@gmail.com}}
		\and Xuan Wu\samethanks
}
\author{Lingxiao Huang \footnotemark[1] \and Shaofeng H.-C. Jiang\footnotemark[2] \and Jian Li \footnotemark[3]  \and Xuan Wu \footnotemark[3] }
\begin{document}

\begin{titlepage}

\maketitle

\begin{abstract}
	\eat{
	\jian{Other possible titles? e.g.,
		
		(1) remove succinct?
		
		(2) (for combined paper)
		$\epsilon$-Coresets for clustering (with outliers)
		in doubling metrics
		
		(3) $\epsilon$-Coresets for Clustering Problems: Robustness and Doubling Metrics
		
		(4) Robust Coresets for Clustering Problem in Doubling Metrics.
		}
	}
	We study the problem of constructing $\eps$-coresets for the $(k, z)$-clustering problem in a doubling metric $M(X, d)$. An $\eps$-coreset is a weighted subset $S\subseteq X$ with weight function $w : S \rightarrow \mathbb{R}_{\geq 0}$, such that for any
	$k$-subset $C \in [X]^k$, it holds that $\sum_{x \in S}{w(x) \cdot d^z(x, C)} \in (1 \pm \eps) \cdot \sum_{x \in X}{d^z(x, C)}$.
	
	We present an efficient algorithm that constructs an $\eps$-coreset for the $(k, z)$-clustering problem in $M(X, d)$, where the size of the coreset only depends on the parameters $k, z, \epsilon$ and the doubling dimension $\DDim(M)$.
	To the best of our knowledge, this is the first efficient $\epsilon$-coreset construction of size \emph{independent} of $|X|$ for general clustering problems in doubling metrics.

	To this end, we establish the first relation between the doubling dimension
	of $M(X, d)$ and the shattering dimension (or VC-dimension) of the range space induced by the distance $d$.
	Such a relation is not known before, since one can easily construct instances in which
	neither one can be bounded by (some function of) the other.
	Surprisingly, we show that if we allow a small $(1\pm\epsilon)$-distortion
	of the distance function $d$ (the distorted distance is called the smoothed distance function),  the shattering dimension can be upper bounded by ${O}(\epsilon^{-O(\DDim(M))})$.
	For the purpose of coreset construction,
	the above bound does not suffice as it only works for unweighted spaces.
	Therefore, we introduce the notion of $\tau$-error {\em probabilistic shattering dimension},
    and prove a (drastically better) upper bound of
	$O( \DDim(M)\cdot \log(1/\eps) +\log\log{\frac{1}{\tau}} )$ for the probabilistic shattering dimension
	for weighted doubling metrics.
	As it turns out, an upper bound for the probabilistic shattering dimension
	is enough for constructing a small coreset.
	We believe the new relation between doubling and shattering dimensions
	is of independent interest and may find other applications.
	
	
	
	Furthermore, we study {\em robust coresets}
	for $(k,z)$-clustering with outliers
	in a doubling metric. We show an improved connection between $\alpha$-approximation and robust coresets. 
	This also leads to improvement upon the previous best known  bound of the size of 
	robust coreset for Euclidean space [Feldman and Langberg, STOC 11].
	The new bound entails a few new results in
	clustering and property testing.
	
	As another application, we show constant-sized $(\eps, k, z)$-centroid sets in doubling metrics can be constructed by extending our coreset construction.
	Prior to our result, constant-sized centroid sets for general clustering problems were only known for Euclidean spaces.
	We can apply our centroid set to accelerate the local search algorithm (studied in [Friggstad et al., FOCS 2016]) for the $(k, z)$-clustering problem in doubling metrics.
	
\end{abstract}

\thispagestyle{empty}
 \end{titlepage}

\section{Introduction}\label{section:intro}


We study the $(k, z)$-clustering problem in a metric space $M(X, d)$.
In the $(k, z)$-clustering problem, the objective is to find a $k$-subset $C \in [X]^k$
(which we call the set of centers), such that the objective function $\kdist_z(X, C) := \sum_{x \in X}{d^{z}(x, C)}$ is minimized, where $d(x, C) := \min_{y \in C}{d(x, y)}$.
The $(k,z)$-clustering problem is a general and fundamental problem
in many areas including approximation algorithms, unsupervised learning and computational geometry~\cite{lloyd1982least,tan2006cluster,arthur2007k,coates2012learning}.
In particular, $(k,1)$-clustering is the well known $k$-median problem,
$(k,2)$-clustering the $k$-means problem, and $(k,\infty)$-clustering the $k$-center problem.

\vspace{0.1cm}

\noindent\textbf{Coresets.}
A powerful technique for solving the $(k, z)$-clustering problem is to construct coresets~\cite{harpeled2004on,chen2006k,FL11,feldman2013turning}.
A coreset is a weighted subset of the point set, such that
for any set of $k$ centers,
the objective function computed from the coreset is approximately the same as
that computed from all points in $X$.
Hence, a coreset can be used as proxy for the full data
set: one can apply the same algorithm on the coreset, and the
result on the coreset approximates that on the full data set.
\begin{definition}
An $\eps$-coreset for the $(k, z)$-clustering problem
in metric space $M(X,d)$ is a weighted subset $S$ of $X$ with weight $w : S \rightarrow \mathbb{R}_{\geq 0}$
\footnote{
	Some previous work needs negative weights,
	but we only need nonnegative weights.
	}, such that for any $k$-subset $C \in [X]^k$,
	$$
	\sum_{x \in S}{w(x) \cdot d^z(x, C)} \in (1 \pm \eps) \cdot \kdist_z(X, C).
	$$
\end{definition}

Typically, we require that the size of the coreset depends on
$1/\eps$, $k$ and $z$ (independent of $|X|$).
Apparently, a small coreset is much cheaper to store and can be used to estimate the objective function more efficiently. In fact, constructing coresets can be useful in
designing more efficient approximation algorithms for many clustering problems, with various
constraints and outliers~\cite{FL11,feldman2012data,feldman2013turning,braverman2016new,DBLP:conf/focs/FriggstadRS16,lucic2017training}.

\vspace{0.1cm}
\noindent\textbf{Doubling Metrics.}
In this paper, we mainly consider
metric spaces with bounded \emph{doubling dimension}~\cite{Assouad83,DBLP:conf/focs/GuptaKL03}.
The doubling dimension of a metric space $M$, denoted as $\DDim(M)$, is the smallest integer $t$ such that any ball can be covered by at most $2^t$ balls of half the radius.
A doubling metric is a metric space of bounded doubling dimension.
The doubling dimension measures the intrinsic dimensionality of a general metric space, and it generalizes the dimension of normed vector spaces, where $t$-dimensional $\ell_p$ space has doubling dimension $O(t)$~\cite{Assouad83}.

Many problems have been studied in doubling metrics, such as spanners~\cite{DBLP:conf/compgeom/GaoGN04,CG06,DBLP:conf/soda/GottliebR08,DBLP:conf/esa/GottliebR08,DBLP:journals/dcg/ChanG09,DBLP:journals/siamcomp/ChanLNS15,DBLP:journals/algorithmica/ChanLN15,DBLP:conf/stoc/Solomon14,DBLP:journals/talg/ChanGMZ16},
metric embedding~\cite{DBLP:conf/focs/GuptaKL03,DBLP:conf/stoc/AbrahamBN06,DBLP:journals/jacm/ChanGT10},
nearest neighbor search~\cite{Cla99,IndykNaor,DBLP:conf/compgeom/Har-PeledM05},
and approximation algorithms~\cite{DBLP:conf/stoc/Talwar04,DBLP:journals/siamcomp/BartalGK16,DBLP:journals/dcg/ChanE11,DBLP:journals/talg/ChanJ18,DBLP:conf/focs/ChanHJ16,DBLP:conf/focs/FriggstadRS16}.
Apart from the above work,  
some machine learning problems have also been studied in the context of doubling metrics~\cite{DBLP:journals/jcss/BshoutyLL09,DBLP:journals/tit/GottliebKK14}.
However, to the best of our knowledge, no previous work has studied constructing coresets in doubling metrics.

\subsection{Our Results}
Our main result is an efficient construction of $\eps$-coresets for the $(k, z)$-clustering problem in doubling metrics. The size of our coreset does not depend on the number of input points. Moreover, both the running time and the size
of the coreset depend {\em polynomially} on the doubling dimension and $k$.
The result is stated in the following theorem.
\begin{theorem} (informal version of Theorem~\ref{thm:coreset})
	\label{theorem:main}
	Consider a metric space $M(X,d)$ with $n$ points.
	Let real numbers $0<\e,\tau<1/100$, $z > 0$, and integer $k \geq 1$.
	There exists an algorithm running in $\poly(n)$ time (assuming oracle access to the distance function),
	that constructs an $\eps$-coreset of size $\tilde{O}(2^{O(z\log z)}\cdot k^3\cdot \DDim(M)/\eps^2)$ for the $(k,z)$-clustering problem with probability at least $1-\tau$.
\end{theorem}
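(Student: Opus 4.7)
The plan is to adopt the Feldman-Langberg importance-sampling framework, but with the classical pseudo-dimension replaced by the probabilistic shattering dimension announced earlier in the paper. The strategy has three ingredients: (i) compute a constant-factor (or bicriteria) $(k,z)$-clustering approximation $A \subseteq X$ with $|A|=O(k)$ using any standard polynomial-time algorithm for doubling metrics; (ii) upper-bound the sensitivities and carry out importance sampling; (iii) apply a new sampling lemma driven by the probabilistic shattering dimension bound for weighted doubling metrics.

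For step (ii), the standard recipe exploiting the approximate triangle inequality for $d^z$ yields
$$
s(x) \;\le\; 2^{O(z\log z)}\left(\frac{d^{z}(x,A)}{\kdist_z(X,A)} \;+\; \frac{1}{|P_A(x)|}\right),
$$
where $P_A(x)$ denotes the cluster of $x$ in the clustering induced by $A$. Summation gives total sensitivity $S = \sum_x s(x) = 2^{O(z\log z)}\cdot O(k)$. Sampling $m$ points i.i.d. with probability $p(x) \propto s(x)$ and weighting each sampled point by $1/(m\, p(x))$ defines the candidate coreset. For step (iii), the probabilistic shattering dimension of the cost class $\{x\mapsto d^z(x,C) : C\in[X]^k\}$ should be bounded by $D = O\bigl(k\cdot\DDim(M)\cdot\log(1/\eps)+\log\log(1/\tau)\bigr)$, by combining the paper's single-center bound with an $O(k)$ factor for the minimum over $k$ centers and a standard sign-changes argument for the $z$-th power. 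A Feldman-Langberg-style sampling lemma then yields that $m = \tilde{O}(S\cdot D/\eps^2)$ samples suffice, which after routine bookkeeping matches the stated $\tilde{O}(2^{O(z\log z)}\cdot k^3\cdot\DDim(M)/\eps^2)$.

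The main obstacle is step (iii). Classical uniform-convergence arguments for importance sampling require a bound on the pseudo-dimension of the unweighted cost class, but arbitrary doubling metrics can contain range spaces of unbounded VC-dimension, so the standard framework does not apply out of the box. The remedy advertised in the abstract is two-fold: first, smoothing $d$ to a $(1\pm\eps)$-distorted metric $\tilde d$, under which the unweighted shattering dimension becomes $\eps^{-O(\DDim(M))}$; second, relaxing to the weighted probabilistic shattering dimension, for which the paper proves the sharper $O(\DDim(M)\log(1/\eps)+\log\log(1/\tau))$ bound. The technically delicate piece is converting this probabilistic dimension bound into a sampling theorem that simultaneously handles the sensitivity weights, the minimum over $k$ centers, and the $\log\log(1/\tau)$ failure-probability contribution, while absorbing the $(1\pm z\eps)$ cost distortion introduced by smoothing via a mild rescaling of $\eps$. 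A secondary but nontrivial subtlety is showing that the running time of computing both the approximation $A$ and the sensitivity upper bounds remains $\poly(n)$ with only oracle access to $d$.
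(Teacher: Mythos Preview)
Your proposal is correct and follows essentially the same approach as the paper: compute a constant-factor approximation to bound sensitivities (Theorem~\ref{thm:totalsen}), round the sensitivity upper bounds to powers of two so the resulting weight function is gap-$2$, apply the probabilistic shattering-dimension bound for weighted smoothed distances (Corollary~\ref{corollary:weighted_z_ball}) together with Claim~\ref{claim:min_equal} for the $k$-center minimum, and invoke the probabilistic $\alpha$-approximation lemma (Lemma~\ref{lemma:restate_weak_app}) inside the Feldman--Langberg framework (Theorem~\ref{thm:fl}). One point worth making explicit in your write-up is that the smoothed distance $\delta$ is used only in the analysis---the algorithm samples according to sensitivities computed from the original metric $d$, and the required $\alpha$-approximation parameter is $\eps n/\sum_x\theta_x = \Theta(\eps/S)$, which is what produces the extra factor of $S\approx 2^{O(z\log z)}k$ and hence the $k^3$ in the final bound.
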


A first natural attempt is to
embed the doubling space to the Euclidean space and
use the existing Euclidean construction.
As shown in~\cite[Theorem 4.5]{DBLP:conf/focs/GuptaKL03}, for a doubling metric $M(X, d)$, it is possible to embed $d^{\frac{1}{2}}$ to an $O(\DDim(M)\cdot\log {\DDim(M)})$-dimensional $\ell_2$ space with $O(\DDim(M))$-distortion. Then an $\eps$-coreset for $(k, 2z)$-clustering problem in $\ell_2$ would imply an $O(\DDim(M)^z)$-coreset for the $(k, z)$-clustering problem in $M$.
However, it is generally not possible to embed $(X, d^{\frac{1}{2}})$ into $\ell_2$ with $(1+\eps)$-distortion
for an arbitrarily small constant $\eps>0$ and doubling metric $M(X, d)$ (where an example can be found in Proposition~\ref{prop:expander}).
Hence, in order to construct an $\eps$-coreset
in a doubling metric, we need new ideas.

A by now standard technique for constructing small coresets for clustering problems is importance
sampling, developed in a series of work \cite{langberg2010universal, FL11,varadarajan2012sensitivity}.
In particular, by the framework in~\cite{FL11,varadarajan2012sensitivity},  one can obtain an $\eps$-coreset by taking $\tilde{O}(2^{O(z\log z)}\cdot k^3\cdot \Dim/\eps^2)$ samples (see Section~\ref{sec:coreset} for more details).
Here $\Dim$ is the (shattering) dimension of the range space induced
by the distance function.
(i.e., the range space consists of all balls of different radii
\footnote{
In fact, we will deal with the range space in a certain
function space. See Section~\ref{section:dim_range_doubling} for the precise definition.}
).
Hence, if one can show that
$\Dim$ is bounded by some function of $\DDim(M)$,
the construction of an $\eps$-coreset would be finished.

\vspace{0.1cm}
\noindent\textbf{Doubling Dimension and Shattering Dimension.}
Now, we discuss the relation between the doubling dimension
$\DDim(M)$ and the (shattering) dimension $\Dim$ of the range space.
While the dimension $\Dim$ measures
the combinatorial complexity of the metric space, doubling dimension $\DDim(M)$ is the intrinsic geometric dimension of the metric space.
They both generalize the ordinary Euclidean dimension, but from different perspectives. In particular, for $\mathbb{R}^d$, both the (shattering) dimension
and the doubling dimension are $O(d)$.
Although both dimensions are subjects of extensive research,
to the best of our knowledge, there is no nontrivial relation known between the two.
This may not be a surprise, as we can easily construct a doubling metric, which has unbounded shattering dimension on the corresponding induced range space (see Theorem~\ref{theorem:loose_bound}).
The other direction cannot be bounded neither
\footnote{Consider a star with $n$ leaves.
It is immediate that the metric induced by the star has doubling dimension $\Omega(\log{n})$.
However, the shattering dimension of the range space induced by the star metric is $O(1)$.
}.
Hence, studying their relation may appear to be hopeless.
However, we observe that in the bad instance in Theorem~\ref{theorem:loose_bound},
if we allow a $(1\pm \eps)$-distortion to the distance function $d$,
then the instance actually has a small shattering dimension.

Inspired by this observation, we introduce the {\em smoothed distance function}.
A $\eps$-smoothed distance function $\delta: X\times X\rightarrow \R_{\geq 0}$ satisfies $\delta(x, y) \in (1 \pm O(\eps))\cdot d(x, y)$ for all $x, y \in X$. Basically, it is a small perturbation of the original distance function $d$.
We show, somewhat surprisingly, that if we use a certain smoothed distance function $\delta$,
defined by a hierarchical net of the doubling metric,
the shattering dimension of the range space (induced by $\delta$, instead of $d$)
can be upper bounded by
some function of the doubling dimension $O(\DDim(M))$, as in the following theorem.

\begin{theorem} (informal, unweighted case)
	\label{theorem:informal_loose_bound}
	Suppose $M(X,d)$ is a metric space.
	Let $0< \e\leq \frac{1}{8}$ be a constant. There is some
	$\eps$-smoothed distance function such that
    $\Dim(\calF) \leq O(1/\eps)^{\DDim(M)}$,
    where $\calF := \left\{ \delta(x, \cdot)  \mid x\in X\right\}$ is the
    set of $\eps$-smoothed distance functions.	
\end{theorem}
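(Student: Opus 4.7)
Plan. My plan is to define $\delta$ from a hierarchical net of $M$ and then bound the shattering dimension on any finite subset by a scale-by-scale doubling argument combined with an arrangement/Sauer--Shelah bound.

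\textbf{Construction.} For each scale $i\in\Z$ I fix an $(\eps\cdot 2^i)$-net $N_i$ of $X$, which by the doubling property satisfies $|N_i\cap B(p,R)|\le (O(R/(\eps\cdot 2^i)))^{\DDim(M)}$ for all $p\in X$ and $R\ge \eps\cdot 2^i$. Let $\pi_i:X\to N_i$ denote nearest-point projection. I then define
\[
\delta(x,y) \;:=\; d\bigl(\pi_{i(x,y)}(x),\,y\bigr), \qquad i(x,y) \;:=\; \lfloor \log_2 d(x,y) \rfloor.
\]
The triangle inequality together with $d(x,\pi_{i(x,y)}(x))\le \eps\cdot 2^{i(x,y)}\le \eps\cdot d(x,y)$ immediately yields $|\delta(x,y)-d(x,y)|\le \eps\cdot d(x,y)$, so $\delta$ is $\eps$-smoothed.

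\textbf{Shattering dimension.} Fix a finite $Y\sse X$ with $|Y|=m$; the goal is to show
\[
\bigl|\{B_\delta(c,r)\cap Y : c\in X,\,r\ge 0\}\bigr| \;\le\; m^{(O(1/\eps))^{\DDim(M)}}.
\]
For any fixed $c$, varying $r$ produces at most $m+1$ distinct sub-level sets of $\delta(c,\cdot)$ on $Y$, so it suffices to bound the number of distinct orderings of $Y$ induced by $\delta(c,\cdot)$ as $c$ ranges over $X$. The main step is to establish that for each $y\in X$ the sub-level set $\{c\in X : \delta(c,y)\le r\}$ lies in a range class of VC dimension at most $(O(1/\eps))^{\DDim(M)}$: on the annulus $\{c : d(c,y)\in[2^i,2^{i+1})\}$ this set coincides with $\pi_i^{-1}\!\bigl(N_i\cap B(y,r)\bigr)$, and by the doubling/packing bound only $(O(1/\eps))^{\DDim(M)}$ points of $N_i$ can be relevant. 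Applying an arrangement / Sauer--Shelah argument to the $m$ predicates indexed by $y_1,\dots,y_m\in Y$ then yields polynomial growth in $m$ with the promised degree, and multiplying by the $m+1$ level cuts gives the full bound.

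\textbf{Main obstacle.} The delicate point is combining the per-scale doubling bound into a single global VC-dimension statement. Naively taking a product over the many potentially-active scales yields an exponential-in-$m$ bound, which is far too weak. The resolution will be to exploit the hierarchical compatibility of the nets $\{N_i\}$ (finer nets refining coarser ones), which forces the projections $\pi_i(c)$ for a single center $c$ to be globally consistent across scales, so that the overall range class can be described by a constant-in-$m$ number of "net-point coordinates." Formalising this consistency step, so that the polynomial degree depends only on $\eps$ and $\DDim(M)$ and not on $m$ or on the spread of $Y$, is the technical heart of the proof.
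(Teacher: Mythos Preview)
You correctly identify the main obstacle—aggregating the per-scale bounds across scales without picking up a factor depending on the spread of $Y$—but you do not resolve it. The appeal to ``hierarchical compatibility'' and ``a constant-in-$m$ number of net-point coordinates'' does not close the gap: even with nested nets, the path $(\pi_i(c))_i$ determining $\delta(c,\cdot)|_Y$ has length governed by $\log(\text{spread})$, not by $|Y|$ or by $\eps,\DDim(M)$ alone, and your Sauer--Shelah step would require a genuine VC-type bound on the class $\{\{c:\delta(c,y)\le r\}:y,r\}$, which the per-annulus observation does not supply. There is also a smaller loose end: your $\delta$ is asymmetric, so the sets $B_\delta(c,r)\cap Y=\{y\in Y:\delta(c,y)\le r\}$ you analyse are the \emph{dual} of the ranges $\{x\in H:\delta(x,y)\le r\}$ that define $\Dim(\calF)$; for a symmetric $\delta$ this would be harmless, but here it is an additional gap.

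The paper's resolution uses a different, symmetric $\delta$ built from a net \emph{tree}: round both $x$ and $y$ to their tree-ancestors at the highest level $j$ where those ancestors are still at distance $\ge 2^j/\eps$, and take that ancestor distance as $\delta(x,y)$. This definition yields a \emph{cross-free} property that your $\delta$ lacks: for $j$ with $2^{j-1}\le\lambda r$ (where $\lambda\approx\eps$), each ball $B^\delta(x,r)$ either contains or is disjoint from every height-$j$ subtree $\Des(v^{(j)})$. Hence $H\cap B^\delta(x,r)$ is always a union of at most $(O(1/\eps))^{\DDim(M)}$ ``cells'' of the form $H\cap\Des(v^{(j)})$ by packing. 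The spread issue then disappears via a monotonicity count: as $j$ increases, the partition $\{H\cap\Des(v^{(j)}):v\in N_j\}$ of $H$ can only coarsen, so there are at most $|H|+1$ distinct such partitions over all $j$, giving at most $(m+1)\cdot m^{(O(1/\eps))^{\DDim(M)}}$ ranges in total. This cross-free property together with the coarsening count is precisely the missing mechanism in your plan.
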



While the above theorem is encouraging,
there are still some drawbacks.
First, the dimension bound is exponential in $\DDim(M)$
(in contrast to the linear dependency in Euclidean case).
It is a natural question whether one can obtain a better bound in general.
More importantly, the above bound is not sufficient for the purpose
of constructing small coresets, for which we need a dimension bound for weighted spaces.
Unfortunately, it seems difficult to extend the proof of Theorem~\ref{theorem:loose_bound} to the weighted case.

\noindent
{\bf Weighted Space and Probabilistic Shattering Dimension.}
Recall that our goal is to construct $\eps$-coresets for the $(k,z)$-clustering problem.
According to the framework~\cite{FL11}, we shall consider the
set of \emph{weighted} distance functions $g_x : [X]^k \rightarrow \mathbb{R}_{\geq 0}$ for each point $x \in X$, defined as $g_x(C) := w(x) \cdot \delta^z(x, C)$ for $C\in [X]^k$, where $w : X \rightarrow \mathbb{R}_{\geq 0}$ is a weight function and $\delta$ is an $O(\eps/z)$-smoothed distance function.
We consider the function set $\calG:=\{ g_x \mid x \in X \}$,
and would like to
show that there is a subset $\calS\subseteq \calG$ such that $\calS$ is an $\alpha$-approximation for the range space of $\calG$ for a certain constant
$\alpha$.
Then we can apply~\cite[Theorem 4.1]{FL11} (restated in Theorem~\ref{thm:fl}) to show that one can efficiently find an $\eps(\alpha)$-coreset of size $|\calS|$.

In order to prove an $\alpha$-approximation result, it suffices to bound
the shattering dimension of $\calG$.
We recall that
$\calG = \{g_x \mid x\in X\}$ with $g_x(C) = w(x) \cdot \delta^z(x, C)$ for $C \in [X]^k$.
%
Let $\calF:=\left\{ f_x\mid x\in X \right\}$ be a collection of functions $f_x(y):=w(x)\cdot \delta^z(x, y)$ for $y\in X$. Note that the difference between $\calF$ and $\calG$ is that, the ground set of $\calF$
consists of singletons and that of $\calG$
contains $k$-subsets.
By a simple argument (see Claim~\ref{claim:min_equal}),
one can show that roughly the shattering dimension of $\calG$
is at most $k$ times of the shattering dimension of $\calF$.
%
Hence, the key is to bound the shattering dimension of $\calF$ (the set of weighted smoothed distance function with ground set $X$).
It turns out the proof for the weighted case is much more involved than the unweighted case.
Instead of using a deterministic $\delta$ defined with respect to $d$, we introduce a random smoothed distance function,
defined on top of a randomized hierarchical decomposition introduced by~\cite{DBLP:conf/stoc/AbrahamBN06} in the doubling metric.
A key property of the randomized hierarchical decomposition is that a set with small diameter is cut by a large cluster in the decomposition with very small probability.
Intuitively, the property enhances the smooth property, so that we can still hang the balls centered at any $x\in X$ to a net point of higher layer in the weighted space.

Consider an arbitrary fixed $H\subseteq X$ and $\calF_H := \{f_x \in \calF \mid x \in H\}$.
Due to the randomness in the randomized hierarchical decomposition,
we can only show $|\ranges(\calF_H)|$ is bounded with \emph{constant} (close to 1) probability.
Hence, we introduce the notion of {\em probabilistic shattering dimension}
for the range space induced by a family of random functions (formally in Definition~\ref{def:pdim}):
for any subset of (random) functions $\calF_H$, if the probability that
$|\ranges(\calF_H)|\leq O(|\calF_H|^t)$ with probability $1-\tau$
(note that
$|\ranges(\calF_H)|$ is a random variable), we say that
the probabilistic shattering dimension $\PDim_\tau(\calF)$ of the range space is $t$.
Our main technical result is the following theorem.

\begin{theorem} (informal, weighted case)
	\label{theorem:informal_weighted}
	Suppose $M(X, d)$ is a metric space together with a gap-2 weight function (Definition~\ref{definition:weight}) $w: X\rightarrow \R_{\geq 0}$.
	Let $0<\eps\leq 1/100 z$ and $0 < \tau < 1$ be constants.
	There exists a random $\eps$-smoothed distance function $\delta$ and a collection $\calF:=\left\{ f_x=w(x)\cdot \delta^z(x, \cdot)\mid x\in X \right\}$, such that
	the following holds:
	for any fixed $H\subseteq X$ and $\calF_H := \left\{ f_x:x\in H \right\}$,
	we have
	\begin{align*}
	\Pr_{\delta}\left[|\ranges(\calF_H)| \leq O\left(\eps^{O(-\DDim(M))} \cdot \log\frac{1}{\tau}\cdot \poly(|H|)\right) \right] \geq 1 - \tau.
	\end{align*}
	In other words, $\PDim_\tau(\calF)= {O}(\DDim(M)\cdot \log (1/\eps) + \log\log{1/\tau}) $.
\end{theorem}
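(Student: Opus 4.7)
The plan is to build the random smoothed distance $\delta$ on top of the Abraham--Bartal--Neiman (ABN) randomized hierarchical decomposition and then bound $|\ranges(\calF_H)|$ by a combinatorial counting argument that exploits both the doubling dimension and the gap-$2$ weight structure. First, I would use the ABN decomposition: for every scale $i$ it produces a nested partition $\calP_i$ of $X$ whose clusters have diameter $O(2^i)$, and for any point $u$ the ball $B(u,\rho\cdot 2^i)$ lies in a single cluster of $\calP_i$ with probability $1-O(\rho\cdot\DDim(M))$. I would then fix a padding parameter $\rho=\Theta(\eps/\DDim(M))$, build a $\gamma\cdot 2^i$-net $N_i$ at each scale for $\gamma=\Theta(\eps)$, and for each $x\in X$ define a hierarchical net representative $\pi_i(x)\in N_i$. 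The smoothed distance $\delta(x,y)$ is then obtained by snapping $x$ and $y$ to appropriate net representatives at the scale closest to $d(x,y)$ and rounding the numerical value to the nearest power of $1+\eps$; this makes $\delta(x,y)\in(1\pm O(\eps))\cdot d(x,y)$ while $\delta$ takes only discrete values aligned with the hierarchy.

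Next, I would turn to range counting. For each $f_x\in\calF_H$ the family of sublevel sets $\{y\in X : w(x)\cdot\delta^z(x,y)\le r\}$ is a chain of $\delta$-balls around $x$, so every range of $\calF_H$ is determined by a center $x\in H$ and a $\delta$-radius $\rho$. Because $\delta$ is quantized and the weight function is gap-$2$, the radii that meaningfully distinguish subsets of $H$ form an explicit discrete set of size $\poly(|H|)$. The key geometric claim is that on the padded event, a $\delta$-ball of radius $\Theta(2^i)$ around $x$ restricted to $H$ is completely determined by the identity of the cluster of $\calP_{i-O(\log(1/\eps))}$ containing $x$ together with which boundary net points at scale $i$ fall inside. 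The doubling condition bounds the number of scale-$(i-O(\log(1/\eps)))$ net points inside any such cluster by $\eps^{-O(\DDim(M))}$, so summing over $x\in H$ and the $\poly(|H|)$ relevant radii yields $|\ranges(\calF_H)|\le \poly(|H|)\cdot\eps^{-O(\DDim(M))}$ on the padded event.

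For the probability bound, I would take a union bound of the padding-failure event over the $\poly(|H|)$ many $(x, \text{scale})$ pairs that actually affect the range count; calibrating $\rho=\Theta(\eps/\DDim(M))$ and exploiting that the ABN partitions at different scales use independent random shifts lets me amplify by independent resampling so that the number of ranges grows by only a $\log(1/\tau)$ factor to reach failure probability $\tau$. Hence with probability at least $1-\tau$ we get $|\ranges(\calF_H)|\le \poly(|H|)\cdot\eps^{-O(\DDim(M))}\cdot\log(1/\tau)$. Translating this into the defining inequality $|\ranges(\calF_H)|\le O(|\calF_H|^t)$ of probabilistic shattering dimension and taking logarithms (with $|H|\ge 2$) yields $t=O(\DDim(M)\cdot\log(1/\eps)+\log\log(1/\tau))$, matching the claimed $\PDim_\tau(\calF)$; the $\log\log$ is exactly the log-of-log produced by absorbing the $\log(1/\tau)$ factor into an exponent of $|H|$.

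The main obstacle will be making the ``hang a ball to a higher-layer net point'' step work uniformly in the weighted setting. In the unweighted case one can snap $x$ directly to $\pi_i(x)$ because $\delta^z$ absorbs the $(1\pm O(\eps))$ distortion, but with gap-$2$ weights $w(x)$, replacing $x$ by $\pi_i(x)$ changes the prefactor $w(x)$ and may reorder points of $H$ across the sublevel threshold. Controlling this requires choosing the snapping scale to depend on $(w(x),\rho)$ in a way that is simultaneously consistent across all relevant pairs, and this is precisely where the \emph{probabilistic} padding property of the ABN decomposition, rather than a purely deterministic net hierarchy as used in the unweighted Theorem~\ref{theorem:informal_loose_bound}, provides the slack the counting argument needs.
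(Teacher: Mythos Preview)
Your proposal reverses the roles in the range space. A range in $\ranges(\calF_H)$ is not a sublevel set of a single $f_x$; it is $\{f_x \in \calF_H : f_x(y) \leq r\}$ for a fixed query $y \in X$ and threshold $r$, equivalently the set of $x \in H$ with $w(x)\,\delta^z(x,y) \leq r$. Because the weight depends on $x$, this is not one $\delta$-ball but the union $\bigcup_i \bigl(H_i \cap B^\delta(y,(r/w_i)^{1/z})\bigr)$ of \emph{concentric} balls of different radii around the query $y$, one per weight class. The center $y$ ranges over all of $X$, not over $H$, so the claim that ``every range of $\calF_H$ is determined by a center $x\in H$ and a $\delta$-radius'' is false, and your diagnosis of the main obstacle is off: hanging the query $y$ to a nearby net point changes no weight, since weights are attached to the indexed points $x\in H$, not to $y$.

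The structure your proposal lacks is exactly what the paper uses to make the count finite. The paper partitions $[0,\infty)$ into at most $\poly(|H|)$ \emph{critical intervals} on which certain $r$-representatives of $H$ and the balls $H_i \cap B^d(u,(r/w_i)^{1/z})$ around them are invariant. On a short interval ($b\le 2a$) one hangs the query $y$ to a \emph{chain} of net points at the $l$ scales $\zeta_1\le\cdots\le\zeta_l$ and counts such chains via packing; this is the source of the $\eps^{-O(\DDim(M))}\cdot\poly(|H|)$ factor, and it is deterministic. The hard case is a long critical interval, where the number of dyadic scales between $a$ and $b$ is unbounded; here the paper applies the ABN padding not to arbitrary $(x,\text{scale})$ pairs but to the \emph{fixed, $\delta$-independent} balls $B^d_i(u,a/w_i)$ for representatives $u$, and combines it with the gap-$2$ hypothesis (so that once $B^\delta_j(y,\cdot)$ meets a representative's $j$-th ball, all balls with index $>j$ are automatically fully contained) to force every range in the middle of the interval to equal $\bigcup_{i\ge j} B^d_i(u,a/w_i)$ for some representative $u$ and some $j$, leaving only $O(|H|^2)$ possibilities. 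Your counting argument has neither this multi-scale picture of a single range nor the critical-interval decomposition needed to control long intervals, so as written it does not yield a finite bound in the weighted case.
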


The above theorem drastically improves the dimension from $O(1/\eps)^{\DDim(M)}$ (in Theorem~\ref{theorem:loose_bound})
to $\tilde{O}\left(\DDim(M)\cdot \log (1/\eps)\right)$ (albeit with a weaker probabilistic guarantee).
Note that one cannot afford to apply a union bound over all different $H$'s to show that $\dim(\calF)$ is bounded.
Hence, the bound on the probabilistic shattering dimension does not directly lead to an $\alpha$-approximation by the standard PAC learning theory.
However, we prove in Lemma~\ref{lemma:restate_weak_app}
a probabilistic analogue of the $\alpha$-approximation lemma,
which only requires a bounded probabilistic shattering dimension.
%
%
%



\vspace{0.1cm}
\noindent
{\bf Robust Coreset.}
We also consider robust coresets which are coresets for $(k,z)$-clustering problems with outliers.
The notion of robust coreset was first introduced in \cite{FL11}.
In the following, we give the definition of robust coreset for the $(k,z)$-clustering problem.

\begin{definition}[robust coresets]
	\label{def:robust_coreset}
	Let $M(X,d)$ be a metric space.
	Let $0<\gamma\leq 1$, $0\leq \eps, \alpha\leq \frac{1}{4}$, $k \geq 1$ and $z>0$.
	For any $H\subseteq X$ and $C\in [X]^k$, let
	\[
	\kdist_z^{-\gamma}(H, C):= \min_{H'\subseteq H: |H'|= \lceil(1-\gamma)|H|\rceil} \sum_{x\in H'} d^z(x,C)
	\]
	denote the sum of the smallest $\lceil (1-\gamma) |H| \rceil$ values $d^z(x,C)$ over $x\in H$ (i.e., we exclude the largest $\gamma|H|$ values as outliers).
	An $(\alpha,\eps)$-robust coreset for the $(k,z)$-clustering problem with outliers is a subset $S\subseteq X$ such that for any $k$-subset $C\in [X]^k$ and any $\alpha< \gamma< 1-\alpha$,
	\[
	(1-\eps)\cdot \frac{\kdist_z^{-(\gamma+\alpha)}(X, C)}{|X|} \leq \frac{\kdist_z^{-\gamma}(S, C)}{|S|} \leq (1+\eps)\cdot \frac{\kdist_z^{-(\gamma-\alpha)}(X, C)}{|X|}.
	\]
	
\end{definition}
%

\eat{
For any $C\in [X]^k$ and $\gamma\in [0,1]$, let $\mathcal{G}_{C}^{-\gamma}$ denote the set of $(1-\gamma)|\calG|$ functions $f_x\in \calG$ with the smallest function value $f_x(C)$. $\mathcal{D}\subset \calG$ is called an $(\eps,z)$-robust coreset of $\calG$ if for every $\gamma\in (\eps,1-\eps)$ and $C\in [X]^k$, $$
$$

If $\mathcal{D}$ is an $(\eps,z)$-robust coreset of $\calG$, we call $S$ an $\eps$-robust coreset of $X$ for $(k,z)$-clustering.
}


Our result for robust coreset for $(k,z)$-clustering is presented in the following
theorem, which
generalizes and improves the prior result in~\cite{FL11} for Euclidean space.
%

\begin{theorem} [informal, robust coreset]
	Let $M(X,d)$ be a doubling metric (a $d$-dimensional Euclidean space resp.). Let $S$ be a uniform sample of size
	$\tilde{O}(k\cdot\mathrm{ddim}(M)/\alpha^2)$ ($\tilde{O}(kd/\alpha^2)$ resp.) from $X$.
	Then with constant probability, $S$ is an $(\alpha,\eps)$-robust coreset ($(\alpha,0)$-robust coreset resp.)
	for the $(k,z)$-clustering problem with outliers.
\end{theorem}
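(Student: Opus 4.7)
The plan is to reduce the robust-coreset property to a standard $\alpha$-approximation property for a suitable range space, and then apply uniform-sampling bounds. Concretely, for each $C\in [X]^k$ and $t\geq 0$ define $R_{C,t}(H) := \{x\in H : d^z(x,C) > t\}$. A layer-cake computation yields
\[
\kdist_z^{-\gamma}(H,C) \;=\; \int_0^{\infty} \max\bigl(0,\; |R_{C,t}(H)| - \lfloor \gamma|H|\rfloor\bigr)\,dt,
\]
so if $S$ satisfies the $\alpha$-approximation guarantee $\bigl|\,|R_{C,t}(S)|/|S| - |R_{C,t}(X)|/|X|\,\bigr|\leq\alpha$ uniformly in $C$ and $t$, then (using that $u\mapsto\max(0,u)$ is monotone) integrating in $t$ gives
\[
\frac{\kdist_z^{-\gamma}(S,C)}{|S|} \;\leq\; \frac{\kdist_z^{-(\gamma-\alpha)}(X,C)}{|X|},
\]
together with the symmetric lower bound. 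Hence $S$ is an $(\alpha,0)$-robust coreset whenever it $\alpha$-approximates the range space $\mathcal{R} := \{R_{C,t} : C\in[X]^k,\, t\geq 0\}$.

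For the Euclidean case each range in $\mathcal{R}$ is an intersection of $k$ ball-complements in $\mathbb{R}^d$, whose shattering dimension is $\tilde{O}(kd)$. The classical VC sampling theorem then guarantees that a uniform sample of size $\tilde{O}(kd/\alpha^2)$ is an $\alpha$-approximation of $\mathcal{R}$ with constant probability, which via the reduction above produces an $(\alpha,0)$-robust coreset.

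For doubling metrics $\mathcal{R}$ may have unbounded shattering dimension (cf.\ Theorem~\ref{theorem:loose_bound}), so I would replace $d$ by an $O(\eps/z)$-smoothed distance $\delta$ and work with the family $\mathcal{F}=\{\delta^z(x,\cdot):x\in X\}$ (taking all weights to be $1$, since uniform sampling does not use weights). Claim~\ref{claim:min_equal} lifts a single-center dimension bound to a $k$-center one at a factor-$k$ loss, and the unweighted specialization of Theorem~\ref{theorem:informal_weighted} bounds the resulting probabilistic shattering dimension by $\tilde{O}(k\cdot\DDim(M))$. Applying the probabilistic $\alpha$-approximation lemma (Lemma~\ref{lemma:restate_weak_app}) then shows that a uniform sample of size $\tilde{O}(k\cdot\DDim(M)/\alpha^2)$ is an $\alpha$-approximation for the smoothed range space with constant probability. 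Reverting from $\delta^z$ to $d^z$ amounts to a change of variables $t \mapsto t(1\pm O(\eps))$ inside the layer-cake integral, and turns the $(\alpha,0)$ guarantee with respect to $\delta$ into the desired $(\alpha,\eps)$ guarantee with respect to $d$.

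The main obstacle is the doubling case, where the classical $\alpha$-approximation theorem is not directly applicable: the dimension bound for $\mathcal{F}$ holds only with high probability over the random smoothed distance $\delta$, and a union bound over all possible evaluation sets $H\subseteq X$ is not affordable. This is exactly the difficulty that the probabilistic shattering dimension and Lemma~\ref{lemma:restate_weak_app} were designed to overcome, and their invocation is the main technical step of this part of the argument. A secondary technical point is verifying that an $O(\eps/z)$-smoothing of distances inflates the layer-cake integral by only a $(1\pm\eps)$ factor, which follows routinely from the change of variables above once one notes that $\delta^z = (1\pm O(\eps))d^z$.
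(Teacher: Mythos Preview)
Your proposal is correct and follows essentially the same route as the paper: the layer-cake identity you write is exactly the content of the paper's Claim (equations~\eqref{rceq1}--\eqref{rceq3}), and your reduction ``$\alpha$-approximation $\Rightarrow$ $(\alpha,0)$-robust coreset'' is Lemma~\ref{rctech}. The Euclidean and doubling cases are handled in the paper just as you outline---classical VC sampling for the former, and for the latter the chain Corollary~\ref{corollary:weighted_z_ball} (with unit weights) $\to$ Claim~\ref{claim:min_equal} $\to$ Lemma~\ref{lemma:restate_weak_app}, followed by translating the $(\alpha,0)$ guarantee for $\delta^z$ into an $(\alpha,\eps)$ guarantee for $d^z$ via the pointwise relation $d^z=(1\pm O(\eps))\delta^z$.
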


The definition of robust coreset in~\cite{FL11} is slightly different from ours.
\footnote{
	In ~\cite[Definition 8.1]{FL11},
	$S\subset X$ is called a $(\gamma,\eps)$-coreset if
	for every $C\in [X]^k$, $\gamma_1\geq \gamma$ and
	$\eps_1\geq \eps$,	$
	(1-\eps_1)\cdot \frac{1}{|X|}\kdist_1^{-(1-\gamma_1+\eps_1\gamma_1)}(X, C) \leq \frac{1}{|S|}\kdist_1^{-(1-\gamma_1)}(S, C) \leq (1+\eps_1)\cdot \frac{1}{|X|}\kdist_1^{-(1-\gamma_1-\eps_1\gamma_1)}(X, C).
	$}
One can directly check that in Euclidean space, an $(\gamma \eps/4,0)$-robust coreset in Definition \ref{def:robust_coreset} is an $(\gamma,\eps)$-coreset in ~\cite[Definition 8.1]{FL11}. Thus the above theorem improves the size of $(\gamma,\eps)$-coreset in \cite[Corollary 8.4]{FL11} from $\tilde{O}(kd\gamma^{-2}\eps^{-4})$ to $\tilde{O}(kd\gamma^{-2}\eps^{-2})$.


\eat {
The technique of~\cite{FL11} can only construct an $(\alpha,\eps)$-robust coreset of size $\tilde{O}(k\cdot\mathrm{ddim}(M)\cdot \log (z/\eps)/\eps^2\alpha^2)$.
Compared to~\cite{FL11}, we improve the size by saving an $\eps^{-2}$ factor.
\jian{I can't understand this sentence. please state the result of FL11.
	where is the $\eps^{-2}$ factor? such a statement is not good enough.}
\lingxiao{I try to change the statement. @Xuan, please check whether it makes sense.} \xuan{I think we should change the statement. \cite{FL} cannot apply in doubling metrics and they require a lower bound for $\gamma$. In doubling metrics, their technique cannot imply even weaker bound for robust coreset defined here } }

Furthermore, we demonstrate an application of robust coresets in property testing (Section \ref{PT}). 
Our testing for $(k,z)$-clustering
problem is in the same spirit as the testing for 
$k$-center problem proposed in Alon et al.~\cite{alon2003testing}.
We design a simple testing algorithm for $(k,z)$-clustering.

Constructing robust coresets is also a useful subroutine in several other problems,
such as robust median and bi-criteria approximation for projective clustering (see \cite{FL11}). Hence, our improvement may lead to certain improvements of
these problems as well. Since this is not the focus of the this paper, we do not
go into the details.


\vspace{0.1cm}
\noindent
{\bf Centroid Set.}
We also consider a notion closely related to coreset, called centroid set.
Roughly speaking, a centroid set can be viewed as a coreset that
contains an $(1+ \eps)$-approximate solution (which is a $k$-subset) to the clustering objective (see Definition~\ref{definition:centroid_set}).
Applying our coreset result, we show the existence of succinct centroid sets in doubling metrics, which is presented in the following theorem.
To the best of our knowledge, this is the first result
on centroid sets beyond Euclidean spaces.
%
%
	\begin{theorem}[informal, centroid set]
	Let $M(X,d)$ be a metric space of $n$ discrete points.
	Let $S$ be an $\frac{\eps}{2}$-coreset for the $(k,z)$-clustering problem on $X$.
	There is an algorithm running in $\poly(n)$ time, that finds a centroid set of size at most $ (\frac{z}{\eps})^{O(\DDim(M))}\cdot |S|^2$.
\end{theorem}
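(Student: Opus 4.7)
The plan is to construct $T$ as a union of $\eps$-nets over ordered pairs of coreset points, and then show that $T$ contains an approximately optimal $k$-subset. Given the $(\eps/2)$-coreset $S$, set
$T := S \cup \bigcup_{(x,y)\in S\times S,\ d(x,y)>0} N_{xy}$,
where $N_{xy}$ is an $\eta_{xy}$-net of the ball $B(x,R_{xy})\subseteq X$ with $\eta_{xy}=\Theta(\eps\, d(x,y)/z)$ and $R_{xy}=\Theta(d(x,y))$. The doubling property of $M$ gives $|N_{xy}|=(z/\eps)^{O(\DDim(M))}$, and each such net is built by a standard greedy procedure in $\poly(n)$ time via distance-oracle queries, yielding $|T|\leq |S|^2\cdot (z/\eps)^{O(\DDim(M))}$.

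To verify that $T$ is a centroid set, fix an optimal $C^*=\{c_1^*,\ldots,c_k^*\}\in[X]^k$. I will construct $C'\subseteq T$ cluster by cluster. For each $c_i^*$, let $x_i:=\arg\min_{x\in S} d(x,c_i^*)$ and $r_i:=d(x_i,c_i^*)$. This choice guarantees $d(p,c_i^*)\geq r_i$ for every coreset point $p$ in the coreset Voronoi cell $P_i$ of $c_i^*$. The plan is to choose $c_i'\in T$ with $d(c_i',c_i^*)\leq O(\eps r_i/z)$; once this holds, the triangle inequality yields $d(p,c_i')\leq (1+O(\eps/z))\cdot d(p,c_i^*)$ for every $p\in P_i$, so $d^z(p,c_i')\leq (1+\eps)\, d^z(p,c_i^*)$ after exponentiation.

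To produce such a $c_i'$, I search for $y\in S$ with $d(x_i,y)$ in a matching window of the form $[r_i,\, c\,z r_i/\eps]$ for a suitable constant $c$. If such a $y$ exists, then $c_i^*\in B(x_i,R_{x_i y})$ and the net $N_{x_i,y}$ contains the required $c_i'$ within $\eta_{x_i,y}=O(\eps r_i/z)$ of $c_i^*$. If no $y$ lies in the window, set $c_i'=x_i\in S\subseteq T$ and proceed by case analysis: either $S\subseteq B(x_i,r_i)$, in which case $d(p,x_i)\leq r_i\leq d(p,c_i^*)$ for all $p\in P_i$ and replacing $c_i^*$ by $x_i$ only decreases the cost; or every $p\in P_i\setminus\{x_i\}$ satisfies $d(p,x_i)\geq \Omega(z r_i/\eps)$, in which case $d(p,c_i^*)\geq d(p,x_i)-r_i\geq (1-O(\eps/z))\,d(p,x_i)$, and the triangle inequality again yields $d^z(p,x_i)\leq (1+\eps)\,d^z(p,c_i^*)$. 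Summing the per-cluster guarantees over $p\in S$ gives $\sum_{p\in S} w(p)\, d^z(p,C')\leq (1+O(\eps))\sum_{p\in S} w(p)\, d^z(p,C^*)$, and composing with the $(\eps/2)$-coreset property of $S$ translates this to $\kdist_z(X,C')\leq (1+\eps)\,\kdist_z(X,C^*)$ after a constant rescaling of $\eps$.

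The main obstacle will be the ``scale gap'' configuration of the alternative above: the pairwise distances from $x_i$ to $S\setminus\{x_i\}$ may land on both sides of the matching window (some much smaller, some much larger), so no single pair $(x_i,y)\in S\times S$ yields a net simultaneously covering $c_i^*$ with fine enough resolution. The resolution will be quantitative: widen $R_{xy}$ to $\Theta(z\,d(x,y)/\eps)$ while keeping $\eta_{xy}=\Theta(\eps\,d(x,y)/z)$, so that either some pair matches the scale or both the ``too small'' and ``too large'' sub-cases admit the direct triangle-inequality bound via $c_i'=x_i$. Tuning the hidden constants so that the $z$-th power of the triangle-inequality estimate stays within $1+\eps$ is exactly what forces the $(z/\eps)^{O(\DDim(M))}$ factor (rather than the $(1/\eps)^{O(\DDim(M))}$ one would obtain in the Euclidean setting with a grid).
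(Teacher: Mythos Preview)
Your approach is different from the paper's and more elementary, but the case analysis has a genuine gap that widening $R_{xy}$ does not close. With $\eta_{xy}=\Theta(\eps\,d(x,y)/z)$, the net $N_{x_i,y}$ achieves resolution $O(\eps r_i/z)$ only when $d(x_i,y)=O(r_i)$, so the matching window's \emph{upper} edge is $\Theta(r_i)$ regardless of how large you make $R_{xy}$; widening $R_{xy}$ only relaxes the lower edge. On the other hand, the fallback $c_i'=x_i$ gives a $(1+O(\eps/z))$ distance ratio for a point $p\in P_i$ only when $d(x_i,p)\le r_i$ or $d(x_i,p)\ge\Omega(z r_i/\eps)$. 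A point $p$ with $d(x_i,p)$ in the intermediate range falls through both branches. Concretely, take $S=\{x_i,p\}$ with $d(x_i,p)=\sqrt{z/\eps}\,r_i$ and weights chosen so that the optimal single center sits at distance $r_i$ from $x_i$. The only nontrivial pair distance is $\sqrt{z/\eps}\,r_i$, so every net in your $T$ around $x_i$ has mesh at least $\Theta(\sqrt{\eps/z}\,r_i)$, and taking $c_i'=x_i$ instead incurs ratio $1+\Theta(\sqrt{\eps/z})$ on $p$. Either way you are stuck at $1+\Theta(\sqrt{\eps/z})$, not $1+O(\eps/z)$.

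The paper sidesteps this by building a net tree on $X$ and partitioning heights into \emph{invariant intervals} for $S$: maximal ranges of heights over which the tree-induced partition of $S$ does not change. It adds nets only at $O(\log(1/\eps))$ heights near each interval endpoint. The key structural fact is that when $\log r_i$ lies in the middle of an interval $[a,b)$, the tree forces a clean separation---coreset points sharing $y$'s subtree have diameter $O(2^a)\ll r_i$, while those in other subtrees are at distance $\Omega(2^b)\gg r_i/\eps$---so $c_i'=y$ works directly. This separation does not follow from raw pairwise distances. Your construction can be salvaged by a related idea without the tree: for each pair $(x,y)$, place nets at $O(\log(z/\eps))$ geometrically spaced radii $2^{-t}d(x,y)$ with mesh $(\eps/z)\cdot 2^{-t}d(x,y)$ for $t=0,\dots,\lceil\log(z/\eps)\rceil$. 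Then whenever some $d(x_i,y)\in[\eps r_i/z,\,z r_i/\eps]$, one of these nets has radius $\Theta(r_i)$ and mesh $O(\eps r_i/z)$; and if no $y$ lies in that range, your per-point fallback to $x_i$ genuinely applies. The total size remains $(z/\eps)^{O(\DDim(M))}\cdot|S|^2$.
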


Applying the above theorem, we can also accelerate the local search algorithm~\cite{DBLP:conf/focs/FriggstadRS16} for $(k,z)$-clustering in doubling metrics, from $n^{O(\rho)}$ to $(2^{O(z\log z)}\cdot \frac{k}{\eps})^{O(\rho)}$ running time per iteration, where $n=|X|$ is the number of points and $\rho := \rho(\epsilon, \DDim(M), z)$ is a large constant (depending only on $\epsilon, \DDim(M), z$).




\subsection{Overview of Our Techniques}
\vspace{0.1cm}
\noindent\textbf{The \FL Framework~\cite{FL11}.}
Our coreset construction makes use of the the framework of~\cite{FL11},
which we briefly discuss below.
Let $[X]^k$ be the ground set (the set of $k$-tuples) and $\delta$ be an $O(\eps/z)$-smoothed distance function.
For the $(k, z)$-clustering problem, assign a \emph{weighted} distance function $g_x : [X]^k \rightarrow \mathbb{R}_{\geq 0}$ to each point $x \in X$, such that $g_x(C) := w(x) \cdot \delta^z(x, C)$ for $C\in [X]^k$,
where $w : X \rightarrow \mathbb{R}_{\geq 0}$ is a weight function.
Consider the function set $\calG := \{ g_x \mid x \in X \}$.
The range spaces of $\calG$ is defined as $(\calG, \ranges(\calG))$, where $\ranges(\calG) := \{ \range(\calG, C, r) \mid C \in [X]^k, r \geq 0 \}$ and $\range(\calG, C, r) := \{ g_x \in \calG \mid g_x(C) \leq r \}$.
To interpret the definition, one can think of $\{ g_x\in \calG \mid g_x(C) \leq r \}$ as a ball of functions in $\calG$ that is centered at $C$ with radius $r$, and the distance from $C$ to $g_x \in \calG$ is measured as $g_x(C)$.
In the unweighted case, $\range(\calG, x, r)$ indeed corresponds to a ball in the metric space.
So $|\ranges(\calG)|$ counts the number of distinct balls (of functions in $\calG$) that may be formed by any center in the ground set and radii.

Recall that a subset $\calS\subseteq \calG$ is an $\alpha$-approximation for
the range space $(\calG, \ranges(\calG))$ if for any $\calR \in \ranges(\calG)$,
$
\bigl| |\calR|/|\calG| - |\calR \cap \calS|/|\calS| \bigr| \leq \alpha.
$
In other words, $\calS$ can be used as a good estimator for the density of $\calR$ relative to $\calG$.
It is shown in \cite[Theorem 4.1]{FL11} (restated in Theorem~\ref{thm:fl}) that,
if there is a subset $\calS\subseteq \calG$ such that $\calS$ is an $\alpha$-approximation for the range space of $\calG$,
then we can efficiently construct an $\eps(\alpha)$-coreset $S\subseteq X$ of size $|\calS|$ in $M(X,\delta)$.
Since $\delta$ is a small perturbation of the original distance function $d$, $S$ is also an $\eps(\alpha)$-coreset in $M(X,d)$.
Constructing an $\alpha$-approximation of small size is extensively studied in the PAC learning theory.
In particular, if a range space has bounded shattering (or VC) dimension, then a small sample (whose size depends on $\alpha$ and shattering dimension) from the set of functions would be an $\alpha$-approximation with constant probability
(see e.g.,~\cite{DBLP:journals/jcss/LiLS01}).
Hence, if $\calG$ has bounded shattering dimension, we can apply the existing $\alpha$-approximation construction.
This is also the approach taken in~\cite{FL11}.

\vspace{0.1cm}
\noindent
{\bf $\alpha$-Approximation.}
As one can imagine, in order to obtain an $\alpha$-approximation, we would like to apply Theorem~\ref{theorem:informal_weighted} (to bound the shattering dimension of
$(\calG, \ranges(\calG))$).
More precisely, for any $H\subseteq X$, we want to bound $|\ranges(\calG_H)|$ where $\calG_H:=\left\{g_x\in \calG\mid x\in H \right\}$.
However, the ground set of $\calG_H$ is the set of $k$-subsets $[X]^k$ (with the distance function in $\calG_H$ defined as $g_x(C) := w(x) \cdot \delta^z(x, C)$),
but the ground set of $\calF_H$ in Theorem~\ref{theorem:informal_weighted} is the point set $X$ of the metric space $M(X,d)$.
This is easy to handle: one can show that
$|\ranges(\calG_H)|\leq |\ranges(\calF_H)|^k$ (see Claim~\ref{claim:min_equal}).
Hence, we only need to bound $|\ranges(\calF_H)|$.
%
Another problem is that the bound for $|\ranges(\calF_H)|$ only holds with constant probability.
As a result, we cannot directly use the standard $\alpha$-approximation result.
In Lemma~\ref{lemma:restate_weak_app}, we introduce
a probabilistic analogue of the $\alpha$-approximation lemma from the PAC learning theory,
which only requires a bounded probabilistic shattering dimension.

Our proof borrows the classical double sampling idea from the construction of $\alpha$-nets in the PAC learning theory (see for example~\cite{kearns1994introduction}).
An obvious challenge is that we cannot afford to guarantee $|\ranges(\calF_H)|$ is small for many $H$ simultaneously by the union bound, which is required in the original proof.
Note that our guarantee has an additional randomness from $\delta$, and it is
important to take advantage the additional randomness.
We crucially use the fact that $\calF$ is actually indexed by $X$, that is, for each $x\in X$, a function $f_x \in \calF$ is generated by applying a \emph{random} map from $x$ to $w(x)\cdot\delta(x, \cdot)$.
This enables us to separate the two randomness, in a way that we view a sample from $\calF$, as firstly sampling from $X$ then applying a \emph{random} map on the sample.
The randomness of sampling from $X$ is used in the similar way as in the original proof, but the randomness of $\delta$ is used in another conditional probability argument to avoid the overlarge union bound. Full details of the proof are provided in Section~\ref{sec:approximation}.



\vspace{0.1cm}
\noindent
{\bf Doubling Dimension and Shattering Dimension.}
Now, we highlight some technical aspects of Theorems~\ref{theorem:informal_loose_bound}
and~\ref{theorem:informal_weighted},
which are the key technical contributions of this paper.
Our smoothed distance function $\delta$ is defined over the hierarchical net tree (see Section~\ref{subsec:doubling} for the definition)
of the doubling metric $M(X,d)$, i.e., for any $x,y\in X$, we define $\delta(x,y)$ to be the distance between their ancestors of a proper height in the hierarchical net tree (see Definition~\ref{section:smoothed_dis}).
We also define $B^{\delta}(x, r) := \left\{ y \in X \mid \delta(x, y) \leq r \right\}$ to be the ball of radius $r$ centered at $x\in X$ with respect to $\delta$.
%
We can show the smoothed distance function $\delta$ satisfies several
useful properties.
One is the \emph{smooth} property (Lemma~\ref{lemma:ball_equal}): roughly speaking,
for any radius $r$, $B^\delta(x, r) = B^\delta(u, r)$ for any point $x\in X$ and a nearby net point $u$ with higher height (relative to $r$). This intuitively means we can ``hang'' the center $x$ to the net point $u$.
Since the number of net points with higher height
is smaller, the smooth property greatly reduces the number of possible balls we need to consider.
Another important property is the cross-free property (Lemma~\ref{lemma:ball_laminar}), which implies that,
if we let $\calF$ be the range space induced by $\delta$,
then for any fixed $\calD \subseteq \calF$, for any ball $\range(\calF,x,r)$ in the range space, $\range(\calF,x,r) \cap \calD$ can be represented by a union of at most $\eps^{-O(\DDim(M))}$ subsets, and all these subsets are from a support of size $O(|\calD|)$.
This implies that at most $ |\calD|^{\eps^{-O(\DDim(M))}} $ possible subsets of $\calD$ can be formed by intersecting with balls of a fixed radius 
in the metric, which is the main observation used in our proof for Theorem~\ref{theorem:informal_loose_bound}.

Unfortunately, it seems difficult to
extend the above idea to the weighted case,
and our proof for Theorem~\ref{theorem:informal_weighted} is much more involved.
We restrict our attention to the weight function
such that the set of distinct weights $\{w_1, w_2, \ldots, w_l\}$ satisfies $w_1\geq 2w_2\geq 4w_3\geq\ldots\geq 2^{l-1}w_l$
(this suffices for the coreset construction).
Fix a set $H\subseteq X$  and let
$H_i=\{x\in H\mid w(x)=w_i\}$.
Essentially, we need to bound the number of different ranges
$
\bigcup_{i\in [l]} B^{\delta}(x, r/w_i)\cap H_i\,\,
$
$(r> 0, x\in X)$.
For this purpose, we divide $[0,+\infty)$ (the range of $r$) into
at most $O(|H|^4)$ {\em critical intervals}. Inside each critical interval,
we enforce some invariance properties. For a critical interval
$[a,b)$ with $b\leq 2a$, we simply apply the packing property to bound the number of different ranges for $r\in [a,b)$.
%
For a critical interval $[a,b)$ with $b \gg a$,
we need to use the
randomized hierarchical decomposition
developed in~\cite{DBLP:conf/stoc/AbrahamBN06} to enhance the smooth property.
We provide a more detailed overview of the proof in
Section~\ref{section:weighted_overview} when all necessary notations are
available.

\vspace{0.1cm}
\noindent
{\bf Robust Coreset.} We prove an improved
connection between $\alpha$-approximation and robust coreset,
which improves the one in \cite[Theorem 8.3]{FL11}.
Our proof is much simpler.
Combining with the $\alpha$-approximation result,
we can construct an $(\alpha,\eps)$-robust coreset in Euclidean space or doubling metrics.
The algorithm is extremely simple:
to take a uniform sample of size $\tilde{O}(kd/\alpha^2) $ or $\tilde{O}\big(k\cdot \DDim(M)/\alpha^2\big)$.

%
%

\ignore{
To interpret their result with respect to corestes, let us consider the $(1, z)$-clustering problem for the sake of presentation.
Let $[x]^K$ be the set of all possible centers, $\calF := \{w(x)\cdot d^z(x, \cdot)\}_{x \in X}$ be the set of weighted distance functions from each point to the center.
Assume the condition that $\calS\subseteq \calF$ is an $\alpha$-approximation for the range space of $\calF$ is satisfied, then there is a weighted set $S = \{ x\in X \mid w(x)\cdot d^z(x,\cdot)\in \calS \}$ together with weight function $w': \calS\rightarrow \R_+$, satisfying that
$S$ is an $\eps(\alpha)$-coreset.
}

\ignore{

In~\cite{FL11}, a framework for coresets in metrics with bounded dimensional range spaces was proposed. For each point $x$ in the metric, they assign a function $[X]^k \rightarrow \mathbb{R}_{\geq 0}$ which maps a clustering center to
weighted distance function $w(x) \cdot d(x, \cdot)$ it it,
Independent of a particular distance function and clustering objective, they constructed a general set of weighted functions $G$ based on $X$, that maps the ground set $[X]^k$ to $\mathbb{R}_{\geq 0}$.
Loosely speaking, they showed in~\cite[Theorem 4.1]{FL11} (which is restated in Theorem~\ref{thm:fl}) that, if there is a subset $D\subseteq G$ such that $D$ is an $\alpha$-approximation for the range space of $G$ (we shall illustrate the relevant notions later), then it is possible to efficiently find an $\eps(\alpha)$-coreset of size $|D|$.
\eat{
a weight function $w'$ and define $S := \{w'(f) \cdot f \mid f \in D\}$ which is a weighted subset of $F$, such that for any $x \in \mathcal{X}$,
\begin{align*}
	\left|\sum_{f\in F}{f(x) - \sum_{g \in S}{g(x)}}\right| \leq \eps(\alpha) \cdot \sum_{f\in F}{f(x)}.
\end{align*}
}

\noindent\textbf{Interpretation.}
To interpret their result with respect to corestes, let us consider the $(1, z)$-clustering problem for the sake of presentation.
Let $\mathcal{X} := X$ be the set of all possible centers, $G := \{\sigma(x)\cdot d^z(x, \cdot)\}_{x \in X}$ be the set of weighted distance functions from each point to the center.
Assume the condition that $D\subseteq G$ is an $\alpha$-approximation for the range space of $G$ is satisfied, then there is a weighted set $S = \{ x\in X \mid \sigma(x)\cdot d^z(x,\cdot)\in D \}$ together with weight function $w: D\rightarrow \R_+$, satisfying that
$S$ is an $\eps(\alpha)$-coreset.
\eat{
satisfies for any $x\in \mathcal{X} = X$,
\begin{align*}
	\left| \sum_{f\in F}{f(x)} - \sum_{g\in S}{g(x)} \right| \leq \eps(\alpha)\cdot \sum_{f \in F}{f(x)}.
\end{align*}
Since each function in $S$ is corresponding to a point in $X$, if we interpret $S$ as a weighted point set, then the above guarantee implies
}

}


\eat{

However, applying this framework is much more difficult for doubling metrics, as the dimension of the range space for doubling distance functions was largely unknown.
Indeed, the study of the dimension of the range space for doubling distance functions and the doubling dimension is our first technical contribution.

\subsubsection{Contribution: Dimension of Range Spaces Induced by Doubling Metrics}
\label{section:intro_dim_doubling}
\noindent\textbf{Dimension of Range Space.}

Before we go more into this direction, we note that the converse is not true. Consider a star graph with $n+1$ vertices (so there are $n$ leaves), such that edges are of unit weight. Suppose the metric space is the shortest path metric induced by the star graph. Then $\Dim(\calF_M) = O(1)$. However, a ball of radius $1$ centered at the star center will contain all points, while any ball of radius $\frac{1}{2}$ can contain only one point.
Thus the doubling dimension of the metric is $\Omega(\log{n})$.
We note that this argument holds even when we relax the notion of doubling dimension such that a small distortion is allowed.

Then we try to see if small $\DDim(M)$ implies $\Dim(\calF_M)$ is also small.
Since we want to use the framework in~\cite{FL11}, we need consider $\calF$ consisting of weighted distance functions of $M$.
Let $w : X \rightarrow \mathbb{R}_{\geq 0}$ be a weight function, and $\calF_M(w) := \{ w(x) \cdot d(x, \cdot) \}_{x\in X}$ be the weighted function set induced by the metric space.
We start with the unweighted case.

}

\section{Related Work}
\label{sec:related_work}

In the seminal paper \cite{agarwal2004approximating},
Agarwal et al. proposed
the notion of coresets for
the directional width problem (in which a coreset is called an $\epsilon$-kernel)
and several other geometric shape-fitting problems.
Since then, coresets have become increasingly more relevant
in the era of big data as they can reduce the size of a dataset with provable guarantee that
the answer on the coreset is a close approximation of the one on the whole dataset.
Many efficient algorithms for constructing small coresets
for clustering problems in Euclidean spaces are known (see e.g., ~\cite{agarwal2002exact,har2004clustering,chen2006k,harpeled2007smaller,langberg2010universal,FL11,feldman2013turning,braverman2016new}).
In particular, Feldman and Langberg~\cite{FL11}
(see their latest full version)
showed a construction for $\eps$-coresets of size $\tilde{O}(dk/\e^{2z})$ for general $(k,z)$-clustering problems with arbitrary $k$ and $z$, in $\tilde{O}(nk)$ time.
For the special case that $z=2$ which is the $k$-means clustering,
Braverman et al.~\cite{braverman2016new} improved the size to $\tilde{O}(k\min\left\{k/\eps, d \right\}/\eps^2)$,
which is \emph{independent} of the dimensionality $d$.
For another special case $z=\infty$, which is the $k$-center clustering, an $\eps$-coreset of size $O(k/\eps^d)$ can be constructed in $O(n+k/\eps^d)$ time, for $\mathbb{R}^d$~\cite{agarwal2002exact,har2004clustering}.
%
%
%
For general metrics, an $\eps$-coreset for the $(k, z)$-clustering problem of size $O(k\log n/\eps^{2z})$ can be constructed in time $\tilde{O}(nk)$~\cite{FL11}, and
for $k$-means clustering, Braverman et al.~\cite{braverman2016new} showed a construction of size $O(k\log k\log n/\eps^{2})$.
%
%
%
We also refer interested readers to Phillips's survey~\cite{phillips2016coresets}
for more construction algorithms as well as the applications of coresets in many other
areas.
%

%
%
%
%
%
%

Feldman and Langberg \cite{FL11} first studied the notion of robust coreset to handle the clustering problems with outliers. In $\R^d$, they showed how to construct a $(\gamma,\eps)$-coreset
\footnote{Note that their definition~\cite[Definition 8.1]{FL11} is similar but slightly different to ours.
However, considering the $(k,z)$-clustering problem with outliers, one can check that an $(\eps \gamma/4,\eps)$-robust coreset in our Definition \ref{def:robust_coreset} is a $(\gamma,\eps)$-coreset in~\cite[Definition 8.1]{FL11}. 
In fact, our defintion is more general. It is unclear whether their result applies to our definition.}
of size $\tilde{O}(kd\eps^{-4}\gamma^{-2})$ by uniform sampling.
We improve the bound to $\tilde{O}(kd\eps^{-2}\gamma^{-2})$.
Later, Feldman et al.~\cite{feldman2012data} developed another notion called \emph{weighted coreset} to handle outliers.
They used such coresets to design an $(1+\eps)$-approximation algorithm for the $k$-median problem with outliers.

Constructing coresets for clustering problems in Euclidean spaces has been also investigated 
in the streaming and distributed settings in the literature 
{e.g., \cite{FL11,feldman2013turning, DBLP:conf/nips/BalcanEL13,braverman2016new,DBLP:conf/icml/BravermanFLSY17}).
However, it is unclear how to define the streaming or distributed model 
in a general doubling metric, since 
there is no coordinate representation for each point
and we need all distances between the new coming point and the prior points.
Hence, in this paper, we focus on the centralized setting.


Besides unsupervised clustering problem, some supervised learning problems are 
also studied in the context of doubling metrics,
and the connections between doubling dimension and VC dimension (and closely related
notions) have been investigated in a variety of settings. 
Bshouty, Li and Long~\cite{DBLP:journals/jcss/BshoutyLL09} provided a generalization bound in terms of
the maximum of the doubling dimension and the VC-dimension of the hypothesis class $F$.
They also showed that the doubling dimension of metric $(F,d)$,
where the distance $d$ is defined as $d(f,g)=\Pr_x[f(x)\ne g(x)]$ for any two classifiers $f$ and $g$,
cannot be bounded by the VC-dimension of $F$ in general.
Gottlieb et al.~\cite{DBLP:journals/tit/GottliebKK14} studies the classification problem of points in a metric space, and obtained a generalization bound with respect to the doubling dimension.
Abraham et al.~\cite{DBLP:journals/jcss/BshoutyLL09}
introduced the concept of highway dimension (which is closely related to doubling dimension)
in the context of designing efficient shortest path algorithm,
and they showed that VC dimension and learning theory are also useful in this context.



Alon et al.~\cite{alon2003testing} first considered the property testing problem in the context of clustering.
In particular, they studied the testing algorithm for $k$-center clustering.
In this paper, we use robust coreset to develop a unified testing algorithm for $(k,z)$-clustering (for constant $k$ and $z$).
As pointed out in \cite{alon2003testing}, the testing algorithms can be converted into a sublinear time approximation algorithms for clustering with outliers.
One interesting benefit of such algorithms is that they can answer the query ``which cluster does a data point belong to'', without actually having to partition all the data points.



\ignore{
	\noindent\textbf{Framework of~\cite{FL11}.}
	We follow the framework in~\cite{FL11}.
	In the framework of~\cite{FL11}, the algorithm first computes a weight for each point in the metric, and then do a number of independent weighted sampling of the points. In the end, the coreset is formed by the sampled points and their weights can also be calculated efficiently.
	
	The non-uniform sampling is necessary. Consider an instance where $n-1$ points coincides with each other, but the $n$-th point locates unit length away from the $n-1$ points. Then the $n$-th point is very unlikely to be sampled uniformly, and once this point is not sampled, the error of the coreset cannot be within $(1\pm \eps)$ times the objective.
	
	The weight of a point measures its importance, so that the sampling on the weighted point set is an unbiased estimation for the clustering objective, for any $k$-subsets.
	However, there are two outstanding issues. One is that the variance of the sampling scheme depends on the weights, and the overall weights cannot be too large. Second, we need the error be bounded for \emph{all} $k$ tuples of centers simultaneously, and applying union bound on the sampling scheme would not work.
	
	For the first issue, as noted in~\cite{varadarajan2012sensitivity}, the total weights can actually be bounded by a function of $k$ and $z$ and is independent of the metric space. We also use this result.
	
	For the second issue, a general approach is to adapt the PAC learning theory. We make use of a seminar result by Feldman and Langberg~\cite{FL11}.
	Loosely speaking, if there is an $\alpha$-approximation for range space induced by the weighted distance functions, then there is a coreset for the clustering problem. This is formally restated in Theorem~\ref{thm:fl}.
	See Definition~\ref{def:epsapp} for the definition of $\alpha$-approximation.
	
	Their framework is proved to work for general metrics and bounded dimensional Euclidean spaces, but non-trivial technical difficulties have to be resolved to adapt it to doubling metrics. In particular, to apply Theorem~\ref{thm:fl}, an important step is to bound the dimension of the range space induced by the metric space, with an arbitrary weight function.
	
	\noindent\textbf{Range Space.}
	For a set $F$ of functions $X\rightarrow \mathbb{R}_+$, the range space of $F$ is defined as $\ranges(F) := \{ \{ f \in F \mid f(x) \leq r \} \mid x\in X, r\geq 0 \}$.
	To interpret the definition, one can think $\{ f\in F \mid f(x) \leq r \}$ as a ball of elements in $F$ that is centered at $x$ with radius $r$, and the distance from $x$ to $f \in F$ is measured as $f(x)$.
	So $|\ranges(F)|$ would count the number of distinct balls (on $F$) that may be formed by any center and radius.
}

\section{Preliminaries}
\label{section:prelim}
Let $[m] := \{1, 2, \ldots, m\}$ for an integer $m \geq 1$.
For a function $f$ defined on some ground set $\mathcal{U}$ and $S\subseteq \mathcal{U}$, let $f(S) := \{ f(x) \mid x\in S \}$.
For a set $S$ and integer $k\geq 1$, let $[S]^k := \{ P \mid P \subseteq S , |P| = k\}$.
%
Consider a metric space $M(X, d)$.
Define $B^{\xi}(x, r) := \left\{ y \in X \mid \xi(x, y) \leq r \right\}$ to be the ball of radius $r$ centered at $x\in X$, with respect to some function $\xi : X\times X \rightarrow \mathbb{R}_{\geq 0}$.
For $S\subseteq X$ define the diameter of $S$ as $\diam(S) := \max_{x, y \in S}\left\{ d(x, y) \right\}$. 
For $S, T\subseteq X$, define $d(S, T) := \min_{x\in S,y \in T}{d(x, y)}$.

\eat{
\vspace{0.1cm}
\noindent\textbf{Coreset.} We consider the problem of constructing an $\epsilon$-coreset for the $(k, z)$-clustering problems in a metric space $M(X, d)$.
Let $\kdist_z(X, C) := \sum_{x \in X}{d^z(x, C)}$ denote the objective function for the $(k, z)$-clustering problem.
An $\eps$-coreset for the $(k, z)$ clustering problem is a weighted subset $S$ of $X$ with weight $w : S \rightarrow \mathbb{R}_{\geq 0}$, such that for any center $C \in [X]^k$,
\begin{align*}
	\sum_{x \in S}{w(x) \cdot d^z(x, C)} \in (1 \pm \eps) \cdot \kdist_z(X, C).
\end{align*}

\noindent\textbf{Robust Coreset.}
We also consider the problem of constructing small robust coresetS.
For any $H\subseteq X$ and $C\in [X]^k$, let 
$
\kdist_z^{-\tau}(H, C)
$
denote the sum of the smallest $\lceil (1-\tau) |H| \rceil$ values $d^z(x,O)$ for $x\in H$. 
An $(\alpha,\eps)$-robust coreset for the $(k,z)$-clustering problem with outliers is a subset $S\subseteq X$ such that for any $k$-subset $C\in [X]^k$ and any $\alpha< \tau< 1-\alpha$,
\[
(1-\eps)\cdot \frac{\kdist_z^{-(\tau+\alpha)}(X, C)}{|X|} \leq \frac{\kdist_z^{-\tau}(S, C)}{|S|} \leq (1+\eps)\cdot \frac{\kdist_z^{-(\tau-\alpha)}(X, C)}{|X|}.
\]
\jian{it doesn't make much sense to have one formal def in intro and another one here. for both coreset and robust coreset.
maybe a less formal ones in intro and formal def here.}
}

\subsection{Doubling Dimension and Hierarchical Nets}
\label{subsec:doubling}
\begin{definition}[doubling dimension]
	\label{def:doublingdim}
	A metric space has doubling dimension at most $t$, if any ball can be covered by at most $2^t$ balls of half the radius. The doubling dimension of a metric space $M$ is denoted as $\DDim(M)$. 
\end{definition}

\noindent\textbf{Covering, Packing and Net.}
Consider a subset of points $S\subseteq X$.
$S$ is a $\rho$-covering, if for any $x\in X$, there exists $y \in S$ such that $d(x, y) \leq \rho$.
$S$ is a $\rho$-packing, if for all $x, y \in S$, it holds that $d(x, y) \geq \rho$.
$S$ is a $\rho$-net, if $S$ is both a $\rho$-packing and a $\rho$-covering.
\begin{fact}[packing property. see. e.g.,~\cite{DBLP:conf/focs/GuptaKL03}]
	\label{fact:packing}
	Given a metric space $M(X,d)$, if $S\subseteq X$ is a $\rho$-packing then
	$|S| \leq (\frac{2\cdot \diam(S)}{\rho})^{\DDim(M)}$.
\end{fact}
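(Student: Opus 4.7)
The plan is to prove this standard packing bound by iterating the doubling property to cover a ball enclosing $S$ with progressively smaller balls, down to a scale at which each small ball can contain at most one point of $S$.

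First I would anchor the set. Pick an arbitrary $x_0 \in S$ and observe that, by the definition of diameter, $S \subseteq B^d(x_0, \diam(S))$. Write $D := \diam(S)$ for brevity, so the task reduces to bounding the number of $\rho$-separated points inside $B^d(x_0, D)$.

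Next I would unfold the doubling property by induction on the number of halvings. By Definition~\ref{def:doublingdim}, $B^d(x_0, D)$ can be covered by at most $2^{\DDim(M)}$ balls of radius $D/2$. Applying the same property to each of these in turn, after $\ell$ rounds $B^d(x_0, D)$ is covered by at most $(2^{\DDim(M)})^{\ell} = 2^{\ell \cdot \DDim(M)}$ balls of radius $D/2^{\ell}$. Now I would choose the smallest $\ell$ for which $D/2^{\ell} < \rho/2$; the choice $\ell := \lceil \log_2(2D/\rho) \rceil$ works, and it satisfies $2^{\ell} \leq 2D/\rho$ (using that $\log_2(2D/\rho)$ is essentially tight once we allow the ceiling, and handling the boundary case by the fact that balls of radius strictly less than $\rho/2$ have diameter strictly less than $\rho$).

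Finally, each ball of radius less than $\rho/2$ in this cover has diameter less than $\rho$, so it contains at most one element of the $\rho$-packing $S$: two distinct points $y, y' \in S$ inside such a ball would satisfy $d(y, y') < \rho$, contradicting the packing property. Summing over the covering balls gives $|S| \leq 2^{\ell \cdot \DDim(M)} \leq (2D/\rho)^{\DDim(M)}$, which is the claimed inequality.

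There is no real obstacle here; the only subtlety is the off-by-one in choosing $\ell$ so that the final radius is small enough to force at-most-one point per small ball while keeping the count no larger than $(2D/\rho)^{\DDim(M)}$. This is a matter of careful bookkeeping rather than any substantive difficulty, so I would expect the proof to be very short.
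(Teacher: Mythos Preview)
The paper does not prove this fact; it is stated with a citation to~\cite{DBLP:conf/focs/GuptaKL03} as a standard property of doubling metrics, so there is no ``paper's proof'' to compare against. Your iterated-halving argument is exactly the standard one and is correct in structure.

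There is one genuine slip in the bookkeeping, though. With $\ell := \lceil \log_2(2D/\rho) \rceil$ you have $2^{\ell} \geq 2D/\rho$, not $2^{\ell} \leq 2D/\rho$ as you wrote; the ceiling rounds \emph{up}, so $2^{\ell}$ can be as large as (just under) $4D/\rho$. Consequently the bound your argument actually delivers is $|S| \leq (4D/\rho)^{\DDim(M)}$, a factor of $2^{\DDim(M)}$ off from the stated constant. You correctly flagged the off-by-one as the only delicate point, but you resolved it in the wrong direction. For the purposes of this paper the discrepancy is harmless: every invocation of Fact~\ref{fact:packing} in the paper (Theorem~\ref{theorem:loose_bound}, Lemma~\ref{lemma:bounded_interval}, Theorem~\ref{theorem:app_local_search}) absorbs the constant into an $O(\cdot)^{\DDim(M)}$. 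If you want the exact constant $2$ in the base, you need a slightly sharper argument (e.g., stop at radius exactly $\rho/2$ and argue that a closed ball of radius $\rho/2$ containing two packing points forces them to be at distance exactly $\rho$, then handle that boundary separately), but this is cosmetic.
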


\noindent\textbf{Hierarchical Nets and Net Trees.}
Now, we introduce some useful concepts that are 
well known in the doubling metric literature (see e.g.~\cite{DBLP:conf/stoc/Talwar04,DBLP:journals/talg/ChanGMZ16}).
Rescale the metric such that the minimum intra-point distance is $1$.
Suppose the diameter of the space is between $[2^{L-1}, 2^L)$.
Construct nets $N_L \subseteq N_{L-1} \subseteq \ldots \subseteq N_1 \subseteq N_0 = N_{-1} = \ldots = N_{-\infty} =  X$, where $N_i$ is a $2^{i}$-net of $N_{i-1}$.
The set of nets $\{N_i \mid i \leq L\}$ is called a hierarchical net.

We identify a point $u\in N_i$ in the tree by $u^{(i)}$ for $i \leq L$.
Note that the same point may belong to several $N_i$'s, but they have different identities.
A net tree is a rooted tree with node set $\left\{ u^{(i)} \mid i  \leq L , u \in N_i \right\}$, and the root is defined as the only node in $N_L$ (observing that $|N_L|=1$).
For each $u\in N_i$, $u^{(i)}$ has a unique parent node $v^{(i+1)}$ such that $ v\in N_{i+1}$, and we denote $\Par(u^{(i)}) = v^{(i+1)}$. 

For a net tree, define $\Des(u^{(i)}) \subseteq X$ to be the set of points in the metric space corresponding to descendants of $u^{(i)} \in N_i$ in the net tree.
For a leaf node $x\in X$, define $\Par^{(i)}(x)$ to be the ancestor of $x$ in $N_i$.

\begin{definition}[$c$-covering net trees]
	\label{def:coveringnettree}
	A net tree is $c$-covering ($c\geq 1$), if for each height $i$ and each $u \in N_i$, it holds that $d(u, \Par(u^{(i)})) \leq c \cdot 2^{i+1}$.
\end{definition}

\noindent
The following fact is immediate from Definition~\ref{def:coveringnettree}.

\begin{fact}
	\label{fact:des_dis}
	In a $c$-covering net tree, for each $x\in X$ it holds that $d(x, \Par^{(i)}(x)) \leq c \cdot 2^{i+1}$.
\end{fact}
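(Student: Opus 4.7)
The plan is to prove Fact~\ref{fact:des_dis} by telescoping the parent-to-child bound from Definition~\ref{def:coveringnettree} along the path in the net tree from $x$ up to its height-$i$ ancestor $\Par^{(i)}(x)$, then summing the resulting geometric series. Since $N_0 = X$, every $x \in X$ sits at height $0$ in the net tree (identified as $x^{(0)}$), so we only need to walk up levels $0, 1, \ldots, i-1$ to reach $\Par^{(i)}(x)$, and at each step the jump is controlled by the $c$-covering property.

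Concretely, I would first set $u_k := \Par^{(k)}(x)$ for $0 \leq k \leq i$, with the convention $u_0 = x$, and note that by construction $u_{k+1} = \Par(u_k^{(k)})$. Definition~\ref{def:coveringnettree} applied at height $k$ then gives
\[
d(u_k, u_{k+1}) \;=\; d(u_k, \Par(u_k^{(k)})) \;\leq\; c \cdot 2^{k+1}.
\]
A single application of the triangle inequality along $u_0, u_1, \ldots, u_i$ yields
\[
d(x, \Par^{(i)}(x)) \;\leq\; \sum_{k=0}^{i-1} d(u_k, u_{k+1}) \;\leq\; \sum_{k=0}^{i-1} c \cdot 2^{k+1} \;=\; c \cdot (2^{i+1} - 2) \;\leq\; c \cdot 2^{i+1},
\]
which is exactly the desired inequality.

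There is no real obstacle here: the claim is a direct consequence of the definition of a $c$-covering net tree together with the triangle inequality, and the only arithmetic input is the standard geometric-sum bound $\sum_{k=0}^{i-1} 2^{k+1} = 2^{i+1} - 2 \leq 2^{i+1}$. The one cosmetic point worth mentioning is that the hierarchy stabilizes below level $0$ (i.e. $N_0 = N_{-1} = \cdots = X$), so one should identify $x$ with $x^{(0)}$ to ensure the telescoping starts at height $0$; no lower levels contribute because the parent relation there is the identity.
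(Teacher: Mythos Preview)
Your proof is correct and is exactly the natural elaboration of the telescoping argument that the paper leaves implicit (the paper simply asserts the fact is ``immediate from Definition~\ref{def:coveringnettree}'' without writing out the geometric-series computation). Your handling of the levels below $0$ is also the right observation, though it is not strictly needed for the bound since the geometric tail $\sum_{k<0} c\cdot 2^{k+1}$ would only add at most $2c$ anyway.
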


\subsection{Range Space, Shattering Dimension and $\alpha$-Approximation}

\ignore{

As illustrated in Section~\ref{section:intro}, for each point $x\in X$, we assign it a weight $w(x) \geq 0$, and consider a function $f_x : X \rightarrow \mathbb{R}_+$ as $f_x(y) := w(x) \cdot d(x, y)$ for $y\in X$. In other words, $f_x$ is the weighted distance from $x$ to other points.

}
We adopt the function representation used in~\cite[Definition 7.2]{FL11}, but 
specifically tailored to our own needs.
In particular, since we focus on the clustering problems
in a doubling metric $M(X,d)$, the ground set is $[X]^k$ (the set of $k$-subsets) 
throughout the paper.
When $k=1$, we use $X$ to represent $[X]^1$ for simplicity.
%

\noindent\textbf{Indexed Function Sets.}
As seen in Section~\ref{section:intro}, we mainly focus on range spaces induced by a metric space. 
Hence we always consider \emph{indexed} function sets.
A set of functions $\calF$ is called {\em indexed}, 
if there exists an index set $V$ such that $\calF = \{ f_x \mid x \in V \}$.
In most cases, we simply use $V=X$ as the index set. 
We will make necessary clarification when we use other index set.
For an indexed function set $\calF$, define $\calF_H := \{ f_x \mid x\in H \}$ for a subset $H\subseteq V$ of the index set.
There are technical reasons to consider the indexed function set (rather than a general set of functions). 
See Remark~\ref{remark:indexed_func}.

\noindent\textbf{Range Space.}
Let $\calF$ be an indexed function set.
Define $\range(\calF, C, r) := \left\{ f_x \in \calF \mid f_x(C) \leq r \right\}$ for $C\in [X]^k, r \geq 0$.
Define $\ranges(\calF):=\left\{\range(\calF, C, r) \mid C \in [X]^k, r\geq 0 \right\}$ to be the collection of all the range sets. 
The range space of $\calF$ is defined as the pair $(\calF, \ranges(\calF))$.
%
%

Now, We define the dimension of a range space,
following~\cite{FL11}.
\begin{definition}[(shattering) dimension of a range space]
Suppose $\calF$ is an indexed function set with ground set $[X]^k$.
The (shattering) dimension of the range space $(\calF, \ranges(\calF))$, or simply the (shattering) dimension of $\calF$, denoted as $\Dim(\calF)$, is the smallest integer $t$, such that for any $\calD\subseteq \calF$ with $|\calD|\geq 2$, $|\ranges(\calD)| \leq |\calD|^t$. We note that in $\ranges(\calD)$, the same ground set $[X]^k$ is implicit.
\end{definition}

However, as discussed in Section~\ref{section:intro}, our guarantee of the dimension for the weighted doubling distance functions only holds in a probabilistic sense.
We capture this formally in the following.

\begin{definition}[probabilistic (shattering) dimension of a range space]
	\label{def:pdim}
	Suppose $\calF$ is a random indexed function set with a deterministic index set denoted as $V$.
	The $\tau$-error probabilistic (shattering) dimension of $(\calF, \ranges(\calF))$, or simply the $\tau$-error probabilistic dimension of $\calF$, denoted as $\PDim_{\tau}(\calF)$,
	is the smallest integer $t$ such that for any fixed $H \subseteq V$ with $|H|\geq 2$, $|\ranges(\calF_H)| \leq |H|^t$ with probability at least $1 - \tau$.
\end{definition}

\ignore{

\begin{definition}[VC-dimension]
	\label{definition:vc_dim}
	Suppose $\calF$ is a set of functions $[X]^k \rightarrow \mathbb{R}_{\geq 0}$. The VC-dimension of the range space $(\calF, \ranges(\calF))$ is the largest integer $t$, such that there exists a subset $\calD\subseteq \calF$ of size $t$ satisfying $|\ranges(\calD)|=2^t$.
\end{definition}

\begin{fact} (see e.g., \cite[Lemma 5.14]{har2011geometric})
	\label{lm:balltovc}
	Suppose $\calF$ is a set of functions $[X]^k \rightarrow \mathbb{R}_+$.
	If $\Dim(\calF)$ is $t$, then the VC-dimension of $\calF$ is bounded by $O(t\log t)$.
\end{fact}

}

We need a well studied notion in the PAC learning theory, called $\alpha$-approximation.

\begin{definition} [$\alpha$-approximation of a range space]
	\label{def:epsilonapp}
	Given a range space $(\calF,\ranges(\calF))$ (with ground set $[X]^k$), a set $\calS\subseteq \calF$ is an $\alpha$-approximation of the range space, if for every $\range(\calF, C, r) \in \mathsf{ranges}(\calF)$ ($C \in [X]^k, r\geq 0$)
	\begin{align*}
	\left|\frac{|\range(\calF, C, r)|}{|\calF|}-\frac{|\calS\cap \range(\calF, C, r)|}{|\calS|}\right|
	\leq \alpha.
	\end{align*}
\end{definition}
In particular, it was shown that a small sized (depending on $\alpha$ and the VC dimension\footnote{Our definition of the dimension is the shattering dimension of a range space, which tightly relates to the VC-dimension (see for example~\cite{kearns1994introduction}). In particular,
	if $\Dim(\calF)$ is $t$, then the VC-dimension of $\calF$ is bounded by $O(t\log t)$.}) independent sample from the function set is an $\alpha$-approximation with constant probability (see for example~\cite{DBLP:journals/jcss/LiLS01}).
However, the traditional results are for range spaces with bounded VC dimension only, and our probabilistic dimension is very different in nature.
We prove the following version of the sampling bound that only requires a bounded probabilistic dimension.
The proof can be found in Section~\ref{sec:approximation}.

\begin{lemma}
	\label{lemma:restate_weak_app}
	Suppose $\calF$ is a random indexed function set with fixed index set $V$. In addition, suppose $T : \mathbb{N} \times \mathbb{R}_{\geq 0}$ satisfies for any $H \subseteq V$ and $0<\gamma<1$,
	\begin{align*}
		\Pr[|\ranges(\calF_H)| \leq T(|H|, \gamma) ] \geq 1 - \gamma.
	\end{align*}
	%
	%
	Let $\calS$ be a collection of $m$ uniformly independent samples from $\mathcal{\calF}$. 
	Then with probability at least $1-\tau$, $\calS$ is an $\alpha$-approximation of the range space $(\calF, \ranges(\calF))$, where the randomness is taken over $\calS, \calF$ and
	\begin{align*}
		\alpha := \sqrt{\frac{48\left(\log(T(2m, \frac{\tau}{4})) + \log{\frac{8}{\tau}}\right)}{m}}.
	\end{align*}
\end{lemma}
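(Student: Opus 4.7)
The plan is to adapt the classical double-sampling argument used to prove the standard $\alpha$-approximation lemma in PAC learning (see e.g.,~\cite{kearns1994introduction}), while crucially exploiting the fact that $\calF = \{f_x \mid x\in V\}$ is indexed by the \emph{deterministic} set $V$ in order to separate the two sources of randomness: the sampling of $\calS$ and the randomness inside $\calF$ itself. First, I reformulate the sampling procedure as drawing $S_1 \subseteq V$ uniformly (with replacement) of size $m$ and setting $\calS := \calF_{S_1}$. I then introduce an independent ghost sample $S_2 \subseteq V$ of size $m$, and define $\calT := \calF_{S_2}$. By the standard symmetrization step, the probability that $\calS$ fails to be an $\alpha$-approximation of $(\calF, \ranges(\calF))$ is, up to a factor of $2$ and negligible slack for $m \geq \Omega(1/\alpha^2)$, at most the probability of the event $E_{\mathrm{sym}}$ that some $R \in \ranges(\calF)$ satisfies $\bigl||R\cap\calS|/m - |R\cap\calT|/m\bigr| > \alpha/2$.

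Next, I condition on the multiset $H := S_1 \cup S_2$, which has size at most $2m$ and, importantly, lies in the deterministic set $V$. This is where the two sources of randomness are decoupled: since $H$ is a fixed subset of $V$ once conditioned on, the hypothesis of the lemma can be applied with $\gamma = \tau/4$ to conclude that, conditionally on $H$, the event $G_H := \{|\ranges(\calF_H)| \leq T(2m, \tau/4)\}$ holds with probability at least $1 - \tau/4$ over the randomness of $\calF$. Since every range appearing in $E_{\mathrm{sym}}$ intersects $\calS \cup \calT \subseteq \calF_H$, only ranges in $\ranges(\calF_H)$ are relevant, so conditional on $(H, G_H)$ at most $T(2m, \tau/4)$ ranges matter.

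I then further condition on $\calF_H$ (the realization of the random map restricted to $H$) inside the good event $G_H$. Conditional on $H$ and $\calF_H$, the labeling of each element of $H$ as belonging to $S_1$ or $S_2$ is uniformly random, so for every \emph{fixed} range $R \in \ranges(\calF_H)$, a Hoeffding-type bound gives $\Pr\bigl[\,\bigl||R\cap\calS|/m - |R\cap\calT|/m\bigr| > \alpha/2\,\bigr] \leq 2\exp(-m\alpha^2/48)$. A union bound over the at most $T(2m, \tau/4)$ ranges, followed by integrating out $(\calF_H, H)$ and adding the $\tau/4$ failure probability of $G_H$, produces an overall failure probability at most $\tau/4 + 2\,T(2m, \tau/4)\exp(-m\alpha^2/48)$; equating this with $\tau/2$ and accounting for the factor $2$ from the symmetrization step yields the claimed value of $\alpha$.

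The main obstacle is exactly the one highlighted in the paper's overview: a naive attempt would try to apply the probabilistic shattering dimension bound to every possible $H \subseteq V$ and take a union bound over all such $H$, which is hopelessly wasteful because $|V|$ can be arbitrarily large and each application costs a factor $\gamma$. The resolution is the ordering above: condition on $H$ \emph{first}, then invoke the probabilistic shattering bound once (with parameter $\tau/4$), and only then use the residual randomness of the $S_1 / S_2$ partition for concentration. A secondary technicality is that sampling is with replacement so $H$ is a multiset; this is handled in the standard way by showing the with-replacement bound dominates the without-replacement analysis used in the Hoeffding step.
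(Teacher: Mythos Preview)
Your proposal is correct and takes essentially the same approach as the paper: both arguments first sample the combined set $H$ (the paper calls it $P$) of size $2m$ from the deterministic index set $V$, apply the probabilistic shattering bound once to this fixed $H$ with parameter $\tau/4$, and then use the residual randomness of the $S_1/S_2$ split (sampling without replacement, handled via negative correlation/Hoeffding) together with a union bound over at most $T(2m,\tau/4)$ ranges. The only cosmetic differences are that the paper phrases the double sample as ``draw $P$ of size $2m$, then draw $S$ of size $m$ from $P$'' rather than ``draw $S_1,S_2$ independently and set $H=S_1\cup S_2$,'' and it uses the threshold $\alpha/4$ rather than $\alpha/2$ in the symmetrization step; the constants and logical flow are otherwise identical.
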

In this bound, we directly use the size $|\ranges(\calF_H)|$ (rather than $\PDim$), 
which can provide a slightly more precise bound.
\footnote{
In Corollary~\ref{corollary:weighted_z_ball}, we can actually show $|\ranges(\calF_H)| \leq \eps^{-O(\DDim(M))} \cdot \log{\frac{1}{\tau}}\cdot \poly(|H|)$ with probability $1-\tau$, for a set $\calF$ of weighted doubling distance functions.
Of course we can also say $\PDim_\tau(\calF) \leq O\left(\DDim(M)\cdot \log(1/\eps) + \log\log1/\tau \right)$, but this would lead to a slightly looser bound.
}

\begin{remark}
	\label{remark:indexed_func}
	There are also technical reasons for considering indexed function sets.
	As discussed in Section~\ref{section:intro}, regarding the $\alpha$-approximation, we crucially use the fact that the function set is indexed (in particular the index set is fixed), 
	and we do not manage to prove the $\alpha$-approximation lemma (Lemma~\ref{lemma:restate_weak_app}) for more general function sets.
\end{remark}

\eat{
\begin{lemma}
	\label{lemma:restate_weak_app}
	Let $[X]^k$ be a ground set, and let $\mathcal{F}$ be a family of functions $[X]^k \rightarrow \mathbb{R}_{\geq 0}$.
	Suppose for any $\calD\subseteq \mathcal{F}$, there exists a function $T:\Z\rightarrow \Z$ such that
	\begin{align*}
	|\ranges(\calD)| \leq T(|\calD|).
	\end{align*}
	Let $\calS$ be a collection of $m$ uniformly independent samples from $\mathcal{F}$. Then with probability at least $1-\tau$, $\calS$ is an $\alpha$-approximation of the range space $(\mathcal{F}, \ranges(\calF))$ with
	\begin{align*}
	\alpha := \sqrt{\frac{32\left(\log{T(2m)} + \log{\frac{2}{\tau}}\right)}{m}}.
	\end{align*}
\end{lemma}

\begin{proof}
	By the proof idea in~\cite{HubertLec}, we have
	$ T(2m) \cdot 2e^{-\frac{\alpha^2 m}{32}} \leq \tau$.
	By calculation, it completes the proof.
\end{proof}
}

\section{Warmup: Unweighted Doubling Metrics}
\label{section:dim_range_doubling}
Let $M(X, d)$ be a doubling metric.
Consider the function set $\calF := \left\{ f_x(\cdot) \mid x \in X \right\}$ indexed by $X$ with $f_x(y) := d(x, y)$ for $y\in X$. 
It is well known that a bounded dimensional Euclidean space is a special case of doubling metrics,
and $\Dim(\calF) \leq O(t)$ if $M$ is the $t$-dimensional Euclidean space.
However, for a general doubling metric $M$, $\Dim(\calF)$ may not be bounded, as stated in the following theorem.



\begin{theorem}
	\label{theorem:doubling_high_ball_dim}
	For any integer $n\geq 1$, there is a metric space $M_n(X_n, d_n)$ with $2^n + n$ unweighted points such that $\DDim(M_n) \leq 2$ and $\Dim(\calF^{M_n}) \geq n/\log n$, where $\calF^{M_n} := \{ d_n(x, \cdot) \mid x\in X_n \}$.
\end{theorem}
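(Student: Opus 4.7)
The plan is to reduce the statement to a combinatorial shattering requirement and then construct a metric realising it. First, I would show that if there are $n$ points $p_1,\dots,p_n\in X_n$ such that every subset $T\subseteq[n]$ equals $B^{d_n}(c_T,r_T)\cap\{p_1,\dots,p_n\}$ for some witness $c_T\in X_n$ and radius $r_T\geq 0$, then the theorem follows: setting $\calD:=\{f_{p_i}:i\in[n]\}$ with $f_{p_i}(y):=d_n(p_i,y)$, each $T$ yields a distinct range $\range(\calD,c_T,r_T)=\{f_{p_i}:i\in T\}$, so $|\ranges(\calD)|\geq 2^n$; combining with $|\ranges(\calD)|\leq |\calD|^{\Dim(\calF^{M_n})}=n^{\Dim(\calF^{M_n})}$ forces $\Dim(\calF^{M_n})\geq n/\log n$. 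Dedicating one witness $c_T$ per subset together with the $n$ data points accounts for the required $|X_n|=2^n+n$.

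Next, I would set up the distances so that each $c_T$ has a uniform ``close'' distance $\alpha$ to every data point $p_i$ with $i\in T$ and a strictly larger ``far'' distance $\beta>\alpha$ to the rest, so that any $r_T\in[\alpha,\beta)$ cuts out $\{p_i:i\in T\}$. The triangle inequality is easy to enforce on data--witness and data--data pairs (one just takes $d_n(p_i,p_j)$ in $[\beta-\alpha,\alpha+\beta]$), reducing the remaining design freedom to the choice of $d_n(c_T,c_{T'})$, which must also lie in $[\beta-\alpha,\alpha+\beta]$.

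The hard part is enforcing $\DDim(M_n)\leq 2$. The naive choice $d_n(c_T,c_{T'})\equiv \beta-\alpha$ makes the $2^n$ witnesses essentially equidistant, and then a ball of radius $\alpha$ around any $p_i$ already contains $2^{n-1}$ witnesses pairwise separated by about $\beta-\alpha$, so no $O(1)$-size cover by half-radius balls is possible and $\DDim=\Omega(n)$. The construction must therefore spread the witnesses along a low-dimensional scaffold---for instance a path (or shallow bounded-degree tree) with geometrically varying inter-witness distances---so that at every dyadic scale only $O(1)$ witnesses fall into any single ball, and the packing property (Fact~\ref{fact:packing}) then certifies a $4$-point cover. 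Concretely, I would organise the $c_T$'s as leaves of a carefully weighted hierarchical structure and couple the witness--data distances to this scaffold, verifying $\DDim\leq 2$ scale by scale. The main obstacle is precisely this simultaneous enforcement of the shattering thresholds and the doubling bound, since the two requirements pull the distances in opposite directions: shattering demands witnesses that are sufficiently ``spread apart'' to realise all $2^n$ distinct subsets of $\{p_1,\dots,p_n\}$, while the doubling bound demands that they pack into an $O(1)$-dimensional geometry.
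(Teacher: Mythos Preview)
Your framework is correct, and you have identified the right obstacle, but you stop short of the key insight that resolves it. The paper's construction is much simpler than the ``carefully weighted hierarchical structure'' you anticipate, and your proposed direction of ``geometrically varying inter-witness distances'' coupled to the scaffold is both unnecessary and left unfinished.

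The trick you are missing is a \emph{two-scale} design. Put the $n$ data points $u_1,\dots,u_n$ on an integer path (so $d_n(u_i,u_j)=|i-j|$) and the $2^n$ witnesses $v_0,\dots,v_{2^n-1}$ on another integer path (so $d_n(v_i,v_j)=|i-j|$). Now take $\alpha=2^{n+1}$ and $\beta=2^{n+1}+1$: the witness--data gap $\beta-\alpha=1$ is tiny, while $\alpha$ itself exceeds the diameter $2^n-1$ of the witness path. This collapses the doubling verification to just two regimes: any ball of radius $r<2^{n+1}$ is confined to a single path (a $1$-dimensional metric), and any ball of radius $r\geq 2^{n+1}$ can be covered by exactly two balls of radius $r/2\geq 2^n$, one swallowing each entire path. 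Your worry that a ball of radius $\alpha$ around a data point contains $2^{n-1}$ witnesses is correct, but it is harmless precisely because the half-radius $\alpha/2=2^n$ already covers \emph{all} $2^n$ witnesses with a single ball.

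You also left the data--data distances vague (``in $[\beta-\alpha,\alpha+\beta]$''); if the $p_i$ were roughly equidistant this would itself violate doubling. Placing them on a path fixes this for free. In short, the resolution is not to make the witness geometry more intricate, but to push the cross-distances $\alpha,\beta$ to a scale so large that the shattering thresholds become invisible to the doubling condition.
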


\begin{proof}
	We start with the definition of $M_n(X_n, d_n)$.
	
	Define $L_n := \left\{ u_1, u_2, \ldots, u_n \right\}$,
	$R_n := \left\{ v_0, v_2, \ldots, v_{2^n-1} \right\}$.
	Define the point set of $M_n$ to be $X_n:=L_n\cup R_n$.
	For $1\leq i \leq j \leq n$, define $d_n(u_i, u_j) := |j-i|$.
	For $0\leq i \leq j \leq 2^n-1$, define $d_n(v_i, v_j) := |j-i|$.
	For $u_i \in L_n$ and $v_j \in R_n$,
	define $d_n(v_j, u_i) := 2^{n+1} + 1$ if the $i$-th digit in the binary representation of $j$ is $1$, and
	$d_n(v_j, u_i) := 2^{n+1}$ if the $i$-th digit in the binary representation of $j$ is $0$.
	This completes the definition of $M_n$.
	It is immediate that $M_n$ is a metric space.
	
	\noindent\textbf{Doubling Dimension.}
	Consider a ball with center $x\in X_n$ and radius $r$. We distinguish the following two cases.
	\begin{enumerate}
		\item If $r < 2^{n + 1}$, then either $\Bdn(x, r) \subseteq L_n$ or $\Bdn(x, r) \subseteq R_n$.
		Since the distance between points in $L_n$ is induced by a 1-dimensional line, each ball $\Bdn(x, r)\subseteq L_n$ can be covered by at most $3$ balls of radius $\frac{r}{2}$. 
		This argument also holds for each ball $\Bdn(x, r)\subseteq R_n$.
		\item If $r \geq {2^{n+1}}$, $\Bdn(x, r)$ is a union of a subset of $L_n$ and a subset of $R_n$. Then there exists $u \in L_n \cap \Bdn(x, r)$ and $v \in R_n \cap \Bdn(x, r)$. 
		Note that $L_n$ is covered by $\Bdn(u, 2^n)$ and $R_n$ is covered by $\Bdn(v, 2^n)$.
		Hence, each ball $\Bdn(x, r)\subseteq R_n$ can be covered by at most 2 balls $\Bdn(u, \frac{r}{2})$ and $\Bdn(v, \frac{r}{2})$.
	\end{enumerate}
	Therefore, $\DDim(M_n) \leq 2$.
	
	\noindent\textbf{Dimension of the Range Space.}
	Let $\calD$ be the subset of functions $\left\{d_n(u_i,\cdot)\mid i\in [n]\right\}\subseteq \calF^{M_n}$. Consider balls $\Bdn(v_j, 2^{n+1})$ for $v_j \in R_n$.
	By definition, $|\left\{L_n\cap \Bdn(v_j, 2^{n+1}) \mid v_j \in R_n \right\} | = 2^n$.
	Note that $$\left\{ f_{u_i}\in \calD\mid u_i\in L_n\cap \Bdn(v_j, 2^{n+1})\right\}=\left\{f_{u_i}\in \calD \mid f_{u_i}(v_j)=d_n(u_i,v_j)\leq 2^{n+1} \right\} \in \ranges(\calD).$$
	Hence, we have $|\ranges(\calD)|\geq 2^n\geq |\calD|^{n/\log n}$.
	Therefore, $\Dim(\calF^{M_n})$ is at least $n/\log n$.
\end{proof}

In light of Theorem~\ref{theorem:doubling_high_ball_dim}, it is impossible to bound the dimension of $\calF$ for doubling metric $M$.
However, we observe that,
from the hard instance in the proof of Theorem~\ref{theorem:doubling_high_ball_dim}, 
if we allow a small distortion to the distance functions
(i.e., to modify all distances $2^{n+1}+1$ to $2^{n+1}$), 
the dimension of the range space becomes bounded.
Inspired by this observation, we introduce the notion of \emph{smoothed distance functions} for doubling metrics
in the next subsection. 
Then we prove that the range space induced by the smoothed distance functions indeed has bounded dimension in a doubling metric (see Theorem~\ref{theorem:loose_bound}), which is the main result of this section.
 
%

\subsection{Smoothed Distance Functions}
\label{section:smoothed_dis}
The smoothed distance function is defined with respect to a metric space 
$M(X,d)$
and a net tree $T$ of the space
(definition in Section~\ref{subsec:doubling}). The proofs in this section are postponed to Section~\ref{section:smooth_proof}.

\begin{definition}[$\eps$-smoothed distance function]
	\label{definition:eps_smooth}
	Given a net tree $T$ of a metric space $M(X,d)$,
	for $0<\eps<1$,
	define $\delta_{\eps}: X\times X\rightarrow \R_{\geq 0}$ as the $\eps$-smoothed distance function induced by $T$
	as follows.
	For any $x,y\in X$, let $h_\eps(x, y)$ be the largest integer $j$ such that $d(\Par^{(j)}(x), \Par^{(j)}(y)) \geq \frac{2^j}{\eps}$.
	\footnote{Such $j$ must exist, because $j = -\infty$ always satisfies the condition.}
	Define $j = h_\eps(x, y)$ and
	$\delta_\eps(x, y) := d(\Par^{(j)}(x), \Par^{(j)}(y))$.
\end{definition}
We assume that there is an underlying net tree $T$,
and we drop the subscript in $\delta$ and $h$ whenever the context is clear. 
Note that $\delta$ may \emph{not} be a distance function since it may not satisfy the triangle inequality.
But it satisfies the non-negativity and symmetry properties. 
Nonetheless, it is a close approximation of the original distance function $d(\cdot,\cdot)$,
as in the following lemma. 

\begin{lemma}[small distortion]
	\label{lemma:distortion}
	If $T$ is $c$-covering, then for any $x,y \in X$ and any $\eps>0$,
	\begin{align*}
	(1-4c\cdot \eps)\cdot \delta(x,y) \leq d(x,y) \leq (1+4c\cdot \eps)\cdot \delta (x,y).
	\end{align*}
\end{lemma}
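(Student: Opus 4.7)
The plan is to reduce the lemma to a single triangle-inequality estimate together with the defining inequality of $h_\eps(x,y)$. Write $j := h_\eps(x,y)$, so that by definition $\delta(x,y) = d(\Par^{(j)}(x), \Par^{(j)}(y))$ and, crucially, $\delta(x,y) \geq 2^j/\eps$, i.e.\ $2^j \leq \eps \cdot \delta(x,y)$. This is the one place where the choice of $j$ as the \emph{largest} index in Definition~\ref{definition:eps_smooth} is used.

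Next I would apply the triangle inequality twice to compare $d(x,y)$ with $\delta(x,y)$:
\begin{align*}
|d(x,y) - \delta(x,y)| \;=\; \bigl|d(x,y) - d(\Par^{(j)}(x), \Par^{(j)}(y))\bigr| \;\leq\; d(x, \Par^{(j)}(x)) + d(y, \Par^{(j)}(y)).
\end{align*}
Since $T$ is $c$-covering, Fact~\ref{fact:des_dis} yields $d(x, \Par^{(j)}(x)) \leq c\cdot 2^{j+1}$ and $d(y, \Par^{(j)}(y)) \leq c\cdot 2^{j+1}$, so the right-hand side is at most $4c\cdot 2^j$.

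Combining the two bullets gives $|d(x,y)-\delta(x,y)| \leq 4c\cdot 2^j \leq 4c\eps \cdot \delta(x,y)$, which rearranges to the claimed two-sided bound $(1-4c\eps)\delta(x,y) \leq d(x,y) \leq (1+4c\eps)\delta(x,y)$. There is no real obstacle here; the only subtle point is making sure that the triangle inequality used in the first step is valid even though $\delta$ itself need not satisfy the triangle inequality, which is fine because we only use the triangle inequality for $d$, not for $\delta$. A minor bookkeeping note is the edge case $j=-\infty$: then $\Par^{(j)}(x)=x$ and $\Par^{(j)}(y)=y$, so $\delta(x,y)=d(x,y)$ and the inequality is trivial.
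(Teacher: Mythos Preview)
Your proof is correct and is essentially identical to the paper's own argument: both set $j=h_\eps(x,y)$, use the defining inequality $\delta(x,y)\geq 2^j/\eps$ together with Fact~\ref{fact:des_dis} to bound $d(x,\Par^{(j)}(x))+d(y,\Par^{(j)}(y))\leq 4c\cdot 2^j\leq 4c\eps\,\delta(x,y)$, and then apply the triangle inequality for $d$. The only cosmetic difference is that you package the two inequalities into a single absolute-value estimate, whereas the paper writes out the upper and lower bounds separately.
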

\ignore{
\begin{proof}
	Let $j = h(x,y)$. By definition,
	\begin{equation*}
	\delta(x,y) = d(\Par^{(j)}(x), \Par^{(j)}(y))\geq \frac{2^j}{\eps}.
	\end{equation*}
	By Definition~\ref{fact:des_dis}, $d(x, \Par^{(j)}(x)) \leq c\cdot 2^{j+1}$,
	and $d(y, \Par^{(j)}(y)) \leq c \cdot 2^{j+1}$.
	By the triangle inequality,
	\begin{align*}
	d(x,y)
	&\leq d(x, \Par^{(j)}(x)) + d(\Par^{(j)}(x), \Par^{(j)}(y)) + d(\Par^{(j)}(y), y) \\ &\leq (1 + 4c\cdot \eps)\cdot \delta(x, y).
	\end{align*}
	Also,
	\begin{align*}
	d(x, y)
	&\geq d(\Par^{(j)}(x), \Par^{(j)}(y)) - d(x, \Par^{(j)}(x)) - d(y, \Par^{(j)}(y)) \\
	&\geq (1 - 4c \cdot \eps) \cdot \delta(x,y).
	\end{align*}
\end{proof}
}
Next, we show that the $\eps$-smoothed distance function has several useful properties, which 
we will use extensively.
The first is the descendant property, which says that if the smoothed distance of $x$ and $y$
is defined by two nodes $u$ (an ancester of $x$) and $v$ (an ancester of $y$) in layer $j$,
then any descendant of $u^{(j)}$ has the same smoothed distance to any descendant of $v^{(j)}$.

\begin{lemma}[descendant property]
	\label{lemma:delta_equ}
	For any $x, y \in X$, assume that $j = h(x, y)$, $u = \Par^{(j)}(x)$ and $v = \Par^{(j)}(y)$. Then for any $x'\in \Des(u^{(j)})$ and $y'\in \Des(v^{(j)})$, we have $ \delta(x, y) = \delta(x', y') = d(u, v) $.
\end{lemma}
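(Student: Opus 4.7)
The plan is to show that $h(x',y')=j$, from which the equality $\delta(x',y') = d(\Par^{(j)}(x'),\Par^{(j)}(y'))$ together with $\Par^{(j)}(x')=u$ and $\Par^{(j)}(y')=v$ immediately yields $\delta(x',y')=d(u,v)=\delta(x,y)$. So the entire argument reduces to verifying the two defining conditions of $h$ at height $j$ for the pair $(x',y')$.

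First I would observe the following ancestor stability: since $x' \in \Des(u^{(j)})$, the unique ancestor of $x'$ at height $j$ is $u^{(j)}$, hence $\Par^{(j)}(x')=u$. Moreover, for every $j' > j$, the ancestor of $x'$ at height $j'$ is the ancestor of $u^{(j)}$ at height $j'$, i.e.\ $\Par^{(j')}(x') = \Par^{(j'-j)}(u^{(j)}) = \Par^{(j')}(x)$, using that $x$ is itself a descendant of $u^{(j)}$ (because $\Par^{(j)}(x)=u$). The analogous statements hold for $y'$ and $v$.

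Next, I verify the two conditions. The lower bound at height $j$: $d(\Par^{(j)}(x'),\Par^{(j)}(y'))=d(u,v)\geq 2^j/\eps$, which holds because $j=h(x,y)$ by hypothesis. The strict upper bound at higher heights: for every $j' > j$, using the ancestor stability,
\[
d(\Par^{(j')}(x'),\Par^{(j')}(y')) \;=\; d(\Par^{(j')}(x),\Par^{(j')}(y)) \;<\; \frac{2^{j'}}{\eps},
\]
where the strict inequality follows from the maximality of $j=h(x,y)$. Together these two conditions say that the largest $j'$ for which the lower bound holds at the pair $(x',y')$ is exactly $j$, so $h(x',y')=j$. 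Applying the definition of $\delta$ then gives $\delta(x',y')=d(\Par^{(j)}(x'),\Par^{(j)}(y'))=d(u,v)$, completing the proof.

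I do not expect a serious obstacle here; the only subtle point is the ancestor stability across heights (the fact that two points sharing an ancestor at height $j$ necessarily share all ancestors at heights $\geq j$), which is just the tree structure of the net tree and Fact~\ref{fact:des_dis} is not even needed. The proof is essentially bookkeeping from Definition~\ref{definition:eps_smooth}.
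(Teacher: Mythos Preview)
Your proposal is correct and essentially identical to the paper's proof: both arguments establish $h(x',y')=j$ by using that $\Par^{(j)}(x')=u$, $\Par^{(j)}(y')=v$ (so the lower bound $d(u,v)\ge 2^j/\eps$ holds at height $j$), and that $\Par^{(j')}(x')=\Par^{(j')}(x)$, $\Par^{(j')}(y')=\Par^{(j')}(y)$ for all $j'>j$ (so the strict upper bound is inherited from the maximality of $h(x,y)$). Your write-up is slightly more explicit about the ancestor-stability step, but the logic is the same.
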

\ignore{
\begin{proof}
	By definition, we immediately have $\delta(x, y) = d(u, v)$.
	Observe that $\Par^{(j)}(x') = \Par^{(j)}(x)$, and $\Par^{(j)}(y) = \Par^{(j)}(y')$.
	By Definition \ref{definition:eps_smooth}, we know $d(\Par^{(j)}(x'),\Par^{(j)}(y'))=d(u,v) \geq \frac{2^j}{\eps}$. Thus, we have $h(x', y') \geq j$.
	On the other hand, for $j' > j = h(x, y)$, we have $d(\Par^{(j')}(x'),\Par^{(j')}(y'))=d(\Par^{(j')}(x),\Par^{(j')}(y)) < \frac{2^{j'}}{\eps}$, where the last inequality is by the definition of $h(x,y)$. Thus, we have $h(x, y) \leq j$. Therefore, $h(x, y) = h(x', y')$ which implies that $\delta(x', y') = d(u, v)$.
\end{proof}
}
The second is the smooth property which says that
at a certain distance scale $r$, if we move the center of the ball (of radius $r$) from $x$ to $x'$
($x'$ is a nearby point in a small subtree),  the ball does not change, when the ball is defined w.r.t. $\delta$.


\begin{lemma}[smooth property, illustrated in Figure~\ref{figs:smooth}]
	\label{lemma:ball_equal}
	Suppose $\{N_i \mid i \leq L\}$ is a hierarchical net and $T$ is a $c$-covering net tree with respect to $\{N_i\}_i$.
	Consider $0<\eps\leq \frac{1}{8c}$ and $r>0$. 
	Let $\lambda := \frac{\eps\cdot (1-5c\eps)}{20 (1+4c\eps)}$.
	Define $j$ to be the integer satisfying that $2^{j-1} \leq \lambda \cdot r $.
	Then for any $x,x'\in X$, if $\Par^{(j)}(x) = \Par^{(j)}(x')$, we have $\Bdt(x, r) = \Bdt(x', r)$.
\end{lemma}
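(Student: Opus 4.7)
I will show the one-sided containment $B^{\delta}(x,r)\subseteq B^{\delta}(x',r)$; the reverse follows by swapping the roles of $x$ and $x'$, since the hypothesis $\Par^{(j)}(x)=\Par^{(j)}(x')$ is symmetric. Fix $y\in B^{\delta}(x,r)$ and let $i:=h(x,y)$, so that $\delta(x,y)=d(\Par^{(i)}(x),\Par^{(i)}(y))\leq r$. I split into two cases according to whether $i\geq j$ or $i<j$.

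\textbf{Easy case ($i\geq j$).} The equality $\Par^{(j)}(x)=\Par^{(j)}(x')$ lifts to $\Par^{(i'')}(x)=\Par^{(i'')}(x')$ for every $i''\geq j$, because in the net tree the layer-$i''$ ancestor of a point depends only on its layer-$j$ ancestor. Hence $x'\in\Des(u^{(i)})$ with $u=\Par^{(i)}(x)$, and $y$ is trivially a descendant of $\Par^{(i)}(y)^{(i)}$. The descendant property (Lemma~\ref{lemma:delta_equ}) then gives $\delta(x',y)=\delta(x,y)\leq r$.

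\textbf{Hard case ($i<j$).} I first argue that $h(x',y)<j$. For any layer $i''\geq j$ the hypothesis forces $\Par^{(i'')}(x')=\Par^{(i'')}(x)$, so $d(\Par^{(i'')}(x'),\Par^{(i'')}(y))=d(\Par^{(i'')}(x),\Par^{(i'')}(y))$; by the maximality of $h(x,y)=i<j$ this quantity is $<2^{i''}/\eps$ (for $i''=j$ and all $i''>j$), so no such $i''$ can witness $h(x',y)$. Put $i':=h(x',y)\leq j-1$. At layer $i'+1$ the definition of $h$ gives $d(\Par^{(i'+1)}(x'),\Par^{(i'+1)}(y))<2^{i'+1}/\eps$, and each of the two points at layer $i'$ is within $c\cdot 2^{i'+1}$ of its layer-$(i'+1)$ parent by Definition~\ref{def:coveringnettree}. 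Two triangle-inequality applications yield
\[
\delta(x',y)=d(\Par^{(i')}(x'),\Par^{(i')}(y))\leq 2c\cdot 2^{i'+1}+\frac{2^{i'+1}}{\eps}=\frac{2^{i'+1}(1+2c\eps)}{\eps}.
\]
Using $i'+1\leq j$ and $2^{j}\leq 2\lambda r$ (from the choice of $j$) gives
\[
\delta(x',y)\leq \frac{2\lambda(1+2c\eps)}{\eps}\cdot r=\frac{(1-5c\eps)(1+2c\eps)}{10(1+4c\eps)}\cdot r,
\]
which is at most $r/10<r$ whenever $c\eps\leq 1/8$. So $\delta(x',y)\leq r$, as required.

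\textbf{Where the difficulty lies.} A natural first attempt would bound $\delta(x',y)$ via $|\delta(x',y)-\delta(x,y)|\lesssim |d(x',y)-d(x,y)|\leq d(x,x')$ using Lemma~\ref{lemma:distortion}; since $\Par^{(j)}(x)=\Par^{(j)}(x')$ only gives $d(x,x')\leq 2c\cdot 2^{j+1}=O(\lambda r)$, this leads to $\delta(x',y)\leq(1+O(c\eps+c\lambda))r$, which strictly exceeds $r$. The key realization is that the hard case $h(x,y)<j$ \emph{cannot} occur with $\delta(x,y)$ close to $r$: the definition of $h$ pins $\delta(x,y)$ to scale $2^{j}/\eps\ll r$. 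So one does not need a multiplicative perturbation bound at all; it suffices to show that both $\delta(x,y)$ and $\delta(x',y)$ lie in the same small interval well below $r$, which is exactly what the $c$-covering triangle-inequality argument above delivers, and the constant $\lambda$ is calibrated so that this small interval sits strictly below $r$.
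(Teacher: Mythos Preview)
Your proof is correct and takes a genuinely different route from the paper's. The paper splits on the \emph{value} of $\delta(x,v)$, namely whether $\delta(x,v)\le tr$ or $\delta(x,v)>tr$ for the threshold $t=(1-5c\eps)/(1+4c\eps)$. In the small case it passes through $d$ via Lemma~\ref{lemma:distortion}, uses the triangle inequality $d(x',v)\le d(x,x')+d(x,v)$ with $d(x,x')\le c\cdot 2^{j+2}$, and goes back to $\delta$; in the large case it lower-bounds $d(x',v)$ enough to force $h(x',v)\ge j$ and then invokes the descendant property. You instead split on the \emph{level} $h(x,y)$ relative to $j$. Your easy case $h(x,y)\ge j$ hits the descendant property immediately with no computation, and your hard case $h(x,y)<j$ bounds $\delta(x',y)$ directly from the tree (one level up from $i'=h(x',y)$), never touching the distortion lemma. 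Two dividends of your approach: you do not need the paper's preliminary reduction to the maximal $j$ with $2^{j-1}\le\lambda r<2^j$ (your bound $2^{i'+1}\le 2^{j}\le 2\lambda r$ uses only the stated hypothesis), and your hard-case conclusion $\delta(x',y)<r/10$ makes transparent that this case is the regime where $y$ is already much closer than $r$, so the perturbation from $x$ to $x'$ is harmless. The paper's route, by contrast, keeps the argument at the level of real distances and is perhaps more robust to variants where the tree structure is less explicit.
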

\ignore{
\begin{proof}
	Observe that it suffices to prove the case when $r' = r$. To see this, consider some $r' > r$ and suppose $j'$ satisfies $2^{j'-1} \leq \lambda \cdot r' < 2^{j'}$. Then we have $\Par^{(j')}(x) = \Par^{(j')}(x')$ because $\Par^{(j)}(x) = \Par^{(j)}(x')$, so we have $\Bdt(x, r') = \Bdt(x', r')$.
	
	Now it suffices to show that $v \in \Bdt(x, r)$ if and only if $v \in \Bdt(x', r)$.
	We only need to show the direction that if $v \in \Bdt(x, r)$ then $v \in \Bdt(x', r)$, and the ``only if'' direction is also true by symmetry.
	Define $t := \frac{1 - 12c\eps}{1 + 4c\eps} = 20\cdot \frac{1+4c\eps}{\eps(1-4c\eps)^2} \cdot \lambda$.
	We consider the following two cases.
	\begin{itemize}
		\item $\delta(x, v) \leq t\cdot r \leq r$.
		We shall prove that $\delta(x', v) \leq r$.
		By Fact~\ref{fact:des_dis} and the fact that $\Par^{(j)}(x)=\Par^{(j)}(x')$, we have $d(x, x') \leq d(x,\Par^{(j)}(x))+d(x',\Par^{(j)}(x')) \leq c\cdot 2^{j+2}$.
		Therefore by Lemma \ref{lemma:distortion}, we have
		\begin{align*}
		\delta(x', v)
		&\leq \frac{1}{(1-4c\cdot \eps)}\cdot d(x', v)
		\stackrel{\text{triangle ineq.}}{\leq} \frac{1}{(1-4c\cdot \eps)} \cdot ( d(x, x') + d(x, v) ) \\
		&\leq \frac{1}{(1-4c\cdot \eps)} \cdot (c\cdot 2^{j+2} + (1+4c\cdot \eps)\cdot tr) \\
		&\leq \frac{1}{(1-4c\cdot \eps)} \cdot (8c\lambda + (1+4c\cdot\eps)\cdot t) \cdot r < r.
		\end{align*}
		\item $\delta(x, v) > t \cdot r$. We shall prove that $\delta(x, v) = \delta(x', v)$.
		Still by Fact~\ref{fact:des_dis},
		$d(x, x') \leq c\cdot 2^{j+2}$.
		Observe that
		\begin{align*}
		\delta(x', v)
		&\geq \frac{1}{1+4c\cdot \eps}\cdot d(x', v)
		\geq \frac{1}{1+4c\cdot \eps} \cdot (d(v, x) - d(x, x')) \\
		&\geq \frac{1}{1+4c\cdot \eps}\cdot ( (1-4c\cdot \eps)\cdot tr - c\cdot 2^{j+2} ) \\
		&\geq \frac{1}{1+4c\cdot \eps}\cdot ((1-4c\cdot \eps)t - 4c\lambda) \cdot r,
		\end{align*}
		Therefore,
		\begin{align*}
		d(x', v)
		&\geq (1-4c\eps) \cdot \delta(x', v)
		\geq (1-4c\eps) \cdot \frac{(1-4c\eps)t - 4c\lambda}{1+4c\eps} r \\
		&\geq (1-4c\eps) \cdot \frac{(1-4c\eps)t - 4c\lambda}{1+4c\eps} \cdot \frac{2^{j-1}}{\lambda} \\
		&\geq (1-4c\eps) \cdot \frac{(1-4c\eps)t}{1+4c\eps} \cdot \frac{2^{j-1}}{\lambda} \stackrel{\text{Defn. of $t$}}{\geq} \frac{2^{j+3}}{\eps}.
		\end{align*}
		Hence, $d(\Par^{(j)}(x'), \Par^{(j)}(v)) \geq d(x', v) - d(\Par^{(j)}(x'), x') - d(\Par^{(j)}(v), v) \geq \frac{2^{j+3}}{\eps}- c\cdot 2^{j+2} \geq \frac{2^j}{\eps}$ since $\eps \leq \frac{1}{8c}$.
		This implies that $h(x', v) \geq j$.
		Hence, $\Par^{(j')}(x) = \Par^{(j')}(x')$ for $j' := h(x', v)$, since $\Par^{(j)}(x) = \Par^{(j)}(x')$ and $j'\geq j$.
		Thus by Lemma~\ref{lemma:delta_equ}, $\delta(x, v) = \delta(x', v)$.
	\end{itemize}
\end{proof}
}

\begin{remark}
	\label{remark:hang}
	An interesting consequence of Lemma~\ref{lemma:ball_equal} is that, for any ball $\Bdt(x, r)$, $\Bdt(x, r) = \Bdt(\Par^{(j)}(x), r)$ ($j$ defined in Lemma~\ref{lemma:ball_equal}). This means we can ``hang'' the center $x$ to $\Par^{(j)}(x)$ which is a net point of higher height. 
\end{remark}


The third is the cross-free property. Consider a ball $\Bdt(x, r)$. The next lemma says that any small subtree (with distance scale less than $\epsilon r$)
is either completely contained in the ball, or does not intersect the ball at all.
A consequence useful later is that each ball can be viewed as the union of some small subtrees.

\begin{lemma}[cross-free property, illustrated in Figure~\ref{figs:cross}]
	\label{lemma:ball_laminar}
Suppose $\{N_i \mid i \leq L\}$ is a hierarchical net and $T$ is a $c$-covering net tree with respect to $\{N_i\}_i$.
Consider $0<\eps\leq \frac{1}{8c}$ and $r>0$. 
Let $\lambda := \frac{\eps\cdot (1-5c\eps)}{20 (1+4c\eps)}$.
Suppose $j$ is an integer such that $2^{j-1} \leq \lambda \cdot r$.
Then for any $x \in X$ and $v \in N_{j}$, either $\Des(v^{(j)}) \subseteq \Bdt(x, r)$ or $\Des(v^{(j)}) \cap \Bdt(x, r) = \emptyset$.
\end{lemma}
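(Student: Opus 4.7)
The plan is to reduce the cross-free property directly to the smooth property (Lemma~\ref{lemma:ball_equal}), exploiting the symmetry of the smoothed distance function $\delta$. Since $\delta(x,y)=\delta(y,x)$, the condition $y\in B^{\delta}(x,r)$ is equivalent to $x\in B^{\delta}(y,r)$. Thus, rather than reasoning about how the center $x$ moves, I will instead move the ``other endpoint'' through the subtree $\Des(v^{(j)})$ and apply the smooth property with the roles of center and evaluation point swapped.

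Concretely, first I would show the dichotomy by assuming $\Des(v^{(j)}) \cap B^{\delta}(x, r) \neq \emptyset$, pick some witness $y \in \Des(v^{(j)}) \cap B^{\delta}(x,r)$, and let $y'$ be an arbitrary other point in $\Des(v^{(j)})$. By definition of $\Des$ we have $\Par^{(j)}(y) = \Par^{(j)}(y') = v$. Now the hypotheses of Lemma~\ref{lemma:ball_equal} are satisfied for the pair $(y, y')$ with the same radius $r$ and same $j$ (recall $2^{j-1} \leq \lambda r$ is exactly the condition invoked in that lemma). Therefore
\begin{align*}
B^{\delta}(y, r) = B^{\delta}(y', r).
\end{align*}

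Using symmetry of $\delta$, the condition $\delta(x,y) \leq r$ is equivalent to $x \in B^{\delta}(y, r)$, and the displayed equality turns this into $x \in B^{\delta}(y', r)$, i.e.\ $\delta(x, y') \leq r$, i.e.\ $y' \in B^{\delta}(x, r)$. Since $y'$ was arbitrary in $\Des(v^{(j)})$, we conclude $\Des(v^{(j)}) \subseteq B^{\delta}(x,r)$, giving the desired dichotomy.

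I do not anticipate a serious obstacle here: the entire argument is a one-line consequence of Lemma~\ref{lemma:ball_equal} together with the symmetry of $\delta$, and the choice of $j$ (via $2^{j-1} \leq \lambda r$) is identical to the one needed to invoke the smooth property. The only thing to be careful about is that Lemma~\ref{lemma:ball_equal} is stated for two centers sharing a common $j$-level ancestor, so I must make sure to reindex appropriately when applying it to the pair $(y, y')$ rather than to $(x, x')$; this is purely notational.
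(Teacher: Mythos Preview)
Your proposal is correct and essentially identical to the paper's proof: both pick a witness in $\Des(v^{(j)})\cap B^\delta(x,r)$, apply Lemma~\ref{lemma:ball_equal} to any two points of $\Des(v^{(j)})$ to equate their radius-$r$ balls, and then use the symmetry of $\delta$ to transfer membership of $x$ from one ball to the other. The only difference is naming (the paper calls the witness $x'$ and the arbitrary descendant $y$, whereas you use $y$ and $y'$).
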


\begin{figure}[t]
	\centering
	\begin{subfigure}[t]{0.4\textwidth}
		\centering
		\includegraphics[width=\textwidth]{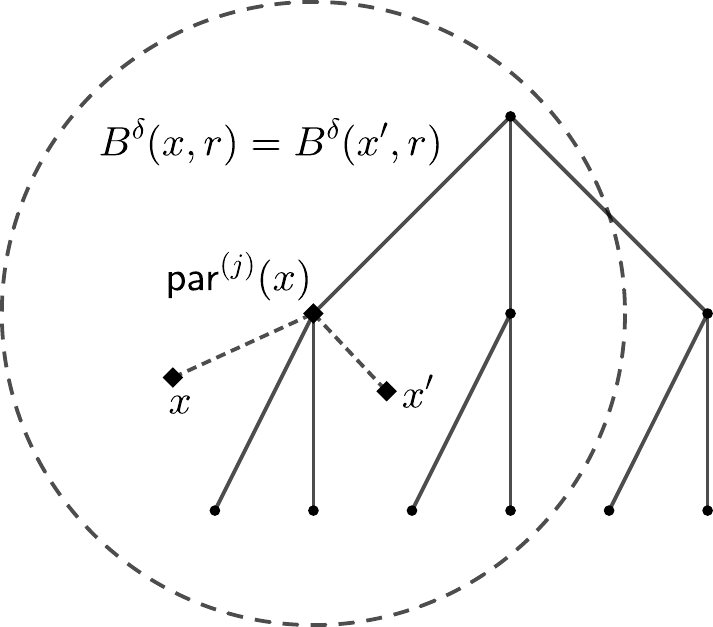}
		\caption{smooth property}
		\label{figs:smooth}
	\end{subfigure}%
	~
	\begin{subfigure}[t]{0.5\textwidth}
		\centering
		\includegraphics[width=\textwidth]{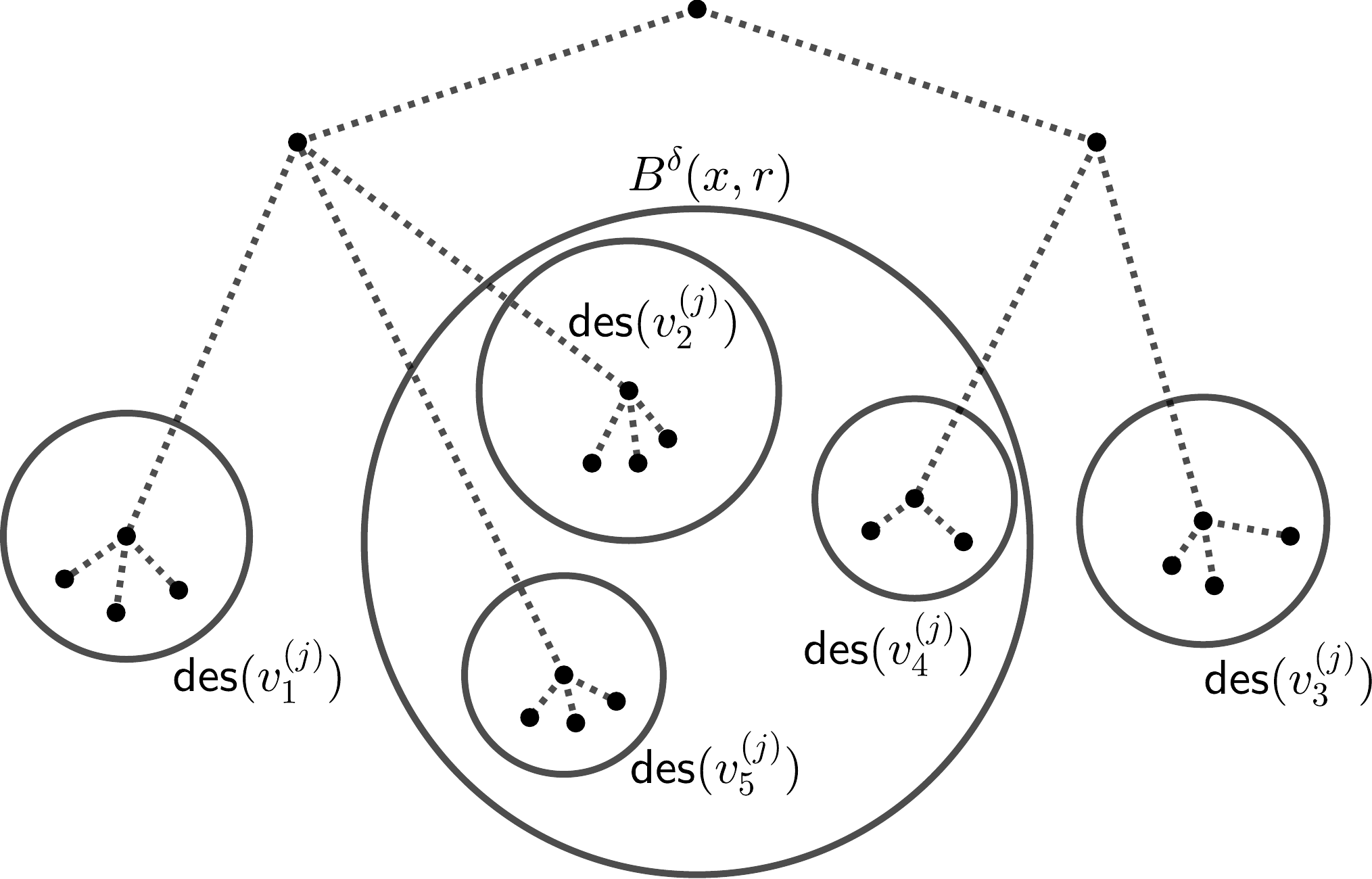}
		\caption{cross-free property}
		\label{figs:cross}
	\end{subfigure}
	\caption{Illustration for the smooth property and the cross-free property}
	\label{figs:smooth_cross}
\end{figure}
\ignore{
\begin{proof}
	We use Lemma~\ref{lemma:ball_equal} to prove this lemma.
	Observe that it suffices to show for the case that $2^{j-1}\leq \lambda\cdot r<2^j$, because there always exists $w \in N_j$ such that $\Des(v^{(j')}) \subseteq \Des(w^{(j)})$ for any $j'\leq j$.\jian{grammar}\shaofeng{fixed.}

	Suppose for some $v^{(j)}\in N_j$, $\Des(v^{(j)}) \cap \Bdt(x, r) \neq \emptyset$. 
	Let $x' \in \Des(v^{(j)}) \cap \Bdt(x, r)$.
	Then we have $\delta(x, x')\leq r$.
	This implies that $x \in \Bdt(x', r)$.
	It suffices to show that for any $y \in \Des(v^{(j)})$, $\delta(y, x) \leq r$.
	By Lemma~\ref{lemma:ball_equal}\jian{fix}\shaofeng{fixed}, for any $y \in \Des(v^{(j)})$, $\Bdt(y, r) = \Bdt(x', r)$. Hence, $x\in \Bdt(y, r)$ by the fact that $x\in \Bdt(x', r)$. This implies $\delta(y, x) \leq r$, which means $y \in \Bdt(x, r)$.
	It completes the proof.
\end{proof}
}

\subsection{Bounded Dimension for Smoothed Doubling Distance Functions}
\label{section:unweighted}
In this section, we showcase the use of the smoothed distance function.
In particular, we show in Theorem~\ref{theorem:loose_bound} that the range space induced by
smoothed doubling distance functions has bounded dimension.

The $\eps$-smoothed distance function in this section is defined with respect to the \emph{simple net tree}, which is the following natural net tree built on a hierarchical net.

\begin{definition}[simple net trees]
In a simple net tree, for each $u\in N_i$,
$\Par(u^{(i)})$ is defined to be the nearest point $v \in N_{i+1}$ to $u$ (ties are broken arbitrarily).
\end{definition}

The following fact follows immediately from the definition of simple net trees.

\begin{fact}
	\label{fact:simple}
	A simple net tree is $1$-covering.
\end{fact}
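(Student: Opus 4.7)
The plan is to unwind the definitions and verify that, for every height $i$ and every $u\in N_i$, the parent chosen by the simple net tree rule lies within distance $1\cdot 2^{i+1}$ of $u$, which is exactly the $c=1$ case of Definition~\ref{def:coveringnettree}.

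The single ingredient I need is the covering property of the hierarchical net. By construction $N_{i+1}$ is a $2^{i+1}$-net of $N_i$, and in particular a $2^{i+1}$-covering of $N_i$. Hence for every $u\in N_i$, there exists at least one $v\in N_{i+1}$ with $d(u,v)\leq 2^{i+1}$.

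Now the simple net tree defines $\Par(u^{(i)})$ as the point in $N_{i+1}$ nearest to $u$. Since the nearest point is at least as close as the witness $v$ produced above, we get
\[
d(u,\Par(u^{(i)}))\;\leq\; d(u,v)\;\leq\;2^{i+1}\;=\;1\cdot 2^{i+1}.
\]
This holds for every $i$ and every $u\in N_i$, so the simple net tree satisfies Definition~\ref{def:coveringnettree} with $c=1$, i.e., it is $1$-covering. There is essentially no obstacle; the statement is a one-line consequence of the covering half of the net definition and the greedy choice of parent.
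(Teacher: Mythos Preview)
Your proof is correct and matches the paper's treatment: the paper simply states that Fact~\ref{fact:simple} follows immediately from the definition of simple net trees, and your argument is precisely the one-line unwinding of that definition together with the covering property of $N_{i+1}$ over $N_i$.
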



Now, everything is ready to prove the main theorem of this section.

\begin{theorem}
	\label{theorem:loose_bound}
	Suppose $M(X,d)$ is a metric space and $T$ is a simple net tree on $X$. Let $0< \e\leq \frac{1}{8}$ be a constant. Let $\delta$ be the $\eps$-smoothed distance function induced by $T$. Let $\calF := \left\{ \delta(x, \cdot)\mid x\in X \right\}$ be the function set induced by the $\eps$-smoothed distance functions. Then $\Dim(\calF) \leq O(\frac{1}{\eps})^{O(\DDim(M))}$.
\end{theorem}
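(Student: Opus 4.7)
\textbf{Proof plan for Theorem~\ref{theorem:loose_bound}.}
The plan is to bound $|\ranges(\calD)|$ for an arbitrary $\calD\subseteq \calF$ of size $|H|\geq 2$. Since $\calF$ is indexed by $X$, writing $\calD=\calF_H$ for $H:=\{x : f_x\in\calD\}$, each range is of the form $\range(\calF_H,y,r)=\{f_x\in\calD : \delta(x,y)\leq r\}$, which under the symmetry of $\delta$ corresponds bijectively to the set $B^{\delta}(y,r)\cap H$ for some $y\in X$ and $r\geq 0$. So it suffices to bound the number of distinct sets of this form.

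First I would discretize the center. Fix a pair $(y,r)$ and, following Lemma~\ref{lemma:ball_equal}, let $j=j(r)$ be the largest integer with $2^{j-1}\leq\lambda r$, where $\lambda=\frac{\eps(1-5\eps)}{20(1+4\eps)}$ (the constant $c=1$ by Fact~\ref{fact:simple}). The smooth property then gives $B^{\delta}(y,r)=B^{\delta}(u,r)$ for $u:=\Par^{(j)}(y)\in N_j$, so without loss of generality the center lies in $N_j$. Next apply the cross-free property (Lemma~\ref{lemma:ball_laminar}) at the same level $j$: for every $v^{(j)}\in N_j$, either $\Des(v^{(j)})\subseteq B^{\delta}(u,r)$ or the two are disjoint. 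Letting $\Lambda_j^H:=\{v^{(j)}\in N_j : \Des(v^{(j)})\cap H\neq\emptyset\}$ (which partitions $H$), we obtain
\[
B^{\delta}(u,r)\cap H=\bigsqcup_{v^{(j)}\in S}\bigl(\Des(v^{(j)})\cap H\bigr)
\quad\text{for some}\quad S\subseteq \Lambda_j^H.
\]
So every non-empty range is encoded by a pair $(j,S)$.

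The key quantitative step is to bound $|S|$ by packing. For each $v^{(j)}\in S$, the point $v$ itself lies in $B^{\delta}(u,r)$, so by the distortion bound (Lemma~\ref{lemma:distortion}) $d(u,v)\leq (1+4\eps)r$. Since $N_j$ is a $2^j$-packing with $2^j>\lambda r$, Fact~\ref{fact:packing} yields $|S|\leq\bigl(\tfrac{2(1+4\eps)r}{2^j}\bigr)^{\DDim(M)}\leq\bigl(\tfrac{2(1+4\eps)}{\lambda}\bigr)^{\DDim(M)}=:c_0$, which is $O(1/\eps)^{\DDim(M)}$.

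Finally I would count the pairs $(j,S)$. As $j$ decreases, $\Lambda_j^H$ only refines (parts split, never merge), so the number of distinct partitions achieved over all $j\in\mathbb{Z}$ is at most $|H|$ (the number of parts strictly increases at each refinement, from $1$ to at most $|H|$). For each such partition, $|\Lambda_j^H|\leq|H|$ and the number of subsets of size $\leq c_0$ is at most $\binom{|H|}{\leq c_0}\leq |H|^{c_0}$. Including the empty range, $|\ranges(\calD)|\leq 1+|H|\cdot|H|^{c_0}\leq|H|^{c_0+2}$ for $|H|\geq 2$, giving $\Dim(\calF)\leq c_0+2=O(1/\eps)^{O(\DDim(M))}$. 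The main thing to handle carefully will be the packing step (getting the right scale $2^j\asymp\lambda r$ against the distortion-inflated $d$-ball radius $(1+4\eps)r$) and the observation that distinct $\Lambda_j^H$ form a chain with at most $|H|$ elements; the rest is bookkeeping.
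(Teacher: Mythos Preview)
Your proposal is correct and follows essentially the same route as the paper: use the cross-free property (Lemma~\ref{lemma:ball_laminar}) to write $B^{\delta}(y,r)\cap H$ as a union of at most $O(1/\lambda)^{\DDim(M)}$ blocks of the form $\Des(v^{(j)})\cap H$, bound the number of blocks by packing, and then observe that the level-$j$ partitions of $H$ form a refinement chain with at most $|H|$ distinct members. The only cosmetic differences are that the paper skips the preliminary center-discretization via Lemma~\ref{lemma:ball_equal} (cross-free already works for arbitrary centers) and that your packing constant is off by a harmless factor of $2$ inside the base.
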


\begin{proof}
	Consider any subset $H\subseteq X$ of size $|H|=m\geq 2$.
	It suffices to show
	\begin{align*}
		\left|\left\{H\cap \Bdt(x,r)\mid x\in X,r\geq 0 \right\} \right| \leq m^{O(\eps)^{-O(\DDim(M))}}.
	\end{align*}
	Let $\lambda := \frac{\eps\cdot (1-5c\eps)}{20 (1+4c\eps)}$ as defined in Lemma~\ref{lemma:ball_laminar}.
	Let us first fix some $r \geq 0$ and $x\in X$.
	Define $j$ to be the integer such that $2^{j-1} \leq \lambda \cdot r < 2^j$.
	By Lemma~\ref{lemma:ball_laminar}, $\Bdt(x, r)$ is the union of
	some $\Des(v)$'s for $v$ in a subset of $N_j$. Next, we show the number of such $\Des(v)$'s is a constant (depending on $\DDim(M)$ and $\eps$).
	
	Let $P$ be the set of $v\in N_j$ satisfying that $\Des(v) \subseteq \Bdt(x, r)$, i.e., $\Bdt(x, r) = \bigcup_{v\in P} \Des(v)$. Since $P\subseteq N_j$ is a $2^j$-packing, the distance between any two points in $P$ is at least $2^j$.
	On the other hand, since $P\subseteq B^\delta(x,r)$, we have $\diam(P)\leq 2(1+4\eps)\cdot r< 2^{j+2}/\lambda$.
	Then by packing property (Fact~\ref{fact:packing}), $|P| \leq O(\frac{1}{\lambda})^{\DDim(M)}$. 
	Define $H^{(j)}:=\left\{\Par^{(j)} (x)\mid x\in H \right\}$.
	We have $H\cap \Bdt(x,r) = \bigcup_{v\in P} (H\cap \Des(v)) = \bigcup_{v\in P\cap H^{(j)}} (H\cap \Des(v))$. This implies every ball $\Bdt(x, r)$ is formed by first choosing at most $\Lambda:=O(\frac{1}{\lambda})^{\DDim(M)}$ points $v\in H^{(j)}$, and then letting $\Bdt(x, r)$ be the union of these $\Des(v)$'s.

	Now we turn to general $x$ and $r$. For $r \geq 0$, define
	\begin{align*}
		Q_r:= \left\{ \bigcup_{x\in S} \left(H\cap \Des(\Par^{(j)}(x))\right) \mid S\subseteq H, |S|\leq \Lambda \right\},
	\end{align*}
	where $2^{j-1} \leq \lambda \cdot r < 2^j$.
	By the above argument, we know that $H\cap \Bdt(x,r)\in Q_r$ for any $x\in X$.
	Hence, $\left\{H\cap \Bdt(x,r)\mid x\in X,r\geq 0 \right\} \subseteq \bigcup_{r\geq 0} Q_r$.
	Then to bound $\left|\left\{H\cap \Bdt(x,r)\mid x\in X,r\geq 0 \right\} \right|$, it suffices to bound $|\bigcup_{r\geq 0} Q_r|$.
	Note that for any fixed $r\geq 0$, $|Q_r|\leq O(m^\Lambda)$. 
	We claim that there are at most $m+1$ different collections $Q_r$ for all $r\geq 0$.
	If the claim is true, we can bound $|\bigcup_{r\geq 0} Q_r|$ by  $O((m+1)\cdot m^\Lambda) = O(m^{\Lambda+2})$, and this would conclude the theorem.
	
	It remains to prove the claim that there are at most $m+1$ different collections $Q_r$ for all $r\geq 0$.
	Observe that the cardinality of $H^{(j)}$ is non-increasing as $j$ increases. Assume that $|H^{(i)}|=|H^{(j)}|$ for some $i\leq j$. 
	Then for any $x,y\in H$, we have $\Par^{(i)}(x)=\Par^{(i)}(y)$ if and only if $\Par^{(j)}(x)=\Par^{(j)}(y)$.

	Now fix some $x\in X$. Let $u:= \Par^{(i)} (x) \in N_j$ and $v:=\Par^{(j)} (x) \in N_j$. 
	If $|H^{(i)}|=|H^{(j)}|$, we have the following for any $y\in H$, 
	$$y\in \Des(\Par^{(i)} (x))\,\,\,\Leftrightarrow\,\,\,
	\Par^{(i)}(y) = u \,\,\,\Leftrightarrow\,\,\, 
	\Par^{(j)}(y) = v \,\,\,\Leftrightarrow\,\,\,
	y\in \Des(\Par^{(j)} (x)),$$
	which implies that $H\cap \Des(\Par^{(i)} (x)) = H\cap \Des(\Par^{(j)} (x))$.
	Hence, for any $r', r$, define $i$ to be the integer such that $2^{i-1} \leq \lambda \cdot r' < 2^i$ and $j$ to be the integer such that $2^{j-1} \leq \lambda \cdot r < 2^j$. 
	If $|H^{(i)}|=|H^{(j)}|$, then $Q_r=Q_{r'}$. 
	Since there are at most $m+1$ possible cardinalities for $|H^{(j)}|$, there are at most $m+1$ different $Q_r$'s. 
	This proves the claim and thus concludes the theorem.
\end{proof}
\ignore{
Combining with Fact \ref{lm:balltovc}, we directly obtain the following corollary.

\begin{corollary}
	\label{corollary:doublingtoVC}
	Suppose $M(X,d)$ is a metric space and $T$ is a simple net tree. Let $0< \e\leq \frac{1}{8}$ be a constant. Let $\delta_\eps$ be the $\eps$-smoothed distance function induced by $T$. Then the VC dimension of the ranges space of $\calF_M:= \{\delta(x, \cdot)\}_{x \in X}$ is at most $O(\frac{\DDim(M)}{\e^{\DDim(M)}}\log \frac{1}{\e})$.
\end{corollary}

We note that the $\e$-smoothed process is necessary for achieving the bounded VC dimension, in view of Theorem~\ref{theorem:doubling_high_ball_dim}. 
}

\section{Weighted Doubling Metrics}
\label{section:weighted}
In the last section, we provide a bound of the shattering dimension
by the doubling dimension.
Note that the dimension bound in Theorem~\ref{theorem:loose_bound} is quite large 
in that it is exponential in $\DDim(M)$.
However, considering the Euclidean case, the dependency is only linear.
Moreover,  Theorem~\ref{theorem:loose_bound} is not sufficient for the purpose
of constructing coresets, for which we need a dimension bound for weighted spaces.
In this section, we provide a new proof that can reduce the exponential dependency to a polynomial dependency, in a certain probabilistic sense.
Moreover, the proof also works for weighted doubling metrics, where each point $x\in X$ is associated with a weight $w(x)$.
In particular, we consider the following type of weight functions, which suffices
for coreset construction.

\begin{definition}
	\label{definition:weight}
	We say $w:X\rightarrow \R_{\geq 0}$ is a gap-$c$ weight function if for any $x,y\in X$, we have either $w(x)=w(y)$ or $\max\left\{\frac{w(x)}{w(y)},\frac{w(y)}{w(x)}\right\}\geq c$.
\end{definition}

To achieve the polynomial dependence in $\DDim(M)$, we shall construct a \emph{random} $\eps$-smoothed distance function $\delta$. Let $\calF := 
\left\{ w(x)\cdot \delta(x, \cdot) \mid x\in X \right\}$ be the function set induced by the random $\eps$-smoothed distance function $\delta$.
We will show that $\PDim_{\tau}(\calF) \leq O\left( \DDim(M)  \cdot \log (1/\eps)
+ \log\log 1/\tau \right)$
%
in Theorem~\ref{theorem:weighted}\footnote{Actually, Theorem~\ref{theorem:weighted} provides a better bound than the one with respect to $\PDim$.}, which is the main theorem of this section.

\begin{theorem}
	\label{theorem:weighted}
	Suppose $M(X, d)$ is a metric space together with a gap-2 weight function $w: X\rightarrow \R_{\geq 0}$.
	Let $0<\eps\leq \frac{1}{100}$ and $0 < \tau < 1$ be constant. 
	There exists a random $\eps$-smoothed distance function $\delta$ (defined with respect to some random net tree), such that for $\calF := \left\{ w(x)\cdot \delta(x, \cdot) \mid x\in X \right\}$,
	and any $H\subseteq X$,
	\begin{align*}
		\Pr_{\delta}\left[|\ranges(\calF_H)| \leq O\left(\frac{1}{\eps}\right)^{O(\DDim(M))} \cdot \log{\frac{|H|}{\tau}} \cdot  |H|^{6} \right] \geq 1 - \tau.
	\end{align*}
	In other words, $\PDim_\tau(\calF)= {O}(\DDim(M)\cdot \log (1/\eps) + \log\log{1/\tau}) $.
\end{theorem}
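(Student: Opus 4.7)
The plan is to combine the structural properties of the smoothed distance function from Section~4 with a randomized hierarchical decomposition in the style of Abraham--Bartal--Neiman~\cite{DBLP:conf/stoc/AbrahamBN06}, which enhances the smooth property to handle the multi-scale nature of weighted balls. First I would partition $H$ into weight classes $H = H_1 \sqcup H_2 \sqcup \cdots \sqcup H_l$ where each $H_i$ consists of points sharing a common weight $w_{(i)}$, and the gap-$2$ assumption on $w$ gives $w_{(i+1)} \leq w_{(i)}/2$, hence $l \leq O(\log W)$ where $W$ is the weight spread (effectively $l \leq m+1$ after collapsing equal-weight levels, where $m = |H|$). For any $C \in X$ and $r \geq 0$,
\[
\range(\calF_H, C, r) = \bigcup_{i=1}^l \bigl(H_i \cap \Bdt(C, r/w_{(i)})\bigr),
\]
so counting ranges reduces to counting tuples of balls at geometrically spaced radii, each intersected with the corresponding weight class.

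Next I would construct the random $\delta$ from a net tree built on top of the randomized decomposition, preserving the $c$-covering property so that Lemmas~\ref{lemma:distortion}, \ref{lemma:delta_equ}, \ref{lemma:ball_equal}, and \ref{lemma:ball_laminar} all apply deterministically; additionally, the decomposition has the \emph{padding} property that any set $S$ of diameter $\rho \cdot 2^j$ stays inside a single level-$(j{+}1)$ cluster except with probability $O(\rho \cdot \DDim(M))$. To control $|\ranges(\calF_H)|$ I would then partition $r \in [0,\infty)$ into $O(m^4)$ \emph{critical intervals} on which the combinatorial data $(H_i \cap \Bdt(C,r/w_{(i)}))_i$ can change for some $C$: the critical values arise from the $m$ points, the $\leq m$ weight classes, and the $O(\log m)$ dyadic scales visible to a ball of size $\leq \diam(H)$, yielding the $m^4$ count via a standard sorting/union argument. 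Within a critical interval $[a,b)$ with $b \leq 2a$, all target tree scales $j_i := j(r/w_{(i)})$ are fixed; by the smooth property each ball depends only on $\Par^{(j_i)}(C)$, and by the cross-free property together with the packing Fact~\ref{fact:packing} there are only $(1/\eps)^{O(\DDim(M))}$ distinct patterns per level and hence per interval.

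The subtler case is when an interval spans many dyadic scales. Here I would condition on the high-probability event that $C$ is \emph{padded at every relevant scale simultaneously} with respect to the finite set of net-ancestor images of $H$ needed in the range analysis (not over all $C \in X$, which would be ruinous). Under this padding event, Lemma~\ref{lemma:ball_equal} extends to let the different-radius balls around $C$ all be ``hung'' consistently to a common ancestor, reducing the joint-hanging count back to the one-scale bound $(1/\eps)^{O(\DDim(M))}$. The padding failure probability at a single scale is $O(\eps \cdot \DDim(M))$, and a union bound over the $\mathrm{poly}(m)\cdot \log(1/\tau)$ ancestor representatives and scales arising across all critical intervals leaves failure probability $\leq \tau$ while contributing the $\log(m/\tau)$ factor in the statement. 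Multiplying by the $O(m^4)$ critical interval count (and the extra $m^2$ factor that pays for which representative is hung at each scale) delivers the claimed bound $O(1/\eps)^{O(\DDim(M))} \cdot \log(m/\tau) \cdot m^6$.

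The main obstacle is precisely the probabilistic coupling in the last paragraph: the tree is a single random object, $H$ is arbitrary and fixed, and we must arrange the padding event so it covers every center $C$ that can produce a range without a union bound over $|X|$. The resolution is that by the smooth and cross-free properties the relevant centers always lie in a canonical $\mathrm{poly}(m)$-sized set of net ancestors determined by $H$, which is exactly what allows the probability to be traded off against $\eps$ as $\Pr[\text{bad}] \lesssim \eps \cdot \DDim(M) \cdot \mathrm{poly}(m)$ rather than $\eps \cdot \DDim(M) \cdot |X|$; this is what ultimately converts the exponential $(1/\eps)^{\DDim(M)}$ dependence of Theorem~\ref{theorem:loose_bound} into the polynomial (in the shattering sense) dependence $\PDim_\tau(\calF) = \tilde O(\DDim(M) \log(1/\eps))$.
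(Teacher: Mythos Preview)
Your high-level architecture---partition $H$ by weight class, express ranges as unions $\bigcup_i (H_i\cap \Bdt(C,r/w_{(i)}))$, slice $[0,\infty)$ into $O(m^4)$ critical intervals, handle short intervals by packing, and handle long intervals by a padding argument from the randomized decomposition---matches the paper. But your treatment of the long-interval case has a genuine gap.

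You propose to ``hang the different-radius balls around $C$ consistently to a common ancestor'' and then claim this reduces to the one-scale bound. Hanging the center works scale-by-scale (that is exactly how the paper handles a short interval, Lemma~\ref{lemma:bounded_interval}), but for a long interval $[a,b)$ the hanging height $j(r/w_{(i)})$ moves through $\Theta(\log(b/a))$ dyadic levels as $r$ varies, so the number of candidate ancestors is a priori $\Theta(\log(b/a))\cdot m\cdot (1/\eps)^{O(\DDim(M))}$, and $\log(b/a)$ can be $\Omega(\log\diam(X))$, not $\poly(m)$. Your claim that ``the relevant centers always lie in a canonical $\poly(m)$-sized set of net ancestors'' is exactly what fails here.

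The paper resolves this differently. First, critical intervals are defined not by ``combinatorial data changing for some $C$'' but by two invariants tied to a system of $r$-representatives $R(a)\subseteq H$ (nets of $(H,d)$, not of $X$): (i) each $R_i(r)$ is constant, and (ii) $\Bd_i(x,r/w_i)$ is constant for every $x\in R(a)$. This is what makes the $2m^4$ count go through (Lemma~\ref{lemma:num_critical_interval}); your ``$O(\log m)$ dyadic scales'' justification is not correct. Second, for a long critical interval the paper does \emph{not} hang centers. It proves structural lemmas (Lemmas~\ref{lemma:long_interval}--\ref{lemma:subset}) showing that any range $\bigcup_i \Bdt_i(x,r/w_i)$ with $r\in[a,b/2^9]$ is contained in a single concentric family $\bigcup_i \Bd_i(u,a/w_i)$ for some $u\in R(a)$. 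The padding event is then applied not to centers but to the fixed sets $\Bd_i(u,a/w_i)$ for $u\in R(a)$, $i\in[l]$---at most $m^2$ such sets---asking that each lies in a single cluster at height $s_i\approx \log(\lambda r_1/w_i)$. Under this event, the cross-free property forces each range to equal $\bigcup_{j\le i\le l}\Bd_i(u,a/w_i)$ for some $u\in R(a)$ and threshold $j\in[l+1]$ (Claim~\ref{claim:large_ball_equal}), so $|\calH(r_1,r_2)|\le m^2+1$ regardless of the interval length. The gap-$2$ weight assumption is used precisely here, to absorb the triangle-inequality loss when showing $\Bd_i(u,a/w_i)\subseteq \Bdt_i(x,r/w_i)$ for $i\ge j+1$. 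The union bound is over $m^2$ padding events per critical interval times $2m^4$ intervals, which is where the $\log(m/\tau)$ factor enters (through the choice of $r_1$).
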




%
The rest of this section is devoted to the proof of Theorem~\ref{theorem:weighted}.

\subsection{Preparations}
We need some notations first.
Suppose we already have a random $\eps$-smoothed distance function $\delta$.
By Definition~\ref{definition:weight}, we can assume there are $l$ distinct weights satisfying that 
$$
w_1\geq 2 w_2 \geq \ldots \geq 2^{l-1} w_l.
$$
Consider a fixed set $H\subseteq X$ of size $m$.
%
%
We divide $H$ into $l$ groups $\{H_i\}_{i\in [l]}$ according to the weight, where $H_i:=\left\{x\in H\mid w(x)=w_i \right\}$ for $i\in [l]$.
%
For any $x\in X$, $i\in [l]$ and $r\geq 0$, define $\Bdt_i(x,r): = H_i\cap \Bdt(x,r)$ to be the intersection of $H_i$ and $\Bdt(x,r)$. Similarly, define $\Bd_i(x, r) := H_i \cap \Bd(x, r)$.
%
%
%
Recalling that $\range(\calF_H,x,r)=\left\{ f_y \in \calF_H \mid f_y(x) \leq r \right\}$,
we
define 
%
\begin{align*}
\calH(a, b): = \Bigl\{ \left\{ y \in H\mid  f_y \in \range(\calF_H,x,r) \right\} \mid x\in X, r\in [a, b) \Bigr\} = \left\{ \bigcup_{i\in [l]} \Bdt_i(x,\frac{r}{w_i}) \mid x\in X, r\in [a, b) \right\}.
\end{align*}
By definition, $|\ranges(\calF_H)| = |\calH(0, \infty)|$. 
Hence, it is sufficient to analyze $|\calH(0, \infty)|$.

In order to bound $|\calH(0, \infty)|$, we break $[0, \infty)$
into a small number of intervals, called {\em critical intervals}.
Then we will bound $\calH(a,b)$ for each critical interval $[a,b)$.
Apart from critical intervals, we also pick some \emph{representatives} for $H$ at each distance $r$, and we call them $r$-representatives.
The $r$-representatives are defined with respect to a hierarchical net on $(H, d)$.
%
%
%

%
%
%
%
%

\begin{definition}[$r$-representatives]
	\label{definition:basic_center}
	Let $\left\{N_i^{(H)} \mid i\leq L\right\}$ be a hierarchical net with respect to $(H, d)$. 
	For $i\in [l]$ and $r\geq 0$, assume $\zeta_i$ is the integer satisfying that $2^{\zeta_i} \leq \frac{r}{w_i} < 2^{\zeta_i + 1}$, we define $R_i(r) := N_{\zeta_i}^{(H)}$ to be the collection of net points in $N_{\zeta_i}^{(H)}$. 
	For any $r\geq 0$, define the $r$-representatives $R(r): = \bigcup_{i\in [l]} R_i(r)$ to be the set formed by the union of $R_i(r)$.
\end{definition}


We emphasize again that the hierarchical net is defined over $H$,
which is different from that defined over the whole point set $X$.
Intuitively, for $r\geq 0$, 
$R_i(r)$ is the set of net points in the height with distance scale
approximately $\frac{r}{w_i}$.
As another clarification, unless otherwise specified, we refer to a hierarchical net $\{N_i\}_i$ defined on the \emph{whole} metric, when we talk about notions relevant to nets (such as ``net point'', the ``hierarchical net''). 
Under rare circumstances when we refer to $\{N_i^{(H)}\}_i$, we will make it explicit.
In the following lemma, we show some useful facts about the $r$-representatives.

\begin{lemma}
	\label{lemma:basic_center_prop}
	$r$-representatives have the following properties.
	\begin{enumerate}
		\item For $r\geq 0$, $|R(r)| \leq m$.
		\item For $r\geq r' \geq 0$ and $i\in [l]$, $R_i(r) \subseteq R_i(r')$.
		\item For $i\in [l]$, $r\geq 0$, and any $x\in H_i$, 
		there exists $y\in R(r)$ such that $x\in B^d_i(y, \frac{r}{w_i})$.
		\item For $r\geq 0$ and $i<j\leq l$, $R_j(r)\subseteq R_i(r)$.
		\item For $r\geq 0$, $R(r)=R_1(r)$.
	\end{enumerate}
\end{lemma}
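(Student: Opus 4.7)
The plan is to verify each of the five claims in turn, as they all follow quite directly from the definitions of $R_i(r)$, the hierarchical net $\{N_j^{(H)}\}$, and the gap-$2$ property of the weight function $w$. Claim (1) is immediate: since each $R_i(r) = N_{\zeta_i}^{(H)} \subseteq H$, we have $R(r) = \bigcup_i R_i(r) \subseteq H$, so $|R(r)| \leq m$. Claim (2) follows from monotonicity: $r \geq r'$ implies $r/w_i \geq r'/w_i$, which implies $\zeta_i \geq \zeta_i'$ for the corresponding exponents, and nesting of the hierarchical net then gives $N_{\zeta_i}^{(H)} \subseteq N_{\zeta_i'}^{(H)}$.

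For claim (3), I would invoke the covering property of the hierarchical net on $H$: for any $x \in H_i \subseteq H$, there exists a net point $y \in N_{\zeta_i}^{(H)} = R_i(r) \subseteq R(r)$ with $d(x,y) \leq 2^{\zeta_i} \leq r/w_i$; combined with $x \in H_i$, this yields $x \in H_i \cap B^d(y, r/w_i) = B^d_i(y, r/w_i)$. For claim (4), the key is to translate the weight gap into a relation on heights: for $i < j$ the gap-$2$ property gives $w_i \geq 2^{j-i}\,w_j$, hence $r/w_j \geq 2^{j-i}\cdot(r/w_i) \geq 2^{\zeta_i + (j-i)}$, which combined with $r/w_j < 2^{\zeta_j + 1}$ forces $\zeta_j \geq \zeta_i + (j-i) > \zeta_i$ by integrality. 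Thus $R_j(r) = N_{\zeta_j}^{(H)} \subseteq N_{\zeta_i}^{(H)} = R_i(r)$. Claim (5) is then an immediate consequence of (4): every $R_i(r)$ lies inside $R_1(r)$, so the union equals $R_1(r)$.

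The only mild subtlety I anticipate is in claim (3), where the statement ``$N_i$ is a $2^i$-net of $N_{i-1}$'' in Section~\ref{subsec:doubling} technically yields a covering of all of $H$ with radius at most $2^{\zeta_i+1}$ rather than exactly $2^{\zeta_i}$ once one telescopes through the levels. This potential loss of a factor of $2$ is harmless: it is either absorbed into the choice of $\zeta_i$ via the looser bound $r/w_i < 2^{\zeta_i+1}$, or else one adopts the standard convention (as in~\cite{DBLP:conf/stoc/Talwar04,DBLP:journals/talg/ChanGMZ16}) that $N_i^{(H)}$ is a $2^i$-net of $H$ directly. Either way, the proof goes through without affecting any other part.
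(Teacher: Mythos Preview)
Your proof is correct and matches the paper's argument item by item. The only minor difference is in item~(4): the paper simply uses $w_i > w_j$ (from the ordering of the distinct weights) to conclude $\zeta_i \leq \zeta_j$, which already suffices for the nesting $N_{\zeta_j}^{(H)} \subseteq N_{\zeta_i}^{(H)}$; your invocation of the full gap-$2$ property to obtain the stronger $\zeta_j \geq \zeta_i + (j-i)$ is correct but more than needed here.
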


\begin{proof}
	Item 1 follows from the fact that $R(r) \subseteq H$.
	Item 2 follows from the definition of the hierarchical net $\{N_i^{(H)}\}_i$.
	For item 3, let $\zeta_i$ be the integer such that $2^{\zeta_i}\leq \frac{r}{w_i} <2^{\zeta_i + 1} $.
	Item 3 follows from the fact that $R_i(r)$ is a $2^{\zeta_i}$-covering of $H$ which means there must exist $y\in R_i(r)\subseteq R(r)$ such that $d(x,y)\leq 2^{\zeta_i}\leq \frac{r}{w_i}$.
	%
	Thus, we have $x\in \Bd_i(y,\frac{r}{w_i})$ which proves item 3.
	For item 4, let $\zeta_j$ be the integer such that $2^{\zeta_j}\leq \frac{r}{w_j} <2^{\zeta_j+1} $.
	Since $i<j$, we have $w_i>w_j$ which implies that $\zeta_i\leq \zeta_j$.
	Then item 4 follows from the fact that $N_{\zeta_j}^{(H)}\subseteq N_{\zeta_i}^{(H)}$.
	Item 5 is a direct corollary of item 4.
\end{proof}

Now, we are ready to define critical intervals.

\begin{definition}[critical interval]
	\label{definition:critical_interval}
	We say $I := [a, b)$ is a critical interval, if it is a maximal interval such that
	\begin{enumerate}
		\item for all $r_1, r_2\in I$ and $i\in [l]$, $R_i(r_1) = R_i(r_2)$.
		\item for all $r_1, r_2\in I$, $i \in [l]$ and $x\in R(a)$,
		$\Bd_i(x, \frac{r_1}{w_i}) = \Bd_i(x, \frac{r_2}{w_i})$.
	\end{enumerate}
\end{definition}

We will show in Lemma~\ref{lemma:num_critical_interval} that the number of critical intervals is only $\poly(m)$.
Furthermore, a critical interval $[a,b)$ has some invariance properties, 
which are useful in bounding $|\calH(a,b)|$.
We will provide an overview of how those invariance can help in Section~\ref{section:weighted_overview}.

\subsubsection{Random $\eps$-Smoothed Distance Functions}
Now, we explain how to construct $\delta$. 
Recall that $\delta$ is deterministically defined with respect to a net tree $T$. So the randomness of $\delta$ will come from a random construction of the net tree $T$.
The random net tree $T$ is constructed with respect to the randomized hierarchical decomposition
developed in~\cite{DBLP:conf/stoc/AbrahamBN06}.

\begin{definition}[Hierarchical Decomposition~\cite{DBLP:conf/stoc/AbrahamBN06}]
	\label{defn:single_decomp}
	Suppose the hierarchical net $\left\{N_i\right\}_i$ is given for a metric space $M$.
	At height $i$, an arbitrary ordering $\pi_i$ is
	imposed on the net $N_i$.  Each net-point $u \in N_i$
	corresponds to a \emph{cluster center} and
	samples a random value $h_u$ from a truncated exponential distribution
	$\Exp_i$ having density function $t \mapsto
	\frac{{\chi}}{{\chi}-1} \cdot \frac{\ln \chi}{2^i} \cdot e^{-\frac{t \ln \chi }{2^i}}$ for $t \in [0, 2^i]$, where $\chi = O(1)^{\DDim(M)}$.
	Then, the cluster at $u$ has random radius $r_u := 2^i + h_u$.
	
	The clusters
	induced by $N_i$ and the random radii form a
	decomposition $\Pi_i$,
	where a point $p \in X$ belongs to the cluster
	with center $u \in N_i$ such that $u$ is the first
	point in $\pi_i$ with $p\in \Bd(u,r_u)$.
\end{definition}

We say that the partition $\Pi_i$ {\em cuts} a subset $P\subseteq X$
if $P$ is not totally contained within a single cluster at height $i$.
One useful property about the above decomposition is that
the probability that $P$
is cut by $\Pi_i$ can be bounded as follows.
\begin{theorem}\cite{DBLP:conf/stoc/AbrahamBN06}
	\label{theorem:cutprobability}
Suppose $\Pi_i$ is a random decomposition defined in Definition~\ref{defn:single_decomp}.
Then, it holds that
$$
\Pr[P \text{ is cut by } \Pi_i]=O(2^{-i}\cdot\DDim(M) \cdot \diam(P)).
$$
\end{theorem}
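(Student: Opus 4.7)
Proof proposal. I would follow the standard pattern for analyzing random exponential-radius decompositions: enumerate the centers of $N_i$ that can possibly cut $P$, bound the per-center cutting probability using the density of $\Exp_i$, and then use the memoryless-like tail of the truncated exponential to cancel the $2^{\Theta(\DDim(M))}$ packing factor that would otherwise appear from a naive union bound.

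Concretely, a cluster center $u \in N_i$ can cut $P$ only when $r_u \in [\min_{p\in P} d(u,p),\, \max_{p\in P} d(u,p))$; in particular it requires $d(u,P) \leq r_u \leq 2^{i+1}$, so $u$ lies in $B^d(P,2^{i+1})$. Since $N_i$ is a $2^i$-packing, Fact~\ref{fact:packing} bounds the number of such ``relevant'' centers by $K = 2^{O(\DDim(M))}$. Enumerate them as $u_1,\dots,u_K$ in the order prescribed by $\pi_i$, let $A_j$ be the event that $u_j$ is the first relevant center whose ball reaches $P$, and let $B_j$ be the event that $B^d(u_j,r_{u_j})$ fails to contain all of $P$. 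Then $P$ is cut iff exactly one of the disjoint events $A_j \cap B_j$ holds, so $\Pr[P \text{ cut}] = \sum_{j=1}^K \Pr[A_j \cap B_j]$. Setting $a_l := d(u_l,P)$, independence of the radii $r_{u_l}$ factors each summand as $\Pr[r_{u_j} \in [a_j,\, a_j+\diam(P)]] \cdot \prod_{l<j}\Pr[r_{u_l} < a_l]$.

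The density of $\Exp_i$ is $O(\DDim(M)/2^i)$ throughout $[0,2^i]$, so each window probability is $O(\DDim(M)\cdot \diam(P)/2^i)$. The main obstacle is that a naive union bound over the $K = 2^{\Theta(\DDim(M))}$ relevant centers overshoots by the packing count, which is fatal. I would resolve this by exploiting the exponential form explicitly: $\Pr[r_{u_l} \in [a_l, a_l+\diam(P)]] = \Theta\bigl(e^{-a_l \ln\chi / 2^i}(1 - e^{-\diam(P)\ln\chi/2^i})\bigr)$, while each survival factor $\Pr[r_{u_l} < a_l]$ has the form $1 - \Theta(e^{-a_l \ln\chi / 2^i})$. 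Conceptually, conditioning on the first relevant center whose ball actually reaches $P$ reduces the whole computation to a single draw from an exponential tail, and by memorylessness the probability that this ``first hit'' draw lands inside the boundary strip of width $\diam(P)$ is only $O(\ln\chi \cdot \diam(P)/2^i) = O(\DDim(M)\cdot \diam(P)/2^i)$. Either by carrying out this telescoping/coupling argument directly, or by plugging in the closed-form bounds above and summing over $j$, the $\chi$-blowup cancels and one obtains the claimed $O(2^{-i}\cdot \DDim(M)\cdot \diam(P))$ bound.
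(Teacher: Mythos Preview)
The paper does not give its own proof of this statement: it is quoted as a black box from \cite{DBLP:conf/stoc/AbrahamBN06}, so there is nothing in the paper to compare your proposal against. Your outline is precisely the standard argument from that reference---enumerate the $2^{O(\DDim(M))}$ relevant net points, decompose the cut event into the disjoint first-hit events $A_j\cap B_j$, and use the exponential shape of $\Exp_i$ so that the packing factor is absorbed rather than paid via a union bound.

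One caveat worth flagging when you write this up in full: the ``memorylessness'' intuition you invoke is not literally available, since $\Exp_i$ is truncated to $[0,2^i]$. In particular, $\Pr[B_j\mid A_j]$ is \emph{not} uniformly $O(\lambda\,\diam(P))$ once $a_j$ approaches the truncation boundary $2^{i+1}$; there the conditional cutting probability can be close to $1$. The actual ABN calculation handles this by working with the unconditioned terms $\Pr[r_{u_j}\in[a_j,a_j+\diam(P)]]\cdot\prod_{l<j}\Pr[r_{u_l}<a_l]$ and summing them directly (the exponential weights $e^{-\lambda s_j}$ are what make the sum telescope to $O(\lambda\,\diam(P))$, not a per-term conditional bound). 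Your final sentence already gestures at this route, so the plan is sound; just be sure the written proof does the explicit sum rather than leaning on memorylessness alone.
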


\begin{definition}[Hierarchical Decomposition] \label{defn:phd}
	Let $\left\{\Pi_i\right\}_{i}$ be a random decomposition
	induced as in Definition~\ref{defn:single_decomp}.
	At the top height $L$, the whole space
	is partitioned by $\Pi_{L}$ to form the height-$L$ cluster.  
	Inductively,
	each cluster at height $i+1$ is partitioned
	by $\Pi_i$ to form height-$i$ clusters, until height $1$ is reached. Any cluster
	has at most $O(1)^{\DDim(M)}$ child clusters.
\end{definition}

We call a set $P$ is cut at height $i$ \emph{iff}
the set $P$ is cut by some partition $\Pi_j$ with
$j \geq i$.
From Theorem~\ref{theorem:cutprobability}, we can see the following corollary.

\begin{corollary}
	\label{corollary:cutprob}
 The probability that $P$ is cut at height $i$
 is at most 
 $$\sum_{j \geq i} O(2^{-j}\cdot\DDim(M) \cdot \diam(P)) 
 = O(2^{-i}\cdot \DDim(M) \cdot \diam(P)).$$
\end{corollary}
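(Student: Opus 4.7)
The corollary is essentially a direct consequence of Theorem~\ref{theorem:cutprobability} together with a union bound over heights, so my plan is short.

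My first step would be to unpack the definition: by the statement preceding the corollary, the event ``$P$ is cut at height $i$'' is exactly the union, over heights $j \geq i$, of the events ``$P$ is cut by $\Pi_j$''. So I would write
\begin{align*}
\Pr[P \text{ is cut at height } i]
\;\leq\; \sum_{j \geq i} \Pr[P \text{ is cut by } \Pi_j],
\end{align*}
which is a plain union bound over the (countably many, but really finitely many up to the top level $L$) scales at or above $i$.

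Next I would apply Theorem~\ref{theorem:cutprobability} to each term: for every $j$,
\begin{align*}
\Pr[P \text{ is cut by } \Pi_j] \;=\; O\bigl(2^{-j}\cdot \DDim(M) \cdot \diam(P)\bigr).
\end{align*}
Substituting into the union bound, I would pull the common factor $\DDim(M)\cdot \diam(P)$ out of the sum and evaluate the geometric series $\sum_{j\geq i} 2^{-j} = 2^{-i+1} = O(2^{-i})$. This yields the claimed bound $O(2^{-i}\cdot \DDim(M)\cdot \diam(P))$, matching the display in the corollary statement.

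There is no real obstacle here; the only thing worth being careful about is to confirm that the $\Pi_j$'s are defined for all relevant $j\geq i$ (up to height $L$), so that the geometric sum is legitimately truncated to a prefix of $\sum_{j\geq i} 2^{-j}$ and still bounded by $O(2^{-i})$. The union bound is loose but adequate, and no independence between the $\Pi_j$'s needs to be invoked.
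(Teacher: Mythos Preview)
Your proposal is correct and matches the paper's approach exactly: the paper does not give a separate proof but simply remarks that the corollary follows from Theorem~\ref{theorem:cutprobability}, and the intended argument is precisely the union bound over $j\geq i$ followed by summing the geometric series, as you wrote.
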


\noindent\textbf{Random Net Tree $T$.}
Suppose $\Pi := \left\{ \Pi_i \right\}_i$ is a random hierarchical decomposition in Definition~\ref{defn:phd}. 
%
%
Then for each $x\in X$ and $i \leq L$, we define $\Par^{(i)}(x)$ to be the unique point $v\in N_i$ such that the cluster of $v^{(i)}$ contains $x$.
%
%
The key properties we need from $T$ are stated in the following lemma. They follow immediately from Definition~\ref{defn:phd}.

\begin{lemma}
	\label{lemma:decomp_net_tree}
	The random net tree $T$ satisfies the following.
	\begin{enumerate}
		\item $T$ is $2$-covering.
		\item For any $P\subseteq X$ and $i\leq L$,
		$\Pr[\forall v \in N_i: P\not\subseteq \Des(v^{(i)})] \leq O\left(2^{-i}\cdot \DDim(M) \cdot \diam(P)\right)$.
	\end{enumerate}
\end{lemma}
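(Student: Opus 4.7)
The plan is to prove the two items essentially independently, as both reduce to a straightforward unpacking of the construction of $T$ from the hierarchical decomposition $\Pi$.

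For item 1, I will show $T$ is $2$-covering by bounding $d(u, \Par(u^{(i)}))$ for each $u \in N_i$. By the definition of the random net tree just above the lemma, $\Par(u^{(i)}) = v^{(i+1)}$ where $v \in N_{i+1}$ is the unique cluster center at height $i+1$ whose cluster contains $u$. Hence $u \in \Bd(v, r_v)$, and by Definition~\ref{defn:single_decomp} the radius $r_v = 2^{i+1} + h_v$ with $h_v \in [0, 2^{i+1}]$, so $r_v \leq 2 \cdot 2^{i+1}$. This gives $d(u, \Par(u^{(i)})) \leq 2 \cdot 2^{i+1}$, matching Definition~\ref{def:coveringnettree} with $c = 2$.

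For item 2, the plan is to identify the event $\{\forall v \in N_i : P \not\subseteq \Des(v^{(i)})\}$ with the event that $P$ is cut at height $\geq i$, and then to apply Corollary~\ref{corollary:cutprob}. The essential identification is that $\Des(v^{(i)})$, viewed as a subset of $X$, coincides with the height-$i$ cluster in $\Pi$ whose center is $v$. This follows by induction on the height, using the recursive structure of the hierarchical decomposition (each height-$(i{+}1)$ cluster is refined by $\Pi_i$ into height-$i$ clusters) together with the definition $\Par^{(i)}(x) = v$ iff $x$ lies in the cluster of $v^{(i)}$. Given this identification, ``$P$ is not contained in any $\Des(v^{(i)})$'' is exactly the statement that $P$ is not contained in any single height-$i$ cluster, which is equivalent to $P$ being cut by some $\Pi_j$ with $j \geq i$. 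Corollary~\ref{corollary:cutprob} then yields the claimed bound of $O(2^{-i} \cdot \DDim(M) \cdot \diam(P))$.

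I do not anticipate a serious obstacle, since both statements are near-definitional. The only place requiring any scrutiny is the identification of $\Des(v^{(i)})$ with the height-$i$ cluster at $v$ in item 2, because the lemma statement phrases the event in tree-theoretic terms while Corollary~\ref{corollary:cutprob} is phrased in decomposition terms; making this dictionary explicit via the definition of $\Par^{(i)}$ is the only bookkeeping that needs care.
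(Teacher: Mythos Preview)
Your proposal is correct and matches the paper's approach: the paper simply states that both items ``follow immediately from Definition~\ref{defn:phd}'', and your write-up fills in exactly the details one would expect --- reading off the $2\cdot 2^{i+1}$ radius bound for item~1, and for item~2 identifying $\Des(v^{(i)})$ with the height-$i$ cluster at $v$ so that the event in question becomes ``$P$ is cut at height $i$'' and Corollary~\ref{corollary:cutprob} applies.
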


\subsection{Proof Overview of Theorem~\ref{theorem:weighted}}
\label{section:weighted_overview}
Recall that the goal is to bound $|\calH(0, \infty)|$.
Let us first focus on the case of a bounded interval $[a, b)$ such that $b \leq 2a$, studied in~Lemma~\ref{lemma:bounded_interval}. We make use of the smooth property of $\delta$ stated in Lemma~\ref{lemma:ball_equal}, so that we can ``hang'' the centers of balls on some net point of higher height, as noted in Remark~\ref{remark:hang}.
Then, we use the structure of the net tree $T$, and relate $|\calH(a, b)|$ to  the number of some designated net points that we hang on to. We show that the collection of net points, which are ``close to'' some point in $H$, must form a set with a bounded diameter.
Also, since the net points are of enough height, we can use the packing property in doubling metrics to bound the number of the relevant net points.

Observe that for an interval $[a, b)$ such that $\frac{b}{a}$ is bounded, Lemma~\ref{lemma:bounded_interval} provides a nice bound on $|\calH(a, b)|$.
However, since we consider $|\calH(0, \infty)|$, Lemma~\ref{lemma:bounded_interval} is not sufficient. 
To resolve this, we partition $[0, \infty)$ into at most $2m^4$ \emph{critical} intervals (Definition~\ref{definition:critical_interval}, Lemma~\ref{lemma:num_critical_interval}). Moreover, for each $r$, we take a set $R(r)$ as representatives of points (which are nets of proper height in the hierarchical net
$\left\{N_i^{(H)} \mid i\leq L\right\}$ with respect to $(H,d)$),\jian{we said when we speak about $N_i^{H}$, we need to be explicit. So we should do that! check all such places.} \lingxiao{It seems that the above is the only place we mention $N_i^{(H)}$. @Shaofeng, is it correct?}\shaofeng{I think so. I think being explicit means we use $N^{(H)}$ notation.}which we call $r$-representatives (recalling Definition~\ref{definition:basic_center}). %
Since the number of critical intervals is small, it suffices to bound $|\calH(a, b)|$ for a critical interval $[a, b)$.

The analysis of a critical interval $I := [a, b)$ is in Lemma~\ref{lemma:upper_bound_critical_interval}.
In the analysis, we further partition $[a, b)$ into three intervals, $[a, r_1)$, $[r_1, r_2)$ and $[r_2, b)$. Here, $r_1, r_2$ are picked not too far away from $a$ and $b$ respectively, and so we can apply Lemma~\ref{lemma:bounded_interval} to bound $|\calH(a, r_1)|$ and $|\calH(r_2, b)|$. 
The most difficult part is  to bound $|\calH(r_1, r_2)|$ (for carefully chosen $r_1, r_2$).
Note that applying Lemma~\ref{lemma:bounded_interval} does not provide a useful bound.
In the following, we give some intuitions why we can handle this situation.
The main idea is to show that any range $\bigcup_{i\in [l]} \Bdt_i(x, \frac{r}{w_i})$ ($r\in [r_1,r_2]$)
can only be of the form $\bigcup_{j\leq i\leq l} \Bd_i(u, \frac{a}{w_i})$ for some $u\in R(a)$ and $j\in [l+1]$.
%

The first observation is that by choosing $r_2=\frac{b}{2^9}$, any range $\bigcup_{i\in [l]} \Bdt_i(x, \frac{r}{w_i})$ must be a subset of some concentric ball $\bigcup_{i\in [L]} 
\Bd_i(u, \frac{a}{w_i})
$ centered at $u\in R(a)$.
%
%
%
%
We first show in Lemma~\ref{lemma:long_interval} that, for any $p,q\in R(a)$,
there must exist some $j$ such that 
$\Bd_i(p,\frac{a}{w_i})$ and $\Bd_i(q,\frac{a}{w_i})$ are disjoint for any $i\leq j-1$ and are equal for any $i\geq j$.
The intuition is if $\Bd_i(p,\frac{a}{w_i})$ and $\Bd_i(q,\frac{a}{w_i})$ are neither disjoint nor equal, the critical interval cannot be long.
%
%
Based on such structured intersections between concentric balls, we prove that if $a\leq r\leq \frac{b}{2^9}$, any range $\bigcup_{i\in [l]} \Bdt_i(x,\frac{r}{w_i})$ must be a subset of a concentric ball determined by some $u\in R(a)$, by Lemmas~\ref{lemma:large_r} and~\ref{lemma:subset}.
Thus, we need pick $r_2 = \frac{b}{2^9}$.

The remaining problem is that the range $\bigcup_{i\in [l]} \Bdt_i(x,\frac{r}{w_i})$
can still be an arbitrary subset of some concentric ball. 
To handle this problem, we show that $\bigcup_{i\in [l]} \Bdt_i(x,\frac{r}{w_i})$ can not be arbitrary if $r_1$ is large enough, due to the cross-free property of $\delta$ (Lemma~\ref{lemma:ball_laminar}) and the assumption of the gap-2 weight function.
To be more specific, we prove in Claim~\ref{claim:large_ball_equal} that any range $\bigcup_{i\in [l]} \Bdt_i(x, \frac{r}{w_i})$
can only be of the form
$\bigcup_{j\leq i\leq l} 
\Bd_i(u, \frac{a}{w_i})
$
for some $u\in R(a)$ and $j\in [l+1]$,
conditioning on the event that $\Bd_i(u, \frac{a}{w_i})$ is not cut at height $\Omega(\log \frac{\eps r}{ w_i})$.
%
Moreover, by Lemma~\ref{lemma:decomp_net_tree} and the choice of a sufficiently large $r_1$, such event happens with high probability, over the randomness of $\delta$.
Overall, we prove that with high (constant) probability, 
$|\calH(r_1, r_2)| \leq \left|\{ \bigcup_{j\leq i\leq l}
\Bdt_i(u, \frac{a}{w_i})
\mid j\in [l], u\in R(a) \}\cup \left\{\emptyset \right\}\right|$, which is clearly bounded by
$O(|R(a)|\cdot l+1)=O(m^2)$.


\subsection{Proof of Theorem~\ref{theorem:weighted}}
\label{section:weighted_proof}

In the remaining of this section, let $\lambda:= \frac{\eps\cdot (1-5c\eps)}{20 (1+4c\eps)} = \frac{\eps\cdot (1-10\eps)}{20 (1+8\eps)}$ be the one defined in Lemma~\ref{lemma:ball_laminar}.
We first bound the number of critical intervals by Lemma~\ref{lemma:num_critical_interval}. Then we bound $|\calH(a,b)|$ for each critical interval $I:=[a,b)$ by Lemma~\ref{lemma:upper_bound_critical_interval}. 
Theorem~\ref{theorem:weighted} is a corollary of these two lemmas.

\begin{lemma}
	\label{lemma:num_critical_interval}
	$[0,\infty)$ can be partitioned into at most $2m^4$ critical intervals.
\end{lemma}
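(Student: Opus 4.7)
My plan is to argue that the partition into critical intervals is determined by a finite set of ``change points'' on $[0,\infty)$, where either condition 1 or condition 2 in Definition~\ref{definition:critical_interval} breaks. Specifically, as $r$ varies continuously, a new critical interval begins precisely when (i) some $R_i(r)$ changes (a \emph{type-1} change point), or (ii) some ball $B^d_i(x, r/w_i)$ changes for some $x \in H$ and $i \in [l]$ (a \emph{type-2} change point, where we overcount by letting $x$ range over all of $H$ rather than just $R(a)$, since $R(a) \subseteq H$). The total number of critical intervals is therefore at most the total number of change points plus one, so it suffices to show that both types contribute at most $O(m^2)$ change points.

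For type-1 change points, I will use the structure of the hierarchical net on $H$. Recall $R_i(r) = N^{(H)}_{\zeta_i}$ with $\zeta_i$ determined by $2^{\zeta_i} \le r/w_i < 2^{\zeta_i + 1}$, and the nets satisfy $\cdots \subseteq N^{(H)}_1 \subseteq N^{(H)}_0 \subseteq \cdots$ with every $N^{(H)}_j \subseteq H$. Since the nets are nested and $|H|=m$, there are at most $m$ distinct values of $N^{(H)}_j$ over all heights $j$, so $R_i(r)$ takes at most $m$ distinct values as $r$ varies over $[0,\infty)$; hence there are at most $m-1$ type-1 change points for each fixed $i$. Summing over $i \in [l]$ and using $l \le m$ gives at most $lm \le m^2$ type-1 change points.

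For type-2 change points, fix $x \in H$ and $i \in [l]$. The ball $B^d_i(x, r/w_i) = H_i \cap B^d(x, r/w_i)$ is monotone non-decreasing in $r$ and changes exactly at those $r$ of the form $w_i \cdot d(x,y)$ for $y \in H_i$. Thus for each fixed pair $(x,i)$ there are at most $|H_i|$ change points. Summing over $x \in H$ and $i \in [l]$ yields at most $|H| \cdot \sum_{i\in[l]} |H_i| = m \cdot m = m^2$ type-2 change points. Combining the two bounds, $[0,\infty)$ contains at most $2m^2$ change points, so it is partitioned into at most $2m^2 + 1 \le 2m^4$ critical intervals (for $m \ge 2$, which is our regime of interest). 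There is no real obstacle here; the only subtle point is to justify overcounting type-2 events by letting $x$ range over all of $H$ instead of the a priori unknown set $R(a)$, which is immediate since $R(a) \subseteq H$.
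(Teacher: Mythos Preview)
Your argument is correct and in fact yields a sharper bound, $2m^2+1$, than the paper's $2m^4$. The paper proceeds hierarchically: it first partitions $[0,\infty)$ according to where condition~1 breaks, obtaining at most $m^2+1$ ``big'' intervals on each of which $R(a')$ is constant; then, \emph{within each such big interval}, it counts at most $m^2$ violations of condition~2 (using the fixed $R(a')$), arriving at $(m^2+1)^2\le 2m^4$. Your approach instead counts all change points globally. The key simplification is precisely the overcounting you flag: by replacing the interval-dependent set $R(a)$ with the fixed superset $H$, the type-2 events can be bounded once over all of $[0,\infty)$ by $\sum_{x\in H}\sum_i |H_i|=m^2$, avoiding the multiplicative blowup from iterating over big intervals. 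The logic that every critical-interval endpoint $b$ must be a change point (since $[a,b)$ satisfies both conditions but $[a,b]$ does not, forcing a jump of some $R_i$ or some $B^d_i(x,\cdot/w_i)$ with $x\in R(a)\subseteq H$ at $r=b$) is sound, and right-continuity of these step functions makes this precise. Your route is shorter and quadratically tighter; the paper's route has the minor advantage of not needing the overcounting observation, at the cost of the weaker bound.
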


\begin{proof}
	Observe that critical intervals may be constructed in the following greedy manner. We start from $a = 0$, and find the smallest $b$ such that $[a, b]$ violates either item 1 or item 2 of Definition~\ref{definition:critical_interval}. Once such a value $b$ is found, we define $[a, b)$ as a critical interval, and restart with $a := b$.
	
	Hence, it suffices to count how many violations can happen.
	Suppose $I := [a, b)$ is a critical interval and $I \cup \left\{b\right\}$ violates item 1, i.e., there exists $i\in [l]$ such that $R_i(a) \neq R_i(b)$.
	This implies $|R_i(b)| \leq |R_i(a)| - 1$ by item 2 of Lemma~\ref{lemma:basic_center_prop}. Then by item 1 of Lemma~\ref{lemma:basic_center_prop}, we conclude that
	such violation can happen at most $m$ times for any $i$. 
	In total, it can happen at most $m^2$ times for all $i$.
	Therefore, we divide $[0,\infty)$ into at most $m^2+1$ intervals which do not violate item 1.

	Now suppose a critical interval $I = [a, b)$ is a subset of a maximal interval $I' := [a', b')$ such that item $1$ is not violated in $I'$ but item 2 may be, and $I \cup \left\{b\right\}$ violates item 2.
	Then there exists $i\in [l]$ and $x \in R(a) = R(a')$, such that $
	\Bd_i(x, \frac{a}{w_i})
	\neq
	\Bd_i(x, \frac{b}{w_i})
	$.
	Observe that $\Bd_i(x, \frac{a}{w_i}) \subseteq \Bd_i(x, \frac{b}{w_i})$,
	and that $\left|
	\Bd_i(x, \frac{r}{w_i})
	\right| \leq |H_i|$ for any $x\in X$ and $r\geq 0$. Hence, such violation cannot happen more than $|H_i|$ times for any fixed $i\in [l]$ and $x\in R(a')$. 
	In total, since $\sum_{i \in [l]}{|H_i|} = m$ and $|R(a')|\leq m$, it cannot happen more than $m^2$ times for all $i\in [l]$ and $x \in R(a')$.
	Hence, the interval $I'$ can be divided into at most $m^2+1$ critical intervals.

	In conclusion, there are at most $(m^2+1) \cdot (m^2+1) \leq 2 m^4$ critical intervals.
\end{proof}

Next, we bound $|\calH(a, b)|$ for a bounded interval $[a, b)$ with $b \leq 2 a$ in Lemma~\ref{lemma:bounded_interval}.
First, we can see that each concentric ball
$\Bdt_i(x, \frac{r}{w_i}), i\in [l]$ 
can be hung to a net points $\Par^{(\zeta_i)}(x)$ in the hierarchical net $\left\{N_j\right\}_j$ 
using the smooth property (Lemma~\ref{lemma:ball_equal}) of the $\delta$ function.
%
Then we apply the packing property of the doubling metric to bound the number of all
possible chains of points $\{\Par^{(\zeta_i)}(x) \mid i\in [l] \}$ relevant to non-empty concentric balls, and we finally relate this to an upper bound of $|\calH(a, b)|$.
We note that Lemma~\ref{lemma:bounded_interval} holds for any $\eps$-smoothed distance function $\delta$ (thus no randomness is required here). Furthermore, it does not require the interval to be critical. 
In fact, we only need the property that the interval is bounded.
%

\begin{lemma}[bounded interval]
	\label{lemma:bounded_interval}
	Consider $I := [a, b)$, with $b \leq 2a$. It holds that 
	$$|\calH(a, b)| \leq O(\frac{1}{\lambda})^{\DDim(M)} \cdot m^2.
	$$
\end{lemma}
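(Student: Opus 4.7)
The plan is to use the smooth property (Lemma~\ref{lemma:ball_equal}) to replace each center $x \in X$ by a canonical net-point ancestor $u_i := \Par^{(\zeta_i)}(x) \in N_{\zeta_i}$, where $\zeta_i = \zeta_i(r)$ is the integer defined by $2^{\zeta_i - 1} \leq \lambda r/w_i < 2^{\zeta_i}$; the lemma then gives $B^\delta(x, r/w_i) = B^\delta(u_i, r/w_i)$. Because $b \leq 2a$, each $\zeta_i$ takes at most two distinct values as $r$ ranges over $[a, b)$, and the gap-$2$ hypothesis $w_1 \geq 2 w_2 \geq \cdots \geq 2^{l-1} w_l$ forces $\zeta_1 \leq \zeta_2 \leq \cdots \leq \zeta_l$, so $(u_1, \ldots, u_l)$ is an ancestor chain in the net tree determined by any one of its entries.

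The key structural claim is that every non-empty range $R = \bigcup_{i \in [l]} B^\delta(u_i, r/w_i) \cap H_i$ in $\calH(a,b)$ is determined by a triple $(i^*, u_{i^*}, r)$, where
\[
i^* \;:=\; \min\{\, i \in [l] : B^\delta(u_i, r/w_i) \cap H_i \neq \emptyset \,\}.
\]
For $i < i^*$ the contribution is empty by definition, and for $i \geq i^*$ the ancestor $u_i$ equals $\Par^{(\zeta_i)}(u_{i^*})$, so a second application of Lemma~\ref{lemma:ball_equal} at height $\zeta_i$ gives $B^\delta(u_i, r/w_i) = B^\delta(u_{i^*}, r/w_i)$. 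Hence $R = \bigcup_{i \geq i^*} B^\delta(u_{i^*}, r/w_i) \cap H_i$, and it suffices to count distinct such triples.

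To count relevant pairs $(i^*, u_{i^*})$, fix $i^*$ and one of its (at most two) admissible values of $\zeta_{i^*}$. Relevance requires some $y \in H_{i^*}$ with $\delta(u_{i^*}, y) \leq r/w_{i^*} \leq b/w_{i^*} = O(2^{\zeta_{i^*}}/\lambda)$; Lemma~\ref{lemma:distortion} converts this into a comparable $d$-distance bound, and Fact~\ref{fact:packing} applied to the $2^{\zeta_{i^*}}$-packing $N_{\zeta_{i^*}}$ yields at most $O(1/\lambda)^{\DDim(M)}$ candidate $u_{i^*}$ per $y$. Summing over $y \in H_{i^*}$, over the two choices of $\zeta_{i^*}$, and over $i^* \in [l]$ gives $O(m/\lambda^{\DDim(M)})$ relevant pairs. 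For each fixed pair, as $r$ grows within the sub-interval on which $\zeta_{i^*}(r)$ is the chosen value, every ball $B^\delta(u_{i^*}, r/w_i)$ grows monotonically in $r$ about a fixed center, so each $y \in H$ enters the range at most once, bounding the number of distinct $R$ per pair by $m+1$. Multiplying and adding one for the empty range yields $O(1/\lambda)^{\DDim(M)} \cdot m^2$, as claimed.

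The main delicate point is justifying that the representation $R = \bigcup_{i \geq i^*} B^\delta(u_{i^*}, r/w_i) \cap H_i$ remains valid and the monotone-insertion count remains tight as $r$ varies: although $\zeta_j(r)$ for $j > i^*$ may shift upward and move $u_j$ to one of its ancestors, repeated application of Lemma~\ref{lemma:ball_equal} guarantees that each $B^\delta(u_j, r/w_j)$ is unaffected by such jumps and coincides throughout with $B^\delta(u_{i^*}, r/w_j)$, so a single $y \in H$ still enters at most once. The only other care points are the harmless factor of $2$ from the two possible values of $\zeta_{i^*}$ and the $(1 \pm O(\eps))$ distortion between $\delta$ and $d$, both of which are absorbed into the $O(1/\lambda)^{\DDim(M)}$ factor coming from packing.
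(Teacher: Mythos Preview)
Your proof is correct and follows essentially the same route as the paper's: hang each ball to an ancestor net point via Lemma~\ref{lemma:ball_equal}, identify the minimal index $i^*$ with a non-empty layer (the paper calls it $t(x,r)$), bound the number of relevant anchor pairs $(i^*,u_{i^*})$ by the packing property (Fact~\ref{fact:packing}), and for each anchor use monotonicity of the balls in $r$ to get at most $m+1$ distinct ranges. The one cosmetic difference is that the paper fixes $\zeta_i$ once and for all via the left endpoint $a$ (so $2^{\zeta_i-1}\le \lambda a/w_i<2^{\zeta_i}$, which remains valid for all $r\ge a$ in Lemma~\ref{lemma:ball_equal}), whereas you let $\zeta_i$ depend on $r$ and then absorb the resulting ``at most two values'' into the constant; the paper's choice avoids that bookkeeping entirely but the arguments are otherwise identical.
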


\begin{proof}
	For any $i\in [l]$, define integer $\zeta_i$ such that $2^{\zeta_i - 1} \leq \lambda \cdot \frac{a}{w_i} < 2^{\zeta_i}$.
	Since $w_1 \geq w_2 \geq \ldots \geq w_l$, we have $\zeta_1  \leq  \zeta_2 \leq \ldots \leq \zeta_l$.
	By Lemma~\ref{lemma:ball_equal}, for $i \in [l]$, $x\in X$ and $r\geq a$,
	$
	\Bdt_i(x, \frac{r}{w_i})
	=
	\Bdt_i(\Par^{(\zeta_i)}(x), \frac{r}{w_i})
	$.
	This intuitively means that we can ``hang'' the center $x$ to $\Par^{(\zeta_i)}(x)$ for each $i$, as noted in Remark~\ref{remark:hang}.
	Hence,
	\begin{equation}
	\label{eqn:hang}
	\bigcup_{i \in [l]}{
		\Bdt_i(x, \frac{r}{w_i})
	} = \bigcup_{i \in [l]}{
		\Bdt_i(\Par^{(\zeta_i)}(x), \frac{r}{w_i})
	}.
	\end{equation}
	Let $Z_i(x, r) :=
	\Bdt_i(\Par^{(\zeta_i)}(x), r)
	$.
	Let $E(r) := \{ x\in X \mid \exists i : Z_i(x, \frac{r}{w_i})\neq \emptyset \}$.
	For $x \in E(r)$ define $t(x, r)$ to be the smallest integer $i \in [l]$ such that $ Z_i(x, \frac{r}{w_i}) \neq \emptyset $ (noting that such $i$ must exist as $x \in E(r)$). So by Equation~\eqref{eqn:hang},
	we can see that
	\begin{equation}
	\bigcup_{i \in [l]}{
		\Bdt_i(x, \frac{r}{w_i})
	} = \bigcup_{t(x, r) \leq i \leq l}{Z_i(x, \frac{r}{w_i})}.
	\end{equation}
	So, we can rewrite $|\calH(a,b)|$ as follows:
	\begin{align}
	|\calH(a, b)|
	&= \Bigl|\Bigl\{ \bigcup_{i\in [l]}{
		\Bdt_i(x, \frac{r}{w_i})
	} \mid x\in X, r\in I \Bigr\}\Bigr| \nonumber \\
	&= 1 + \Bigl|\Bigl\{\,\, \bigcup_{t(x, r)\leq i \leq l}{Z_i(x, \frac{r}{w_i})} \mid  r\in I ,x\in E(r)\Bigr\}\Bigr| \nonumber \\
	& = 1 + |U|, \label{eqn:one_plus}
	\end{align}
	where the ``1'' is for the empty set, and $U := \left\{ \bigcup_{t(x, r) \leq i \leq l}{Z_i(x, \frac{r}{w_i})} \mid  r\in I ,x\in E(r) \right\}$.
	Hence, it suffices to bound $|U|$.
	
	Now define $V := \left\{ (x, r) \mid  r \in I,x\in E(r) \right\}$ which is the set of relevant pairs $(x,r)$ that we need to consider.
	Define
	\begin{align*}
	W := \left\{ (u, i) \mid \exists (x, r) \in V : u = \Par^{(\zeta_{t(x, r)})}(x), i = t(x, r) \right\},
	\end{align*}
	and we will count $|U|$ by relating it to $W$ via $V$.
	
	\begin{claim}
		\label{cl:u_leq_W}
		$|U|  \leq |W| \cdot m$.
	\end{claim}
	
	\begin{proof}
		Define $\nu_1 : V \rightarrow U$ such that $\nu_1(x, r) := \bigcup_{t(x, r) \leq i \leq l}{
			Z_i(x, \frac{r}{w_i})
		}$. 
		%
		Define $\nu_1(Q)=\left\{\nu_1(x,r)\mid (x,r)\in Q \right\}$ for any subset $Q\subseteq V$.
		Define $\nu_2 : V \rightarrow W$ such that $\nu_2(x, r) := ( \Par^{(\zeta_{t(x, r)})}(x), t(x, r) )$.
		Let $\nu_2^{-1} : W \rightarrow 2^V$ such that 
		\[
		\nu_2^{-1}(u, i) := \left\{ (x, r) \in V \mid u = \Par^{(\zeta_{t(x, r)})}(x) , i = t(x, r) \right\}.
		\]
		
		By definition, we observe that
		$U = \bigcup_{(u, i) \in W}\nu_1(\nu_2^{-1}(u, i))$.
		Hence
		\begin{equation}
		|U| \leq \sum_{(u, i) \in W}{|\nu_1(\nu_2^{-1}(u, i))|}. \label{eqn:relate_u}	
		\end{equation}

		\noindent\textbf{Analyzing $|\nu_1(\nu_2^{-1}(u, i))|$.}
		Fix $(u, i) \in W$, and let $Q := \nu_2^{-1}(u, i)$.
		Now, we show that $|\nu_1(Q)| \leq m$.
		By definition, $Q = \left\{ (x, r) \in V \mid x\in E(r), \Par^{(\zeta_{t(x, r)})}(x) = u,  t(x, r) = i \right\}$.
		Hence, for $(x, r) \in Q$ we have
		\begin{align*}
		\nu_1(x, r)
		= \bigcup_{i \leq j \leq l}{Z_j(x, \frac{r}{w_j})}
		= \bigcup_{i \leq j \leq l}{
			\Bdt_j(\Par^{(\zeta_j)}(u), \frac{r}{w_j})
		},
		\end{align*}
		where the last equality is by $\zeta_1 \leq \zeta_2 \leq \ldots \leq \zeta_l$ and so $\Par^{(\zeta_j)}(x) = \Par^{(\zeta_j)}(u)$ for all $j\geq i$.
		For fixed $u \in N_{\zeta_i}$, observe that $1\leq |\bigcup_{i\leq j \leq l}{
			\Bdt_j(\Par^{(\zeta_j)}(u), \frac{r}{w_j})
		}|\leq m$. 
		On the other hand,
		by the definition of critical intervals,
		we have the monotonicity property: 
		$\bigcup_{i\leq j \leq l}{
			\Bdt_j(\Par^{(\zeta_j)}(u), \frac{r'}{w_j})
		}\subseteq \bigcup_{i\leq j \leq l}{
			\Bdt_j(\Par^{(\zeta_j)}(u), \frac{r}{w_j})
		}$ for any $r'\leq r$.
		Hence, we can see that
		\begin{align*}
		|\nu_1(\nu_2^{-1}(u,i))| \leq 
		\left|\Bigl\{ \bigcup_{i\leq j \leq l}{
			\Bdt_j(\Par^{(\zeta_j)}(u), \frac{r}{w_j})
		} \mid (x,r)\in Q, r\geq 0 \Bigr\} \right| \leq m.
		\end{align*}
		We conclude that $|\nu_1(Q)|\leq m$.
		Therefore, by~(\ref{eqn:relate_u}), we complete the proof of the claim.
	\end{proof}
	
	\noindent\textbf{Bounding $|W|$.}
	It remains to bound $|W|$. Let $W_i := \left\{ (u, i) \in W \right\}$ for $i\in [l]$.
	Observe that $|W| = \sum_{i \in [l]}{|W_i|}$. Now fix some $i \in [l]$, and we are to bound $|W_i|$.
	
	Consider some $(u, i) \in W_i$. By definition, there exists $(x, r) \in V$, such that $u = \Par^{(\zeta_{t(x, r)})}$ and $i = t(x, r)$.
	By the definition of $t(x, r)$, this implies that $Z_i(x, \frac{r}{w_i}) \neq \emptyset$, which is equivalent to
	$
	\Bdt_i(u, \frac{r}{w_i})
	\neq \emptyset$.
	Since
	$
	\Bdt_i(u, \frac{r}{w_i})
	\subseteq
	\Bdt_i(u, \frac{b}{w_i})
	$, we conclude that for \emph{any} $(u, i) \in W_i$,
	$
	\Bdt_i(u, \frac{b}{w_i})
	\neq \emptyset$.
	
	For $y \in H_i$, define $P_y := \{ u \mid (u, i) \in W_i,  y \in
	\Bdt_i(u, \frac{b}{w_i})
	\}$. 
	Since $P_y \subseteq N_{\zeta_i}$, $P_y$ is a $2^{\zeta_i}$-packing. Moreover, $\diam(P_y) \leq O(1) \cdot \frac{b}{w_i}$.
	By the packing property (Fact~\ref{fact:packing}), we have
	$|P_y| \leq O\left(\frac{b}{2^{\zeta_i}\cdot w_i}\right)^{\DDim(M)}
	= O\left( \frac{b}{\lambda a} \right)^{\DDim(M)} =O\left(\frac{1}{\lambda}\right)^{\DDim(M)}$, where we use $b \leq 2a$.
	Since for any $(u, i) \in W_i$, $
	\Bdt_i(u, \frac{b}{w_i})
	\neq \emptyset$,
	we can see that $|\bigcup_{y \in H_i}{P_y}| \geq |W_i|$.
	We conclude that
	\begin{align*}
	|W_i| \leq \sum_{y \in H_i}{|P_y|}
	\leq |H_i|\cdot O\left(\frac{1}{\lambda}\right)^{\DDim(M)},
	\end{align*}
	Therefore, 
	\begin{equation}\label{eqn:W_upper_bound}
	|W| = \sum_{i \in [l]}{|W_i|}
	\leq \sum_{i \in [l]}{|H_i| \cdot O\left(\frac{1}{\lambda}\right)^{\DDim(M)} }
	= m \cdot O\left(\frac{1}{\lambda}\right)^{\DDim(M)}.
	\end{equation}
	
	\noindent\textbf{Concluding the Lemma.}
	By Equations~\eqref{eqn:one_plus}, \eqref{eqn:W_upper_bound}, and Claim~\ref{cl:u_leq_W}, we conclude that
	$|\calH(a, b)| \leq m^2 \cdot O\left(\frac{1}{\lambda}\right)^{\DDim(M)}$, as required.
\end{proof}

\ignore{
	
	\begin{proof}
		We first divide the interval $I$ in greedy manner, where we start from $r_1 = a$, and find the smallest $r_2$ such that  
		\begin{align}
		\label{eq:subinterval}
		\exists i\in [l], \quad s.t.~ \lfloor \log  \frac{\lambda r_1}{w_i} \rfloor \neq \lfloor \log  \frac{\lambda r_2}{w_i} \rfloor.
		\end{align}
		Once such a value $r_2$ is found, we define $[r_1,r_2)$ as a ``good'' interval, and restart with $r_1:=r_2$ until $r_2 = b$.
		Since $b\leq 2a$, Property~\eqref{eq:subinterval} can happen at most once for each $i\in [l]$.
		Thus, we divide $I$ into at most $l$ good intervals by the above process.
		Next, we consider a good interval $[r_1,r_2)\subseteq I$ and bound $\calH(r_1,r_2)$.

		We first define $\zeta_i:= \lfloor \log  \frac{\lambda r_1}{w_i} \rfloor+1$ for any $i\in [l]$.
		By the definition of good intervals, for any $r \in [r_1,r_2)$ and $i\in [l]$, we have $\lfloor \log  \frac{\lambda r_1}{w_i} \rfloor +1 = \zeta_i$.
		Then by item 1 of Lemma~\ref{lemma:ball_equal}, for any $x\in X$, $r\in [r_1,r_2)$ and $i\in [l]$,
		\begin{align*}
		B_i(x, r) = H_i \cap \Bdt(x, \frac{r}{w_i}) = H_i \cap \Bdt(\Par^{(\zeta_i)}(x), \frac{r}{w_i}) = B_i(\Par^{(\zeta_i)}(x), r).
		\end{align*}
		W.l.o.g., we assume $w_1\geq w_2\geq \ldots \geq w_l>0$.
		It implies that $j_1\leq j_2\leq \ldots\leq j_l$.
		Hence,
		\begin{align}
		\label{eq:number}
		\bigcup_{i\in [l]} B_i(x, r) = \bigcup_{i\in [l]} B_i(\Par^{(\zeta_i)}(x), r),
		\end{align}
		i.e., each set $\bigcup_{i\in [l]} B_i(x, r)\in \calH(r_1,r_2)$ is determined by $r$ and a path $\Par^{(j_1)}(x)\in N_{j_1}, \Par^{(j_2)}(x)\in N_{j_2}, \ldots, \Par^{(j_l)}(x)\in N_{j_l}$ in the net tree $T$.
		We then need to count the following number 
		\begin{align*}
		\label{number}
		\left| \left\{\bigcup_{i\in [l]} B_i(\Par^{(\zeta_i)}(x), r)\mid x\in X, r\in [r_1,r_2) \right\} \right|.
		\end{align*}

		Let $D_i$ denote the collection of all $u\in N_{\zeta_i}$ such that $B_i(u, r) \neq \emptyset$ for some $r\in [r_1,r_2)$. 
		Let $D:=\bigcup_{i\in [l]} D_i$.
		By the definition of $D$, we conclude that if $\bigcup_{i\in [l]} B_i(x, r)\neq \emptyset$ for some $x\in X$ and $r\in [r_1,r_2)$, then there must exist some $i\in [l]$ such that $\Par^{(\zeta_i)}(x)\in D$.
		We have the following claim.
		\begin{claim}
			\label{cl:Dproperty}
			For any fix $r\in [r_1,r_2)$, 
			\[
			\left\{\bigcup_{i\in [l]} B_i(\Par^{(\zeta_i)}(x), r)\mid x\in X \right\} = \bigcup_{i\in [l]}\left\{\bigcup_{j\leq i\leq l} B_i(\Par^{(\zeta_i)}(u), r)\mid u\in D_j \right\}.
			\]
		\end{claim}
		\begin{proof}
			For any $x\in X$, if $\bigcup_{i\in [l]} B_i(\Par^{(\zeta_i)}(x), r)\neq \emptyset$, assume $j\in [l]$ is the smallest number such that $ B_j(\Par^{(\zeta_j)}(x), r)\neq \emptyset$.
			By the definition of $D_j$, we have $u=\Par^{(\zeta_j)}(x)\in D_k$.
			Note that since for all $i>j$, $\Par^{(\zeta_i)}(x)$ is the ancestor of $u$ at height $i$ of the net tree $T$.
			Therefore, we have
			\[
			\bigcup_{i\in [l]} B_i(\Par^{(\zeta_i)}(x), r) = \bigcup_{j\leq i\leq l} B_i(\Par^{(\zeta_i)}(x), r) = \bigcup_{j\leq i\leq l} B_i(\Par^{(\zeta_i)}(u), r).
			\]
			It completes the proof.
		\end{proof}
		By Equality~\eqref{eq:number} and Claim~\ref{cl:Dproperty}, we have
		\[
		\calH(r_1,r_2) = \bigcup_{i\in [l]}\left\{\bigcup_{j\leq i\leq l} B_i(\Par^{(\zeta_i)}(u), r)\mid u\in D_j, r\in [r_1,r_2) \right\}.
		\]
		Fix $j\in [l]$ and $u\in D_j$, we first bound $\left|\left\{ \bigcup_{j\leq i\leq l} B_i(\Par^{(\zeta_i)}(u), r)  \mid r\in [r_1,r_2) \right\}\right|$.
		However, if $r'\leq r$, we have $\bigcup_{j\leq i\leq l} B_i(\Par^{(\zeta_i)}(u), r')\subseteq \bigcup_{j\leq i\leq l} B_i(\Par^{(\zeta_i)}(u), r)$.
		On the other hand, we have $\left|\left\{ \bigcup_{j\leq i\leq l} B_i(\Par^{(\zeta_i)}(u), r) \right\}\right|\leq m$ for any $r\in [r_1,r_2)$.
		Thus, $\left|\left\{ \bigcup_{j\leq i\leq l} B_i(\Par^{(\zeta_i)}(u), r)  \mid r\in [r_1,r_2) \right\}\right|\leq m+1$.
		It implies that
		\[
		\left| \bigcup_{i\in [l]}\left\{\bigcup_{j\leq i\leq l} B_i(\Par^{(\zeta_i)}(u), r)\mid u\in D_j, r\in [r_1,r_2) \right\} \right| \leq (m+1)\cdot|D|.
		\]
		So the remaining task is to bound $|D|$.

		\noindent\textbf{Bounding $|D|$.} For any $r\in [r_1,r_2)$ and $u\in H_i$, let $P_u(r) := \left\{ x \in N_{\zeta_i} \mid u \in B_i(x, r) \right\}$ be the set of centers for balls $B_i$ of radius $r$ that contain $u$.
		By the above definition, we know that $P_u(r')\subseteq P_u(r)$ if $r'\leq r$.

		Since $\forall x \in P_u(r)$, $u \in B_i(x, r)$ we have
		\begin{align*}
		\diam(P_u(r)) \leq O(1) \cdot \frac{r}{w_i} \leq O(1) \cdot \frac{2^{\zeta_i}}{\lambda}.
		\end{align*}
		Moreover, $P_u(r) \subseteq N_{\zeta_i}$ is a $2^{\zeta_i}$-packing. By the packing property (Fact~\ref{fact:packing}), $|P_u(r)| \leq O(\frac{1}{\lambda})^{\DDim(M)}$.
		Let $P_u:=\bigcup_{r\in [r_1,r_2)} P_u(r)$. 
		By the fact that $P_u(r')\subseteq P_u(r)$ if $r'\leq r$, we have $|P_u|\leq O(\frac{1}{\lambda})^{\DDim(M)}$.
		
		Therefore, for a fixed $i$,
		\begin{align*}
		\left|\bigcup_{u\in H_i}{P_u}\right| \leq |H_i| \cdot O(\frac{1}{\lambda})^{\DDim(M)},
		\end{align*}
		and this is an upper bound for $|D_i|$ by the definition of $D_i$.
		Then over all $i$ and $r$, we have $|D|\leq m \cdot O(\frac{1}{\lambda})^{\DDim(M)}$.

		Overall, for any good interval $[r_1,r_2)$, $|\calH(r_1,r_2)| \leq (m+1)\cdot |D| \leq O(\frac{1}{\lambda})^{\DDim(M)}\cdot m$. 
		Since there are at most $l\leq m$ good intervals, $|\calH(a, b)| \leq O(\frac{1}{\lambda})^{\DDim(M)} \cdot m^2$.
	\end{proof}
	
}

As we discussed before, the most difficult case is to handle very long critical intervals, where we cannot use Lemma~\ref{lemma:bounded_interval} to get a finite bound.
To resolve the issue, we observe that very long critical intervals must have nice structural properties, as described in Lemmas~\ref{lemma:long_interval}-\ref{lemma:subset}.
Lemma~\ref{lemma:long_interval} concerns the intersection between two concentric balls centered at any $p,q\in R(a)$. Roughly speaking, there exists an integer $j$ such that $\Bd_i(p,\frac{a}{w_i})$ and $\Bd_i(q,\frac{a}{w_i})$ do not intersect for any $i\leq j-1$, and are equal for any $i\geq j$.
More precisely, we have the following lemma. 

\begin{lemma}[intersection between concentric balls] 
	\label{lemma:long_interval}
	Suppose $I := [a, b)$ is a critical interval with $b>6 a$. 
	For any $p,q\in R(a)$, there exists an integer $j\in [l+1]$ such that 
	\begin{enumerate}
		\item If $j \leq l$ then $\Bd_j(p,\frac{a}{w_j})\cap \Bd_j(q,\frac{a}{w_j}) \neq \emptyset$.
		Moreover, for all integer $i$ such that $j\leq i\leq l$, $\Bd_i(p,\frac{a}{w_i})= \Bd_i(q,\frac{a}{w_i})$.
		\item For $1\leq i\leq j-1$, $\Bd_i(p,\frac{a}{w_i})\cap \Bd_i(q,\frac{a}{w_i}) = \emptyset$.
	\end{enumerate}
\end{lemma}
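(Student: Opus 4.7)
The plan is to let $j \in [l+1]$ be the smallest index for which $B^d_j(p, a/w_j) \cap B^d_j(q, a/w_j) \neq \emptyset$, with the convention $j = l+1$ if no such index exists in $[l]$. With this choice, item 2 of the lemma holds by construction, and in the case $j = l+1$ there is nothing further to prove. The remainder of the proof assumes $j \leq l$ and establishes item 1, namely that $B^d_i(p, a/w_i) = B^d_i(q, a/w_i)$ for every integer $i$ with $j \leq i \leq l$.

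The first step is to obtain a bound on $d(p,q)$. Pick any $y$ in the nonempty intersection $B^d_j(p, a/w_j) \cap B^d_j(q, a/w_j)$. Then $d(p, y) \leq a/w_j$ and $d(q, y) \leq a/w_j$, so the triangle inequality gives $d(p, q) \leq 2a/w_j$. Next, fix $i$ with $j \leq i \leq l$ and take any $x \in B^d_i(p, a/w_i)$. By the triangle inequality once more,
\begin{align*}
d(q, x) \leq d(q, p) + d(p, x) \leq \frac{2a}{w_j} + \frac{a}{w_i} = \frac{r'}{w_i},
\end{align*}
where $r' := 2a \cdot w_i / w_j + a$. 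Since $i \geq j$ and the weights are strictly decreasing, $w_i \leq w_j$, so $r' \leq 3a$. Together with the hypothesis $b > 6a$, this shows $r' \in [a, b)$.

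Now invoke the second invariance property of critical intervals (Definition~\ref{definition:critical_interval}) with center $q \in R(a)$ and radii $a, r' \in I$: the equality $B^d_i(q, r'/w_i) = B^d_i(q, a/w_i)$ holds. Since $x \in H_i$ and $d(q,x) \leq r'/w_i$, we conclude $x \in B^d_i(q, a/w_i)$. This proves $B^d_i(p, a/w_i) \subseteq B^d_i(q, a/w_i)$, and the reverse inclusion follows by symmetry (interchanging the roles of $p$ and $q$ in the argument above).

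The main obstacle, such as it is, is ensuring that the auxiliary radius $r'$ lies inside the critical interval $[a, b)$; this is precisely where the hypothesis $b > 6a$ enters (even $b > 3a$ would suffice). Beyond that, the argument is a combination of the triangle inequality with the critical-interval invariance, and notably does not rely on the smoothed distance function $\delta$, the randomized net tree, or the gap-2 structure of the weights.
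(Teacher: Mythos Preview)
Your proof is correct and follows essentially the same approach as the paper's: both define $j$ as the first index with a nonempty intersection, bound $d(p,q)\le 2a/w_j$ via a point in that intersection, and then appeal to the critical-interval invariance to force equality of the balls for $i\ge j$. The only difference is cosmetic: the paper argues by contradiction (if some $y'$ lies in $B^d_i(p,a/w_i)\setminus B^d_i(q,a/w_i)$, invariance forces $d(y',q)>b/(2w_i)$, contradicting $d(p,q)\le 2a/w_j$), whereas you give the direct inclusion using an auxiliary radius $r'\in I$; both are the same use of invariance and you are right that even $b>3a$ would suffice for your variant.
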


\begin{proof}
	Let $j' \in [l]$ be the smallest integer such that $\Bd_{j'}(p,\frac{a}{w_{j'}})\cap \Bd_{j'}(q,\frac{a}{w_{j'}}) \neq \emptyset$.
	If there is no such $j'$, then let $j:= l+1$ and let $j := j'$ otherwise.
	The Lemma holds trivially if $j=l+1$, and we consider the case that $j \leq l$.
	By the definition of $j$, we have $\Bd_j(p,\frac{a}{w_j})\cap \Bd_j(q,\frac{a}{w_j}) \neq \emptyset$. 
	This implies that there exists some $y\in H_{j}$ such that $y\in \Bd_j(p,\frac{a}{w_j})\cap \Bd_j(q,\frac{a}{w_j})$.
	Then we have $d(y, p) \leq \frac{a}{w_j} $ and $d(y, q) \leq  \frac{a}{w_j}$.
	Hence, by the triangle inequality,
	\begin{equation}
	\label{eqn:d_p_q}
		d(p, q)\leq  d(y,p)+d(y,q)\leq \frac{2a}{w_j}.
	\end{equation}
	By contradiction, assume that there exists an integer $j\leq i\leq l$ such that 
	$\Bd_i(p,\frac{a}{w_i})\neq \Bd_i(q,\frac{a}{w_i})$.
	W.l.o.g., assume $\Bd_i(p,\frac{a}{w_i})\setminus \Bd_i(q,\frac{a}{w_i})\neq \emptyset$ and $y'\in \Bd_i(p,\frac{a}{w_i})\setminus \Bd_i(q,\frac{a}{w_i})$.
	Since $I$ is a critical interval and $\frac{b}{2} \in I$, we have $\Bd_i(q,\frac{a}{w_i})
	=
	\Bd_i(q, \frac{b}{2w_i})
	$. Hence $y'\notin \Bd_i(q, \frac{b}{2w_i})$ which implies $d(y',q)>\frac{b}{2w_i}$.
	By the definition of $y'$, we have $d(y',p)\leq \frac{a}{w_i}$.
	Then by the triangle inequality,
	\begin{align*}
	d(p, q)\geq d(y',p)-d(y',q)> \frac{b}{2 w_i}-\frac{a}{w_i} 
	\stackrel{b>6 a}{>}\frac{2a}{w_i} \stackrel{j\leq i}{\geq} \frac{2a}{w_j},
	\end{align*}
	which contradicts with~(\ref{eqn:d_p_q}). 
	It completes the proof.
\end{proof}

Next, we make use of Lemma~\ref{lemma:long_interval} to investigate the intersection between a range $\bigcup_{i\in [l]} \Bdt_i(x,\frac{r}{w_i})$ $(x\in X, r\in I)$ and concentric balls with centers in $R(a)$, inside a long critical interval $I$.
More specifically, for any $p,q\in R(a)$, imagine that we increase $j$ from $1$ to $l$.
Initially, 
$
\Bdt_i(x, \frac{r}{w_i})
$
can only intersect with one of 
$
\Bd_i(p, \frac{r}{w_i})
$ or
$
\Bd_i(q, \frac{r}{w_i})
$;
once
$
\Bdt_i(x, \frac{r}{w_i})
$
intersects both of them,
the two intersections must be the same.
More precisely, we have the following lemma. 

\begin{lemma}[intersection between ranges and concentric balls]
	\label{lemma:large_r}
	Suppose $I := [a, b)$ is a critical interval with $b>2^9 a$.
	Then for any $x \in X$, $r \in [a,\frac{b}{2^9}]$ and $p, q \in R(a)$, there exists an integer $j\in [l+1]$ such that: 
	\begin{enumerate}
		\item Either $\bigcup_{1\leq i\leq j-1}
		\left(
		\Bdt_i(x, \frac{r}{w_i})
		\cap
		\Bd_i(p, \frac{r}{w_i})
		\right) = \emptyset$
		or 
		$\bigcup_{1\leq i\leq j-1} \left(
		\Bdt_i(x, \frac{r}{w_i})
		\cap 
		\Bd_i(q, \frac{r}{w_i})
		\right) = \emptyset$.
		\item For $j\leq i\leq l$,
		$
		\Bdt_i(x, \frac{r}{w_i})
		\cap 
		\Bd_i(p, \frac{r}{w_i})
		= 
		\Bdt_i(x, \frac{r}{w_i})
		\cap
		\Bd_i(q, \frac{r}{w_i})
		$.
	\end{enumerate}
\end{lemma}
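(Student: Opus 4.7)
The plan is to take $j = j^*$, where $j^* \in [l+1]$ is the integer given by applying Lemma~\ref{lemma:long_interval} to $p$ and $q$ (since $b > 2^9 a > 6a$, the hypothesis of that lemma is satisfied). Item (2) will follow easily by combining Lemma~\ref{lemma:long_interval} with the critical-interval property, while item (1) will require a more delicate contradiction argument that crucially exploits the slack in $b > 2^9 a$.

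For item (2), I fix any $i$ with $j^* \leq i \leq l$. Lemma~\ref{lemma:long_interval} gives $\Bd_i(p, a/w_i) = \Bd_i(q, a/w_i)$. Since $p, q \in R(a)$ and $r \in [a, b/2^9] \subseteq I$, the critical-interval property (item 2 of Definition~\ref{definition:critical_interval}) gives $\Bd_i(p, r/w_i) = \Bd_i(p, a/w_i)$ and $\Bd_i(q, r/w_i) = \Bd_i(q, a/w_i)$, so the two $r/w_i$-balls coincide. Intersecting both with $\Bdt_i(x, r/w_i)$ yields item (2).

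For item (1), I argue by contradiction: assume there exist $i_p, i_q < j^*$ together with points $y_p \in \Bdt_{i_p}(x, r/w_{i_p}) \cap \Bd_{i_p}(p, r/w_{i_p})$ and $y_q \in \Bdt_{i_q}(x, r/w_{i_q}) \cap \Bd_{i_q}(q, r/w_{i_q})$. The key observation is a ``gap'' in distances created by the critical interval: for any $u \in R(a)$ and any $i$, no $z \in H_i$ has $d(z,u) \in (a/w_i, b/w_i)$. Since $y_p \in H_{i_p} \cap \Bd_{i_p}(p, a/w_{i_p})$ (using critical interval to replace $r$ by $a$) and $\Bd_{i_p}(p, a/w_{i_p}) \cap \Bd_{i_p}(q, a/w_{i_p}) = \emptyset$ (from $i_p < j^*$ in Lemma~\ref{lemma:long_interval}), we get $d(y_p, q) > a/w_{i_p}$; the gap then forces $d(y_p, q) \geq b/w_{i_p}$. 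Symmetrically, $d(y_q, p) \geq b/w_{i_q}$.

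I would then convert $\delta$-distances to $d$-distances using Lemma~\ref{lemma:distortion} (with the random net tree being $2$-covering and $\eps \leq 1/100$, so $d(x, y_p) \leq 1.1\, r/w_{i_p}$ and similarly for $y_q$). Using $r \leq b/2^9$ and the triangle inequality, I would derive the sandwich
\begin{align*}
\tfrac{b}{2 w_{i_p}} \,\leq\, d(y_p, q) - d(x, y_p) \,\leq\, d(x, q) \,\leq\, d(x, y_q) + d(y_q, q) \,\leq\, \tfrac{b}{200\, w_{i_q}},
\end{align*}
yielding $w_{i_p} \geq 100\, w_{i_q}$. The symmetric chain with $p$ and $q$ swapped gives $w_{i_q} \geq 100\, w_{i_p}$; multiplying gives $w_{i_p} \geq 10^4 \cdot w_{i_p}$, contradicting $w_{i_p} > 0$. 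Hence for $i < j^*$, at most one of the $p$-type or $q$-type intersections can be nonempty, which is item (1) with $j = j^*$. The hardest part of the plan is guessing the correct $j$ and recognizing that the $2^9$ factor in $b \geq 2^9 a$ is precisely what is needed to beat both the $(1+O(\eps))$ distortion of $\delta$ and the factor-$2$ losses from the triangle inequality, producing a quantitatively contradictory ratio of weights.
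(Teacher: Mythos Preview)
Your proof is correct, but it takes a genuinely different route from the paper's.

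\textbf{Choice of $j$.} The paper defines $j$ as the smallest index for which the \emph{triple} intersection $\Bdt_j(x,\tfrac{r}{w_j})\cap\Bd_j(p,\tfrac{r}{w_j})\cap\Bd_j(q,\tfrac{r}{w_j})$ is nonempty (or $l+1$ if none). This $j$ depends on $x$ and $r$. You instead take $j=j^\ast$, the threshold from Lemma~\ref{lemma:long_interval}, which depends only on $p,q$ (and $a$). Since the paper's $j$ always satisfies $j\ge j^\ast$, your choice is the smaller one, which makes item~(2) cover more indices and item~(1) fewer; the lemma only asks for \emph{some} $j$, so both are fine.

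\textbf{Item (2).} With your $j^\ast$, item~(2) is immediate from Lemma~\ref{lemma:long_interval} plus the critical-interval identity $\Bd_i(\cdot,\tfrac{r}{w_i})=\Bd_i(\cdot,\tfrac{a}{w_i})$. The paper needs one extra step: from nonemptiness of the triple intersection at its $j$ it deduces $\Bd_j(p,\cdot)\cap\Bd_j(q,\cdot)\neq\emptyset$, and only then invokes Lemma~\ref{lemma:long_interval} to get equality for $i\ge j$.

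\textbf{Item (1).} Here the arguments differ more substantially. The paper derives a contradiction on $d(p,q)$: an upper bound $d(p,q)\le \tfrac{6r}{w_{i_2}}<\tfrac{b}{64\,w_{i_2}}$ via triangle inequality through $x$, and a lower bound $d(p,q)>\tfrac{b}{4\,w_{i_2}}$ using that $y\notin \Bd_{i_2}(p,\tfrac{b}{2w_{i_2}})$ (critical interval at $b/2$). You instead exploit the full ``forbidden annulus'' $(a/w_i,\,b/w_i)$ around every $u\in R(a)$ to push $d(y_p,q)$ all the way up to $b/w_{i_p}$, sandwich $d(x,q)$ between $\tfrac{b}{2w_{i_p}}$ and $\tfrac{b}{200\,w_{i_q}}$, and then run the symmetric chain through $d(x,p)$ to obtain the incompatible pair $w_{i_p}\ge 100\,w_{i_q}$ and $w_{i_q}\ge 100\,w_{i_p}$. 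Your gap observation is a nice structural reading of the critical-interval condition; the paper's single $d(p,q)$ contradiction is a bit shorter (no symmetry pass needed). Both arguments use the slack $r\le b/2^9$ in the same spirit, to dominate the $(1+O(\eps))$-distortion and triangle-inequality losses.
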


\begin{proof}
	Let $j$ be the smallest integer such that
	$
	\Bdt_j(x, \frac{r}{w_j})
	\cap 
	\Bd_j(p, \frac{r}{w_j})
	\cap 
	\Bd_j(q, \frac{r}{w_j})
	\neq \emptyset$.
	If such $j$ does not exist, we let $j=l+1$.
	We show that items 1 and 2 hold for this $j$.

	\noindent \textbf{Item 1.} We first prove item 1.
	W.l.o.g., suppose for the contrary that there exists $1\leq i_1 \leq  i_2\leq j-1$ such that
	$
	\Bdt_{i_1}(x, \frac{r}{w_{i_i}})
	\cap 
	\Bd_{i_1}(p, \frac{r}{w_{i_1}})
	\neq \emptyset$ and
	$
	\Bdt_{i_2}(x, \frac{r}{w_{i_2}})
	\cap 
	\Bd_{i_2}(q, \frac{r}{w_{i_2}})
	\neq \emptyset$.
	Assume
	$y\in 
	\Bdt_{i_2}(x, \frac{r}{w_{i_2}})
	\cap 
	\Bd_{i_2}(q, \frac{r}{w_{i_2}})
	$. By the triangle inequality, 
	\[
	d(x, q) \leq d(x,y)+d(y,q)	\stackrel{\text{Lemma \ref{lemma:distortion}}}{\leq} (1+8\eps)\cdot \delta(x,y) + d(y,q)\leq \frac{3r}{w_{i_2}}.
	\]
	Similarly, we can prove $d(x, p) \leq \frac{3r}{w_{i_1}}$.
	Hence by the fact that $w_{i_1}\geq w_{i_2}$ and $r\leq \frac{b}{2^9}$,
	\begin{align}
	\label{ineq:distance_upper}
	 d(p, q)\leq d(x,p)+d(x,q)\leq \frac{3r}{w_{i_1}}+ \frac{3r}{w_{i_2}} \leq \frac{6r}{w_{i_2}} < \frac{b}{64 w_{i_2}}.
	\end{align}
	However, by the definition of $j$, we know that
	$
	\Bdt_{i_2}(x, \frac{r}{w_{i_2}})
	\cap 
	\Bd_{i_2}(p, \frac{r}{w_{i_2}})
	= \emptyset$.
	Since $I$ is a critical interval, we have
	$
	\Bd_{i_2}(p, \frac{r}{w_{i_2}})
	=
	\Bd_{i_2}(p, \frac{b}{2w_{i_2}})
	$.
	It implies that
	$
	\Bdt_{i_2}(x, \frac{r}{w_{i_2}})
	\cap
	\Bd_{i_2}(p, \frac{b}{2w_{i_2}})
	= \emptyset$.
	Since
	$y \in
	\Bdt_{i_2}(x, \frac{r}{w_{i_2}})
	$, we know
	that $y\notin
	\Bd_{i_2}(p, \frac{b}{2w_{i_2}})
	$.
	Then by the triangle inequality, 
	\[
	d(p,q)\geq d(y,p)-d(y,q)> \frac{b}{2w_{i_2}} - \frac{r}{w_{i_2}} \stackrel{r\leq b/2^9}{>} \frac{b}{4 w_{i_2}},
	\]
	which contradicts Inequality~\eqref{ineq:distance_upper}.
	
	\noindent \textbf{Item 2.} Next, we prove item 2.
	Since $I$ is a critical interval, we have
	$
	\Bd_j(p, \frac{r}{w_j})
	=
	\Bd_j(p, \frac{a}{w_j})
	$
	by item 2 of Definition~\ref{definition:critical_interval}.
	Then by Lemma~\ref{lemma:long_interval}, either
	$
	\Bd_j(p, \frac{r}{w_j})
	\cap
	\Bd_j(q, \frac{r}{w_j})
	= \emptyset$
	or $
	\Bd_j(p, \frac{r}{w_j})
	= 
	\Bd_j(q, \frac{r}{w_j})
	$.
	Since $
	\Bdt_j(x, \frac{r}{w_j})
	\cap 
	\Bd_j(p, \frac{r}{w_j})
	\cap 
	\Bd_j(q, \frac{r}{w_j})
	\neq \emptyset$, it must be the case that
	$
	\Bd_j(p, \frac{r}{w_j})
	= 
	\Bd_j(q, \frac{r}{w_j})
	\neq \emptyset$.
	Moreover, Lemma~\ref{lemma:long_interval} further implies that
	\[
	\Bd_i(p, \frac{r}{w_i})
	= 
	\Bd_i(p, \frac{a}{w_i})
	= 
	\Bd_i(q, \frac{a}{w_i})
	= 
	\Bd_i(q, \frac{r}{w_i})
	\]
	for any $j\leq i\leq l$.
	This implies item 2.
\end{proof}

The following lemma follows easily from  Lemma~\ref{lemma:large_r}.
Roughly speaking, it says that any range
$\bigcup_{i\in [l]}
\Bdt_i(x, \frac{r}{w_i})
$
($x\in X$) is contained in some concentric ball
$\bigcup_{i\in [l]}
\Bd_i(p, \frac{a}{w_i})
$ centered at $p\in R(a)$.
This will be useful shortly in Lemma~\ref{lemma:upper_bound_critical_interval}, 
in which we show 
$\Bdt_i(x, \frac{r}{w_i})= \bigcup_{j\leq i\leq l} \Bd_i(p,\frac{a}{w_i})$ for some $j$ and $p\in R(a)$
(hence, in a sense, we can ``hang" $x$ on an $a$-representative).

\begin{lemma}[any range is a subset of some concentric ball]
	\label{lemma:subset}
	Suppose $I := [a, b)$ is a critical interval with $b>2^9 a$.
	Then for any $x \in X$ and $r \in [a,\frac{b}{2^9}]$, there exists $p\in R(a)$ such that 
	$\bigcup_{i\in [l]} 
	\Bdt_i(x, \frac{r}{w_i})
	\subseteq \bigcup_{i\in [l]} 
	\Bd_i(p, \frac{a}{w_i})
	$.
\end{lemma}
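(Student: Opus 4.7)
The plan is to produce a single ``dominant'' representative $p \in R(a)$ by extracting a chain of containments from Lemma~\ref{lemma:large_r}. Define the set of relevant representatives
\[
T := \Bigl\{ q \in R(a) \;\Big|\; \exists i \in [l],\ \Bdt_i\bigl(x, \tfrac{r}{w_i}\bigr) \cap \Bd_i\bigl(q, \tfrac{a}{w_i}\bigr) \neq \emptyset \Bigr\}.
\]
Applying item 3 of Lemma~\ref{lemma:basic_center_prop} with radius $a$ to each point $y \in \bigcup_{i} \Bdt_i(x, \tfrac{r}{w_i})$ (using the weight class $H_i$ of $y$) produces some $q \in R(a)$ with $y \in \Bd_i(q, \tfrac{a}{w_i})$; such a $q$ lies in $T$. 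Consequently
\[
\bigcup_{i \in [l]} \Bdt_i\bigl(x, \tfrac{r}{w_i}\bigr) \;=\; \bigcup_{q \in T} \bigcup_{i \in [l]} \Bigl( \Bdt_i\bigl(x, \tfrac{r}{w_i}\bigr) \cap \Bd_i\bigl(q, \tfrac{a}{w_i}\bigr) \Bigr),
\]
so it suffices to find one $p \in T$ such that, for every $q \in T$, the corresponding $q$-contribution is contained in $\bigcup_{i} \Bd_i(p, \tfrac{a}{w_i})$.

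For each $q \in T$, let $\sigma(q)$ denote the smallest index $i \in [l]$ for which $\Bdt_i(x, \tfrac{r}{w_i}) \cap \Bd_i(q, \tfrac{a}{w_i}) \neq \emptyset$, and choose $p \in T$ that minimizes $\sigma$. Before applying Lemma~\ref{lemma:large_r}, observe that because $p, q \in R(a)$ and $r \in I$, item 2 of Definition~\ref{definition:critical_interval} gives $\Bd_i(p, \tfrac{r}{w_i}) = \Bd_i(p, \tfrac{a}{w_i})$ and similarly for $q$, so the two versions of the concentric balls are interchangeable throughout. Now apply Lemma~\ref{lemma:large_r} to the pair $(p, q)$ at scale $r$, obtaining some $j \in [l+1]$.

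The argument then splits into two cases from item 1 of Lemma~\ref{lemma:large_r}. If the $p$-contribution is empty at all levels $i < j$, then $\sigma(p) \geq j$, and the minimality of $\sigma(p)$ forces $\sigma(q) \geq j$ as well, so the $q$-contribution is also empty below $j$; item 2 of Lemma~\ref{lemma:large_r} then equates the $q$- and $p$-contributions at every level $i \geq j$. If instead the $q$-contribution is empty at all levels $i < j$, then again item 2 identifies the $q$-contribution with the $p$-contribution on levels $i \geq j$. Either way,
\[
\bigcup_{i \in [l]} \Bigl( \Bdt_i\bigl(x, \tfrac{r}{w_i}\bigr) \cap \Bd_i\bigl(q, \tfrac{a}{w_i}\bigr) \Bigr) \;\subseteq\; \bigcup_{i \in [l]} \Bd_i\bigl(p, \tfrac{a}{w_i}\bigr).
\]
Taking the union over $q \in T$ and combining with the decomposition from the first paragraph yields the desired containment.

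The main obstacle is setting up the correct extremal choice of $p$: Lemma~\ref{lemma:large_r} only compares two representatives at a time and leaves which side is ``dominated'' ambiguous, so one must identify the right scalar invariant (here $\sigma$) whose minimizer is guaranteed to absorb every other representative. The bookkeeping about critical intervals converting $r$-balls into $a$-balls is a routine but necessary sanity check; after that, the two-case analysis closes the proof cleanly.
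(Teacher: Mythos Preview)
Your proof is correct. Both you and the paper argue by choosing an extremal representative $p$ and then invoking Lemma~\ref{lemma:large_r} for every competing $q$, but the extremal invariant differs: the paper picks $p\in R(a)$ maximizing the \emph{size} of the intersection $\bigl|\bigcup_i(\Bdt_i(x,\tfrac{r}{w_i})\cap\Bd_i(p,\tfrac{a}{w_i}))\bigr|$ and then derives a contradiction (a stray $y$ would produce a $q$ with strictly larger intersection), whereas you pick $p$ minimizing the first active level $\sigma(p)$ and argue directly. Your choice makes the two-case split from item~1 of Lemma~\ref{lemma:large_r} immediately asymmetric in $p$'s favor, so no contradiction step is needed; the paper's choice is perhaps more natural to guess but requires one extra turn of the screw. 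Either invariant works because Lemma~\ref{lemma:large_r} forces the contributions of any two representatives to agree from some level on, with at most one of them nontrivial below that level. One minor omission: you implicitly assume $T\neq\emptyset$; if $T=\emptyset$ the range is empty and any $p\in R(a)$ suffices.
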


\begin{proof}
	Let $p\in R(a)$ be the point such that the intersection number
	$$\left|\left(\bigcup_{i\in [l]} 
	\Bdt_i(x, \frac{r}{w_i})
	\right)\cap \left(\bigcup_{i\in [l]} 
	\Bd_i(p, \frac{a}{w_i})
	\right) \right|=\left|\bigcup_{i\in [l]} \left(
	\Bdt_i(x, \frac{r}{w_i})
	\cap 
	\Bd_i(p, \frac{r}{w_i})
	\right) \right|$$ 
	is maximized, where the equality is because $I$ is a critical interval.
	By contradiction, assume there exists $j\in [l]$ and $y\in H_j$ such that $y\in
	\Bdt_j(x, \frac{r}{w_j})
	$ but $y\notin 
	\Bd_j(p, \frac{r}{w_j})
	$. 
	By item 3 of Lemma~\ref{lemma:basic_center_prop}, there must exist $q\in R(r)=R(a)$ such that $y\in 
	\Bd_j(q, \frac{r}{w_j})
	$.
	It implies that $
	\Bdt_j(x, \frac{r}{w_j})
	\cap 
	\Bd_j(p, \frac{r}{w_j})
	\neq 
	\Bdt_j(x, \frac{r}{w_j})
	\cap 
	\Bd_j(q, \frac{r}{w_j})
	$ and $
	\Bdt_j(x, \frac{r}{w_j})
	\cap 
	\Bd_j(q, \frac{r}{w_j})
	\neq \emptyset$.
	By Lemma~\ref{lemma:large_r}, there must exist $j'\in [l+1]$ such that
	\begin{enumerate}
		\item Either $\bigcup_{1\leq i\leq j'-1} \left(
		\Bdt_i(x, \frac{r}{w_i})
		\cap 
		\Bd_i(p, \frac{r}{w_i})
		\right) = \emptyset$
		or
		$\bigcup_{1\leq i\leq j'-1} \left(
		\Bdt_i(x, \frac{r}{w_i})
		\cap 
		\Bd_i(q, \frac{r}{w_i})
		\right) = \emptyset$.
		\item For $j'\leq i\leq l$, $
		\Bdt_i(x, \frac{r}{w_i})
		\cap 
		\Bd_i(p, \frac{r}{w_i})
		= 
		\Bdt_i(x, \frac{r}{w_i})
		\cap 
		\Bd_i(q, \frac{r}{w_i})
		$.
	\end{enumerate}
	Since $
	\Bdt_j(x, \frac{r}{w_j})
	\cap 
	\Bd_j(p, \frac{p}{w_j})
	\neq 
	\Bdt_j(x, \frac{r}{w_j})
	\cap 
	\Bd_j(q, \frac{r}{w_j})
	$, we have $j'\geq j+1$. 
	On the other hand, since 
	$
	\Bdt_j(x, \frac{r}{w_j})
	\cap 
	\Bd_j(q, \frac{r}{w_j})
	\neq \emptyset$, 
	we have $\bigcup_{1\leq i\leq j'-1} \left(
	\Bdt_i(x, \frac{r}{w_i})
	\cap 
	\Bd_i(q, \frac{r}{w_i})
	\right) \neq \emptyset$.
	Hence,
	\begin{align*}
	\bigcup_{1\leq i\leq j'-1} \left(
	\Bdt_i(x, \frac{r}{w_i})
	\cap 
	\Bd_i(p, \frac{r}{w_i})
	\right) = \emptyset.
	\end{align*}
	Overall, we conclude that
	\begin{align*}
	\left|\bigcup_{i\in [l]} \left(
	\Bdt_i(x, \frac{r}{w_i})
	\cap 
	\Bd_i(q, \frac{r}{w_i})
	\right) \right|
	\geq \left|\bigcup_{i\in [l]} \left(
	\Bdt_i(x, \frac{r}{w_i})
	\cap 
	\Bd_i(p, \frac{r}{w_i})
	\right) \right|+1,
	\end{align*}
	which is a contradiction with the choice of $p$.
	It completes the proof.
\end{proof}

Now, we are ready to prove the main lemma which bounds $|\calH(I)|$ for any critical interval $I$.

\begin{lemma}[any critical interval]
	\label{lemma:upper_bound_critical_interval}
	Suppose $I=[a, b)$ is a critical interval. Then with probability at least $1 - \frac{\tau}{2 m^4}$,
	\begin{align*}
	|\calH(a, b)| \leq O\left(\frac{1}{\lambda}\right)^{\DDim(M)} \cdot \log{\frac{m}{\tau}} \cdot m^2.
	\end{align*}
\end{lemma}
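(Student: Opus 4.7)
My plan is to analyze $I=[a,b)$ by splitting it into regimes determined by scale. If $b \leq 2^{10} a$, I subdivide $I$ into $O(1)$ doubling subintervals $[\alpha, 2\alpha)$ and apply Lemma~\ref{lemma:bounded_interval} to each, which directly yields $|\calH(a,b)| \leq O(1/\lambda)^{\DDim(M)} \cdot m^2$ deterministically. Otherwise, with $b > 2^{10} a$, I set $r_2 := b/2^9$ (the threshold needed to invoke Lemma~\ref{lemma:subset}) and choose $r_1 := C \cdot a \cdot \DDim(M) \cdot m^6 /(\lambda\tau)$ for a sufficiently large absolute constant $C$, and split $I = [a,r_1) \cup [r_1,r_2) \cup [r_2,b)$.

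The outer subintervals $[a,r_1)$ and $[r_2,b)$ are further broken into doubling subintervals and each handled by Lemma~\ref{lemma:bounded_interval}. The number of doubling subintervals needed is $O(\log(r_1/a)+\log(b/r_2)) = O(\log(m/\tau) + \log(1/\eps) + \log\DDim(M))$, and the extra factors beyond $\log(m/\tau)$ are absorbed into $O(1/\lambda)^{\DDim(M)}$. Thus the outer parts together contribute at most $O(1/\lambda)^{\DDim(M)}\cdot m^2\cdot \log(m/\tau)$ to $|\calH(a,b)|$.

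For the middle subinterval $[r_1,r_2)$ I introduce a good event $\mathcal{E}$: for every $u\in R(a)$ and every $i\in[l]$, the concentric ball $\Bd_i(u,a/w_i)$ (of diameter at most $2a/w_i$) lies inside a single cluster $\Des(v^{(j_\ast(i))})$ at height $j_\ast(i) := \lfloor \log(\lambda r_1/w_i)\rfloor$ of the random hierarchical decomposition. By item~2 of Lemma~\ref{lemma:decomp_net_tree}, the probability that a fixed such ball is cut at height $j_\ast(i)$ is $O(2^{-j_\ast(i)}\cdot\DDim(M)\cdot a/w_i) = O(\DDim(M)\cdot a/(\lambda r_1))$; union-bounding over the $\leq |R(a)|\cdot l\leq m^2$ choices of $(u,i)$ and plugging in our choice of $r_1$ gives $\Pr[\mathcal{E}] \geq 1-\tau/(2m^4)$.

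Conditional on $\mathcal{E}$, I claim that every range in $\calH(r_1,r_2)$ is either empty or equals $\bigcup_{j\leq i\leq l}\Bd_i(p,a/w_i)$ for some $p\in R(a)$ and some $j\in[l]$, which gives $|\calH(r_1,r_2)|\leq |R(a)|\cdot l+1=O(m^2)$ and completes the proof. To establish the claim, fix $x\in X$ and $r\in[r_1,r_2)$: Lemma~\ref{lemma:subset} supplies $p\in R(a)$ with $\Bdt_i(x,r/w_i)\subseteq\Bd_i(p,a/w_i)$ for every $i$, and since $2^{j_\ast(i)-1}\leq\lambda r/w_i$ the cross-free property (Lemma~\ref{lemma:ball_laminar}) says $\Bdt(x,r/w_i)$ respects subtrees at height $j_\ast(i)$; combined with $\mathcal{E}$, which places $\Bd_i(p,a/w_i)$ entirely inside a single such subtree, this forces $\Bdt_i(x,r/w_i)\in\{\emptyset,\Bd_i(p,a/w_i)\}$ for each $i$. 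To argue that the non-empty indices form a suffix, I use the gap-$2$ assumption together with $r_1\gg a$: if index $i$ is non-empty then $d(x,p)\leq (1+O(\eps))r/w_i + a/w_i$, and for any $i'>i$ and $y'\in\Bd_{i'}(p,a/w_{i'})\cap H_{i'}$ the triangle inequality gives $d(x,y')\leq (1+O(\eps))r/w_i + a/w_i + a/w_{i'}$, which is strictly less than $r/w_{i'}$ once $w_{i'}\leq w_i/2$ and $a\ll r_1$; Lemma~\ref{lemma:distortion} then yields $\delta(x,y')\leq r/w_{i'}$, so $\Bdt_{i'}(x,r/w_{i'})\neq\emptyset$. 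The main obstacle will be calibrating $r_1$ to simultaneously satisfy the cut-probability budget and the slack needed by the suffix argument, while keeping the outer $\log(r_1/a)$ within $O(\log(m/\tau))$ after absorbing the $\DDim(M)$- and $\eps$-dependent logarithms into the $O(1/\lambda)^{\DDim(M)}$ factor.
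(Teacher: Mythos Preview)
Your proposal is correct and follows essentially the same approach as the paper: the same three-way split $[a,r_1)\cup[r_1,r_2)\cup[r_2,b)$ with $r_2=b/2^9$ and $r_1\approx a\cdot\DDim(M)\cdot m^6/(\lambda\tau)$, Lemma~\ref{lemma:bounded_interval} on doubling subintervals for the outer pieces, and the same good-event/cross-free/gap-$2$ argument (the paper's Claim~\ref{claim:large_ball_equal}) for the middle piece. The only cosmetic difference is that you invoke the cross-free dichotomy for every index $i$ and then use gap-$2$ to force a suffix, whereas the paper applies cross-free only at the smallest nonempty index $j$ and uses gap-$2$ directly for $i>j$; both organizations yield the same conclusion.
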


\begin{proof}
	If $b\leq 2^9 a$, then using Lemma~\ref{lemma:bounded_interval}, we can see that
	\begin{align*}
	|\calH(a, b)| \leq O\left(\frac{1}{\lambda}\right)^{\DDim(M)} \cdot m^2.
	\end{align*}
	In the following, we consider the case that $b>2^9 a$ (so Lemma~\ref{lemma:subset} can be applied).
	Let
	\begin{align*}
	s := \lceil \log{a} + \log{\Theta(\DDim(M))} + 6\log{m} + \log{\frac{1}{\tau}}  \rceil,
	\end{align*}
	and define
	\begin{align*}
	r_1 := \frac{2^s}{\lambda}, \quad r_2 := \frac{b}{2^9} .
	\end{align*}
	If $r_1>r_2$, we define $\calH(r_1,r_2)=\emptyset$.
	Then we always have
	\begin{align*}
	|\calH(a, b)| \leq |\calH(a, r_1)| + |\calH(r_1, r_2)| + |\calH(r_2, b)|.
	\end{align*}
	By applying Lemma~\ref{lemma:bounded_interval}, we have
	\begin{align*}
	|\calH(a, r_1)| \leq O\left(\frac{1}{\lambda}\right)^{\DDim(M)} \cdot \log{\frac{m}{\tau}} \cdot m^2,
	\end{align*}
	and
	\begin{align*}
	|\calH(r_2, b)| \leq O\left(\frac{1}{\lambda}\right)^{\DDim(M)} \cdot m^2.
	\end{align*}
	
	\noindent\textbf{Upper Bound $|\calH(r_1, r_2)|$.}
	It remains to bound $|\calH(r_1, r_2)|$.
	Let $s_i$ be the integer such that $2^{s_i-1} \leq \frac{\lambda r_1}{w_i} < 2^{s_i }$.
	For any $i\in [l]$ and $u\in R(a)$, let event $\mathcal{E}^{(u)}_i$ be
	\begin{align*}
	\forall v \in N_{s_i} : \Bd_i(u, \frac{a}{w_i}) \not\subseteq \Des(v^{(s_i)}),
	\end{align*}
	Observe that for any $u \in R_i(a)$,
	\begin{align*}
	\diam(\Bd_i(u,\frac{a}{w_i}))
	\leq 2 \cdot \frac{a}{w_i}.
	\end{align*}
	By Definition~\ref{definition:basic_center} and~\ref{definition:critical_interval}, all critical intervals $I$ and all representatives in $R(a)$ are independent of the choice of $\delta$.
	Hence, $\Bd_i(u,\frac{a}{w_i})$ is also independent of the choice of $\delta$.
	Therefore, by Lemma~\ref{lemma:decomp_net_tree},
	\begin{align*}
	\Pr[\mathcal{E}^{(u)}_i]
	\leq O(\DDim(M)) \cdot \frac{2a}{w_i \cdot 2^{s_i}}
	< O(\DDim(M)) \cdot \frac{2a}{\lambda r_1} 
	\leq \frac{\tau}{2m^6}.
	\end{align*}
	
	\noindent\textbf{Conditioning on $\bigcap_{i \in [l], u \in R(a)}{\overline{ \mathcal{E}^{(u)}_i}}$.}
	We condition on the event that none of $\mathcal{E}^{(u)}_i$'s happens, and this event helps us to obtain an even stronger smooth property, which is stated in Claim~\ref{claim:large_ball_equal}.
	\begin{claim}
		\label{claim:large_ball_equal}
		For any $r \in [r_1, r_2)$ and any $x\in X$, at least one of the following holds:
		\begin{enumerate}
			\item $\bigcup_{i\in [l]}
			\Bdt_i(x, \frac{r}{w_i})
			= \emptyset$.
			\item There exists an integer $j\in [l] $ and $u\in R(a)$, such that 
			$\bigcup_{i\in [l]}
			\Bdt_i(x, \frac{r}{w_i})
			= \bigcup_{j\leq i\leq l} \Bd_i(u,\frac{a}{w_i})$.
		\end{enumerate}
	\end{claim}
	\begin{proof}
		We only need to consider the case that $\bigcup_{i\in [l]} B_i(x,\frac{r}{w_i}) \neq \emptyset$.
		Since $b>2^9 a$ and $r\leq r_2< \frac{b}{2^9}$, there must exist $u\in R(a)$ such that 
		$\bigcup_{i\in [l]} 
		\Bdt_i(x, \frac{r}{w_i})
		\subseteq 
		\bigcup_{i\in [l]} \Bd_i(u,\frac{a}{w_i})$, by Lemma~\ref{lemma:subset}.
		Suppose $j \in [l]$ is the smallest integer such that 
		$
		\Bdt_j(x, \frac{r}{w_j})
		\cap \Bd_j(u, \frac{a}{w_j})\neq \emptyset$ (observing that such $j$ must exist).
		We then show that $\bigcup_{i\in [l]} 
		\Bdt_i(x, \frac{r}{w_i})
		= \bigcup_{j\leq i\leq l} \Bd_i(u,\frac{a}{w_i})$, and so
		it suffices to show $\bigcup_{j\leq i\leq l} \Bd_i(u,\frac{a}{w_i})\subseteq \bigcup_{i\in [l]} 
		\Bdt_i(x, \frac{r}{w_i})
		$.
		We separate the argument into two cases: $i \geq j + 1$ and $i = j$.
		The two cases are handled differently, where we make use of the the gap of the weight function in the $i \geq j + 1$ case, and
		in the $i = j$ case we make use of the cross-free property of $\delta$.
		
		\vspace{0.1cm}
		\noindent\textbf{Showing $\bigcup_{j+1\leq i\leq l} \Bd_i(u,\frac{a}{w_i}) \subseteq \bigcup_{i\in [l]} 
			\Bdt_i(x, \frac{r}{w_i})
		$.} 	
		The key is to use the fact that $w$ is a gap-2 weight function.
		It suffices to prove for any $i$ such that $j+1\leq i\leq l$ and $x'\in \Bd_i(u,\frac{a}{w_i})$, we have 
		$x'\in 
		\Bdt_i(x, \frac{r}{w_i})
		$.
		We first bound the distance $d(x,u)$.
		Recall that 
		$
		\Bdt_j(x, \frac{r}{w_j})
		\cap \Bd_j(u,\frac{a}{w_j})\neq \emptyset$. 
		Pick $y$ such that $y\in 
		\Bdt_j(x, \frac{r}{w_j})
		\cap \Bd_j(u,\frac{a}{w_j})$. 
		Hence $\delta(x,y)\leq \frac{r}{w_j}$ and $d(u,y)\leq \frac{a}{w_j}$.
		Since $r\geq r_1\geq 100 a$ and $\eps\leq \frac{1}{100}$, we have
		\[
		d(x,u) \leq d(x,y)+d(u,y) \stackrel{\text{Lemma \ref{lemma:distortion}}}{\leq} (1+8\eps) \cdot \delta(x,y) + d(u,y)\leq \frac{(1+8\eps)r}{w_{j}}+\frac{a}{w_j}\leq \frac{1.1 r}{w_{j}}.
		\]
		%
		Now we are ready to show $x'\in 
		\Bdt_i(x, \frac{r}{w_i})
		$.
		Since $x'\in \Bd_i(u,\frac{a}{w_i})$, $d(x',u)\leq \frac{a}{w_i}$.
		Again since $r\geq r_1\geq 100 a$ and $\eps\leq \frac{1}{100}$,
		\begin{align*}
		& \delta(x,x')\stackrel{\text{Lemma \ref{lemma:distortion}}}{\leq} \frac{1}{1-8\eps} \cdot d(x,x') \leq \frac{1}{1-8\eps} \cdot (d(x,u)+d(x',u)) \leq \frac{1}{1-8\eps} \cdot (\frac{1.1 r}{w_{j}}+\frac{a}{w_i})\leq \frac{1.3 r}{w_j}+\frac{0.02 r}{ w_i}.
		\end{align*}
		Note that $w$ is a gap-2 weight function which implies $w_j\geq 2w_i$. 
		Thus,
		\[
		\delta(x,x')\leq \frac{1.3 r}{w_j}+\frac{0.02 r}{ w_i}\leq \frac{1.3 r}{2 w_i}+\frac{0.02 r}{ w_i}\leq \frac{r}{w_i}.
		\]
		It implies that $x'\in 
		\Bdt_i(x, \frac{r}{w_i})
		$.

		\vspace{0.1cm}
		\noindent\textbf{Showing $\Bd_j(u,\frac{a}{w_j}) \subseteq 
			\Bdt_j(x, \frac{r}{w_j})
		$.}
		The key is to use the smooth property of $\delta$.
		Recall that $2^{s_j-1} \leq \frac{\lambda r_1}{w_j} < 2^{s_j }$.
		Because $\mathcal{E}^{(u)}_j$ does not happen, there exists exactly one $y \in N_{s_j}$ such that $\Bd_j(u,\frac{a}{w_j}) \subseteq \Des(y^{(s_j)})$.
		%
		%
		Since $r_1\leq r$, we have $2^{s_j-1}\leq \frac{\lambda r}{w_j}$.
		Then by Lemma~\ref{lemma:ball_laminar}, 
		either $\Des(y^{(s_j)}) 
		\subseteq 
		\Bdt(x, \frac{r}{w_j})$ 
		or 
		$\Des(y^{(s_j)}) 
		\cap 
		\Bdt(x, \frac{r}{w_j}) = \emptyset$.
		Since $\Bd_j(u,\frac{a}{w_j}) \subseteq \Des(y^{(s_j)})$ (by the event $\overline{\mathcal{E}^{(u)}_j}$) and 
		$
		\Bdt_j(x, \frac{r}{w_j})
		\cap \Bd_j(u,\frac{a}{w_j})\neq \emptyset$, we have 
		$
		\Bdt_j(x, \frac{r}{w_j})
		\cap \Des(y^{(s_j)}) \neq \emptyset$. 
		Therefore, it has to be the case that $\Des(y^{(s_j)}) \subseteq \Bdt(x, \frac{r}{w_j})$.
		In conclusion, we have
		\begin{align*}
		\Bd_j(u,\frac{a}{w_j})
			\subseteq
			H_j\cap \Des(y^{(s_j)})  
			\subseteq 
			H_j\cap \Bdt(x, \frac{ r}{w_j})
			= 
			\Bdt_j(x, \frac{r}{w_j})
			.
		\end{align*}
		This finishes the proof of Claim~\ref{claim:large_ball_equal}.
	\end{proof}
	By Claim~\ref{claim:large_ball_equal}, we have
	\begin{align*}
	\calH(r_1, r_2)
	= \left\{ \bigcup_{i}{
		\Bdt_i(x, \frac{r}{w_i})
		\mid r \in [r_1, r_2), x\in X} \right\}
	\subseteq \left\{ \bigcup_{j\leq i\leq l}{
		\Bd_i(u, \frac{a}{w_i})
		} \mid j\in [l], u \in R(a) \right\}\cup \left\{\emptyset \right\}.
	\end{align*}
	Since $l\leq m$ and $|R(a)|\leq m$, we have $|\calH(r_1,r_2)|\leq m^2+1$.
	
	\noindent\textbf{Removing the Condition.}
	Using the union bound, we have
	\begin{align*}
	\Pr\left[\bigcap_{i \in [l], u \in R(a)}{\overline{ \mathcal{E}^{(u)}_i}}\right]
	\geq 1 - \sum_{i \in [l], u \in R(a)}{\Pr[\mathcal{E}^{(u)}_i]}
	\geq 1 - m^2 \cdot \frac{\tau}{2m^6}
	= 1 - \frac{\tau}{2m^4}.
	\end{align*}
	Therefore, with probability at least $1 - \frac{\tau}{2m^4}$, we have $|\calH(a, b)| \leq  O(\frac{1}{\lambda})^{\DDim(M)} \cdot \log{\frac{m}{\tau}} \cdot m^2 $. This concludes Lemma~\ref{lemma:upper_bound_critical_interval}.
\end{proof}

\noindent\textbf{Concluding Theorem~\ref{theorem:weighted}.}
By Lemma~\ref{lemma:num_critical_interval}, $[0,\infty)$ can be partitioned into $t\leq 2m^4$ critical intervals $[a_0=0,a_1),[a_1,a_2),\ldots,[a_{t-1},a_t=\infty)$. 
Then $|\ranges(\calF_H)|=|\calH(0,\infty)|\leq \sum_{i\in [t]} |\calH(a_{i-1},a_i)|$.
By Lemma~\ref{lemma:upper_bound_critical_interval}, for each $i\in [t]$, we have
\begin{align*}
\Pr\left[|\calH(a_{i-1},a_i)| \leq O\left(\frac{1}{\eps}\right)^{O(\DDim(M))} \cdot \log{\frac{|H|}{\tau}} \cdot  |H|^{2} \right] \geq 1 - \frac{\tau}{2m^4}.
\end{align*}
Hence we can finish the proof by applying the union bound.
\qed

\begin{remark}
	From the proof of Lemma~\ref{lemma:num_critical_interval}, 
	we can see that the property of the gap-2 weight function is only used to prove Claim~\ref{claim:large_ball_equal}.
	In fact, we can consider a gap-c weight function for any constant $c>1$. 
	The only difference is that we should choose $r_1$ to be sufficiently large and $\eps$ to be sufficiently small, e.g., $r_1\geq 100 a/(c-1)$ and $\eps\leq 1/100(c-1)$.
	By this modification, the bound for $|\range(\calF_H)|$ changes to $O\left(\frac{1}{\eps}\right)^{O(\DDim(M))}  \cdot  |H|^{6} \cdot \max\{ \log{\frac{|H|}{\tau}} , \log \frac{1}{c-1}\}$.
\end{remark}

Using a similar argument, we can generalize the above result to the case where the distance is taken to the power of $z$, as in the following corollary. The proof can be found in Appendix~\ref{section:weighted_z_ball}.

\begin{corollary}
	\label{corollary:weighted_z_ball}
	Suppose $M(X, d)$ is a metric space with a gap-$2$ weight function $w: X\rightarrow \R_{\geq 0}$.
	Let $z>0$, $0 < \epsilon \leq \frac{1}{100z}$ and $0 < \tau < 1$ be constant. There exists a random $\epsilon$-smoothed distance function $\delta$ (defined with respect to some random net tree), such that for $\calF := \left\{ w(x)\cdot \delta^z(x, \cdot) \mid x\in X \right\}$
	($\calF$ is defined with respect to the $z$-th power of the random smoothed distance function $\delta$),
	and any $H\subseteq X$,
	\begin{align*}
	\Pr_\delta\left[|\ranges(\calF_H)| \leq  O\left( \frac{1}{\epsilon} \right)^{O(\DDim(M))} \cdot \log{\frac{|H|}{\tau}} \cdot |H|^6  \right] \geq 1 - \tau,
	\end{align*}
	In addition, for $x, y \in X$, it holds that
	\begin{align*}
	(1 - O(\eps\cdot z))\cdot \delta^z(x, y)\leq d^z(x, y) \leq (1 + O(\eps \cdot z)) \cdot \delta^z(x, y).
	\end{align*}
	In other words, $\PDim_\tau(\calF) \leq O\left( \DDim(M)\cdot \log (1/\eps) + \log\log 1/\tau \right) $.
\end{corollary}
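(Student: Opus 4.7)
The plan is to reduce Corollary~\ref{corollary:weighted_z_ball} to the $z=1$ case already established in Theorem~\ref{theorem:weighted} (more precisely, the gap-$c$ extension sketched in the remark immediately following that theorem) via a simple weight transformation, and to handle the distortion bound directly from Lemma~\ref{lemma:distortion}.

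For the reduction, I would define $w'(y) := w(y)^{1/z}$ for each $y \in X$ and introduce the auxiliary family $\calF' := \{w'(y)\cdot \delta(y, \cdot) \mid y \in X\}$. The membership condition $w(y)\cdot \delta^z(y, C) \leq r$ for $\range(\calF, C, r)$ is equivalent to $w'(y)\cdot \delta(y, C) \leq r^{1/z}$, so the natural bijection $f_y \leftrightarrow f'_y$ identifies $\range(\calF, C, r)$ with $\range(\calF', C, r^{1/z})$. Since $r \mapsto r^{1/z}$ is a bijection on $[0, \infty)$, this yields $|\ranges(\calF_H)| = |\ranges(\calF'_H)|$ for every $H \subseteq X$, and it suffices to bound the right-hand side. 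The gap-$2$ assumption on $w$ translates to $w'$ being gap-$c$ with $c := 2^{1/z}$; using the elementary inequality $2^{1/z} - 1 \geq (\ln 2)/z$ for $z \geq 1$, the hypothesis $\eps \leq 1/(100z)$ confirms both $\eps \leq O(1/(c-1))$ and admits $r_1 = \Theta(a/(c-1)) = \Theta(za)$ as required by the gap-$c$ version. Plugging into that version, the extra $\log(1/(c-1)) = O(\log z)$ factor mentioned in the remark is absorbed into the $O(1/\eps)^{O(\DDim(M))}$ factor, since $z \leq 1/(100\eps)$.

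For the distortion bound, I would apply Lemma~\ref{lemma:distortion} with the covering parameter $c=2$ (the random net tree constructed from Lemma~\ref{lemma:decomp_net_tree} is $2$-covering), giving $(1 - 8\eps)\delta(x, y) \leq d(x, y) \leq (1 + 8\eps)\delta(x, y)$. Raising to the $z$-th power and using the standard estimate $(1 \pm 8\eps)^z = 1 \pm O(\eps z)$, valid because $\eps z \leq 1/100$, yields the two-sided distortion between $d^z$ and $\delta^z$.

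The main obstacle is verifying that the proof of Theorem~\ref{theorem:weighted}, in particular Claim~\ref{claim:large_ball_equal}, still closes when $c = 2^{1/z}$ is close to $1$: the key step showing $\delta(x, x') \leq r/w_i$ for $x' \in \Bd_i(u, a/w_i)$ and $i \geq j+1$ relies on $w_j/w_i \geq c$, and the slack in the chain of triangle inequalities is $O(\eps) + O(a/r_1)$. To make this slack small compared to $c-1 = \Theta(1/z)$, the choices $r_1 = \Omega(za)$ and $\eps = O(1/z)$ are both essential, and the constants throughout Lemmas~\ref{lemma:long_interval}--\ref{lemma:subset} and Lemma~\ref{lemma:upper_bound_critical_interval} must be re-examined to confirm they rescale cleanly. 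All other ingredients (critical intervals, the cross-free property, the randomized hierarchical decomposition, and the bound on $|\calH(a,b)|$ for bounded intervals) are oblivious to $z$ and can be reused verbatim.
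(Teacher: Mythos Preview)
Your reduction is correct and is a genuinely different route from the paper's. The paper does not transform the weights; instead it reruns the entire proof of Theorem~\ref{theorem:weighted} with the substitution $\frac{r}{w_i}\to\bigl(\frac{r}{w_i}\bigr)^{1/z}$ carried through every definition and lemma: the $r$-representatives are redefined via $2^{\zeta_i}\le (r/w_i)^{1/z}<2^{\zeta_i+1}$, the bounded-interval lemma becomes $b\le 2^z a$, the long-interval lemmas get thresholds $b>6^z a$ and $r\in[a,b/2^{O(z)}]$, and $r_1,r_2$ in Lemma~\ref{lemma:upper_bound_critical_interval} are rescaled accordingly. Your approach instead observes that $\range(\calF_H,y,r)$ and $\range(\calF'_H,y,r^{1/z})$ coincide as subsets of $H$, so $|\ranges(\calF_H)|=|\ranges(\calF'_H)|$, and then invokes the gap-$c$ remark with $c=2^{1/z}$ as a black box. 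This is cleaner and more modular: it isolates the only place where $z$ matters (the gap parameter in Claim~\ref{claim:large_ball_equal}) and leaves the rest of the $z=1$ machinery untouched, whereas the paper's substitution touches every lemma but keeps the gap fixed at~$2$.

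One slip to fix: you write ``$\eps\le O(1/(c-1))$'', but the requirement from the remark (and from your own analysis in the obstacle paragraph) is $\eps=O(c-1)$, i.e.\ $\eps=O(1/z)$; the ambiguous ``$1/100(c-1)$'' in the remark should be read as $(c-1)/100$, since we need the slack $O(\eps)$ to be small compared to $c-1$. Your later statement ``$\eps=O(1/z)$'' is the correct one and matches the hypothesis $\eps\le 1/(100z)$. Also, you can trim the list of lemmas to re-examine: Lemmas~\ref{lemma:long_interval}--\ref{lemma:subset} are independent of the weight gap and truly carry over verbatim; only Claim~\ref{claim:large_ball_equal} and the choice of $r_1$ in Lemma~\ref{lemma:upper_bound_critical_interval} (which contributes the extra $\log(1/(c-1))=O(\log z)$ factor you correctly absorb) need attention.
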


\section{Applications}
In this section, we provide three applications of our main result.
The major application is an efficient 
$\epsilon$-coreset construction algorithm for the $(k, z)$-clustering problem in doubling metrics.
The overall approach is to apply the Feldman-Langberg framework (Theorem~\ref{thm:fl}).
As noted in Section~\ref{section:intro}, one important building block is an $\alpha$-approximation for the weighted range space induced by the metric space.
This is done by combining the probabilistic dimension upper bound of the weighted range space (Corollary~\ref{corollary:weighted_z_ball}), and the $\alpha$-approximation lemma for the bounded probabilistic dimension (Lemma~\ref{lemma:restate_weak_app}).
Although the construction of the smoothed distance function in Corollary~\ref{corollary:weighted_z_ball} is quite involved, we only use it in the analysis. 
The algorithm is almost as simple as in the Euclidean case.
In particular, the core of the algorithm is a weighted sampling of the points in the \emph{original} metric.
In the analysis, we consider an \emph{auxiliary} range space resulted from Corollary~\ref{corollary:weighted_z_ball}, and we relate the sample on the original point set to a sample on the auxiliary range space. We show that the sample is a good approximation for the auxiliary space with high probability, and we translate it into a good coreset in the original space.
We elaborate the details in Section~\ref{sec:coreset}.

In Section~\ref{section:robust_coreset}, we introduce the construction of the robust coreset and its application to property testing. The construction of the robust coreset is simply a \emph{uniform} sample of points from the metric space. The key proof for the correctness is Lemma \ref{rctech}, that presents a simple (yet previously unknown) relationship between $\alpha$-approximation and robust coreset.
Having this lemma, we can then follow a similar argument as in the coreset construction (Section~\ref{sec:coreset}) to get a robust coreset in doubling metrics.
Finally, we discuss an application of the robust coreset to the property testing.

Another application is the construction of the centroid set and its application to accelerate the local search algorithms for the $(k, z)$-clustering problem in doubling metrics.
The centroid set is essentially an extension of a coreset, such that a $(1+\eps)$-approximate solution to the clustering objective is \emph{included} in the centroid set.
The centroid set was first considered by~\cite{DBLP:journals/dcg/Matousek00} and was applied to a constant approximation 
for the geometric $k$-means clustering problem in Euclidean space.
In a high level, our construction of the centroid set is similar with that in~\cite{DBLP:journals/dcg/Matousek00}.
But our construction does not rely on the specific properties in Euclidean spaces and the $k$-means objective.
We obtain a small sized centroid set for the $(k, z)$-clustering problem with arbitrary $k$ and $z$,
and for any doubling metric.
Recently, Friggstad et al.~\cite{DBLP:conf/focs/FriggstadRS16} showed that the local search algorithm 
actually gives a PTAS for the $(k, z)$-clustering problem in doubling metrics. 
For the special case of $k$-means in Euclidean spaces, they used the centroid set in~\cite{DBLP:journals/dcg/Matousek00} to improve the running time. However, a centroid set for doubling metrics was not known and hence the running time was not improved for more general doubling metrics.
Using our new result, we obtain a similar speedup comparable to theirs in Euclidean spaces.
The construction of the centroid set as well as its application is discussed in Section~\ref{sec:centroid}.

\subsection{Coreset Construction in Doubling Metrics}
\label{sec:coreset}
In this section, we present the construction of the coreset in doubling metrics.
Theorem~\ref{thm:coreset} is the formal statement of Theorem~\ref{theorem:main}.
\begin{theorem}
	\label{thm:coreset}
	We are given a doubling metric $M(X,d)$ with $X$ being a set of $n$ discrete points. Let real numbers $0<\e,\tau<1/100$, $z > 0$, and integer $k \geq 1$. There exists an algorithm running in $\poly(n)$ time,
	that constructs a weighted subset $S\subseteq X$ of size
	\[
	\Gamma := O\left( \frac{2^{O(z\log z)}k^3}{\e^{2}}\left(\DDim(M)\cdot \log(z/\e)+ \log k + \log \log (1/\tau)\right)+\frac{2^{O(z\log z)}k^2 \log (1/\tau)}{\e^{2}}\right),
	\]
	such that $S$ is an $\eps$-coreset for the $(k,z)$-clustering problem with probability at least $1-\tau$.
\end{theorem}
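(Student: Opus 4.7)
The plan is to instantiate the Feldman--Langberg framework~\cite{FL11} (Theorem~\ref{thm:fl}) using the probabilistic shattering dimension bound of Corollary~\ref{corollary:weighted_z_ball}. First, I would run a known $O(1)$-approximation for $(k,z)$-clustering on $(X,d)$ to compute approximate sensitivities $s(x)$ for each $x\in X$; by the sensitivity bound of~\cite{varadarajan2012sensitivity} their sum is $T=2^{O(z\log z)}\cdot k$. Rounding each $s(x)$ up to the nearest power of two yields a gap-$2$ weight function $w$ with total weight $O(T)$, as required by Corollary~\ref{corollary:weighted_z_ball}. Let $\delta$ be the random $\Theta(\epsilon/z)$-smoothed distance function of that corollary, and set $\calG:=\{g_x\mid x\in X\}$ with $g_x(C):=w(x)\cdot \delta^z(x,C)$. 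It then suffices, by Theorem~\ref{thm:fl}, to draw an $\alpha$-approximation $\calS\subseteq \calG$ with $\alpha=\Theta(\epsilon/T)$: the FL construction turns such an $\calS$ into a weighted subset of $X$ of size $|\calS|$ that is an $O(\epsilon)$-coreset for $\sum_{x} \delta^z(x,C)$, and Lemma~\ref{lemma:distortion} transfers this to an $O(\epsilon)$-coreset for $\kdist_z$ after rescaling $\epsilon$ by a constant.

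To obtain this $\alpha$-approximation, I would chain Claim~\ref{claim:min_equal}, Corollary~\ref{corollary:weighted_z_ball} and Lemma~\ref{lemma:restate_weak_app}. For any fixed $H\subseteq X$, Claim~\ref{claim:min_equal} gives $|\ranges(\calG_H)|\leq |\ranges(\calF_H)|^k$ where $\calF_H:=\{y\mapsto w(x)\delta^z(x,y)\mid x\in H\}$, and Corollary~\ref{corollary:weighted_z_ball} bounds $|\ranges(\calF_H)|\leq (1/\epsilon)^{O(\DDim(M))}\cdot |H|^{6}\log(|H|/\tau')$ with probability at least $1-\tau'$ over the draw of $\delta$. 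Substituting
\[
\log|\ranges(\calG_H)|\ =\ O\!\bigl(k\,\DDim(M)\log(z/\epsilon)+k\log|H|+k\log\log(|H|/\tau')\bigr)
\]
into Lemma~\ref{lemma:restate_weak_app} shows that a uniform sample $\calS$ of size $m$ from $\calG$ is, with probability $\geq 1-\tau'$, an $\alpha$-approximation for
\[
\alpha \;=\; O\!\Bigl(\sqrt{\tfrac{k\,\DDim(M)\log(z/\epsilon)+k\log m+k\log\log(m/\tau')+\log(1/\tau')}{m}}\Bigr).
\]
Setting $\alpha=\Theta(\epsilon/T)$ and $\tau'=\tau/2$, and using that $\log m$ is absorbed into $O(\log T+\log(1/\epsilon))\subseteq O(\log k+z\log z+\log(1/\epsilon))$, one solves the implicit inequality to recover $m=\Gamma$: the $k^3$ factor comes from $T^2/\epsilon^2=2^{O(z\log z)}k^2/\epsilon^2$ multiplied by the leading $k$ inside the dimension log, while the additive $k^2\log(1/\tau)/\epsilon^2$ term comes from the $\log(1/\tau')$ summand of the sampling bound. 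Because $\delta$ is used only in the analysis, the actual algorithm never materializes a net tree: it just samples $m$ points of $X$ proportionally to $w(x)$, assigns each the FL weight, and queries the $d$-oracle on demand, all in $\poly(n)$ time.

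The main obstacle I expect is the careful bookkeeping across the four related objects $(X,d)$, $(X,\delta)$, $\calG$ and $\calF$, together with reconciling the probabilistic dimension bound with the deterministic FL interface. Specifically, one must (i) verify that rounding sensitivities to powers of two preserves both the $O(T)$ total-weight bound and the FL guarantees up to constant factors, (ii) union-bound the two independent randomness sources---the random net tree defining $\delta$, and the uniform sampling of $\calS$---at failure probability $\tau/2$ each, (iii) carry the multiplicative $(1\pm O(\epsilon))$-distortion of Lemma~\ref{lemma:distortion} through the $z$-th power cleanly, which is precisely what forces the $\Theta(\epsilon/z)$-smoothing in place of $\Theta(\epsilon)$, and (iv) extract the additive structure of $\Gamma$ when solving for $m$ so that both summands appear with the stated dependence on $\DDim(M)$, $k$, $z$, $\epsilon$ and $\tau$.
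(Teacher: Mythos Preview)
Your proposal is correct and follows essentially the same route as the paper: compute approximate sensitivities via~\cite{varadarajan2012sensitivity} and a bicriteria approximation, round to powers of two to obtain a gap-$2$ weight, invoke Corollary~\ref{corollary:weighted_z_ball} together with Claim~\ref{claim:min_equal} to feed Lemma~\ref{lemma:restate_weak_app}, and plug the resulting $\alpha$-approximation into Theorem~\ref{thm:fl}, with $\delta$ appearing only in the analysis. One small notational slip: in the FL reduction the auxiliary functions are $g_x=\psi_x/\theta_x$ (weight $1/\theta_x$, inversely proportional to sensitivity), and $\calG$ carries $\theta_x$ copies of $g_x$ so that a \emph{uniform} sample from $\calG$ realizes the importance sampling you describe; your $g_x=w(x)\,\delta^z(x,\cdot)$ with $w$ equal to the (rounded) sensitivity is the reciprocal convention, but since both weights are gap-$2$ the dimension bound and the rest of the argument are unaffected.
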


The algorithm for constructing the coreset is described in Algorithm~\ref{alg:coreset}, and this algorithm is almost the same as that for the Euclidean spaces in~\cite{FL11}.
The algorithm computes a weight for each point in the metric, and then independently sample a number of points following the distribution proportional to the weights.
As in~\cite{FL11}, the weight for a point in the metric is defined as an upper bound for its \emph{sensitivity}, which is defined in Definition~\ref{def:totalsen}.


Actually, in the \FL framework (Theorem~\ref{thm:fl}), apart from the $\alpha$-approximation lemma, it also requires that the sum of the sensitivity is bounded.
To bound the sensitivity, we use the result in~\cite{varadarajan2012sensitivity}, and show in Theorem~\ref{thm:totalsen} that the total sensitivity is bounded by a function of $k$ and $z$ and is independent of the doubling dimension.

\subsubsection{Sensitivity}
\begin{definition} [sensitivity for $(k,z)$-clustering]
\label{def:totalsen}
Given a metric space $M(X,d)$ for the $(k,z)$-clustering problem, the sensitivity of $x\in X$ is
\begin{align*}
	\sigma_X(x):= \inf\left\{\beta\geq 0 \mid d^z(x,C)\leq \beta \sum_{y\in X}d^z(y,C), \forall C\in [X]^k\right\}.
\end{align*}
The total sensitivity of $X$ is defined by $\sum_{x\in X}\sigma_{X}(x)$.
\end{definition}

In Theorem~\ref{thm:totalsen}, we analyze the total sensitivity for the $(k, z)$-clustering problem. 
As observed in previous work~\cite{langberg2010universal,varadarajan2012sensitivity},
the total sensitivity can actually be bounded by 
a constant (depending on $k$ and $z$) even for a general metric, and a constant factor approximation to the total
sensitivity can be computed efficiently.


\begin{theorem}
\label{thm:totalsen}
Given a metric space $M(X,d)$ for the $(k,z)$-clustering problem, there exists an algorithm that computes an upper bound $\pi_x$ of $2\sigma_X(x)$ for any $x\in X$, such that
\begin{align*}
	\sum_{x\in X}\pi_x =O(2^{O(z\log z)}k), 
\end{align*}
with probability at least $1-\tau$.
Moreover, the algorithm runs in $\poly(n)$ time.
\end{theorem}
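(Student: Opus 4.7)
The plan is to follow the sensitivity-bounding framework of Langberg--Schulman and Varadarajan--Xiao~\cite{varadarajan2012sensitivity}, which reduces bounding the total sensitivity of $(k,z)$-clustering to computing a constant-factor approximate solution on $M(X,d)$. First I would compute, in $\poly(n)$ time, a (possibly bicriteria) $O(1)$-approximation $C^{*}\subseteq X$ with $|C^{*}|=O(k)$ for the $(k,z)$-clustering problem on $M(X,d)$; such an algorithm exists for any metric space (e.g., via a greedy / local-search routine with a standard Markov-inequality boost to succeed with probability $1-\tau$). Let $\nu^{*}:=\sum_{y\in X}d^{z}(y,C^{*})$, partition $X=\bigsqcup_{j}X_{j}$ by assigning each point to its closest center in $C^{*}$, and let $c_{j(x)}$ denote the center assigned to $x$.

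Next, for each $x\in X$ I would define the upper bound
\[
\pi_{x} \;:=\; c_{z}\cdot \frac{d^{z}(x,C^{*})}{\nu^{*}} \;+\; \frac{c_{z}}{|X_{j(x)}|}\cdot\frac{\sum_{y\in X_{j(x)}} d^{z}(y,C^{*})}{\nu^{*}} \;+\; \frac{c_{z}}{|X_{j(x)}|},
\]
for a constant $c_{z}=2^{O(z\log z)}$ to be chosen. Verifying $\pi_{x}\geq 2\sigma_{X}(x)$ is the core technical step: for an arbitrary candidate $C\in[X]^{k}$, one applies the parameterised $z$-power triangle inequality
\[
(a+b)^{z}\;\leq\;(1+t)^{z-1}\,a^{z}+\bigl(1+\tfrac{1}{t}\bigr)^{z-1}\,b^{z}
\]
with $t$ tuned as a function of $z$ to bound $d^{z}(x,C)$ by a combination of $d^{z}(x,c_{j(x)})$ and $d^{z}(c_{j(x)},C)$. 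The second term is then averaged over the cluster $X_{j(x)}$ via $d^{z}(c_{j(x)},C)\leq \frac{1}{|X_{j(x)}|}\sum_{y\in X_{j(x)}}\bigl(\text{similar expansion of } d^{z}(y,C)\bigr)$, and finally one divides by $\sum_{y\in X}d^{z}(y,C)\geq \OPT\geq \Omega(\nu^{*})$ using the $O(1)$-approximation guarantee of $C^{*}$.

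Summing $\pi_{x}$ over $x\in X$ is then straightforward: the first term telescopes to $c_{z}$ (since $\sum_{x}d^{z}(x,C^{*})=\nu^{*}$), the second term contributes $c_{z}$ per cluster and hence $O(c_{z}k)$ in total, and the third term contributes $O(c_{z}k)$ as well. This yields $\sum_{x\in X}\pi_{x}=O(2^{O(z\log z)}k)$, as required. Since the entire construction only needs one approximate solution, the running time is dominated by the initial $O(1)$-approximation, which is $\poly(n)$.

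The main obstacle is obtaining the correct $2^{O(z\log z)}$ dependence on $z$: the naive bound $(a+b)^{z}\leq 2^{z-1}(a^{z}+b^{z})$ already costs a $2^{z}$ factor per application, and this inequality must be invoked a bounded number of times while keeping the constants additive rather than multiplicative. The parameterised version above, together with a careful choice of $t$ depending on $z$ (essentially $t=\Theta(z)$), keeps the blow-up to the claimed $z^{O(z)}=2^{O(z\log z)}$ factor. Once this inequality is in hand, the rest of the argument is bookkeeping.
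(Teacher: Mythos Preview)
Your proposal is essentially correct and follows the same route as the paper: compute an approximate $(k,z)$-clustering solution and plug it into the Varadarajan--Xiao sensitivity formula. The paper simply cites these two ingredients as black boxes (Theorems~7 and~9 of~\cite{varadarajan2012sensitivity} giving $\sum_x\pi_x=O(c\cdot 2^{2z}k)$ from a $c$-approximation, together with the $2^{O(z\log z)}$-approximation of~\cite{gupta2008simpler}), whereas you re-derive the sensitivity bound from scratch via the parameterised $z$-power triangle inequality.

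One point to tighten: you assume a $z$-independent $O(1)$-approximation for $(k,z)$-clustering in general metrics, but no such algorithm is known; the paper instead invokes the $2^{O(z\log z)}$-approximation of~\cite{gupta2008simpler}. This does not break your argument, since the approximation ratio is multiplied into $c_z$ and $2^{O(z\log z)}\cdot 2^{O(z\log z)}$ is still $2^{O(z\log z)}$, but you should cite the correct source for the approximation step. A second minor slip: the second term in your sum over $x$ telescopes to $c_z$ total (not $c_z$ per cluster), though the final $O(c_zk)$ bound is unaffected because the third term already contributes $c_z\cdot|C^{*}|=O(c_zk)$.
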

The proof of Theorem~\ref{thm:totalsen} can be found in Appendix~\ref{section:proof_sensitivity}.

\subsubsection{Proof of Theorem \ref{thm:coreset}}
The $\eps$-coreset construction is given in
Algorithm~\ref{alg:coreset}.

\begin{algorithm}
\caption{\textrm{Coreset(X)}}
\label{alg:coreset}
For each $x\in X$, compute $\pi_x$ by Theorem \ref{thm:totalsen}. Let $\zeta$ be the integer such that $2^{\zeta-1}\leq n\pi_x < 2^{\zeta}$.
Let $\theta_x=2^{\zeta}$. \\
Pick a non-uniform random sample $S$ of $\Gamma:=O\left( \frac{2^{O(z\log z)}k^3}{\e^{2}}\left(\DDim(M)\cdot \log(z/\e)+ \log k+ \log \log (1/\tau)\right)+\frac{2^{O(z\log z)}k^2 \log (1/\tau)}{\e^{2}}\right)$ points from $X$, where for each sample $u_i\in S$, the probability that $u_i=x$ is $\theta_x/\sum_{x\in X}\theta_x$ $(x\in X)$.  
\\
For each sample $s_i\in S$, define the weight $w_i := \sum_{x\in X}\theta_x/\Gamma \theta_{s_i}$.
\end{algorithm}

Note that our algorithm does not depend on the construction of the smoothed distance function but rather runs on the original metric space $M$.
By Theorem \ref{thm:totalsen}, we can directly prove that the construction time in Theorem \ref{thm:coreset}. Thus, we only need to prove that the collection $S$ is an $\eps$-coreset with probability at least $1-\tau$.
The main tool is the following theorem (Theorem~\ref{thm:fl}) which is a restatement of~\cite[Theorem 4.1]{FL11}.

\begin{theorem}[\cite{FL11}]
\label{thm:fl}
Let $\Psi = \left\{\psi_x\mid x\in X\right\}$ be a set of $n$ functions indexed by $X$ where $\psi_x: [X]^k\rightarrow [0,\infty)$. Let $0<\e<1/8$.
For any function $\psi_x\in \Psi$, let $\theta_x$ be an integer such that
\begin{equation}
\label{ieq:sen}
\theta_x\geq n\cdot \max_{C\in [X]^k}\frac{\psi_x(C)}{\sum_{\psi_y\in \Psi}\psi_y(C)}.
\end{equation}
For each $\psi_x\in \Psi$, let $g_x: [X]^k\rightarrow \R_{\geq 0}$ be defined as $g_x(C)=\psi_x(C)/ \theta_x$.
Let $\calG_x$ consist of $ \theta_x$ copies of $g_x$ and $\calG=\bigcup_{x\in X} \calG_x$. 
Let $\calD$ be a subset of $\calG$ satisfying that $\calD$ is an $(\e n/\sum_{x\in X}\theta_x)$-approximation of the range space $(\calG,\ranges(\calG))$.
%
Then $\calS=\left\{g_x\cdot |\calG|/|\calD| \mid g_x\in \calD\right\}$ satisfies that for every $C\in [X]^k$,
$$
\left| \cost(\Psi,C)-\cost(\calS,C) \right| \leq \e \cdot \cost(\Psi,C),
$$
where $\cost(\calA,C)=\sum_{\psi_x\in \calA} \psi_x(C)$.
\end{theorem}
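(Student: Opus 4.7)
The plan is to deduce the coreset guarantee from the $\alpha$-approximation property of $\calD$ via the classical level-set reduction from ``approximating integrals of bounded functions'' to ``approximating measures of sublevel sets''. First I will set up two key identities. Let $T := \sum_{x \in X} \theta_x = |\calG|$ and fix an arbitrary $C \in [X]^k$. By construction $\calG$ consists of $\theta_x$ copies of $g_x = \psi_x/\theta_x$, so
\[
\sum_{g \in \calG} g(C) \;=\; \sum_{x \in X} \theta_x \cdot \frac{\psi_x(C)}{\theta_x} \;=\; \cost(\Psi, C),
\]
and $\cost(\calS, C) = (|\calG|/|\calD|) \sum_{g \in \calD} g(C)$. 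Hence the target error $|\cost(\Psi,C) - \cost(\calS,C)|$ is exactly $|\calG|$ times the discrepancy $|\operatorname{avg}_{\calG} g(C) - \operatorname{avg}_{\calD} g(C)|$ between averages of the function $g \mapsto g(C)$ over $\calG$ and over $\calD$.

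Next I will use the sensitivity hypothesis~\eqref{ieq:sen} to uniformly bound the values. Setting $M := \cost(\Psi,C)/n$, the bound $\theta_x \geq n \cdot \psi_x(C)/\cost(\Psi,C)$ yields
\[
g_x(C) \;=\; \frac{\psi_x(C)}{\theta_x} \;\leq\; \frac{\cost(\Psi,C)}{n} \;=\; M
\]
for every $x$, and hence $g(C) \in [0,M]$ for every $g \in \calG$. Rescale by setting $h(g) := g(C)/M \in [0,1]$; then $|\operatorname{avg}_{\calG} h - \operatorname{avg}_{\calD} h| = M^{-1} \cdot |\operatorname{avg}_{\calG} g(C) - \operatorname{avg}_{\calD} g(C)|$, so it suffices to prove $|\operatorname{avg}_{\calG} h - \operatorname{avg}_{\calD} h| \leq \alpha$ for $\alpha = \eps n /T$.

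Now I will invoke the level-set trick: for any $h: \calG \to [0,1]$,
\[
\sum_{g \in \calG} h(g) \;=\; \int_0^1 \bigl|\{g \in \calG : h(g) > t\}\bigr|\, dt,
\]
and similarly for $\calD$. Rewriting the integrand as a complement gives
\[
\bigl|\operatorname{avg}_{\calG} h - \operatorname{avg}_{\calD} h\bigr|
\;\leq\; \int_0^1 \left| \frac{|\{g \in \calG : h(g) \leq t\}|}{|\calG|} - \frac{|\{g \in \calD : h(g) \leq t\}|}{|\calD|} \right| dt.
\]
The critical observation is that for each fixed $t \in [0,1]$, the set $\{g \in \calG : h(g) \leq t\} = \{g \in \calG : g(C) \leq tM\}$ is precisely $\range(\calG, C, tM) \in \ranges(\calG)$, and $\calD \cap \range(\calG, C, tM)$ coincides with $\{g \in \calD : h(g) \leq t\}$ since $\calD \subseteq \calG$. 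Therefore the $\alpha$-approximation property of $\calD$ bounds the integrand pointwise by $\alpha$, so the integral is at most $\alpha$.

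Combining the three steps, with $\alpha = \eps n/T$,
\[
\bigl|\cost(\Psi,C) - \cost(\calS, C)\bigr|
\;=\; |\calG| \cdot M \cdot \bigl|\operatorname{avg}_{\calG} h - \operatorname{avg}_{\calD} h\bigr|
\;\leq\; T \cdot \frac{\cost(\Psi,C)}{n} \cdot \frac{\eps n}{T}
\;=\; \eps \cdot \cost(\Psi,C),
\]
which is exactly the claim, uniformly over $C \in [X]^k$. The only real content is the level-set reduction; I expect the main obstacle to be notational, namely being careful that (i) duplicating each $g_x$ into $\theta_x$ copies is what makes $\sum_{g \in \calG} g(C) = \cost(\Psi, C)$ and also what makes the level sets honest ranges in $\ranges(\calG)$, and (ii) the sensitivity inequality must hold uniformly in $C$ in order to apply the uniform bound $g(C) \leq M$ at the fixed $C$ we are analyzing. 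Both are direct from the hypotheses, so no additional geometric input is required.
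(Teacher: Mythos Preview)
The paper does not give its own proof of this theorem; it is simply quoted as a restatement of \cite[Theorem~4.1]{FL11}. Your argument is correct and is exactly the Feldman--Langberg level-set reduction: rewrite both $\cost(\Psi,C)$ and $\cost(\calS,C)$ as (scaled) averages over $\calG$ and $\calD$, use the sensitivity bound~\eqref{ieq:sen} to cap every $g(C)$ by $M=\cost(\Psi,C)/n$, integrate the layer-cake representation over $[0,M]$, and observe that each sublevel set is a range so the $\alpha$-approximation hypothesis bounds the integrand by $\alpha=\eps n/T$. The only omission is the degenerate case $\cost(\Psi,C)=0$, where $M=0$ and the rescaling by $M$ is ill-defined; but then every $g(C)=0$, hence $\cost(\calS,C)=0$ as well and the conclusion is trivial.
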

Our proof strategy is to apply Theorem~\ref{thm:fl} on an auxiliary set of functions.
The auxiliary set is defined based on the random smoothed distance functions as in Corollary~\ref{corollary:weighted_z_ball}, and hence has bounded probabilistic shattering dimension.
\eat{
\begin{corollary}[restatement of Corollary~\ref{corollary:weighted_z_ball}]
	\label{corollary:restate_weighted_z_ball}
	Suppose $M([X]^k, d)$ is a metric space, and $w : [X]^k \rightarrow \mathbb{R}_{\geq 0}$.
	Let $0 < \epsilon \leq \frac{1}{16}$, $0 < \tau < 1$ and $z > 0$ be constant. There exists an $\eps$-smoothed distance function $\delta$ satisfying that for any $x, y \in [X]^k$,
	\begin{align*}
	(1 - O(\eps\cdot z))\cdot \delta^z(x, y)\leq d^z(x, y) \leq (1 + O(\eps \cdot z)) \cdot \delta^z(x, y),
	\end{align*}
	such that for any $\calG \subseteq \{ w(x) \cdot \delta^z(x, \cdot)  \mid x \in [X]^k\}$,
	\begin{align*}
	|\ranges(\calG)| \leq  O\left( \frac{1}{\epsilon} \right)^{O(\DDim(M))}\cdot |\calG|^6.
	\end{align*}
\end{corollary}
}

\noindent\textbf{Auxiliary Functions.}
%
Let $\delta: X\times X\rightarrow \R_{\geq 0}$ be an $(\eps/100z)$-smoothed distance function resultant from Corollary~\ref{corollary:weighted_z_ball}.
%
We construct a function $\psi_x:[X]^k \rightarrow \R_{\geq 0}$ for any $x\in X$ where $\psi_x(C)=\min_{y\in C}\delta^z(x,y)$.
Let $\Psi$ be the collection of all $\psi_x$.
%
%
We have the following lemma which shows that $\theta_x$ (computed in Algorithm~\ref{alg:coreset}) is a valid upper bound for the sensitivity of $\psi_x\in \Psi$, as $\theta_x$ in Theorem \ref{thm:fl}.
The proof can be found in Appendix~\ref{sec:lemma_sen}.

\begin{lemma}[sensitivity of auxiliary functions]
\label{lm:sen}
For any $x\in X$, $\theta_x$ is an integer satisfying that $\theta_x \geq n\cdot \max_{C\in \calO}\frac{\psi_x(C)}{\sum_{\psi_y\in \Psi}\psi_y(C)}$.
Moreover, $\sum_{x\in X} \theta_x = O(2^{O(z \log z)} kn)$.
%
\end{lemma}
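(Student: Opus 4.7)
\textbf{Proof proposal for Lemma~\ref{lm:sen}.}
The plan is to leverage the small multiplicative distortion between $d$ and $\delta$ guaranteed by Corollary~\ref{corollary:weighted_z_ball}, and then translate the sensitivity bound for the original metric (provided by Theorem~\ref{thm:totalsen}) into a sensitivity bound for the auxiliary functions $\psi_x$. The integrality of $\theta_x$ is immediate because $\theta_x = 2^{\zeta}$ for an integer $\zeta$, so the substantive part is just the two inequalities.

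First, I would fix an arbitrary $C \in [X]^k$ and $x \in X$, and let $y^* \in C$ be the closest center to $x$ under $\delta$. Corollary~\ref{corollary:weighted_z_ball} (applied with smoothing parameter $\eps/100z$) yields the uniform bounds $(1-O(\eps))\cdot \delta^z(a,b) \le d^z(a,b) \le (1+O(\eps))\cdot \delta^z(a,b)$ for all $a,b\in X$. Using the upper bound on $d^z$ with $b=y^*$ gives $\psi_x(C) = \delta^z(x, y^*) \le \frac{1}{1-O(\eps)} d^z(x, y^*) \le \frac{1}{1-O(\eps)} d^z(x, C)$, and using the lower bound pointwise gives $\sum_{y \in X} \psi_y(C) \ge \frac{1}{1+O(\eps)} \sum_{y \in X} d^z(y, C)$. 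Dividing these two estimates,
\[
\frac{\psi_x(C)}{\sum_{y \in X} \psi_y(C)} \;\le\; \frac{1+O(\eps)}{1-O(\eps)} \cdot \frac{d^z(x, C)}{\sum_{y \in X} d^z(y, C)} \;\le\; 2 \cdot \frac{d^z(x, C)}{\sum_{y \in X} d^z(y, C)},
\]
where the last step uses $\eps < 1/100$ so that the prefactor is at most $2$. Taking the supremum over $C$ and invoking Definition~\ref{def:totalsen}, the right-hand side is at most $2\sigma_X(x) \le \pi_x$ by Theorem~\ref{thm:totalsen}. Since Algorithm~\ref{alg:coreset} sets $\theta_x \ge n \pi_x$, we conclude $\theta_x \ge n \cdot \max_{C \in [X]^k}\frac{\psi_x(C)}{\sum_{y} \psi_y(C)}$, which is the first claim.

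For the second claim, the construction gives $\theta_x < 2 n \pi_x$, so summing,
\[
\sum_{x \in X} \theta_x \;<\; 2n \sum_{x \in X} \pi_x \;=\; O\bigl(2^{O(z\log z)}\, k\, n\bigr),
\]
where the last equality is Theorem~\ref{thm:totalsen}. I do not expect a serious obstacle here: the only mild subtlety is to keep the prefactor $(1+O(\eps))/(1-O(\eps))$ absorbed into a universal constant (which is why the algorithm uses the safety factor of $2$ in $\pi_x \ge 2\sigma_X(x)$), and to note that the bound on $\sum_x \pi_x$ holds with the probability $1-\tau$ inherited from Theorem~\ref{thm:totalsen}.
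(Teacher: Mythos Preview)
Your approach is essentially the same as the paper's: translate the distortion bound $(1-O(\eps))\delta^z \le d^z \le (1+O(\eps))\delta^z$ into a ratio bound between the $\psi$-sensitivity and the $d$-sensitivity, then invoke Theorem~\ref{thm:totalsen}. The second claim is verbatim the paper's argument.

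There is one small slip in your first displayed chain: with $y^*$ the $\delta$-closest center you write $d^z(x,y^*) \le d^z(x,C)$, but the minimum goes the other way, $d^z(x,C) \le d^z(x,y^*)$. The fix is immediate: take instead $y^{**} \in C$ the $d$-closest center, so that
\[
\psi_x(C) = \min_{y\in C}\delta^z(x,y) \le \delta^z(x,y^{**}) \le \tfrac{1}{1-O(\eps)}\, d^z(x,y^{**}) = \tfrac{1}{1-O(\eps)}\, d^z(x,C).
\]
With this correction your argument goes through and matches the paper's proof.
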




\noindent\textbf{Auxiliary Range Space.}
We then construct the following functions based on $\Psi$: let
$g_x: \calO \rightarrow \R_{\geq 0}$ be defined as $g_x(C):=\min_{y\in C}\delta^z(x,y)/\theta_x$ for $x\in X$.
%
Note that $g_x(C) = \psi_x(C)/\theta_x$ satisfies the condition in Theorem \ref{thm:fl}.
Let $\calG$ consist of $\theta_x$ copies of $g_x$ for each $x\in X$ as in Theorem \ref{thm:fl}.
%
%
%
%
%
We show in Lemma~\ref{lm:apprange} that the output of our algorithm can be interpreted as a good approximation for the auxiliary range space $(\calG, \ranges(\calG))$.
This is an important lemma in order to apply Theorem~\ref{thm:fl}.
%
The approach of proving Lemma~\ref{lm:apprange} is to apply Lemma~\ref{lemma:restate_weak_app}.

\eat{
For the sake of presentation, the lemma is restated as follows in Lemma~\ref{lemma:restate_weak_app}.

\begin{lemma}[restatement of Lemma~\ref{lm:balltoapp}]
	\label{lemma:restate_weak_app}
	Let $[X]^k$ be a ground set, and $V$ be a basis set.
	%
	%
	Suppose $g$ maps $V$ to a collection of functions $[X]^k \rightarrow \mathbb{R}_{\geq 0}$, and define $\mathcal{F} := g(V)$.
	Suppose there exists a function $T: \R \times \R \rightarrow \R$ such that for any $H\subseteq V$,
	\begin{align*}
	\Pr_{g}[|\ranges(g(H))| \leq T(|H|, \gamma)] \geq 1 - \gamma.
	\end{align*}
	Let $\calS$ be a collection of $m$ uniformly independent samples from $\mathcal{\Psi}$.
	Then with probability at least $1-\tau$, $\calS$ is an $\alpha$-approximation of the range space $(\cal\Psi, \ranges)$, where the randomness is taken over $\cal\Psi$ and $\calS$, and
	\begin{align*}
	\alpha := \sqrt{\frac{32\left(\log{T(2m, \frac{\tau}{4})} + \log{\frac{8}{\tau}}\right)}{m}}.
	\end{align*}
\end{lemma}
}

\begin{lemma}
\label{lm:apprange}
Assume that the output of Algorithm \ref{alg:coreset} is $S$.
%
%
Then with probability at least $1-\tau$, $\calG_S=\left\{g_x\mid x\in S \right\}$ is an $(\e n/\sum_{x\in X}\theta_x)$-approximation of the range space $(\calG, \ranges(\calG))$.
\end{lemma}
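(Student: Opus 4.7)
My plan is to reduce the statement to an application of the probabilistic $\alpha$-approximation lemma (Lemma~\ref{lemma:restate_weak_app}) applied to the indexed function set $\calG$, viewed as indexed by $V := \{(x,j) \mid x\in X,\; 1\le j\le \theta_x\}$ with $g_{(x,j)} := g_x$. Under this indexing, a uniformly independent sample of size $\Gamma$ from $\calG$ picks each function $g_x$ (via some copy) with probability $\theta_x/\sum_y \theta_y$, which coincides exactly with the non-uniform sampling law of Algorithm~\ref{alg:coreset}. Moreover, for any $H \subseteq V$, all duplicates of the same $g_x$ are simultaneously included in or excluded from any range $\range(\calG, C, r)$, so $|\ranges(\calG_H)|$ depends only on the underlying set $\{x : (x,j)\in H\}$; multiplicity therefore cannot inflate the range count.

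To invoke Lemma~\ref{lemma:restate_weak_app}, the key step is to certify a high-probability bound on $|\ranges(\calG_H)|$ for every fixed $H\subseteq V$. For this I introduce the auxiliary weight function $w(x) := 1/\theta_x$; because Algorithm~\ref{alg:coreset} sets every $\theta_x$ to a power of $2$, any two distinct values of $w$ differ by a factor of at least $2$, so $w$ is a gap-$2$ weight function in the sense of Definition~\ref{definition:weight}. I then instantiate Corollary~\ref{corollary:weighted_z_ball} with weight $w$ and smoothing parameter on the order of $\eps/z$, which produces a random smoothed distance function $\delta$ (the very $\delta$ defining the auxiliary $\psi_x$) such that $\calF := \{w(x)\delta^z(x,\cdot) : x\in X\}$ satisfies
\[
|\ranges(\calF_{H'})| \le O(1/\eps)^{O(\DDim(M))}\cdot \log(|H'|/\gamma)\cdot |H'|^{6}
\]
with probability at least $1-\gamma$ for every fixed $H'\subseteq X$. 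Lifting from the singleton ground set $X$ to the $k$-subset ground set $[X]^k$ via Claim~\ref{claim:min_equal} raises this bound roughly to its $k$-th power, and combining with the copy-invariance above yields the required function $T(h,\gamma)$ for $\calG$. Note that the algorithm itself never touches $\delta$; the smoothed distance is only a device of the analysis used to certify that $\calG$ lies in a range space of bounded probabilistic dimension.

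Plugging this $T$ into Lemma~\ref{lemma:restate_weak_app} gives that with probability at least $1-\tau$ over both the sample and $\delta$, $\calG_S$ is an $\alpha$-approximation of $(\calG,\ranges(\calG))$ with
\[
\alpha = O\!\left(\sqrt{\tfrac{k\cdot\DDim(M)\log(z/\eps) + k\log\Gamma + \log\log(\Gamma/\tau) + \log(1/\tau)}{\Gamma}}\right).
\]
A routine substitution, combined with the bound $\sum_{x} \theta_x = O(2^{O(z\log z)}kn)$ from Lemma~\ref{lm:sen}, shows that the value of $\Gamma$ prescribed by Algorithm~\ref{alg:coreset} is exactly tuned so that $\alpha \le \eps n/\sum_x \theta_x$, which is the approximation parameter demanded by the conclusion. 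The two places that require genuine care are (i) the copy construction that reconciles the algorithm's non-uniform sampling of $X$ with the uniform sampling from $\calG$ required by the lemma without inflating range counts, and (ii) the coupling of the two independent sources of randomness (the random $\delta$ and the random sample) so that the final probability is a single joint guarantee.
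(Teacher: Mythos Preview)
Your proposal is correct and follows essentially the same route as the paper: index $\calG$ by copies so that the algorithm's non-uniform sampling becomes uniform sampling from $\calG$, observe that $1/\theta_x$ is a gap-$2$ weight function, apply Corollary~\ref{corollary:weighted_z_ball} to bound $|\ranges(\calF_H)|$, lift to $k$-subsets via Claim~\ref{claim:min_equal}, and plug the resulting $T$ into Lemma~\ref{lemma:restate_weak_app} together with the total-sensitivity bound from Lemma~\ref{lm:sen}. Your explicit remark that duplicates cannot inflate the range count is a point the paper leaves implicit, but otherwise the arguments coincide.
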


\begin{proof}
Let $V$ be the multiset $\left\{x\in X \mid g_x\in \calG\right\}$, i.e., the index set of $\calG$.
Note that the doubling dimension of $V$ is still $\DDim(M)$.
By the sampling process of $S$ in Algorithm~\ref{alg:coreset}, $\calS$ can be viewed as a uniform sample from $V$ and $\calG_S$ can be viewed as a uniform sample from $\calG$.
%
%
We will apply Lemma~\ref{lemma:restate_weak_app}. The first step is to show that for $T:\mathbb{N} \times \mathbb{R}_{\geq 0}$ such that
\begin{align*}
T(m, \gamma):= O\left(\frac{z}{\eps}\right)^{ O(k\cdot\DDim(M))}\cdot \log^k \frac{m}{\gamma}\cdot m^{6k},
\end{align*}
$\calG$ satisfies
for any $H\subseteq V$ and $\gamma> 0$,
\[
\Pr[|\ranges(\calG_H)| \leq T(|H|,\gamma)] \geq 1 - \gamma.
\]
Fix a subset $H\subseteq V$. We will apply Corollary~\ref{corollary:weighted_z_ball} to bound $|\ranges(\calG_H)|$. 
%
%
However, the ground set of $\calG$ is $[X]^k$, while the ground set considered in Corollary~\ref{corollary:weighted_z_ball} is $X$. Hence, we will define another range space with ground set $X$ so that Corollary~\ref{corollary:weighted_z_ball} may be used, and then relate $|\ranges(\calG_H)|$ to that range space.

Define another collection of functions $\calF$ with index set $X$ and ground set $X$ as follows. For any $x\in X$, let
$f_x: X \rightarrow \R_{\geq 0}$ be defined as $f_x(y)=\delta^z(x,y)/\theta_x$.
Let $\calF$ consist of all $f_x$ for $x\in X$.
%
%
By the construction of $\theta_x$, we know that $1/\theta_x$ is a gap-$2$ weight function.
Moreover, the error parameter $\eps/100z <1/100z$.
Hence, we can apply Corollary~\ref{corollary:weighted_z_ball} to the range space $(\calF_H,\ranges(\calF_H))$.
By Corollary~\ref{corollary:weighted_z_ball}, we have for $\gamma>0$
	\begin{align}
		\Pr\left[|\ranges(\calF_H)|
		\leq O\left(\frac{z}{\eps}\right)^{O(\DDim(M))} \cdot \log \frac{|H|}{\gamma}\cdot |H|^6\label{eqn:use_cor_z}\right]\geq 1-\gamma.
	\end{align}

%
Next, we prove the following simple 
Claim~\ref{claim:min_equal} which relates the range space $(\calG_H, \ranges(\calG_H))$ with ground set $[X]^k$ to the range space $(\calF_H, \ranges(\calF_H))$ with ground set $X$.
%
%
A very similar claim is also shown in~\cite[Lemma 6.5]{FL11}. We postpone its proof to Appendix~\ref{section:min_equal}.

\begin{claim}
	\label{claim:min_equal}	$|\ranges(\calG_H)| \leq |\ranges(\calF_H)|^k$.
\end{claim}
\ignore{
\begin{proof}
	By the definition of $V$, each range $\range(\calG_H, C,r)$ of $(\calG,\ranges(\calG))$ corresponds to a unique point set $\left\{y\in X: g_y\in \range(\calG_H, C,r)\right\}$.
	Then it suffices to show that for any $C\in [X]^k$ and $r\geq 0$,
	$$\left\{y\in X: g_y\in \range(\calG_H, C,r)\right\}=\cup_{x\in C} \left\{y\in X: f_y\in \range(\calF_H, x,r)\right\}.$$
	For any $C\in \calO$ and $r\geq 0$, if $g_y\in \range(\calG_H, C,r)$, we have
	\[
	g_y(C) = \min_{x\in C} \delta^z(x,y)/\theta_y \leq r.
	\]
	Let $x^*=\arg \min_{x\in C} \delta^z(x,y)$.
	We have $f_y(x^*)=\delta^z(x^*,y)/\theta_y\leq r$ which implies that $f_y\in \range(\calF_H, x^*,r)\subseteq \cup_{x\in C}\range(\calF_H, x,r)$.
	Therefore, we have $$\left\{y\in X: g_y\in \range(\calG_H, C,r)\right\}\subseteq \cup_{x\in C} \left\{y\in X: f_y\in \range(\calF_H, x,r)\right\}.$$
	It remains to prove $\cup_{x\in C} \left\{y\in X: f_y\in \range(\calF_H, x,r)\right\}\subseteq \left\{y\in X: g_y\in \range(\calG_H, C,r)\right\}$.
	If $f_y\in  \cup_{x\in C}\range(\calF_H, x,r)$, there must exist some $x^*\in C$ such that $f_y\in \range(\calF_H, x^*,r)$.
	It implies that
	\[
	g_y(C)=\min_{x\in C}\delta^z(x,y)/\theta_y \leq \delta^z(x^*,y)/\theta_y = f_y(x^*)\leq r.
	\]
	Hence $g_y\in \range(\calG_H, C,r)$.
	Thus, we have $$\cup_{x\in C} \left\{y\in X: f_y\in \range(\calF_H, x,r)\right\}\subseteq \left\{y\in X: g_y\in \range(\calG_H, C,r)\right\},$$ and this completes the proof of Claim~\ref{claim:min_equal}.
\end{proof}
}
By Claim~\ref{claim:min_equal}, with probability at least $1-\gamma$,
\begin{equation}
\label{eq:number}
	|\ranges(\calG_H)|
	\leq |\ranges(\calF_H)|^k
	\leq  O\left(\frac{z}{\eps}\right)^{ O(k\cdot\DDim(M))}\cdot \log^k \frac{|H|}{\gamma}\cdot |H|^{6k}  = T(|H|,\gamma).
\end{equation}
Hence, we have for any $H\subseteq V$ and $\gamma> 0$,
\[
	\Pr[|\ranges(\calG_H)| \leq T(|H|,\gamma)] \geq 1 - \gamma.
\]
\ignore{

\begin{align}
\label{eq:number}
\begin{split}
&\left|\left\{ \calD\cap \range(G, C,r)\mid C \in \calO, r\geq 0\right\}\right|
\leq |\left\{ H_S\cap \range(H, x,r)\mid x \in X, r\geq 0\right\}| \\
\leq&  O\left(\frac{z}{\eps}\right)^{k\cdot O(\DDim(M))}\cdot \log^k{\frac{2}{\gamma}} \cdot (2\Gamma)^{6k} \cdot \log^k{2\Gamma}.
\end{split}
\end{align}

}
%
%
%
%
%
%

\ignore{

\noindent\textbf{Proving the Assumption.}
Next, we prove the claim that $\left\{y\in X: g_y\in \range(G, C,r)\right\}=\cup_{x\in C} \left\{y\in X: f_y\in \range(H, x,r)\right\}$.
For any $C\in \calO$ and $r\geq 0$, if $g_y\in \range(G, C,r)$, we have
\[
g_y(C) = \min_{x\in C} \delta^z(x,y)/\theta_y \leq r.
\]
Let $x^*=\arg \min_{x\in C} \delta^z(x,y)$.
We have $f_y(x^*)=\delta^z(x^*,y)/\theta_y\leq r$ which implies that $f_y\in \range(H, x^*,r)\subseteq \cup_{x\in C}\range(H, x,r)$.
Therefore, we have $$\left\{y\in X: g_y\in \range(G, C,r)\right\}\subseteq \cup_{x\in C} \left\{y\in X: f_y\in \range(H, x,r)\right\}.$$
It remains to prove $\cup_{x\in C} \left\{y\in X: f_y\in \range(H, x,r)\right\}\subseteq \left\{y\in X: g_y\in \range(G, C,r)\right\}$.
If $f_y\in  \cup_{x\in C}\range(H, x,r)$, there must exist some $x^*\in C$ such that $f_y\in \range(H, x^*,r)$.
It implies that
\[
g_y(C)=\min_{x\in C}\delta^z(x,y)/\theta_y \leq \delta^z(x^*,y)/\theta_y = f_y(x^*)\leq r.
\]
Hence $g_y\in \range(G, C,r)$.
Thus, we have $$\cup_{x\in C} \left\{y\in X: f_y\in \range(H, x,r)\right\}\subseteq \left\{y\in X: g_y\in \range(G, C,r)\right\},$$ which completes the proof.

}
Now we are ready to apply Lemma~\ref{lemma:restate_weak_app}.
Note that $\sum_{x\in X}\theta_x=O( 2^{O(z\log z)} k n)$ by Lemma~\ref{lm:sen}.
Plugging in the values of $\Gamma$ and $T(2\Gamma, \tau/4)$ to Lemma~\ref{lemma:restate_weak_app}, we can verify that $\calG_S$ is an $(\e n/\sum_{x\in X}\theta_x)$-approximation of the range space $(\calG, \ranges(\calG))$ with probability at least $1-\tau$.
This completes the proof of Lemma~\ref{lm:apprange}.
\end{proof}

Now, we are ready to prove Theorem \ref{thm:coreset}.

\begin{proof}[Proof of Theorem \ref{thm:coreset}]
The running time follows from Theorem \ref{thm:totalsen} and the sampling process. It remains to prove the correctness.
Denote $\mathcal{E}$ as the event that $\calD := \calG_S$ is an $(\e n/\sum_{x\in X}\theta_x)$-approximation of the range space $(\calG, \ranges(\calG))$ induced by $\delta$.
By Lemma \ref{lm:apprange}, we have $\Pr[\mathcal{E}]\geq 1-\tau$.
%
In the following, we assume that $\mathcal{E}$ happens. 
Let $\calS=\left\{g_x\cdot |\calG|/|\calD| \mid g_x\in \calD\right\}$.
Combining with Theorem \ref{thm:fl} and Lemma \ref{lm:sen}, we conclude that for every $C\in \calO$,
\begin{equation}
\label{eq:thm1}
\left| \cost(\calG,C)-\cost(\calS,C) \right| \leq \eps\cdot \cost(\calG,C),
\end{equation}
where $\cost(\calA,C)=\sum_{g_x\in \calA} g_x(C)$ for a given set $\calA\subseteq \calG$.

On the other hand, since $\delta$ is an $O(\eps/z)$-smoothed distance function, we have
\begin{equation}
\label{eq:thm2}
\cost(\calG,C) = \sum_{g_x\in \calG} g_x(C) = \sum_{x\in X} \theta_x \cdot \frac{\min_{y\in C}\delta^z(x,y)}{\theta_x} \in (1\pm \eps) \sum_{x\in X} d^z(x,C)= (1\pm \eps)\kdist_z(X,C),
\end{equation}
and by the fact that $\calS=\left\{g_x\cdot |G|/|\calD| \mid x\in S\right\}$ and the definition of $w_i$ in Algorithm~\ref{alg:coreset},
\begin{equation}
\label{eq:thm3}
\cost(\calS,C) = \sum_{s_i\in S} \frac{|\calG|}{|\calD|} \cdot g_{s_i}(C) = \sum_{s_i\in S} \frac{\sum_{x\in X}\theta_x}{\Gamma} \cdot \frac{\min_{y\in C}\delta^z(s_i,y)}{\theta_{s_i}} \in (1\pm \eps) \sum_{s_i\in S} w_i \cdot d^z(s_i, C).
\end{equation}
Combining Inequalities \eqref{eq:thm1}-\eqref{eq:thm3}, we have
\[
\sum_{s_i\in S} w_i\cdot d^z(s_i,C) \in (1\pm \e) \cost(\calS,C) \in (1\pm \e)^2 \cost(\calG,C) \in (1\pm \e)^3 \kdist_z(X,C) = (1\pm O(\e)) \kdist_z(X,C).
\]
Therefore, Algorithm \ref{alg:coreset} outputs an $O(\e)$-coreset $S$ of $X$ with probability at least $1-\tau$.
\end{proof}

\eat{

Since $G$ consists of $\theta_x$ copies of $g_x$ for each $x\in X$, we have the following
\begin{equation}
\label{eq:2}
|\range(C,r)| = \sum_{x\in R_\delta(C,r)} \theta_x, |G| = \sum_{x\in X}\theta_x.
\end{equation}
By the definition of $\calD$, we have
\begin{equation}
\label{eq:3}
|\calD\cap \range(C,r)| = |S\cap R_\delta(C,r)|, |G| = \sum_{x\in X}\theta_x,
\end{equation}
Combining with \eqref{eq:1}-\eqref{eq:3}, we complete the proof.
}

\subsection{Robust Coreset and Property Testing}
\label{section:robust_coreset}
In this section, we consider robust coresets for the $(k,z)$-clustering problem with outliers (see Definition~\ref{def:robust_coreset}).
We generalize and improve the prior result~\cite{FL11} for Euclidean space, and prove the existence of robust coresets with smaller size in doubling metrics. The following is the main theorem of this section.
%

%
%
%
%
%


\begin{theorem}
	\label{thm:robust_coreset}
	Let $M(X,d)$ be a doubling metric space (a $d$-dimensional Euclidean space resp.).
	Suppose $S$ is a uniform independent sample of $\Gamma$ ($\Gamma'$ resp.) points from $X$, where
	$$
	\Gamma := O\bigg(\frac{k}{\alpha^2}(\mathrm{ddim}(M)\cdot \log (z/\eps) +\log k+\log \log (1/\tau))+\frac{\log (1/\tau)}{\alpha^2}\bigg)
	$$
    and
    $$
    \Gamma' := O\bigg(\frac{1}{\alpha^2}(kd \log k+\log (1/\tau))\bigg).
    $$
 Then with probability at least $1-\tau$, $\calS$ is an $(\alpha,\eps)$-robust coreset ($(\alpha,0)$-robust coreset resp.)
 for the $(k,z)$-clustering problem with outliers.
\end{theorem}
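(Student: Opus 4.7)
The plan is to connect robust coresets to $\alpha$-approximations of a suitable range space, and then invoke the machinery already developed in the paper (in particular Corollary~\ref{corollary:weighted_z_ball}, Claim~\ref{claim:min_equal}, and Lemma~\ref{lemma:restate_weak_app}) to obtain the uniform sampling bound. The cleanest form of the connection (Lemma~\ref{rctech}) will say: if $S\subseteq X$ is an $\alpha$-approximation of the range space $(\calG,\ranges(\calG))$, where $\calG=\{g_x: x\in X\}$ and $g_x(C):=d^z(x,C)$ with ground set $[X]^k$, then $S$ is an $(\alpha,0)$-robust coreset for $(k,z)$-clustering with outliers. This should work for any metric, and in Euclidean space it yields the $\eps=0$ conclusion directly.

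To prove the connecting lemma I would use the layer-cake representation
\[
\frac{\kdist_z^{-\gamma}(H,C)}{|H|} \;=\; \int_0^{\infty}\max\!\Bigl\{0,\;(1-\gamma)-\tfrac{|\range(\calG_H,C,r)|}{|H|}\Bigr\}\,dr,
\]
(up to a negligible rounding from the ceiling $\lceil(1-\gamma)|H|\rceil$, which contributes at most $\max_{x}d^z(x,C)/|H|$ and can be absorbed). For each $r$, the $\alpha$-approximation bound $\bigl|\tfrac{|\range(\calG_H,C,r)|}{|H|}-\tfrac{|\range(\calG_S,C,r)|}{|S|}\bigr|\le\alpha$ means that when we compute the analogous integral for $S$, replacing $\gamma$ by $\gamma\pm\alpha$ in the outer shift dominates the integrand pointwise in the two required directions, which gives exactly the sandwich in Definition~\ref{def:robust_coreset} with $\eps=0$.

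For the Euclidean statement, the range space $(\calG,\ranges(\calG))$ is the union of sublevel sets of $k$-center distance functions: each range is a union of $k$ Euclidean balls in $\R^d$ intersected with $X$, so the classical VC-dimension bound gives $O(kd\log k)$, and the standard $\alpha$-approximation sampling bound (e.g.~the version of~\cite{DBLP:journals/jcss/LiLS01} that does not need the probabilistic variant) implies that a uniform sample of size $\Gamma'=O(\alpha^{-2}(kd\log k+\log(1/\tau)))$ is an $\alpha$-approximation with probability $1-\tau$. Combined with the connecting lemma, $S$ is an $(\alpha,0)$-robust coreset.

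For doubling metrics we proceed analogously but must account for the fact that our dimension bound is only for the \emph{smoothed} distance $\delta$. Let $\delta$ be the random $(\eps/Cz)$-smoothed distance function from Corollary~\ref{corollary:weighted_z_ball} (with trivial constant weight function, which is vacuously gap-$2$), and set $\tilde g_x(C):=\delta^z(x,C)$. By Corollary~\ref{corollary:weighted_z_ball} together with Claim~\ref{claim:min_equal}, the $(\tau/2)$-error probabilistic shattering dimension of $\tilde\calG:=\{\tilde g_x:x\in X\}$ is $O\bigl(k\cdot\DDim(M)\log(z/\eps)+k\log k+\log\log(1/\tau)\bigr)$, so Lemma~\ref{lemma:restate_weak_app} shows that a uniform sample of size $\Gamma$ is an $\alpha$-approximation of $(\tilde\calG,\ranges(\tilde\calG))$ with probability at least $1-\tau$. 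The connecting lemma applied to $\delta^z$ then gives the robust-coreset inequality with $d$ replaced by $\delta$, and the $(1\pm O(\eps z))$ pointwise distortion between $\delta^z$ and $d^z$ (second part of Corollary~\ref{corollary:weighted_z_ball}) propagates to a $(1\pm\eps)$ factor on each of $\kdist_z^{-\gamma}(X,C)$ and $\kdist_z^{-\gamma}(S,C)$ after rescaling $\eps$ by a constant, yielding the $(\alpha,\eps)$-robust coreset guarantee.

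The main obstacle is the proof of the connecting lemma: the layer-cake identity is clean, but one must carefully track (i) the ceiling term in $\lceil(1-\gamma)|H|\rceil$, which adds additive slack that must be shown to fit inside the $\alpha$-budget (using $\alpha\ge 1/|S|$, which follows from the choice of $\Gamma$), and (ii) in the doubling case the fact that multiplicative $\delta^z$-vs-$d^z$ distortion interacts with the \emph{outlier truncation}: one has to verify that using $\delta$-distances to select outliers yields the same (or nearly the same) truncation point as using $d$-distances, which follows because $\delta^z$ and $d^z$ differ only by a uniform $(1\pm O(\eps z))$ factor and therefore preserve the ranking within the desired tolerance. Once this is established, the statement of Theorem follows by combining the three ingredients above.
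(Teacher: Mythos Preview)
Your proposal is correct and follows essentially the same route as the paper: the connecting lemma (Lemma~\ref{rctech}) is proved via the layer-cake identity with the ceiling absorbed pointwise inside the integral (this is why the paper takes an $\alpha/2$-approximation and assumes $|\calS|,|\calG|\ge 2/\alpha$), and the doubling case is handled by passing to the smoothed $\delta$, applying Corollary~\ref{corollary:weighted_z_ball} plus Claim~\ref{claim:min_equal} plus Lemma~\ref{lemma:restate_weak_app}, and then transferring back to $d$ via the pointwise $(1\pm O(\eps z))$ distortion. One minor note on your point~(ii): the ranking of distances under $\delta^z$ versus $d^z$ need \emph{not} be preserved, but this is irrelevant---the min-sum definition $\kdist_z^{-\gamma}(H,C)=\min_{|H'|=\lceil(1-\gamma)|H|\rceil}\sum_{x\in H'}d^z(x,C)$ combined with the pointwise distortion already gives the $(1\pm\eps)$ transfer directly, regardless of which points end up selected as outliers.
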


\subsubsection{Proof of Theorem~\ref{thm:robust_coreset}}

Similar to Section~\ref{sec:coreset}, our main idea is to construct an auxiliary range space with bounded probabilistic dimension, and then obtain an $\eps$-approximation.
We show that an $\eps$-approximation for the range space already induces a robust coreset.
Again, we consider the functional representation of the problem as follows:

\begin{definition}[Robust Coreset for a Set of Functions]
	\label{rcoreset}
Assume $0<\alpha, \eps <\frac{1}{4}$. Let $\calG$ be a finite set of functions $[X]^k \rightarrow \R_{\geq 0}$.
For any $0< \gamma< 1$, $C\in [X]^k$ and $\calS\subseteq \calG$, let
\[
\calS^{-\gamma}(C):= \min_{\calS'\subseteq \calS: |\calS'|= \lceil(1-\gamma)|\calS|\rceil} \sum_{g\in \calS'} g(C),
\]
which is the sum of the smallest $\lceil (1-\gamma) |\calS| \rceil$ values $g(C)$.
%
Then a subset $\calS\subseteq \calG$ is called an $(\alpha,\eps)$-robust coreset of $\calG$ if for any $\alpha< \gamma< 1-\alpha$ and $C\in [X]^k$,

\begin{eqnarray} \label{rcdef}
(1-\eps)\cdot \frac{\calG^{-(\gamma+\alpha)}(C)}{|\calG|}\leq \frac{\calS^{-\gamma}(C)}{|\calS|} \leq (1+\eps)\cdot \frac{\calG^{-(\gamma-\alpha)}(C)}{|\calG|}.
\end{eqnarray}

\end{definition}


\begin{remark}
	\label{remark:robust}
To reduce the problem of constructing a robust coreset for clustering to the problem for functions, for $x\in X$, let $g_x(\cdot)$ be a function from $[X]^k$ to $\mathbb{R}_{\geq 0}$ such that $g_x(C)=d^z(x,C)$. 
Let $\calG:=\left\{ g_x\mid x\in X \right\}$.
\jian{specify $g_x$. the argument is $C$ or a point $y$}\xuan{specify it.}\shaofeng{The def of $\calG$ looks strange. I suggest to define $g_x$ before $\calG$.}\xuan{done}
%
%

We note that our definition is slightly different from that in~\cite[Definition 8.1]{FL11} \footnote{In fact, our definition is more general. It is unclear whether their result applies to our definition.}.
In particular, in Euclidean spaces, one can check that an $(\eps\gamma/4,0)$-robust coreset is a $(\gamma,\eps)$-coreset in \cite[Definition 8.1]{FL11}.
\end{remark}

Next, we prove the following simple connection between $\alpha$-approximation of $(\calG, \ranges(\calG))$ and robust coreset of $\calG$ in Lemma~\ref{rctech}.
This lemma improves \cite[Theorem 8.3]{FL11} in which they show that an $(\eps^2\gamma/63)$-approximation is a $(\gamma,\eps)$-coreset\shaofeng{53? 63?}\xuan{63!}\footnote{Consider the $(\gamma,\eps)$-coreset in \cite[Definition 8.1]{FL11}. Since an $(\eps\gamma/4,0)$-robust coreset is a $(\gamma,\eps)$-coreset, our Lemma \ref{rctech} implies that an $(\eps\gamma/8)$-approximation is a $(\gamma,\eps)$-coreset.}.\jian{state Theorem 8.3 FL11 in our notation. how do we improve on theirs}\xuan{Add a footnote.}
First we need the following simple formulas. For $x\in \mathbb{R}$, let $(x)_+$ denote $\max\{0, x\}$.

\begin{claim} For any $\gamma\in (\alpha,1-\alpha)$ and $C\in [X]^k$, the following equations hold:

\begin{eqnarray} \label{rceq1}
 \frac{\calS^{-\gamma}(C)}{|\calS|}=\int_{0}^{\infty} \bigg(\frac{\lceil (1-\gamma)|\calS| \rceil}{|\calS|}-\frac{|\calS \cap \range(\calG, C, r)|}{|\calS|}\bigg)_+ dr,
 \end{eqnarray}
\begin{eqnarray} \label{rceq2}
\frac{\calG^{-(\gamma+\alpha)}(C)}{|\calG|}=\int_{0}^{\infty} \bigg(\frac{\lceil (1-\gamma- \alpha)|\calG| \rceil}{|\calG|}-\frac{|\range(\calG, C, r)|}{|\calG|}\bigg)_+ dr,
\end{eqnarray}
\begin{eqnarray} \label{rceq3}
\frac{\calG^{-(\gamma-\alpha)}(C)}{|\calG|}=\int_{0}^{\infty} \bigg(\frac{\lceil (1-\gamma+ \alpha)|\calG| \rceil}{|\calG|}-\frac{|\range(\calG, C, r)|}{|\calG|}\bigg)_+ dr.
\end{eqnarray}
\end{claim}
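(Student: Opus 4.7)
All three identities are instances of the same \emph{layer-cake} identity, so the plan is to prove a single general fact and then specialize it three times. Specifically, for any finite multiset of nonnegative reals $v_1,\ldots,v_N$ and any integer $1\le k^*\le N$, I will establish
\[
\text{(sum of the smallest $k^*$ values)} \;=\; \int_0^\infty \bigl(k^* - |\{i : v_i\le r\}|\bigr)_+\,dr.
\]
This is the only non-routine ingredient; once it is in hand, dividing by $N$ and instantiating $v_i = g(C)$ with appropriate choices of $N$ and $k^*$ yields \eqref{rceq1}, \eqref{rceq2}, \eqref{rceq3}.

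To prove the general identity, I would first recall the one-variable layer-cake representation $v = \int_0^\infty \mathbf{1}[v > r]\,dr$ for $v\ge 0$. Sorting the values as $v_{(1)}\le\cdots\le v_{(N)}$, the sum of the smallest $k^*$ is
\[
\sum_{i=1}^{k^*} v_{(i)} \;=\; \sum_{i=1}^{k^*}\int_0^\infty \mathbf{1}[v_{(i)}>r]\,dr \;=\; \int_0^\infty \bigl|\{i\le k^* : v_{(i)} > r\}\bigr|\,dr.
\]
The integrand equals $k^* - |\{i\le k^* : v_{(i)} \le r\}| = k^* - \min\!\bigl(k^*,\,|\{i : v_i\le r\}|\bigr) = \bigl(k^* - |\{i : v_i\le r\}|\bigr)_+$, where I used that the $k^*$ smallest values contribute first. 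This gives the claimed identity.

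Now I specialize. For \eqref{rceq1}, take the multiset $\{g(C) : g\in\calS\}$ with $N=|\calS|$ and $k^* = \lceil(1-\gamma)|\calS|\rceil$; the definition of $\calS^{-\gamma}(C)$ is exactly the sum of the smallest $k^*$ values, and $|\{g\in\calS : g(C)\le r\}| = |\calS\cap \range(\calG,C,r)|$ because $\range(\calG,C,r)=\{g\in\calG : g(C)\le r\}$ and $\calS\subseteq\calG$. Dividing both sides by $|\calS|$ gives \eqref{rceq1}. For \eqref{rceq2} (resp.\ \eqref{rceq3}), take the multiset $\{g(C):g\in\calG\}$ with $N=|\calG|$ and $k^*=\lceil(1-\gamma-\alpha)|\calG|\rceil$ (resp.\ $\lceil(1-\gamma+\alpha)|\calG|\rceil$); the counting set $\{g\in\calG:g(C)\le r\}$ is exactly $\range(\calG,C,r)$, and division by $|\calG|$ yields the stated formulas.

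There is essentially no obstacle: the only subtle point is the identity $|\{i\le k^* : v_{(i)} \le r\}| = \min(k^*,\,|\{i : v_i\le r\}|)$, which relies on working with the \emph{sorted} sequence, and the use of the positive-part $(\cdot)_+$ to handle the regime $r\ge v_{(k^*)}$ where no further contribution is made. The hypotheses $\alpha<\gamma<1-\alpha$ are only used to ensure $\lceil(1-\gamma\pm\alpha)|\calG|\rceil$ lies in $\{1,\ldots,|\calG|\}$ so that the general identity applies.
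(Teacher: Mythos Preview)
Your proposal is correct and takes essentially the same approach as the paper: both prove \eqref{rceq1} via the layer-cake identity $\sum_{g\in\calD} g(C)=\int_0^\infty |\{g\in\calD:g(C)>r\}|\,dr$ for $\calD$ the set of the $\lceil(1-\gamma)|\calS|\rceil$ smallest values, then observe that this integrand equals $\bigl(|\calD|-|\calS\cap\range(\calG,C,r)|\bigr)_+$, and declare the other two identities analogous. Your only difference is cosmetic---you state and prove the general identity for an arbitrary multiset and threshold $k^*$ before specializing, whereas the paper works directly with $\calS$---but the mathematics is identical.
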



\begin{proof}
	We only prove the first one.
	The other two Equations~\eqref{rceq2} and~\eqref{rceq3}
	can be proved in the same manner.
	Let $\calD$ be the collection of functions $g\in \calS$ with the smallest $\lceil (1-\gamma) |\calS| \rceil$ values $g(C)$.
	Using integration, we know that
	$$
	\frac{\calS^{-\gamma}(C)}{|\calS|}=\int_{0}^{\infty} \frac{\big{|}\{ g(C)> r\mid g\in \calD\}\big{|}}{|\calS|}  dr.
	$$
	By definition, we have
	\begin{align*}
	\frac{\big{|}\{ g(C)> r\mid g\in \calD\}\big{|}}{|\calS|}
	&=\frac{|\calD\setminus \range(\calG, C, r)|}{|\calS|}= \frac{\big(|\calD|-|\calS \cap \range(\calG, C, r)|\big)_+}{|\calS|}\\
	&= \bigg(\frac{\lceil (1-\gamma)|\calS| \rceil}{|\calS|}-\frac{|\calS \cap \range(\calG, C, r)|}{|\calS|}\bigg)_+,
	\end{align*}
	which proves Equation~\eqref{rceq1}.
\end{proof}

\begin{lemma} \label{rctech}
	If $\calS$ is an $\frac{\alpha}{2}$-approximation of $(\calG,\ranges(\calG))$ such that $|\calS|,|\calG|\geq 2/\alpha$, then $\calS$ is an $(\alpha,0)$-robust coreset of $\calG$.
\end{lemma}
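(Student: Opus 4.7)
The plan is to reduce the robust coreset inequality \eqref{rcdef} (with $\eps=0$) to a pointwise comparison of integrands, using the three integral formulas \eqref{rceq1}, \eqref{rceq2}, \eqref{rceq3}. The key observation is that after expressing $\calS^{-\gamma}(C)/|\calS|$ and $\calG^{-(\gamma\pm\alpha)}(C)/|\calG|$ as integrals over $r\geq 0$ of truncated differences, the $\alpha/2$-approximation hypothesis gives uniform control on how much $|\calS\cap\range(\calG,C,r)|/|\calS|$ can deviate from $|\range(\calG,C,r)|/|\calG|$, while the size assumption $|\calS|,|\calG|\geq 2/\alpha$ controls the ceiling rounding error $1/|\calS|,1/|\calG|\leq \alpha/2$. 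Together these two $\alpha/2$ contributions combine into an $\alpha$ shift in the threshold parameter $\gamma$, which is exactly the slack allowed by the robust-coreset definition.

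Concretely, I would first establish the upper bound $\calS^{-\gamma}(C)/|\calS|\leq \calG^{-(\gamma-\alpha)}(C)/|\calG|$. Writing the integrand on the left as $a(r):=\lceil(1-\gamma)|\calS|\rceil/|\calS|-|\calS\cap \range(\calG,C,r)|/|\calS|$ and the integrand on the right as $b(r):=\lceil(1-\gamma+\alpha)|\calG|\rceil/|\calG|-|\range(\calG,C,r)|/|\calG|$, I would upper bound $a(r)\leq (1-\gamma)+1/|\calS|-|\range(\calG,C,r)|/|\calG|+\alpha/2\leq (1-\gamma+\alpha)-|\range(\calG,C,r)|/|\calG|\leq b(r)$, where the first inequality uses $\lceil\cdot\rceil\leq \cdot+1$ together with the $\alpha/2$-approximation, the second uses $1/|\calS|\leq \alpha/2$, and the third uses $\lceil\cdot\rceil\geq\cdot$. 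Since $x\mapsto (x)_+$ is monotone, this gives $a(r)_+\leq b(r)_+$, and integrating over $r\geq 0$ yields the desired upper bound.

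Symmetrically, for the lower bound $\calS^{-\gamma}(C)/|\calS|\geq \calG^{-(\gamma+\alpha)}(C)/|\calG|$, I would lower bound the left integrand by $(1-\gamma)-|\range(\calG,C,r)|/|\calG|-\alpha/2$ using $\lceil\cdot\rceil\geq\cdot$ and the $\alpha/2$-approximation, and upper bound the corresponding right integrand (without the $(\cdot)_+$) by the same quantity $(1-\gamma-\alpha/2)-|\range(\calG,C,r)|/|\calG|$ using $\lceil\cdot\rceil\leq\cdot+1$ and $1/|\calG|\leq\alpha/2$. Monotonicity of $(\cdot)_+$ and integration then close the argument.

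The proof is essentially book-keeping rather than conceptual, so the main (minor) obstacle is bookkeeping the two half-$\alpha$ losses—one from the $\eps/2$-approximation of range densities and one from the ceiling rounding $\lceil(1-\gamma)|\calS|\rceil/|\calS|$ versus $(1-\gamma)$—and verifying that they exactly combine to allow a full $\alpha$ shift in the threshold $\gamma$ on the $\calG$ side. This is the reason the hypothesis has $\alpha/2$-approximation with $|\calS|,|\calG|\geq 2/\alpha$; any looser size bound would force a worse approximation parameter. In particular, this decomposition reveals why our bound improves on \cite[Theorem 8.3]{FL11}: we never pay a factor of $\gamma$ or $\eps$ in the approximation parameter because the $(\alpha,0)$-robust definition absorbs the additive error directly into $\gamma$.
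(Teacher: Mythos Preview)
Your proposal is correct and follows essentially the same approach as the paper: both use the integral formulas \eqref{rceq1}--\eqref{rceq3} and establish the two inequalities via a pointwise comparison of integrands, splitting the total $\alpha$ loss into an $\alpha/2$ from the approximation hypothesis and an $\alpha/2$ from the ceiling rounding (controlled by $|\calS|,|\calG|\ge 2/\alpha$). The only cosmetic difference is that the paper writes out the lower bound in detail and declares the upper bound symmetric, whereas you do the reverse.
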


\begin{proof}
	Let $\calS\subseteq \calG$ be an $\alpha$-approximation of $(\calG,\ranges(\calG))$.
	Now, we show that $\calS$ is also an $(\alpha,0)$-robust coreset of $\calG$.	
	%
	Since $\calS$ is an $\frac{\alpha}{2}$-approximation of $\calG$, for any $C\in [X]^k$ and $r\geq 0$,
	\begin{align}
	\label{eq:approximation}
	\left|\frac{|\range(\calG, C, r)|}{|\calG|}-\frac{|\calS\cap \range(\calG, C, r)|}{|\calS|}\right|
	\leq \frac{\alpha}{2}.
	\end{align}
	So we have that
	\begin{align*}
			\bigg(\frac{\lceil (1-\gamma- \alpha)|\calG| \rceil}{|\calG|}-\frac{|\range(\calG, C, r)|}{|\calG|}\bigg)_+
			\leq & \,\,  \bigg(\frac{\lceil (1-\gamma- \alpha)|\calG| \rceil}{|\calG|}+\frac{\alpha}{2}-\frac{|\calS \cap \range(\calG, C, r)|}{|\calS|}\bigg)_+  \\
			\leq & \,\, \bigg(\frac{(1-\gamma- \alpha)|\calG| +1}{|\calG|}+\frac{\alpha}{2}-\frac{|\calS \cap \range(\calG, C, r)|}{|\calS|}\bigg)_+ & \\
			=&\bigg(\frac{(1-\gamma)|\calS|}{|\calS|}-\frac{\alpha}{2}+\frac{1}{|\calG|}-\frac{|\calS \cap \range(\calG, C, r)|}{|\calS|}\bigg)_+ & \\
			\leq  & \,\, \bigg(\frac{\lceil(1-\gamma)|\calS|\rceil}{|\calS|}-\frac{|\calS \cap \range(\calG, C, r)|}{|\calS|}\bigg)_+
	\end{align*}
	The first inequality holds due to
	\text{Inequality~\eqref{eq:approximation}}
	and the last follows because
	$|\calG|\geq 2/\alpha$.
	
	Together with (\ref{rceq1}) and (\ref{rceq2}), we have that
	\begin{eqnarray*}
		\frac{\calG^{-(\gamma+\alpha)}(C)}{|\calG|}\leq \frac{\calS^{-\gamma}(C)}{|\calS|}.
	\end{eqnarray*}
	Similarly, by (\ref{rceq1}), (\ref{rceq3}) and~\eqref{eq:approximation}, we can also show that
	\begin{eqnarray*}
		\frac{\calS^{-\gamma}(C)}{|\calS|} \leq  \frac{\calG^{-(\gamma-\alpha)}(C)}{|\calG|},
	\end{eqnarray*}
	which completes the proof.
\end{proof}

In the $d$-dimensional Euclidean space, one can utilize a $(\gamma \eps/8)$-approximation to construct a $(\gamma,\eps)$-coreset of \cite[Definition 8.1]{FL11}. Using the improved Lemma \ref{rctech}, we can improve the robust coreset size in \cite[Definition 8.1]{FL11} from $O(kd \log k\cdot \gamma^{-2}\eps^{-4})$
\footnote{The size stated in~\cite{FL11} is $O(kd \gamma^{-2}\eps^{-4})$. We defer interesting readers to \cite[Section 5]{bachem2017scalable} to see why an additional $\log k$ factor is required.}
to $O(kd \log k\cdot \gamma^{-2}\eps^{-2})$.

With the help of Lemma~\ref{rctech}, Theorem~\ref{thm:robust_coreset} follows by a similar argument as in the coreset construction (Section~\ref{sec:coreset}).
We present the details in Appendix~\ref{section:proof_robust}.

\eat{
\begin{remark}
Feldman and Langberg \cite{FL11} showed how to construct an $(\alpha,\eps)$-coreset (based on their notion~\cite[Definition 8.1]{FL11}) of size $O(\frac{\Dim(\calG)}{\eps^4\alpha^2})$.
By applying Lemma~\ref{rctech}, we can improve the size to $O(\frac{\Dim(\calG)}{\eps^2\alpha^2})$.

\end{remark}
\jian{please rewrite this remark. impossible to understand.
	be careful about the notation $\gamma$ and $\alpha$.
	we just used them. please be consistent.}
\lingxiao{Rewrite the remark to make it simple. Check whether it makes sense now.} }

\eat{
The improved size of robust coreset can be used to improve other results in \cite{FL11} including a result regarding the "robust median".

\begin{definition}[Robust Median]
Let $\calG$ be a finite set of functions $[X]^k \rightarrow \mathbb{R}_{\geq 0}$, a set $Y\subset [X]^k$ is called an $(\gamma,\alpha,\eps,\beta)$-robust median for $\calG$, if $|Y|=\beta$ and
$$
\sum_{g\in G_Y} \min_{y\in Y} g(y) \leq \eps \min_{C\in [X]^k} \sum_{f\in F_C} f(C)
$$

where $G_Y$ is the set of $(1-\alpha)\gamma|F|$ functions from $\calG$ with the smallest value $f(Y):=\min_{y\in Y}f(y)$ and $F_C$ is the set of $\gamma |F|$ functions from $\calG$ with the smallest value $f(C)$.
\end{definition}

The following theorem improves~\cite[Theorem 9.3]{FL11}.

\begin{theorem}
Let $\calG$ be a set of functions from $[X]^k$ to $\mathbb{R}_{\geq 0}$. Let $X$ be a sample of $O\bigg(\frac{1}{\alpha^2\gamma^2}(\mathrm{dim}(\calG)+\log \delta^{-1})\bigg)$ i.i.d functions from $\calG$. Then if $C\in [X]^k$ is an $((1-\alpha)\gamma,\alpha,\eps,1)$-median of $S$ then $C$ is an $(\gamma,4\alpha,\eps,1)$-median of $\calG$ with probability at least $1-\delta$.
\end{theorem}
}

\ignore{
	
\subsubsection{Proof of Theorem~\ref{thm:robust_coreset}}

Consider a doubling metric space $M=(X,d)$.
As noted in Theorem~\ref{theorem:doubling_high_ball_dim}, if we let $\calG=\{d^z(x,\cdot) \mid x\in X\}$ as in Remark~\ref{remark:robust}, then the range space $(\calG, \ranges(\calG))$ may not have bounded dimension, which makes it hard to achieve a succinct $\frac{\alpha}{2}$-approximation.
Hence, we use the same idea in Section~\ref{sec:coreset}, i.e., to construct a random $(\eps/100z)$-smoothed distance function $\delta$ resultant from Corollary~\ref{corollary:weighted_z_ball}. 
Then for each $x\in X$, let $g_x(\cdot)$ be a function from $[X]^k$ to $\mathbb{R}_{\geq 0}$ such that $g_x(C)=\delta^z(x,C)$. 
Let $\calG:=\left\{ g_x\mid x\in X \right\}$.
Then by the same argument as in Lemma~\ref{lm:apprange}, we have the following lemma.

\begin{lemma}
	\label{lemma:approximation}
	Let $S$ be a uniformly independent sample of
	$$
	\Gamma = O\bigg(\frac{k}{\alpha^2}(\mathrm{ddim}(M)\cdot \log (z/\eps) +\log k+\log \log (1/\tau))+\frac{\log (1/\tau)}{\alpha^2}\bigg)
	$$
	points from $X$.
	Then with probability at least $1-\tau$, $\calG_S=\left\{g_x \mid x\in S \right\}$ is an $\frac{\alpha}{2}$-approximation of the range space $(\calG, \ranges(\calG))$.
\end{lemma}

\begin{proof}

	The proof is almost identical to that in Lemma~\ref{lm:apprange}.
	For any $H\subseteq X$, recall that $\calG_H=\left\{g_x\mid x\in H \right\}\subseteq \calG$.
	We still want to apply Lemma~\ref{lemma:restate_weak_app}.
	By the same argument as in Lemma~\ref{lm:apprange}, we can show that for $T:\mathbb{N} \times \mathbb{R}_{\geq 0}$ such that
	\begin{align*}
	T(m, \gamma):= O\left(\frac{z}{\eps}\right)^{ O(k\cdot\DDim(M))}\cdot \log^k \frac{m}{\gamma}\cdot m^{6k},
	\end{align*}
	$\calG$ satisfies
	for any $H\subseteq V$ and $\gamma> 0$,
	\[
	\Pr[|\ranges(\calG_H)| \leq T(|H|,\gamma)] \geq 1 - \gamma.
	\]

	Now we are ready to apply Lemma~\ref{lemma:restate_weak_app}.
	Plugging in the values of $\Gamma$ and $T(2\Gamma, \tau/4)$ to Lemma~\ref{lemma:restate_weak_app}, we can verify that $\calG_S$ is an $\frac{\alpha}{2}$-approximation of the range space $(\calG, \ranges(\calG))$ with probability at least $1-\tau$.
	This completes the proof.
\end{proof}

Now we are ready to prove the main theorem.

\begin{proof} (proof of Theorem~\ref{thm:robust_coreset})
	%
    For the Euclidean space $\R^d$, by \cite{DBLP:journals/jcss/LiLS01}, we can construct an $\frac{\alpha}{2}$-approximation of $\calG$ defined as in Remark~\ref{remark:robust}, by taking $O(\frac{kd \log k}{\alpha^2})$ uniform samples from $X$.
    Then by Lemma \ref{rctech}, we complete the proof for the Euclidean space.

 For doubling metrics, by Lemma~\ref{lemma:approximation}, $\calG_S$ is an $\frac{\alpha}{2}$-approximation of $\calG$ with probability at least $1-\tau$.
 Then by Lemma~\ref{rctech}, $\calG_S$ is also an $(\alpha,0)$-robust coreset of $\calG$ with probability at least $1-\tau$.
 In the following, we condition on the event that $\calG_S$ is an $(\alpha,0)$-robust coreset of $\calG$.

 Now we fix a number $\gamma$ such that $\alpha<\gamma< 1-\alpha$ and a subset $C\in [X]^k$.
 Since $\calG_S$ is an $(\alpha,0)$-robust coreset of $\calG$, we have
 \[
 \frac{\calG^{-(\gamma+\alpha)}(C)}{|\calG|}\leq \frac{\calG_S^{-\gamma}(C)}{|\calS|} \leq  \frac{\calG^{-(\gamma-\alpha)}(C)}{|\calG|}.
 \]

 On the other hand, we have $d^z(x,y)\in (1\pm \eps/10)\cdot \delta^z(x,y)$ for any $x,y\in X$, by the definition of $\delta$.
 Then by the same argument as in the proof of Theorem~\ref{thm:coreset},
 \[
\frac{\calG^{-(\gamma+\alpha)}(C)}{|\calG|}\in (1\pm \eps/10)\cdot \frac{\kdist_z^{-(\gamma+\alpha)}(X, C)}{|X|},
 \]
 \[
 \frac{\calG^{-(\gamma-\alpha)}(C)}{|\calG|} \in (1\pm \eps/10)\cdot \frac{\kdist_z^{-(\gamma-\alpha)}(X, C)}{|X|},
 \]
 \[
 \frac{\calG_S^{-\gamma}(C)}{|\calS|} \in (1\pm \eps/10)\cdot \frac{\kdist_z^{-\gamma}(S, C)}{|S|}.
 \]

By the above inequalities, we conclude that
\[
(1-\eps)\cdot \frac{\kdist_z^{-(\gamma+\alpha)}(X, C)}{|X|} \leq \frac{\kdist_z^{-\gamma}(S, C)}{|S|} \leq (1+\eps)\cdot \frac{\kdist_z^{-(\gamma-\alpha)}(X, C)}{|X|},
\]
which completes the proof.
\end{proof}
}

\subsubsection{Application to Property Testing} \label{PT}

In this section, we show some applications of robust coreset to property testing.
We start with the following definition that captures the notion of bi-criteria algorithms.

\begin{definition}
	\label{def:bicriteria}
Let $M(X,d)$ be a metric space.
Let $\lambda \geq 1$, $0<\alpha<1/4$ and $\alpha< \gamma <1-\alpha$.
We say $A$ is a $(\lambda, \gamma, \alpha)$-approximation algorithm for the $(k,z)$-clustering problem with outliers, if $A$ returns a number $\Lambda$ such that $\min_{C\in [X]^k} \kdist_z^{-(\gamma+\alpha)}(X, C)\leq \Lambda\leq \lambda\cdot\min_{C\in [X]^k}  \kdist_z^{-(\gamma-\alpha)}(X, C)$.
\end{definition}

Now, we state our result for testing $(k,z)$-clustering.
The testing problem was first proposed by 
Alon et al.~\cite{alon2003testing}
for the $k$-center problem.

\begin{theorem}[Testing of $(k,z)$-clustering]
Let $M(X,d)$ be a doubling metric space ($d$-dimensional Euclidean space resp.).
Let $\lambda \geq 1$, $0<\alpha<1/4$ and $\alpha< \gamma <1-\alpha$.
Suppose there is a $(\lambda, \gamma, \alpha)$-approximation algorithm for the $(k,z)$-clustering problem with outliers, which runs in time $T(|X|,\lambda,\gamma, \alpha)$.
Then for any $\Delta>0$ and $0<\eps< 1/4$, there is an algorithm satisfying the following:
\begin{enumerate}
\item If $\min_{C\in [X]^k} \kdist_z^{-(\gamma-\alpha)}(X, C)\leq \Delta$, it accepts with probability $1-\tau$;
\item If $\min_{C\in [X]^k} \kdist_z^{-(\gamma+\alpha)}(X, C)\geq \lambda(1+\eps)\cdot \Delta$, it rejects with probability $1-\tau$;
\end{enumerate}
and the running time is $T(\Gamma,\gamma,\lambda, \frac{\alpha}{2})+\Gamma^2$,
where
$$
\Gamma:=O\bigg(\frac{k}{\alpha^2}(\mathrm{ddim}(M)\cdot \log (z/\eps) +\log k+\log \log (1/\tau))+\frac{\log (1/\tau)}{\alpha^2}\bigg)
$$
for doubling metrics and
$$
\Gamma:= O\bigg(\frac{1}{\alpha^2}(kd \log k+\log (1/\tau))\bigg)
$$
for $d$-dimensional Euclidean space.
\end{theorem}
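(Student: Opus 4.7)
The plan is to reduce property testing on the full metric $M(X,d)$ to running the given bi-criteria approximation algorithm on a small robust coreset $S\subseteq X$. Concretely, the testing algorithm is: (i) draw a uniform random sample $S\subseteq X$ of size $\Gamma$ (with parameter $\alpha'=\alpha/2$ and some small $\eps'=\Theta(\eps)$ to be fixed) so that, by Theorem~\ref{thm:robust_coreset}, $S$ is an $(\alpha/2,\eps')$-robust coreset for $(k,z)$-clustering with probability $1-\tau$; (ii) run the given $(\lambda,\gamma,\alpha/2)$-approximation algorithm on the metric space $(S,d)$ to obtain a value $\Lambda$; (iii) accept iff $\Lambda\leq \theta$, where the threshold is chosen as $\theta := \lambda(1+\eps/4)\cdot \Delta \cdot |S|/|X|$.

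The key step is to translate the bi-criteria guarantee on $S$ into one on $X$ using the robust coreset property. The approximation algorithm gives
\[
\min_{C\in [S]^k} \kdist_z^{-(\gamma+\alpha/2)}(S, C) \;\leq\; \Lambda \;\leq\; \lambda\cdot \min_{C\in [S]^k} \kdist_z^{-(\gamma-\alpha/2)}(S, C).
\]
Applying Definition~\ref{def:robust_coreset} with $\gamma'=\gamma\pm\alpha/2$ (which is legal since $\alpha<\gamma<1-\alpha$) converts each side into quantities on $X$: the upper bound becomes $\leq \lambda(1+\eps')\cdot |S|/|X|\cdot \min_{C}\kdist_z^{-(\gamma-\alpha)}(X,C)$, and the lower bound becomes $\geq (1-\eps')\cdot |S|/|X|\cdot \min_{C}\kdist_z^{-(\gamma+\alpha)}(X,C)$. (I take $[S]^k$ and $[X]^k$ to agree on $C\in [S]^k$; for $C$ ranging over $[X]^k$ we only lose in the ``$\min$'' direction, but since $\lambda\geq 1$ the inequalities are preserved by restricting to centers in $S$ in the lower bound and extending centers in the upper bound — this minor point needs a sentence of care.) Then in Case 1 we immediately get $\Lambda\leq \lambda(1+\eps')\Delta\cdot |S|/|X| \leq \theta$ by choosing $\eps'\leq \eps/8$, so we accept. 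In Case 2 we get $\Lambda \geq (1-\eps')\lambda(1+\eps)\Delta\cdot |S|/|X| > \lambda(1+\eps/4)\Delta\cdot |S|/|X|=\theta$ for sufficiently small $\eps'$, so we reject.

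The running time is dominated by preprocessing the sample (materializing the $O(\Gamma^2)$ pairwise distances on $S$) and running the bi-criteria algorithm on $(S,d)$ with the scaled parameter $\alpha/2$, giving $T(\Gamma,\gamma,\lambda,\alpha/2)+O(\Gamma^2)$. For Euclidean space the argument is identical, substituting the Euclidean bound on $\Gamma$ from Theorem~\ref{thm:robust_coreset} and using an $(\alpha/2,0)$-robust coreset (so $\eps'=0$), which is even cleaner.

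The main obstacle — really the only conceptual step — is choosing how to split the parameter budget: the bi-criteria parameter must be tightened to $\alpha/2$ precisely so that, after the robust coreset inflates the ``outlier window'' by an additional $\alpha/2$ on each side, the net $\alpha$-window matches the hypotheses of the theorem. The numerical calibration of $\eps'$ versus $\eps$ in Case 2 is standard once one notes $(1-\eps')(1+\eps)\geq 1+\eps/2$ for small $\eps'$. The probability of failure is absorbed into $\tau$ by a union bound on the (single) event ``$S$ is a robust coreset'' together with any internal randomness of the bi-criteria algorithm.
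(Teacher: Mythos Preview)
Your proposal is correct and essentially identical to the paper's proof: sample $S$ of size $\Gamma$, invoke Theorem~\ref{thm:robust_coreset} to get an $(\alpha/2,\eps/4)$-robust coreset, run the $(\lambda,\gamma,\alpha/2)$-approximation on $S$, and threshold at $\lambda(1+\eps/4)\,\Delta\cdot|S|/|X|$, then verify the two cases via the robust-coreset inequalities with $\gamma'=\gamma\pm\alpha/2$. The $[S]^k$ versus $[X]^k$ subtlety you flag is also glossed over in the paper (which simply writes $\min_{C\in[X]^k}$ throughout), and your remark that ``$\lambda\ge 1$ preserves the inequalities by extending centers in the upper bound'' does not actually fix the upper-bound direction---but your proof plan is at least as careful as the paper's on this point.
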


\begin{proof}
Consider the following algorithm:
\begin{enumerate}
\item Take a uniformly independent sample $S$ of size $\Gamma$ from $X$.
\item Run the $(\lambda, \gamma, \frac{\alpha}{2})$-approximation algorithm on $S$.
Suppose the output is $\Gamma$.
\item Accept if $\Gamma\leq \frac{(1+\eps/4)\lambda \Gamma}{|X|}\cdot \Delta$, and reject otherwise.
\end{enumerate}

By Theorem \ref{thm:robust_coreset}, with probability at least $1-\tau$, $S$ is an $(\frac{\alpha}{2},\frac{\eps}{4})$-robust coreset for $X$ \footnote{Recall that in Euclidean space, $S$ is actually an $(\frac{\alpha}{2},0)$-robust coreset, but the weaker guarantee is sufficient here.}.
In the following, we condition on the event that $S$ is an $(\frac{\alpha}{2},\frac{\eps}{4})$-robust coreset for $X$.
Hence, for any $C\in [X]^k$ and $\alpha<\gamma<1-\alpha$, we have
\begin{eqnarray}
\label{eq:robust_coreset}
(1-\eps/4)\cdot \frac{\kdist_z^{-(\gamma+\frac{\alpha}{2})}(X, C)}{|X|} \leq \frac{\kdist_z^{-\gamma}(S, C)}{|S|} \leq (1+\eps/4)\cdot \frac{\kdist_z^{-(\gamma-\frac{\alpha}{2})}(X, C)}{|X|}.
\end{eqnarray}

Recall that $\Lambda$ is the output of the $(\lambda, \frac{\alpha}{2})$-approximation algorithm.
Then by Definition~\ref{def:bicriteria} and Inequality~\eqref{eq:robust_coreset}, we have
\begin{align}
\label{eq:upper}
\Lambda< \lambda\cdot \min_{C\in [X]^k} \kdist_z^{-(\gamma-\frac{\alpha}{2})}(S, C)\leq \frac{(1+\eps/4)\lambda \Gamma}{|X|}\cdot \min_{C\in [X]^k} \kdist_z^{-(\gamma-\alpha)}(X, C),
\end{align}
and
\begin{align}
\label{eq:lower}
\Lambda \geq \min_{C\in [X]^k} \kdist_z^{-\gamma}(S, C)\geq \frac{(1-\eps/4)\Gamma}{|X|}\cdot \min_{C\in [X]^k} \kdist_z^{-(\gamma+\alpha)}(X, C)
\end{align}

If $\min_{C\in [X]^k} \kdist_z^{-(\gamma-\alpha)}(X, C)\leq \Delta$, we have $\Gamma \leq \frac{(1+\eps/4)\lambda \Gamma}{|X|}\cdot \Delta$ by Inequality~\eqref{eq:upper}.
In this case, our algorithm accepts.
On the other hand, if $\min_{C\in [X]^k} \kdist_z^{-(\gamma+\alpha)}(X, C)\geq \lambda(1+\eps) \cdot \Delta$, we have
\[
\Gamma \stackrel{\text{Ineq. \eqref{eq:lower}}}\geq \frac{(1-\eps/4)\Gamma}{|X|}\cdot \lambda(1+\eps) \cdot \Delta> \frac{(1+\eps/4)\lambda \Gamma}{|X|}\cdot \Delta.
\]
In this case, our algorithm rejects.
It completes the proof.
\end{proof}

\begin{remark}
	The $(\lambda, \gamma, \alpha)$-approximation algorithm for the $(k,z)$-clustering problem with outliers is used as a subroutine in our testing algorithm.  
	If we use exhaustive search, we obtain a $(1,\gamma,0)$-approximation algorithm with running time exponential in $|S|$ for $(k,z)$-clustering with outliers. 
	If we use the approximation algorithm for $k$-median
	by Charikar et al. \cite{charikar2001algorithms}, we have a $(4(1+\lambda^{-1}),\gamma,\lambda\gamma)$-approximation algorithm with running time polynomial in $|S|$ for the $(k,1)$-clustering problem with outliers.
\end{remark}

\subsection{Centroid Set and Fast Local Search Algorithm}
\label{sec:centroid}
In this section, we discuss the construction of centroid sets in doubling metrics.
By this construction, we can improve the running time of the PTAS for $(k,z)$-clustering in~\cite{DBLP:conf/focs/FriggstadRS16}.
We give the definition of centroid set as follows.

\begin{definition}
	\label{definition:centroid_set}
	Let $k \geq 1$ be an integer and $\eps, z > 0$.
	Let $M(X, d)$ be a metric space. Given a weighted point set $S\subseteq X$ with weight function $w: S\rightarrow \R_{\geq 0}$, an $(\eps, k, z)$-centroid set is a subset $H$ of points such that
	\begin{enumerate}
		\item  $S\subseteq H \subseteq X$.
		\item  there exists a $k$-point set $C \subseteq H$ such that, 
		\begin{align*}
		\sum_{x \in S}{w(x) \cdot d^z(x, C)} \leq (1 + \eps)\cdot \min_{C'\in [X]^k} \sum_{x \in S}{w(x) \cdot d^z(x, C')}.
		\end{align*}
	\end{enumerate}
\end{definition}

In other words, $H$ extends $S$ in the sense that a $(1+\eps)$-approximate solution to the weighted $(k, z)$-clustering instance $S$ is contained in $H$.
Then if $S$ is an $\eps$-coreset of $X$, we have a natural corollary that the centroid set $H$ must contain a $(1+2\eps)$-approximate solution for the $(k,z)$-clustering problem on $X$. 

The idea of centroid set was first introduced in~\cite{DBLP:journals/dcg/Matousek00},
for obtaining a constant approximation to the geometric $k$-means problem in bounded dimensional Euclidean spaces.
%
%
However, their construction cannot be readily applied to our setting, because it relies heavily on certain properties 
that are only available in Euclidean spaces and the $k$-means objective.

We present the first efficient construction of small sized centroid set in doubling metrics.
Although the size of our centroid set is slightly larger than that in Euclidean case (\cite{DBLP:journals/dcg/Matousek00}), it is still independent of $|X|$.

\noindent\textbf{Construction of Centroid Set.}
We start with an overview of the idea.
Assume the optimal clustering for $S$ is $C^\star = \left\{o_1,\ldots, o_k \right\}$.
Let $P_1, P_2, \ldots, P_k$ be the clustering of $S$ with respect to $C^\star$, where $o_i$ is the center of $P_i$.
An intuitive idea is that for each $o_i$, the centroid set $H$ should contain some point $x_i\in X$ such that $x_i$ and $o_i$ are ``close'' enough. 
For the sake of presentation, assume we are to approximate $o_1$ and $z = 1$.
Let $y := \arg\min_{y' \in P_1}{d(o_1, y')}$ be the closest point in $P_1$ to $o_1$.
Then a point $o\in X$ such that $d(o_1,o)\leq \eps\cdot d(o_1,y)$ is a good estimation of $o_1$, i.e., $\sum_{x\in P_1} w(x)\cdot d (x,o)\leq (1+\eps)\cdot \sum_{x\in P_1} w(x)\cdot d (x,o_1)$.
%
%
Hence, if we know $y$ and the distance $d(o_1, y)$, we can include an $\eps \cdot d(o_1, y)$-net around $y$ in $H$, such that $o_1$ must be covered by at least one net point within distance at most $\eps \cdot d(o_1, y)$.
Since $o_i$'s are unknown in the first place, we may need add a large enough net at every distance scale. Since there may be $\log{\diam(X)}$ distance scales, a naive implementation would have $|H|$ depending on $|X|$.
To resolve this issue, we consider the \emph{invariant intervals}.

\noindent\textbf{Invariant Intervals.}
As in Section~\ref{section:prelim}, rescale the metric space such that the minimum intra point distance is $1$, and 
let $L := \lceil \log{\diam(X)} \rceil$.
Let $\{ N_i \mid i \leq L \}$ denote a hierarchical net for the metric space.
Define $T$ as a simple net tree with respect to $\{ N_i \}_{i}$ (as in Section~\ref{section:unweighted}).
We partition $\{0, 1, 2, \ldots, L\}$ into invariant intervals (defined as follows.).

\begin{definition} (invariant interval)
	\label{def:invariant_interval}
An invariant interval is a maximal interval $[a, b)$, such that
for all integer $i \in [a, b)$ and $u\in S$, $\Des(\Par^{(i)} (u)) \cap S = \Des(\Par^{(i+1)} (u)) \cap S$.
\end{definition}
Let $\mathcal{I} := \{I_i\}_i$ be the collection of the invariant intervals listed in the increasing order.
For each invariant interval $[a,b)$, we consider two small sub-intervals $[a,a+3]$ and $[b - 6 - \lceil \log{\frac{1}{\eps}}\rceil, b-1]$.
For $j$ in these two sub-intervals, we enumerate all net points $u\in N_j$ with $\Des(u^{(j)}) \cap S \neq \emptyset$.
We include in $H$ all net points of height $\lfloor \log(\eps \cdot 2^j) \rfloor$ inside the ball $B(u, 5\cdot 2^j)$. By the packing property (Fact~\ref{fact:packing}), there are only $O(\frac{1}{\eps})^{\DDim(M)}$ such net points for each $u$ and $j$.
%

We argue that $H$ is an $(O(z\cdot \eps),k,z)$-centroid set by the above construction.
Again consider $o_1\in C^\star$ and $y = \arg\min_{y' \in P_1}{d(o_1, y')}$.
The key observation is that if $2^{a+4}\leq d(o_1,y)< 2^{b - 6 - \lceil \log{\frac{1}{\eps}}\rceil}$, then $y$ itself is already a good estimation of $o_1$.
Observe that $y$ is included in $H$, because $y\in P_1\subseteq S$ and we always include $S$ in $H$.
This is the reason why we do not need to enumerate $j$ among the sub-interval $[a+4, b-7- \lceil \log{\frac{1}{\eps}}\rceil]$.
For the remaining case that $2^a\leq d(o_1,y)< 2^{a+4}$ or $2^{b - 6 - \lceil \log{\frac{1}{\eps}}\rceil}\leq d(o_1,y)< 2^{b}$, by construction, there must exist a net point $o\in H$ such that $d(o,o_1)\leq \eps\cdot d(o_1,y)$. 
Then $o$ is already a good estimation of $o_1$.
%

\begin{algorithm}
	\caption{\textrm{Centroid-Set(S, w)}}
	\label{alg:centroid_set}
	Initially, let $H := S$. \\
	Construct a hierarchical net $\left\{N_i \mid i \leq L\right\}$ on $X$ and a simple net tree $T$ with respect to it. \\
	Partition $\{0, 1, \ldots, L\}$ into invariant intervals $\mathcal{I} = \{I_i\}_i$. \\
	For each $i\in [|\mathcal{I}|]$, assume $I_i := [a, b)$.
	 For all $j$ such that either $a\leq j \leq \max\{a + 3, b\}$ or $\min\{a, b - 6 - \lceil \log{\frac{1}{\eps}}\rceil\}\leq j\leq b-1$,
	and all $u \in N_j$ such that $\Des(u^{(j)}) \cap S \neq \emptyset$, do the following.
	\begin{itemize}
		\item[1)] Define $j'$ to be the integer such that $2^{j'} \leq \eps \cdot 2^j < 2^{j'+1}$.
		\item[2)] 
		Define $C^{(u)}_j := N_{j'} \cap \Bd(u, 5 \cdot 2^j)$.
		\item[3)] 
		Update $H := H \cup C^{(u)}_j$.
	\end{itemize}
\end{algorithm}

The main result is stated in the following theorem. 

	\begin{theorem}[centroid set]
		\label{theorem:app_local_search}
		Let $k \geq 1$ be an integer, $z > 0$ and $0<\eps<\frac{1}{z}$.
		Given a ground set $X$ and a weighted point set $S\subseteq X$ with weight function $w: S\rightarrow \R_{\geq 0}$, there is an algorithm running in $\poly(|X|)$ time, that finds an $(O(z\cdot \eps), k, z)$-centroid set of size at most $ O(\frac{1}{\eps})^{O(\DDim(M))}\cdot |S|^2$.
	\end{theorem}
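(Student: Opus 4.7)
The plan is to analyze Algorithm~\ref{alg:centroid_set} in two stages: a combinatorial size bound, and an approximation guarantee obtained by replacing each optimal center by a nearby point of $H$. The size bound reduces to counting invariant intervals and the local fine nets; the approximation guarantee reduces to a case analysis based on where $\lceil\log r_i\rceil$ lies within its invariant interval, where $r_i$ is the distance from each optimal center $o_i$ to its closest client in $S$.

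For the size bound, the family $\bigl\{\Des(\Par^{(i)}(u))\cap S : u\in S\bigr\}_i$ is a sequence of partitions of $S$ that monotonically coarsens in $i$, so the partition can change at most $|S|-1$ times; hence $|\mathcal{I}|\leq |S|$. Within each invariant interval $[a,b)$ the algorithm processes $O(\log(1/\eps))$ heights; at each such height $j$ the number of relevant centers $u\in N_j$ with $\Des(u^{(j)})\cap S\neq\emptyset$ is at most $|S|$; and for each such $u$, $C^{(u)}_j = N_{j'}\cap \Bd(u, 5\cdot 2^j)$ is a $2^{j'}$-packing of diameter $O(2^j)$, so $|C^{(u)}_j|\leq O(1/\eps)^{\DDim(M)}$ by Fact~\ref{fact:packing}. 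Multiplying these factors yields $|H|\leq |S|+O(|S|^2\log(1/\eps))\cdot O(1/\eps)^{\DDim(M)}=O(1/\eps)^{O(\DDim(M))}\cdot |S|^2$, and the running time is $\poly(|X|)$ because the hierarchical net, the simple net tree, the invariant intervals, and the local fine nets can each be built in polynomial time.

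For the approximation, fix an optimal $k$-center set $C^\star=\{o_1,\ldots,o_k\}$ for $(S,w)$ with induced clustering $P_1,\ldots,P_k$ of $S$, and set $y_i:=\arg\min_{y\in P_i}d(o_i,y)$ and $r_i:=d(o_i,y_i)$. The goal is to find $o_i'\in H$ with $d(o_i',o_i)\leq O(\eps)\cdot r_i$; given this, for every $x\in P_i$ the inequality $d(x,o_i)\geq r_i$ combined with the triangle inequality gives $d(x,o_i')\leq (1+O(\eps))d(x,o_i)$, whence $d^z(x,o_i')\leq (1+O(z\eps))d^z(x,o_i)$ (using $\eps z<1$), and summing over $i$ and $x\in P_i$ delivers the required $(1+O(z\eps))$-approximation with $C:=\{o_1',\ldots,o_k'\}\subseteq H$. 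To locate $o_i'$, let $[a,b)\in\mathcal{I}$ be the invariant interval containing $j_i:=\lceil\log r_i\rceil$. If $j_i\in[a,a+3]$ or $j_i\in[b-6-\lceil\log(1/\eps)\rceil,b-1]$, set $u:=\Par^{(j_i)}(y_i)$: Fact~\ref{fact:des_dis} gives $d(u,o_i)\leq d(u,y_i)+r_i\leq 2\cdot 2^{j_i}+r_i\leq 3\cdot 2^{j_i}<5\cdot 2^{j_i}$, and since $y_i\in\Des(u^{(j_i)})\cap S$ the algorithm included $C^{(u)}_{j_i}$ in $H$; then the $2^{j'}$-covering property of $N_{j'}$ with $2^{j'}\leq\eps\cdot 2^{j_i}$ produces an $o_i'\in C^{(u)}_{j_i}$ with $d(o_i',o_i)\leq 2^{j'}\leq 2\eps\cdot r_i$.

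The main obstacle is the middle case $j_i\in(a+3,\,b-6-\lceil\log(1/\eps)\rceil)$, where the algorithm skips height $j_i$. Following the overview before the theorem, the plan is to show that $y_i\in S\subseteq H$ itself can serve as $o_i'$: the invariance property forces $\Des(\Par^{(j_i)}(y_i))\cap S=\Des(\Par^{(a)}(y_i))\cap S$, so every $S$-point sharing an ancestor with $y_i$ at height $j_i$ was already within $O(2^a)\leq r_i/16$ of $y_i$ (using $r_i\geq 2^{a+4}$), while any other $S$-point has lowest common ancestor with $y_i$ at height $\geq b$ and is correspondingly separated at scale $\Omega(2^b)\geq\Omega(r_i/\eps)$ (using $r_i\leq 2^{b-6-\lceil\log(1/\eps)\rceil}$). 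I will use this dichotomy to argue that each $x\in P_i\setminus\{y_i\}$ either satisfies $d(x,y_i)\leq d(x,o_i)$ (so $d^z(x,y_i)\leq d^z(x,o_i)$ for free) or lies in the ``far'' branch with $d(x,o_i)\geq\Omega(1/\eps)\cdot r_i$, in which case the triangle inequality yields $d(x,y_i)\leq(1+O(\eps))d(x,o_i)$ and hence $d^z(x,y_i)\leq(1+O(z\eps))d^z(x,o_i)$. Making the ``far'' side of the dichotomy quantitative -- in particular turning the LCA-height lower bound into the needed $\Omega(r_i/\eps)$ separation in the original metric -- is the most delicate step and the part I expect to require the most care; a fallback, should the direct $y_i$ argument fail for some $x\in P_i$, is to use the invariance to transport the required ball containment from the skipped height $j_i$ down to a scanned boundary height and place $o_i'$ inside one of the fine nets $C^{(\cdot)}_{a+3}$ or $C^{(\cdot)}_{b-6-\lceil\log(1/\eps)\rceil}$.
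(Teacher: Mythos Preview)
Your proposal is correct and mirrors the paper's argument almost step for step: the size bound via the partition-coarsening count of invariant intervals, the boundary-height fine-net replacement, and in the middle case the choice $o_i':=y_i$ with the near/far dichotomy driven by invariance. The paper formalizes your dichotomy by partitioning $P_i$ into sets $Q_1,\ldots,Q_l$ according to height-$j$ ancestors in the net tree; the piece $Q_1\ni y_i$ has diameter at most $8\cdot 2^a\leq d(y_i,o_i)$ because invariance collapses it to the height-$a$ subtree, while for $t\geq 2$ the $2^{b-1}$-packing of $N_{b-1}$ together with invariance gives $d(Q_1,Q_t)\geq 2^{b-4}$ and hence $d(x,o_i)\geq 2^{b-5}\geq r_i/\eps$ for every $x\in Q_t$. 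This is precisely your ``far'' branch, and it goes through cleanly---so your fallback is unnecessary. One cosmetic difference: the paper takes $j$ with $2^j\leq r_i<2^{j+1}$ rather than your $j_i=\lceil\log r_i\rceil$, which shifts a few constants but is immaterial.
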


	\begin{proof}
		The construction of the centroid set is provided in	Algorithm~\ref{alg:centroid_set}.
		We need to prove the algorithm satisfies the desired properties.
		Obviously Algorithm~\ref{alg:centroid_set} runs in polynomial time.
		So, we only need to analyze the size and the correctness.

	\noindent\textbf{Size Analysis.}
	We first bound the number of invariance intervals.
	For each $i\leq L$, define $\Delta_i: = \left\{\Des(\Par^{(i)} (u)) \cap S\mid u\in S\right\}$. 
	By definition, we have $1\leq |\Delta_i|\leq |S|$.
	Note that for any $p,q\in S$, either $\Des(\Par^{(i)} (p))=\Des(\Par^{(i)} (q))$ or $\Des(\Par^{(i)} (p))\cap \Des(\Par^{(i)} (q))=\emptyset$.
	Then $\Delta_i$ must form a partition of $S$.
	On the other hand, for any $u\in S$, $\Des(\Par^{(i+1)} (u)) \cap S$ is a union of some non-empty point sets $\Des(\Par^{(i)} (v)) \cap S$.
	It implies that $|\Delta_{i+1}|\leq |\Delta_i|$.
	For any invariant interval $[a,b)$, we have $\Delta_a = \ldots = \Delta_{b-1}\neq \Delta_b$ by definition.
	Therefore, $|\Delta_b|\leq |\Delta_{b-1}|-1$, i.e., $|\Delta_i|$ must decrease at least 1 when $i$ goes through an invariant interval.
	Thus, there are at most $|S|$ invariant intervals.

	By the definition of $C_{j}^{(u)}$, we have $|C_j^{(u)}| \leq O(\frac{1}{\eps})^{\DDim(M)}$ because of the packing property (Fact~\ref{fact:packing}).
	Moreover, for each invariant interval $[a,b)$, we enumerate at most $O(\log{\frac{1}{\eps}})$ integers $j\in [a,b)$. 
	For each height $j$, there are at most $|S|$ points $u\in N_j$ such that $\Des(u^{(j)}) \cap S \neq \emptyset$. Therefore, we add at most $ \log\frac{1}{\eps}\cdot |S| \cdot O(\frac{1}{\eps})^{O(\DDim(M))}$ net points to $H$ inside $[a,b)$.
	Since there are at most $|S|$ invariant intervals, we have $|H| \leq O(\frac{1}{\eps})^{O(\DDim(M))} \cdot |S|^2$, as desired.
	
\eat{	\noindent\textbf{Running Time Analysis.}
	The term $|S| \cdot \log{\frac{1}{\eps}}\cdot n$ in the running time also follows from the above analysis, and the term $n^2$ comes from the construction of the hierarchical net $\{N_i\}_i$.
}

	\noindent\textbf{Correctness Analysis.}
	Suppose $C^\star:=(o_1, o_2, \ldots, o_k) \in [X]^k$ is the centers for the optimal clustering of $S$. 
	Let $P_1, P_2, \ldots, P_k$ be the clustering of $S$ with respect to $C^\star$, where $o_i$ is the center of $P_i$. 
	That is, $P_i := \{ x \in S \mid o_i = \arg\min_{o \in C^\star}{d(x,o)} \}$ (we break ties arbitrarily).
	In Claim~\ref{claim:alternative_center}, we show that for each $i\in [k]$, $o_i$ has a good approximation in $H$.

	\begin{claim}
		\label{claim:alternative_center}
		For each $i \in [k]$, there exists $o_i' \in H$ such that
		$\sum_{x \in P_i}{d^z(x, o_i')\cdot w(x)} \leq (1+\eps)^z\cdot \sum_{x \in P_i}{d^z(x, o_i)\cdot w(x)}$.
	\end{claim}
	\begin{proof}
		Fix some $i$.
		Let $y := \arg\min_{y' \in P_i}{d(o_i, y')}$, so $y\in S$.
		Define $j$ to be the integer such that $2^j \leq d(y, o_i) < 2^{j+1}$, and $j'$ to be the integer such that $2^{j'} \leq \eps \cdot 2^j < 2^{j'+1}$.
		Let $o$ be the nearest point in $N_{j'}$ to $o_i$.
		Then since $N_{j'}$ is a $2^{j'}$-covering, we have $d(o_i, o) \leq 2^{j'}\leq \eps\cdot 2^j \leq \eps\cdot d(y, o_i)$. Note that
		\begin{align*}
			\sum_{x \in P_i}{w(x) \cdot d^z(x, o)}
			&\leq \sum_{x \in P_i}{w(x)\cdot (d(x, o_i) + d(o_i, o))^z} \\
			&\leq \sum_{x \in P_i}{w(x) \cdot (d(x, o_i) + \eps\cdot d(y, o_i))^z} \\
			&\leq (1+\eps)^z\cdot  \sum_{x \in P_i}{w(x) \cdot d^z(x, o_i)}.
		\end{align*}
		Hence, if $o \in H$, we pick $o_i' := o$. By the above argument, it completes the proof.

	Thus, we only need to consider the case that
	$o \notin H$. Let $I := [a, b)$ be the invariant interval that contains $j$.
	Let $u := \Par^{(j)}(y)$. 
	Since $T$ is a simple net tree, we have $d(u,y)\leq 2^{j+1}$ by Fact~\ref{fact:simple}.
	Moreover, $d(o,o_i)\leq \eps \cdot d(y,o_i)< 2^{j}$. By the triangle inequality, we have
	\[
	d(u, o) \leq d(u,y)+ d(y, o_i) + d(o, o_i) \leq 5 \cdot 2^{j}.
	\]
	Thus, if $a\leq j \leq a + 3$ or $b - 6 - \lceil \log{\frac{1}{\eps}}\rceil\leq j\leq b-1$, then $o \in C_j^{(u)}\subseteq H$. 
	However, recall that $o \notin H$. 
	So we must have $a + 4 \leq j \leq b - 7 - \lceil \log{\frac{1}{\eps}} \rceil$. 
	%
	Also, if $b\leq a+11$, then no such $j$ exists.
	Thus, we consider the case that $b\geq a+12$ in the following.
	%
	%
	
	Let $U := \{ u \in N_j \mid \Des(u^{(j)}) \cap P_i \neq \emptyset \}$ be the set of height-$j$ net points $u$ such that the subtree $\Des(u^{(j)})$ has nonempty intersection with $P_i$, and we write $U = \{u_1, u_2, \ldots, u_l\}$.
	Define $Q_t := \Des(u^{(j)}_t) \cap P_i$, for $t\in [l]$. W.l.o.g., assume $y \in Q_1$.
	%
	%
	We show that picking $o_i' := y$ is sufficient, that is,
	\begin{align*}
		\sum_{x \in P_i}{w(x)\cdot d^z(x, y)} \leq (1+\eps)^z \cdot \sum_{x \in P_i}{w(x) \cdot d^z(x, o_i)}.
	\end{align*}
	\noindent\textbf{Analyzing $Q_1$.}
	By the definition of the invariant interval,
	there exists $v \in N_a$ such that $\Des(u^{(j)}_{1}) \cap S = \Des(v^{(a)}) \cap S$. Hence, $\diam(Q_1) \leq  \diam(\Des(v^{(a)})) \leq 8 \cdot 2^a$. (Note that this diameter bound also holds for $Q_t$ for $t \in [l]$.)
	By the definition of $y$, $d(y, o_i) \geq 2^j \geq 8 \cdot 2^a $.
	Therefore,
	\begin{align*}
		\sum_{x \in Q_1}{w(x) \cdot d^z(x, y)}
		&\leq \sum_{x \in Q_1}{ w(x) \cdot (8 \cdot 2^a)^z } 
		\leq \sum_{x \in Q_1}{w(x) \cdot d^z(y, o_i)} \\
		&\leq (1+ \eps)^z\cdot \sum_{x \in Q_1}{w(x) \cdot d^z(y, o_i)}  \\
		&\leq (1+\eps)^z \cdot \sum_{x \in Q_1}{w(x) \cdot d^z(x, o_i)}. \quad (\text{Definition of $y$})
	\end{align*}
	
	\noindent\textbf{Analyzing $Q_2,Q_3, \ldots, Q_l$.}
	%
	By the definition of invariant interval, we must have $d(Q_{t_1}, Q_{t_2}) > 2^{b-4}$ for any $1 \leq t_1 < t_2 \leq l$.
	%
	%
	To see this, since $[a,b)$ is an invariant interval,
	there exits $v_1,v_2\in N_{b-1}$ such that 
	$Q_{t_1} = \Des(v^{(b-1)}_{1}) \cap P_i$ and $Q_{t_2} = \Des(v^{(b-1)}_{2}) \cap P_i$.
	Since $Q_{t_1}\cap Q_{t_2}=\emptyset$, we have $v_1\neq v_2$.
	It implies that $d(v_1,v_2)\geq 2^{b-1}$ since $N_{b-1}$ is a $2^{b-1}$-packing.
	On the other hand, by the same argument for $Q_1$, we have $\diam(Q_{t_1})\leq 8\cdot 2^a$ and $\diam(Q_{t_2})\leq 8\cdot 2^a$.
	By the assumption that $b\geq a+12$,
	\[
	d(Q_{t_1}, Q_{t_2}) \geq d(v_1,v_2)-\diam(Q_{t_1})-\diam(Q_{t_2}) \geq 2^{b-1}-16\cdot 2^a \geq 2^{b-4}.
	\]
	
	By the triangle inequality, for $2\leq t \leq l$,
	$d(Q_t, o_i)
	\geq d(y, Q_t) - d(y, o_i)
	\geq d(Q_1, Q_t) - d(y, o_i)
	\geq 2^{b-4} - 2^j \geq  2^{b-5}$.
	Therefore, for $2\leq t \leq l$,
	\begin{align*}
		\sum_{x \in Q_t}{w(x) \cdot d^z(x, y)}
		&\leq \sum_{x\in Q_t}{w(x) \cdot ( d(x, o_i) + d(o_i, y) )^z} \quad (\text{triangle ineq.})\\
		&\leq \sum_{x \in Q_t}{w(x) \cdot ( d(x, o_i) + 2^{j+1} )^z} \\
		&\leq \sum_{x \in Q_t}{w(x) \cdot ( d(x, o_i) + \eps\cdot 2^{b-5}  )^z}  \quad (j\leq b-7-\lceil \log{\frac{1}{\eps}} \rceil)  \\
		&\leq (1+\eps)^z\cdot \sum_{x \in Q_t}{w(x) \cdot d^z(x, o_i)}. \quad (d(x,o_i)\geq d(Q_t, o_i)\geq 2^{b-5}).
	\end{align*}
	In conclusion, we have that
	\begin{align*}
		\sum_{x \in P_i}{w(x) \cdot d^z(x, y)}
		&= \sum_{t \in [l]}{ \sum_{x \in Q_t}{w(x) \cdot d^z(x, y)} } \\
		&\leq (1+\eps)^z\cdot\sum_{t \in [l]}{  \sum_{x\in Q_t}{w(x) \cdot d^z(x, o_i)} } \\
		&= (1+\eps)^z\cdot\sum_{x \in P_i}{w(x) \cdot d^z(x, o_i)}.
	\end{align*}
		This finishes the proof of Claim~\ref{claim:alternative_center}.
	\end{proof}

	Let $C' := \{ o_1', o_2', \ldots, o_k' \}$ be the centers asserted in Claim~\ref{claim:alternative_center}. Then we have
	\begin{align*}
		\sum_{x \in S}{w(x)\cdot d^z(x, C')}
		&\leq \sum_{i \in [k]}{\sum_{x \in P_i}{ w(x) \cdot  d^z(x, o_i') }} \\
		&\leq (1+\eps) \cdot \sum_{i \in [k]}{ \sum_{x \in P_i}{w(x)\cdot d^z(x, o_i)} } \\
		&= (1+\eps)\cdot \sum_{x \in S}{w(x)\cdot d^z(x, C)}.
	\end{align*}
	By the fact that $(1+\eps)^z \leq 1 + O(z\cdot \eps)$, we finish the proof of Theorem~\ref{theorem:app_local_search}.
	\end{proof}

\subsubsection{Accelerating Local Search Algorithms}
Recently, Friggstad et al.~\cite{DBLP:conf/focs/FriggstadRS16}
analyzed the local search algorithm for the $(k, z)$-clustering problem in a doubling metric $M$.
The local search algorithm considered in~\cite{DBLP:conf/focs/FriggstadRS16} works as follows:
in each iteration, the algorithm tests if it is possible to swap at most $\rho$ centers to improve the objective; if it is possible, the algorithm swaps arbitrary $\rho$ centers that improve the objective, and terminates otherwise.\jian{be a bit more precise about the algorithm. add one sentence about the condition. e.g., as long as the cost decreases}\shaofeng{addressed.}
They showed that for $\rho := \rho(\eps, \DDim(M), z)$, the number of iterations 
is polynomial in the input size, and the output is a $(1+\eps)$-approximation for the $(k, z)$-clustering problem. The running time for each iteration is roughly $n^{\rho}$.

For the special case of bounded dimensional Euclidean spaces, 
Friggstad et al. showed that by utilizing the centroid set~\cite{DBLP:journals/dcg/Matousek00}, one can improve the running time to roughly $(k/\eps)^{O(\rho)}$
per iteration.
Since succinct centroid sets for doubling metrics were not known, they did not achieve an improved running time for the more general doubling metrics.

We show how to improve the running time per iteration in doubling metrics.
As noted in~\cite{DBLP:conf/focs/FriggstadRS16}, the local search algorithm actually works on weighted instance.
In the improved algorithm, we first construct an $\eps$-coreset using Theorem~\ref{thm:coreset}, denoted as $S$ with weight $w$.
Then by applying Theorem~\ref{theorem:app_local_search} on $(S,w)$ with error parameter $O(\frac{\eps}{z})$, we get an $(O(\eps),k,z)$-centroid set $H$. 
Extend $w': H \rightarrow \mathbb{R}_{\geq 0}$ such that $w'(x) = w(x)$ for $x \in S$ and $w(x) = 0$ otherwise.
Then, we apply the local search algorithm on the weighted instance $(H, w')$.
In this way, we achieve a per iteration running time of $O(2^{O(z\log z)} \cdot \frac{k}{\eps}\cdot \log\frac{1}{\tau})^{O(\rho)}$, by a $\poly(|X|)$ preprocessing time. This is comparable to the result for Euclidean spaces in~\cite{DBLP:conf/focs/FriggstadRS16}.

We summarize our conclusion in the following corollary.
\begin{corollary}
	\label{corollary:fast_local_search}
	Let $M(X, d)$ be a (finite) metric space, and consider the $(k, z)$-clustering problem in $M$.
	The local search algorithm for the $(k, z)$-clustering problem that swaps $\rho := \DDim(M)^{O(\DDim(M))} \cdot (\frac{2^z}{\eps})^{O(2^z)\cdot \DDim(M)\cdot \eps^{-1}}$ centers (as defined in~\cite{DBLP:conf/focs/FriggstadRS16}) in each iteration, gives a $(1+\eps)$-approximate solution after polynomial (in the input size) number of iterations. 
	Furthermore, with a $\poly(|X|)$-time preprocessing procedure that succeeds with probability at least $1- \tau$, the local search algorithm runs in $(2^{O(z\log z)}\cdot \frac{k}{\eps}\cdot \log{\frac{1}{\tau}})^{O(\rho)}$ time per iteration.
\end{corollary}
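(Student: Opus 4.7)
The plan is to combine three ingredients already available in the paper: the $\eps$-coreset of Theorem~\ref{thm:coreset}, the centroid set of Theorem~\ref{theorem:app_local_search}, and the local-search analysis of Friggstad et al.~\cite{DBLP:conf/focs/FriggstadRS16}. First, I would run the coreset algorithm on $X$ with error parameter $\Theta(\eps)$ and failure probability $\tau$, producing a weighted subset $(S,w)$ of size $|S| = \tilde{O}(2^{O(z\log z)} k^3 \DDim(M)/\eps^2)$ (absorbing the $\log(1/\tau)$ terms) in $\poly(|X|)$ time. Conditioned on success, the $(k,z)$-clustering cost on $(S,w)$ is within a $(1\pm\Theta(\eps))$ factor of the cost on $X$ for every choice of $k$ centers. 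Next, apply Theorem~\ref{theorem:app_local_search} to $(S,w)$ with error parameter $\Theta(\eps/z)$, producing an $(O(\eps),k,z)$-centroid set $H \supseteq S$ of size $|H| = (z/\eps)^{O(\DDim(M))} \cdot |S|^2 = \bigl(2^{O(z\log z)}\cdot k/\eps \cdot \log(1/\tau)\bigr)^{O(1)}$ in $\poly(|X|)$ time, where the $\DDim(M)^{O(\DDim(M))}$ factor is folded into the $\rho$-dependent constant in the per-iteration bound. Extend $w$ to $w' : H \to \mathbb{R}_{\geq 0}$ by $w'(x) := w(x)$ for $x \in S$ and $w'(x) := 0$ otherwise. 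This completes the preprocessing.

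Second, I would run the local-search algorithm of~\cite{DBLP:conf/focs/FriggstadRS16} on the weighted instance $(H, w')$, but restrict every $\rho$-swap to choose new centers only from $H$ itself. Each iteration then enumerates at most $|H|^{O(\rho)}$ candidate swaps, giving per-iteration cost $(2^{O(z\log z)}\cdot k/\eps \cdot \log(1/\tau))^{O(\rho)}$. The number of iterations stays polynomial by the standard local-search termination argument, since the weighted cost strictly decreases at every successful swap and the initial cost is at most $\poly(|X|)$ times the minimum.

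For correctness, the chain of guarantees is as follows. Let $C^\star \in [X]^k$ be an optimal $(k,z)$-clustering of $X$; by the coreset property, $\cost_{(S,w)}(C^\star) \in (1\pm\Theta(\eps))\cdot \OPT$. By the centroid-set property applied to $(S,w)$, there exists $C^\dagger \subseteq H$ with $\cost_{(S,w)}(C^\dagger) \leq (1+O(\eps))\cdot \min_{C\in [X]^k} \cost_{(S,w)}(C)$, so in particular $\cost_{(S,w)}(C^\dagger) \leq (1+O(\eps))\cdot \cost_{(S,w)}(C^\star)$. Applying the local-search analysis of~\cite{DBLP:conf/focs/FriggstadRS16} to $(H,w')$ with candidate centers drawn from $H$ then yields a local optimum whose cost on $(H,w')$ is at most $(1+\eps)\cdot \cost_{(H,w')}(C^\dagger) = (1+\eps)\cdot \cost_{(S,w)}(C^\dagger)$. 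Translating back to $X$ via the coreset guarantee introduces one more $(1\pm\Theta(\eps))$ factor, and rescaling $\eps$ at the outset by a suitable constant yields the claimed $(1+\eps)$-approximation.

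The main obstacle is the last implication in the chain: one has to verify that the Friggstad et al.\ local-search analysis still produces a $(1+\eps)$-approximation when the candidate set is restricted to $H$ rather than the entire ground metric. Their argument ultimately uses an exchange/netting scheme inside the doubling metric, which remains available to us because $(H, d|_H)$ inherits doubling dimension $O(\DDim(M))$ and, crucially, the centroid-set property ensures that a $(1+O(\eps))$-approximate optimum for the weighted instance $(S,w)$ actually lies within $H$. This is precisely the role the centroid set plays here: it lets the local-search analysis ``pretend'' that $H$ is the ground set without losing more than a $(1+O(\eps))$ factor against the true optimum. A secondary bookkeeping obstacle is carefully composing the three $(1+O(\eps))$ factors (coreset, centroid set, local search) and folding all $\DDim(M)$-dependent exponents into the constant hidden inside $\rho$, which is what allows the final per-iteration bound to depend only on $k$, $\eps$, $z$, and $\log(1/\tau)$.
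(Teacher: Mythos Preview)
Your proposal is correct and follows essentially the same approach as the paper: build an $\eps$-coreset $(S,w)$ via Theorem~\ref{thm:coreset}, extend it to a centroid set $H$ via Theorem~\ref{theorem:app_local_search}, pad the weights by zero on $H\setminus S$, and run the Friggstad et al.\ local search on the weighted instance $(H,w')$. The paper states the corollary immediately after sketching precisely this pipeline, and your correctness chain (coreset $\to$ centroid set $\to$ local-search guarantee $\to$ translate back) and the bookkeeping of absorbing the $\DDim(M)$-dependent factors into the $O(\rho)$ exponent match the paper's intent.
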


\section{Acknowledgment}
We are grateful to Robert Krauthgamer for fruitful discussions and insightful comments, particularly for pointing to us the snowflake embedding literature, and the example in Proposition~\ref{prop:expander}.

\bibliographystyle{alpha}
\bibliography{main}

\appendix

\section{Proof of Lemma~\ref{lemma:restate_weak_app}}
\label{sec:approximation}
By classical results (e.g.,~\cite{DBLP:journals/jcss/LiLS01}), a small sized $\alpha$-approximation can be constructed efficiently by random sampling
for range space with bounded shattering dimension.
But such results do not suffice for us since we only have a bound for the probabilistic dimension
(Theorem~\ref{theorem:weighted}).

Our proof idea is similar to the analysis of $\alpha$-net\footnote{$\calS$ is an $\alpha$-net for range space $(\calF, \ranges(\calF))$, if for any $\calR\in \ranges(\calF)$, $\frac{|\calR|}{|\calF|} > \alpha$ implies $\calR \cap \calS \neq \emptyset$.} for range spaces of bounded VC dimension (see~\cite{kearns1994introduction}),
where the key technique is the double sampling.
In their proof, to show an independent sample $\calS$ of size $m$ is an $\alpha$-approximation with constant probability,
they alternatively first draw $2m$ independent samples $\calP$ and then draw $m$ independent samples from $\calP$ to form $\calS$.
Then they showed that it suffices to prove $\calS$ is an $\alpha$-net for functions projected on $\calP$.
As an important step that uses the bounded dimension, they applied Sauer's Lemma~\cite{sauer1972density} on $\calP$ to show that the number of distinct projected functions is bounded. Finally, this makes it possible to use a union bound on the projected functions, so that even a small sample can result in an $\eps$-net with high probability.

In our proof, we need to use the weaker guarantee of dimension in replacement of the Sauer's lemma. In addition,
since we have two very different source of randomness, namely $\calS$ and $\calF$, we need separate the two in the double sampling argument.
We use the fact that $\calF$ is a random indexed function set with index set $V$, and do the double sampling on $V$. In particular, suppose $P$ is a size-$2m$ sample from $V$ and $S$ is a size-$m$ sample from $P$.
Then we define $\calP := \calF_P$ and $\calS := \calF_S$.
In this way, we separated the randomness of $\calS$ and $\calF$.

We follow mostly the same double sampling argument.
In the final step, we additionally introduce a conditional probability argument to use the randomness of $\calF$, where the condition guarantees that $|\ranges(\calP)|$ is bounded, which replaces the use of Sauer's lemma in the original proof. 

\begin{lemma}[restatement of Lemma~\ref{lemma:restate_weak_app}]
	\label{lm:balltoapp}
Suppose $\calF$ is a random indexed function set with index set $V$. In addition, suppose $T : \mathbb{N} \times \mathbb{R}_{\geq 0}$ satisfies for any $H \subseteq V$ and $0<\gamma<1$,
\begin{align*}
\Pr[|\ranges(\calF_H)| \leq T(|H|, \gamma) ] \geq 1 - \gamma.
\end{align*}
%
%
Let $\calS$ be a collection of $m$ uniformly independent samples from $\calF$. 
Then with probability at least $1-\tau$, $\calS$ is an $\alpha$-approximation of the range space $(\calF, \ranges(\calF))$, where the randomness is taken over $\calS, \calF$ and
\begin{align*}
\alpha := \sqrt{\frac{48\left(\log(T(2m, \frac{\tau}{4})) + \log{\frac{8}{\tau}}\right)}{m}}.
\end{align*}
\end{lemma}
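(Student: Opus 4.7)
The plan is to adapt the classical double sampling argument from the PAC learning proof of $\alpha$-approximations (see e.g., Kearns and Vazirani), but to carefully split the two independent sources of randomness — the random indexed function family $\calF$ and the uniform sample $\calS$. The key observation is that because $\calF$ is indexed by a \emph{deterministic} set $V$, I can equivalently do the sampling on $V$ first, and only then apply the random map $x \mapsto f_x$. This separation lets me condition on a favorable realization of $\calF$ restricted to the double sample without ever having to control $\calF$ globally.

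Concretely, I would first draw an index multiset $P \subseteq V$ of size $2m$ uniformly at random, and then sub-sample $S \subseteq P$ of size $m$ uniformly at random; by symmetry, $\calF_S$ has the same distribution as the $m$-sample $\calS$ in the statement. Let $\mathcal{B}$ be the event that $\calS$ is \emph{not} an $\alpha$-approximation, i.e. there exists $\calR \in \ranges(\calF)$ with
\[
\Bigl| \tfrac{|\calR|}{|\calF|} - \tfrac{|\calR \cap \calS|}{|\calS|} \Bigr| > \alpha.
\]
Following the standard symmetrization step, I would show that $\mathcal{B}$ is dominated (up to a factor of $2$) by the event $\mathcal{B}'$ that there exists $\calR \in \ranges(\calF)$ whose empirical density on $\calS$ differs from its empirical density on $P \setminus \calS$ by more than $\alpha/2$. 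Because for any fixed $\calR$ this discrepancy only depends on $\calR \cap \calP$, it suffices to quantify over $\ranges(\calP) := \{\calR \cap \calP : \calR \in \ranges(\calF)\}$, which is in bijection with the projected range space on the double sample.

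At this point I would condition on the random function family $\calF$ via the event
\[
\mathcal{E} := \{ |\ranges(\calP)| \leq T(2m, \tau/4)\},
\]
which, by the hypothesis applied to $H = P$, has $\Pr[\mathcal{E}] \geq 1 - \tau/4$ (taking the probability over $\calF$ only, since $P \subseteq V$ is fixed). Conditioned on $\mathcal{E}$ and on $P$, I would use a union bound over the at most $T(2m, \tau/4)$ projected ranges together with a Hoeffding/permutation inequality for sampling $S$ without replacement from $P$: for each fixed subset $\calR \cap \calP$, the probability that a random equal-split of $P$ into $\calS$ and $P \setminus \calS$ produces an empirical gap exceeding $\alpha/2$ is at most $2 \exp(-\alpha^2 m / 24)$. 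Combining,
\[
\Pr[\mathcal{B}' \mid \mathcal{E}] \leq 2\, T(2m, \tfrac{\tau}{4}) \exp\!\bigl(-\alpha^2 m / 24\bigr).
\]
Plugging in the choice $\alpha = \sqrt{48(\log T(2m, \tau/4) + \log(8/\tau))/m}$ makes this at most $\tau/4$, so $\Pr[\mathcal{B}] \leq 2\,(\Pr[\mathcal{B}' \mid \mathcal{E}] + \Pr[\bar{\mathcal{E}}]) \leq \tau$, which is the desired bound.

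The main obstacle I expect is the separation of randomness in the symmetrization step: the classical proof takes $\ranges(\calF)$ to be deterministic, so it can freely swap the order of sampling and choosing a range, whereas here $\ranges(\calF)$ itself is random. I would resolve this by noting that the symmetrization argument is a statement about a \emph{fixed} range space combined with a uniform index sample, so I can first fix any realization of $\calF$, carry out the symmetrization with respect to that realization, and only afterwards take expectation over $\calF$ via the conditioning on $\mathcal{E}$. The secondary subtlety is the Hoeffding bound for without-replacement sampling, which is standard (e.g.\ via Serfling's inequality) but must be applied to the conditional distribution of $\calS$ given $P$, not to the marginal distribution of $\calS$ over $V$.
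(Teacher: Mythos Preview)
Your proposal is correct and follows essentially the same route as the paper: double sampling on the index set $V$, symmetrization to reduce to a discrepancy event on the double sample $\calP$, conditioning on the event $\mathcal{E}=\{|\ranges(\calP)|\le T(2m,\tau/4)\}$ (which holds with probability $\ge 1-\tau/4$ over $\calF$ for each fixed $P$), and a union bound plus a concentration inequality for the sub-sample $S\subset P$. The only cosmetic differences are that the paper compares $\calS$ against $\calP$ (with threshold $\alpha/4$) rather than against $\calP\setminus\calS$ (with threshold $\alpha/2$), and invokes the Chernoff bound for negatively correlated variables instead of Serfling's inequality---both yield the same final constant $48$.
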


\begin{proof}
	
	Let $P$ be an independent sample from $V$ of size $2m$, and let $S$ be a size-$m$ uniform sample from $P$. Then define $\calS := \calF_S$ which is the final set we require in the lemma.
	It is easy to see $\calS$ is an independent sample of size $m$.
	Also, define $\calP := \calF_P$. Note that $P$ and $S$ are independent of the choice of $\calF$, but $\calP$ and $\calS$ have the randomness of $\calF$.
	
	Let $A$ be the event that there exists $\calR\in \ranges(\calF)$ such that $\left| \frac{|\calR|}{|\calF|} - \frac{|\calS\cap \calR|}{|\calS|} \right| > \alpha$.
	It suffices to prove $\Pr[A] \leq \tau$.
	Let $B$ be the event that there exists $\calR\in \ranges(\calF)$ such that
	$\left| \frac{|\calR|}{|\calF|} - \frac{|\calS\cap \calR|}{|\calS|} \right| > \alpha $ and
	$\left| \frac{|\calP \cap \calR|}{|\calP|} - \frac{|\calS\cap \calR|}{|\calS|} \right| > \frac{\alpha}{4}$.
	
	\begin{claim}
		\label{claim:compare_A_B}
		$\Pr[A] \leq 2 \Pr[B]$.
	\end{claim}
	\begin{proof}
		It suffices to show that $\Pr[\overline{B} \mid A] \leq \frac{1}{2}$, since it would imply
		\begin{align*}
			\Pr[B]
			= \Pr[A \cap B]
			= \Pr[B \mid A] \cdot \Pr[A]
			\geq \frac{1}{2}\Pr[A].
		\end{align*}
		
		Since
		$\Pr[\overline{B} \mid A] = \E_{\calS,\calF}[\Pr[\overline{B} \mid A, \calS, \calF]]$,
		it suffices to show $\Pr[\overline{B} \mid A, \calS, \calF] \leq \frac{1}{2}$
		for any $A$, $\calS$ and $\calF$.
		Since $A$ happens with given $\calF, \calS$, there exists $\calR_0$ (defined with respect to $\calF, \calS$) such that $ \left| \frac{|\calR_0|}{|\calF|} - \frac{|\calS\cap \calR_0|}{|\calS|} \right| > \alpha$. Fix any such $\calR_0$.
		Then the event $(\overline{B}\mid A, \calS, \calF)$ implies $\left| \frac{|\calP\cap \calR_0|}{|\calP|} - \frac{|\calS\cap \calR_0|}{|\calS|} \right| \leq \frac{\alpha}{4}$.
		Let $\calZ := \calP\backslash \calS$.
		The event $(\overline{B}\mid A,\calS, \calF)$ implies
		\begin{align*}
			\left|\frac{|\calZ\cap \calR_0|}{|\calZ|} - \frac{|\calR_0|}{|\calF|}\right|
			&= \left| \frac{|\calZ\cap \calR_0|}{m} - \frac{|\calS\cap \calR_0|}{|\calS|} + \frac{|\calS\cap \calR_0|}{|\calS|} - \frac{|\calR_0|}{|\calF|} \right| \\
			&\geq \left|\frac{|\calR_0|}{|\calF|} - \frac{|\calS\cap \calR_0|}{|\calS|} \right|
				- \left|\frac{|\calZ\cap \calR_0|}{m} - \frac{|\calS\cap \calR_0|}{m}\right| \\
			&= \left|\frac{|\calR_0|}{|\calF|} - \frac{|\calS\cap \calR_0|}{|\calS|} \right| - 2\cdot \left|\frac{|\calZ \cap \calR_0|}{2m} - \frac{|\calS\cap \calR_0|}{2m}\right| \\
			&= \left|\frac{|\calR_0|}{|\calF|} - \frac{|\calS\cap \calR_0|}{|\calS|} \right| - 2\cdot \left|\frac{|\calP\cap \calR_0|}{|\calP|} - \frac{|\calS\cap \calR_0|}{|\calS|}\right| \\
			&> \alpha - \frac{\alpha}{2} \\
			&= \frac{\alpha}{2}.
		\end{align*}
		
		Therefore,
		\begin{align*}
			\Pr[\overline{B} \mid A, \calS, \calF]
			\leq \Pr\left[\left|\frac{|\calZ \cap \calR_0|}{m} - \frac{|\calR_0|}{|\calF|}\right| > \frac{\alpha}{2} \mid A, \calS, \calF \right]
			\leq 2\exp\left(- 2m \cdot \frac{\alpha^2}{4} \right)
			\leq \frac{1}{2},
		\end{align*}
		where the second last inequality is by Hoeffding's inequality, using the randomness of $P$:
		because $\E[|\calZ \cap \calR_0| \mid A, \calS, \calF] = \frac{|\calR_0|}{|\calF|}\cdot m$ and $|\calZ \cap \calR_0|$ may be viewed as the sum of $m$ independent $\left\{0, 1\right\}$ random variables such that each variable takes value $1$ with probability $\frac{|\calR_0|}{|\calF|}$.
		
		This finishes the proof of Claim~\ref{claim:compare_A_B}.
	\end{proof}

	Recall that our goal is to prove that $\Pr[A]\leq \tau$. 
	By Claim~\ref{claim:compare_A_B}, it suffices to prove $\Pr[B] \leq \frac{\tau}{2}$.

	For $\calR\in \ranges(\calF)$, define $B_\calR$ to be the event that $\left| \frac{|\calR|}{|\calF|} - \frac{|\calS\cap \calR|}{|\calS|} \right| > \alpha $ and
	$\left| \frac{|\calP \cap \calR|}{|\calP|} - \frac{|\calS\cap \calR|}{|\calS|} \right| > \frac{\alpha}{4}$, and let $H_\calR$ be the event that $\left| \frac{|\calP \cap \calR|}{|\calP|} - \frac{|\calS\cap \calR|}{|\calS|} \right| > \frac{\alpha}{4}$.
	Observe that $B_\calR$ implies $H_\calR$. So $B = \bigcup_{\calR \in \ranges(\calF)}{B_\calR}$ implies $\bigcup_{\calR \in \ranges(\calF)}{H_\calR}$. Hence
	$\Pr[B] \leq \Pr\left[\bigcup_{\calR\in \ranges(\calF)}{H_\calR}\right]$.
	
	Observe that
	\begin{align*}
		\Pr[B]
		\leq \Pr\left[\bigcup_{\calR \in \ranges(\calF)}{H_\calR}\right]
		= \E_{P}\left[ \Pr\left[\bigcup_{\calR \in \ranges(\calF)}{H_\calR} \mid P\right] \right].
	\end{align*}
	So it suffices to show that $  \Pr\left[\bigcup_{\calR \in \ranges(\calF)}{H_\calR} \mid P\right] \leq \frac{\tau}{2}$ for any fixed $P$.
	
	Fix some $P$. Since $\calS\subseteq \calP$, we have $\bigcup_{\calR\in \ranges(\calF)}{H_\calR} = \bigcup_{\calR \in \ranges(\calP)}{H_\calR}$. Hence,
	\begin{align*}
		\Pr\left[\bigcup_{\calR\in \ranges(\calF)}{H_\calR} \mid P\right]
		= \Pr\left[\bigcup_{\calR \in \ranges(\calP)}{H_\calR} \mid P\right].
	\end{align*}
	
	\noindent\textbf{Analyzing $\Pr\left[\bigcup_{\calR \in \ranges(\calP)}{H_\calR} \mid P\right]$.}
	Given $P$, let $\mathcal{E}$ be the event that $|\ranges(\calP)| \leq T(2m, \frac{\tau}{4})$.
	We emphasize that $\mathcal{E}$ is defined with respect to a given $P$, and is 
	still a random event (the randomness comes from $\calF$).
	By our assumption in the lemma, $\Pr[\mathcal{E} \mid P] \geq 1- \frac{\tau}{4}$ for any $P$.
	Then, we can see that
	\begin{align}
		\Pr\left[\bigcup_{\calR\in \ranges(\calP)}{H_\calR} \mid P\right]
		&= \Pr\left[\bigcup_{\calR\in \ranges(\calP)}{H_\calR } \cap \mathcal{E} \mid P\right] +
		\Pr\left[\bigcup_{\calR \in \ranges(\calP)}{H_\calR} \cap \overline{\mathcal{E}} \mid P\right] \nonumber \\
		&\leq \Pr\left[\bigcup_{\calR \in \ranges(\calP)}{H_\calR} \cap \mathcal{E} \mid P\right] + \frac{\tau}{4} \nonumber \\
		&\leq \Pr\left[\bigcup_{\calR\in \ranges(\calP)}{H_\calR} \mid P, \mathcal{E}\right] + \frac{\tau}{4} \nonumber \\
		& = \E_{\calF}\left[ \Pr\left[\bigcup_{\calR\in \ranges(\calP)}{H_\calR} \mid P, \mathcal{E}, \calF\right] \right] + \frac{\tau}{4}. \label{eqn:conditional_prob} 
	\end{align}
	Recall that $\calP = \calF_P$.
	Define 
	$\Sigma:= \left\{ (P,\calF,\calR) \mid P\subseteq V, |P|=2m, |\ranges(\calP)| \leq T(2m, \frac{\tau}{4}), \calR\in \ranges(\calP) \right\}$ which is the set of relevant tuples $(P,\calF,\calR)$ that we need to consider, and
	let
	\begin{align*}
		P_{\max} := \max_{(P,\calF,\calR)\in \Sigma}{\Pr\left[H_\calR \mid P, \mathcal{E}, \calF\right]}. 
	\end{align*}
	Then by union bound,
	\begin{align*}
		\Pr\left[\bigcup_{\calR\in \ranges(\calP)}{H_\calR} \mid P, \mathcal{E}, \calF\right] 
		\leq T(2m, \frac{\tau}{4}) \cdot P_{\max}.
	\end{align*}
	By Inequality~\ref{eqn:conditional_prob},
	%
	it suffices to bound $P_{\max}$.
	Recall that $S$ is formed by drawing $m$ independent samples from $P$ (without replacement).
	In fact, one can show those indicator random variables (indicating whether an element in $S$ or not) 
	{\em negatively correlated}, (see e.g.,~\cite{DBLP:journals/siamcomp/PanconesiS97,DBLP:journals/jacm/GandhiKPS06}). Therefore, we are in place to apply the generalized Chernoff bound that works on negatively correlated random variables~\cite[Theorem 3.1]{DBLP:journals/jacm/GandhiKPS06}, and we conclude for any $(P,\calF,\calR)\in \Sigma$,
	\begin{align*}
		\Pr\left[H_\calR \mid P, \mathcal{E}, \calF\right]
		= \Pr\left[\left| \frac{|\calP\cap \calR|}{|\calP|} - \frac{|\calS \cap \calR|}{|\calS|} \right| > \frac{\alpha}{4}\right]
		\leq 2\exp\left(-\frac{\alpha^2 m}{48}\right),
	\end{align*}
	and this implies $P_{\max} \leq 2\exp\left(-\frac{\alpha^2 m}{48}\right)$.
	
\ignore{
	\begin{claim}
		\label{claim:p_max}
		$P_{\max} \leq 2\exp\left(-\frac{\alpha^2 m}{32}\right)$.
	\end{claim}

	\begin{proof}
		\eat{
		By the definition of $P_{\max}$, we have 
		\[
		P_{\max} = \max_{\calR \in \ranges(\calF)}{\E_{\calF}\left[\Pr[H_\calR \mid P, \mathcal{E}]\right]} \leq \max_{\calR \in \ranges(\calF)}{\Pr[H_\calR \mid P, \mathcal{E}, \calF]}.
		\]
		Note that we are given $\mathcal{E}$, so we are also given $\calF$, and hence $\ranges(\calF)$ is fixed.
		We will show that for any $\calR\in \ranges(\calF)$, $\Pr[H_\calR \mid P, \mathcal{E}] \leq 2\exp\left(-\frac{\alpha^2 m}{32}\right)$, and this would imply the Lemma.
		}
		%
		Fix $(P,\calF,\calR)\in \Sigma$. 
		It suffices to prove that $\Pr\left[H_\calR \mid P, \mathcal{E}, \calF\right]\leq 2\exp\left(-\frac{\alpha^2 m}{32}\right)$.
		Let $K := |\calP \cap \calR|$. In the remaining argument we use the randomness of $S$ only. 
		
		Recall that $S$ is formed by drawing $m$ independent samples from $P$ (without replacement).
		In fact, one can show those indicator random variables (indicating whether an element in $S$ or not) 
		{\em negatively correlated}, (see e.g.,~\cite{DBLP:journals/siamcomp/PanconesiS97}).
		Hence, one can potentially use the extension of Chernoff bound to negatively correlated random variables. However, we could only find such bound for one direction~\cite{DBLP:journals/siamcomp/PanconesiS97}, and we actually
		need two sided error bound.
		For completeness, we provide a proof, employing
		a coupling argument (see e.g.,~\cite{HubertLec}). 		
		
		We use an alternative procedure to form $S$ without changing its distribution.
		List elements in $P$ in an arbitrary order, and assign one by one whether each object is in $S$, in the following way: suppose when element $a$ is considered, there are already $x$ elements assigned to $S$ and $y$ elements assigned in $P\backslash S$.
		Then element $a$ is assigned to $S$ with probability $\frac{m - x}{(m - x) + (m - y)}$, and is assigned to $P\backslash S$ with probability $\frac{m - y}{(m - x) + (m - y)}$.
		
		Suppose that the elements (in $P$) that corresponds to $\calP\cap \calR$ are listed first.
		For $i \in [K]$, let $u_i$ be the random variable that takes value $1$ if the $i$-th element is assigned to $S$, and $-1$ if it is assigned to $P\backslash S$.
		
		Define $U_i := \sum_{j \in [i]}{u_j}$.
		Then $\Pr[u_{i+1} = 1] = \frac{m - \frac{i + U_i}{2}}{2m - i}
		= \frac{2m - i - U_i}{2(2m - i)}$.
		Since
		\begin{align*}
			\left| \frac{|\calP\cap \calR|}{|\calP|} - \frac{|\calS\cap \calR|}{|\calS|} \right| = \left|\frac{K}{2m} - \frac{K + U_K}{2m}\right|,
		\end{align*}
		$ \left| \frac{|\calP\cap \calR|}{|\calP|} - \frac{|\calS\cap \calR|}{|\calS|} \right|  > \frac{\alpha}{4}$ is equivalent to $ U_K^2 > \frac{\alpha^2 m^2}{4} $.
		Hence, $\Pr[H_\calR \mid P, \mathcal{E}, \calF] = \Pr[U_K^2 > \frac{\alpha^2 m^2}{4}]$. 
		It is equivalent to showing that $\Pr[U_K^2 > \frac{\alpha^2 m^2}{4}]\leq 2\exp\left(-\frac{\alpha^2 m}{32}\right)$.
		
		It is difficult to analyze $U_i$'s directly.
		Instead, for $i \in [K]$, we define an independent random variable $\gamma_i$ that takes $\left\{-1, 1\right\}$ uniformly, and define $Y_i := \sum_{j \in [i]}{\gamma_j}$. We then couple $U_i$ with $Y_i$.
		
		\noindent\textbf{Coupling $U$ and $Y$.}
		We will show, by induction on $r$ and $i$, that for $i\in [K]$ and integer $r\geq 0$, $\E[U_i^{2r}] \leq \E[Y_i^{2r}]$. Obviously, this is true for $r = 0$.
		Also, for any $r$, we have $\E[U_1^{2r}] = \E[Y_1^{2r}] = 1$.
		Now consider the inductive step. Assume that $\E[U_i^{2t}] \leq E[Y_i^{2t}]$ for $t \leq r$, and we will show $\E[U_{i+1}^{2r}] \leq \E[Y_{i+1}^{2r}]$, and this would complete the induction.
		
		Using the induction hypothesis that $\E[U_i^{2r}] \leq \E[Y_i^{2r}]$,
		\begin{align*}
			\E[U_{i+1}^{2r}]
			&= \E[(U_{i} + u_{i+1})^{2r}] \\
			&= \E[U_i^{2r}] + \E[u_{i+1}^{2r}] + \sum_{j = 1}^{2r-1}{\binom{2r}{j} \E[U_i^{j} \cdot u_{i+1}^{2r-j}]} \\
			&\leq \E[Y_i^{2r}] + \E[\gamma_i^{2r}] + \sum_{j = 1}^{2r - 1}{\binom{2r}{j} \E[U_i^{j} \cdot u_{i+1}^{2r-j}]}.
		\end{align*}
		Hence, it suffices to show $\E[U_i^{j} \cdot u_{i+1}^{2r-j}] \leq \E[Y_i^{j} \cdot \gamma_{i+1}^{2r-j}]$, for $1\leq j \leq 2r-1$.
		We consider two cases.
		\begin{compactitem}
			\item $j$ is even. Then $u_{i+1}^{2r-j} = \gamma_{i+1}^{2r-j} = 1$, and by the hypothesis $\E[X_{i}^j] \leq \E[Y_{i}^j]$. This implies $\E[U_i^{j} \cdot u_{i+1}^{2r-j}] \leq \E[Y_i^{j} \cdot \gamma_{i+1}^{2r-j}]$.
			\item $j$ is odd. Then $u_{i+1}^{2r-j} = u_{i+1}$ and $\gamma_{i+1}^{2r-j} = \gamma_{i+1}$. Since $\gamma_i$'s are independent, $\E[Y_i^{j}\cdot \gamma_{i+1}^{2r-j}] = \E[Y_{i}^{j}] \cdot \E[\gamma_{i+1}^{2r-j}] = 0$.
			Hence, it remains to show $\E[U_i^j \cdot u_{i+1}] \leq 0$.
			Observe that $\E[u_{i+1} \mid U_i^j < 0] > 0$ and that $\E[u_{i+1} \mid U_i^j > 0] < 0$. Hence, $\E[U_i^j \cdot u_{i+1}] = \E[\E[u_{i+1} \cdot U_i^j \mid U_i^j]] < 0$.
		\end{compactitem}
		This completes the inductive step.
		
		\noindent\textbf{Analyzing Moment Generating Functions.}
		Since we have $\E[U_K^{2r}] \leq \E[Y_K^{2r}]$ for $r\geq 0$, by applying Taylor's expansion,
		we have that for any $t > 0$,
		$\E[\exp(t U_K^2)] \leq \E[\exp(t Y_K^2)]$.
		
		Recall that it remains to show $\Pr[U_K^2 > \frac{\alpha^2 m^2}{4}]$. Suppose $0 < t < \frac{1}{2K}$.
		By Markov's inequality and the fact that $\E[\exp(tY_K^2)] \leq (1-2tK)^{-\frac{1}{2}}$ for $0 < t < \frac{1}{2K}$,
		\begin{align*}
			\Pr\left[U_K^2 > \frac{\alpha^2 m^2}{4}\right]
			&= \Pr\left[\exp\left(t U_K^2\right) > \exp\left(\frac{t\alpha^2 m^2}{4}\right)\right] \\
			&\leq \exp\left(-\frac{t\alpha^2 m^2}{4}\right) \cdot \E\left[\exp\left(t U_K^2\right)\right] \\
			&\leq \exp\left(-\frac{t\alpha^2 m^2}{4}\right) \cdot \E\left[\exp\left(t Y_K^2\right)\right] \\
			&\leq \exp\left(-\frac{t\alpha^2 m^2}{4}\right) \cdot (1 - 2tK)^{-\frac{1}{2}}.
		\end{align*}
		Setting $t := \frac{1}{4K} < \frac{1}{2K}$, we have
		\begin{align*}
			\Pr\left[U_K^2 > \frac{\alpha^2 m^2}{4}\right]
			\leq \exp\left(-\frac{\alpha^2 m^2}{16K}\right) \cdot \sqrt{2}
			\leq \exp\left(-\frac{\alpha^2 m^2}{32m}\right) \cdot \sqrt{2}
			\leq 2\exp\left(-\frac{\alpha^2 m}{32}\right).
		\end{align*}
		
		This finishes the proof of Claim~\ref{claim:p_max}.
	\end{proof}
}

	\noindent\textbf{Concluding Lemma~\ref{lm:balltoapp}.}
	Combining Inequality~\ref{eqn:conditional_prob}, we conclude that
	\begin{align*}
	\label{eq:totalbound}
		\Pr\left[\bigcup_{\calR \in \ranges(\calP)}{H_\calR} \mid P\right] 
		\leq T(2m, \frac{\tau}{4}) \cdot 2\exp\left(-\frac{\alpha^2 m}{48}\right)+\frac{\tau}{4}.
	\end{align*}
	Plugging in the value of $\alpha$, we complete the proof of Lemma~\ref{lm:balltoapp}.
\end{proof}

\section{Proof Sketch of Corollary~\ref{corollary:weighted_z_ball}}
\label{section:weighted_z_ball}

\jian{this section can go into appendix.}\lingxiao{Since we use the corollary in this section, I think it is better to give an explanation of Corollary~\ref{corollary:weighted_z_ball} in the main body.}
Observe that only the $z=1$ case is addressed in Theorem~\ref{theorem:weighted}. We claim that using the same technique, we can obtain Corollary~\ref{corollary:weighted_z_ball} for more general $z$. We highlight only the necessary changes to the proof.

\begin{corollary}[restatement of Corollary~\ref{corollary:weighted_z_ball}]
	\label{corollary:restate_weighted_z_ball}
Suppose $M(X, d)$ is a metric space with a gap-$2$ weight function $w: X\rightarrow \R_{\geq 0}$.
Let $z>0$, $0 < \epsilon \leq \frac{1}{100z}$ and $0 < \tau < 1$ be constant. There exists a random $\epsilon$-smoothed distance function $\delta$ (defined with respect to some random net tree), such that for $\calF := \left\{ w(x)\cdot \delta^z(x, \cdot) \mid x\in X \right\}$
($\calF$ is defined with respect to the $z$-th power of the random smoothed distance function $\delta$),
and any $H\subseteq X$,
\begin{align*}
\Pr_\delta\left[|\ranges(\calF_H)| \leq  O\left( \frac{1}{\epsilon} \right)^{O(\DDim(M))} \cdot \log{\frac{|H|}{\tau}} \cdot |H|^6  \right] \geq 1 - \tau,
\end{align*}
In addition, for $x, y \in X$, it holds that
\begin{align*}
(1 - O(\eps\cdot z))\cdot \delta^z(x, y)\leq d^z(x, y) \leq (1 + O(\eps \cdot z)) \cdot \delta^z(x, y).
\end{align*}
In other words, $\PDim_\tau(\calF) \leq O\left( \DDim(M)\cdot \log (1/\eps) + \log\log 1/\tau \right) $.
\end{corollary}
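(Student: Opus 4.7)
The plan is to reduce Corollary~\ref{corollary:restate_weighted_z_ball} to Theorem~\ref{theorem:weighted} by a change-of-variables trick that turns the $z$-th power of $\delta$ into a reweighting of $\delta$ itself. The key observation is that the range space of $\calF = \{w(x)\cdot\delta^z(x,\cdot)\mid x\in X\}$ depends only on the level sets $\{y : w(x)\delta^z(x,y)\leq r\}$, and for $r>0$ this equals $\{y : w(x)^{1/z}\delta(x,y)\leq r^{1/z}\}$. Consequently, if we define the auxiliary family
\[
\calF' := \left\{w(x)^{1/z}\cdot \delta(x,\cdot)\mid x\in X\right\},
\]
then the map $r\mapsto r^{1/z}$ induces a bijection between $\ranges(\calF_H)$ and $\ranges(\calF'_H)$ for every $H\subseteq X$. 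In particular, $|\ranges(\calF_H)| = |\ranges(\calF'_H)|$, so it suffices to bound the latter.

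Next I would verify that $w^{1/z}$ is a gap-$c$ weight function with $c := 2^{1/z}$: if $w(x)\ne w(y)$ then $\max\{w(x)/w(y),w(y)/w(x)\}\geq 2$, which after taking $1/z$-th roots gives a ratio of at least $2^{1/z}$. For $z\geq 1$ we have $c-1 = 2^{1/z}-1 \geq (\ln 2)/z$, so $c-1 = \Omega(1/z)$. Then I would appeal to the generalization of Theorem~\ref{theorem:weighted} to gap-$c$ weight functions stated in the remark following Lemma~\ref{lemma:num_critical_interval}: choosing the smoothing parameter $\eps\leq 1/(100z) \leq O(c-1)$ and enlarging $r_1$ by a factor of $\Theta(z)$, the argument giving Claim~\ref{claim:large_ball_equal} still goes through. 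This yields the same probabilistic bound $|\ranges(\calF'_H)|\leq O(1/\eps)^{O(\DDim(M))}\cdot\log(|H|/\tau)\cdot|H|^6$ with probability at least $1-\tau$, which transfers directly to $\calF$.

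For the distortion bound on $\delta^z$, I would invoke Lemma~\ref{lemma:distortion} with $c=2$ (the net tree arising from the random hierarchical decomposition is $2$-covering by Lemma~\ref{lemma:decomp_net_tree}), obtaining $(1-8\eps)\delta(x,y)\leq d(x,y)\leq (1+8\eps)\delta(x,y)$. Raising to the $z$-th power and using that $\eps\leq 1/(100z)$ implies $(1\pm 8\eps)^z = 1\pm O(z\eps)$ (by the Taylor expansion $(1+t)^z = 1 + zt + O((zt)^2)$ valid for $|zt|\leq 1$), we obtain the desired two-sided distortion $(1-O(z\eps))\delta^z(x,y)\leq d^z(x,y)\leq (1+O(z\eps))\delta^z(x,y)$.

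The main (mild) obstacle is the careful verification that the proof of Theorem~\ref{theorem:weighted} goes through for a gap-$c$ weight function with $c-1$ as small as $\Theta(1/z)$. Specifically, in the proof of Claim~\ref{claim:large_ball_equal} the critical inequality $\delta(x,x')\leq \frac{1.3r}{w_j}+\frac{0.02r}{w_i}\leq \frac{r}{w_i}$ relied on $w_j\geq 2w_i$ and constant-gap constants; with the weaker assumption $w_j\geq c w_i$, one must retrace the inequality chain allowing $\eps$ and $1/K$ (where $r_1\geq Ka$) to scale with $c-1$. Since we have the budget $\eps\leq 1/(100z)$ and may pick $r_1 = \Theta(z)\cdot a$ without affecting the bounded-interval case (Lemma~\ref{lemma:bounded_interval}) beyond an extra $\log z$ factor absorbed into the $O$-notation, the argument closes and the probability estimates in Lemma~\ref{lemma:upper_bound_critical_interval} are preserved.
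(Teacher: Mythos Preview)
Your approach is correct and is in fact a tidy repackaging of exactly what the paper does. The paper's proof sketch consists of going back through Lemmas~\ref{lemma:num_critical_interval}--\ref{lemma:upper_bound_critical_interval} and replacing every occurrence of $r/w_i$ by $(r/w_i)^{1/z}$; your change of variables $r\mapsto r^{1/z}$, $w\mapsto w^{1/z}$ produces precisely this substitution (since $r^{1/z}/w_i^{1/z}=(r/w_i)^{1/z}$), so at the level of computation the two routes coincide. The difference is purely organizational: you invoke Theorem~\ref{theorem:weighted} as a black box together with the gap-$c$ remark, whereas the paper re-verifies each lemma under the substitution. Your presentation is arguably cleaner, but it leans on the gap-$c$ extension that the paper only states informally; the paper's version is more self-contained. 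One small point you should make explicit: for $0<z<1$ the weight $w^{1/z}$ is gap-$2^{1/z}\!>2$, so Theorem~\ref{theorem:weighted} applies directly without the gap-$c$ remark, and only the $z\geq 1$ regime needs the extra care you describe.
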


\begin{proof} 
	The distortion part follows immediately from Lemma~\ref{lemma:distortion}, and it remains to lower bound the probability that $\ranges(\calF_H)$ is small.
	Since the argument is very similar to that of Theorem~\ref{theorem:weighted}, we will go through the argument and only highlight the most significant changes.
	The constructions of $\delta$, $H_i$, $\Bdt_i$, $\Bd_i$ and $\calH(a,b)$ are unchanged.
	Basically, the largest change is $\frac{r}{w_i} \rightarrow (\frac{r}{w_i})^{1/z}$ for any $r\geq 0$ and $i\in [l]$.
	We take $r$-representatives and critical intervals as an example.

	\noindent\textbf{$r$-Representatives.}
	In the definition of $r$-representatives, it is still $R_i(r) := N_{\zeta_i}^{(H)}$
	but $\zeta_i$ is changed to be the integer such that $2^{\zeta_i} \leq (\frac{r}{w_i})^{1/z} < 2^{\zeta_i+1}$.
	Lemma~\ref{lemma:basic_center_prop} still holds except item 3, which should be changed to ``there exists $y\in R(r)$ such that $x\in B^d_i(y, (\frac{r}{w_i})^{1/z})$''.
	
	\noindent\textbf{Critical Intervals.}
	The second item changes to 
	$
	\Bd_i\left(x, (\frac{r_1}{w_i})^{1/z}\right)
	= 
	\Bd_i\left(x, (\frac{r_2}{w_i})^{1/z}\right)
	$. 
	Lemma~\ref{lemma:num_critical_interval} follows from basically the same argument.
	
	\noindent\textbf{Lemmas. }
	Among the statements and the proofs of all lemmas, we should replace $\frac{r}{w_i}$ by $(\frac{r}{w_i})^{1/z}$ everywhere.
	Next, we state the other differences.

	For Lemma~\ref{lemma:bounded_interval}, the statement changes to an interval $I:=[a,b)$ with $b\leq 2^z a$. 
	
	In Lemma~\ref{lemma:long_interval}, the condition should be $b> 6^z a$ instead of $b> 6a$.
	
	In Lemmas~\ref{lemma:large_r} and~\ref{lemma:subset}, the statement changes a little: $r \in [a,\frac{b}{2^9}]$ should be changed to $r \in [a,\frac{b}{2^{O(z)}}]$, and the Lemma follows from a similar argument as in the original proof.
	
	The statement of Lemma~\ref{lemma:upper_bound_critical_interval} does not change.
	In the proof, the definition of $s$ changes to
	\begin{align*}
		s := \left\lceil \log{a} + O(z)\cdot \left( \log{\DDim(M)} + \log{m} + \log{\frac{1}{\tau}} \right) \right\rceil.
	\end{align*}
	Moreover, $r_1$ should be changed to $\frac{2^s}{\lambda^{z}}$ and $r_2$ should be changed to $r_2 := \frac{b}{2^{O(z)}}$.
	Since the statement of Lemma~\ref{lemma:bounded_interval} changes to $b \leq 2^z a$, the bound for $|\calH(a, r_1)|$ and $|\calH(r_2, b)|$ actually doesn't change.
	%
	%
	
\end{proof}

\ignore{

\section{Shattering Dimension and Doubling Dimension}
\label{app:vcdoubling}

\begin{theorem}
\label{theorem:restate_doubling_high_ball_dim}
For any integer $n\geq 1$, there is a metric space $M_n(X_n, d_n)$ with $2^n + n$ unweighted points such that $\DDim(M_n) \leq 2$ and $\Dim(\calF^{M_n}) \geq n/\log n$, where $\calF^{M_n} := \{ d_n(x, \cdot) \mid x\in X_n \}$.
\end{theorem}

\begin{proof}
	We start with the definition of $M_n(X_n, d_n)$.
	
	Define $L_n := \left\{ u_1, u_2, \ldots, u_n \right\}$,
	$R_n := \left\{ v_0, v_2, \ldots, v_{2^n-1} \right\}$.
	Define the point set of $M_n$ to be $X_n:=L_n\cup R_n$.
	For $1\leq i \leq j \leq n$, $d_n(u_i, u_j) = |j-i|$,
	and $d_n(v_i, v_j) = |j-i|$ for $0\leq i \leq j \leq 2^n-1$.
	For $u_i \in L_n$ and $v_j \in R_n$,
	$d_n(v_j, u_i) := 2^{n+1} + 1$ if the $i$-th digit in the binary representation of $j$ is $1$, and
	$d_n(v_j, u_i) := 2^{n+1}$ if the $i$-th digit in the binary representation of $j$ is $0$.
	This completes the definition of $M_n$.
	It is immediate that $M_n$ is a metric space.
	
	\noindent\textbf{Doubling Dimension.}
	Consider a ball with center $x\in X_n$ and radius $r$.
	\begin{compactitem}
		\item If $r < 2^{n + 1}$, then either $\Bdn(x, r) \subseteq L_n$ or $\Bdn(x, r) \subseteq R_n$.
		Since the distance between points in $L_n$ is induced by a 1-dimensional line, each ball $\Bdn(x, r)\subseteq L_n$ can be covered by at most $3$ balls of $\frac{r}{2}$, and this argument also holds for each ball $\Bdn(x, r)\subseteq R_n$.
		\item If $r \geq {2^{n+1}}$, $\Bdn(x, r)$ is a union of a subset of $L_n$ and a subset of $R_n$. Then there exists $u \in L_n \cap \Bdn(x, r)$, and $v \in R_n \cap \Bdn(x, r)$, such that $L_n$ is covered by $\Bdn(u, 2^n)$ and $R_n$ is covered by $\Bdn(v, 2^n)$.
	\end{compactitem}
	Therefore, $\DDim(M_n) \leq 2$.
	
	\noindent\textbf{Dimension of the Range Space.}
	Let $\calD$ be the subset of functions $\left\{d(u_i,\cdot)\right\}_{i\in [n]}$ in $\calF_{M_n}$. Consider balls $\Bdn(v_j, 2^{n+1})$ for $v_j \in R_n$.
	By definition, $|\left\{L_n\cap \Bdn(v_j, 2^{n+1}) \right\}_{v_j \in R_n} | = 2^n$.
	Note that $$L_n\cap \Bdn(v_j, 2^{n+1})=\left\{f_{u_i}\in \calD: f_{u_i}(v_j)=d(u_i,v_j)\leq 2^{n+1} \right\} \in \ranges(\calD).$$
	Hence we have $|\ranges(\calD)|\geq 2^n\geq |\calD|^{n/\log n}$.
	Therefore, $\Dim(\calF_{M_n})$ is at least $n/\log n$.
\end{proof}

}

\section{Proofs for Smoothed Distance Functions}
\label{section:smooth_proof}

\begin{lemma}[restatement of Lemma~\ref{lemma:distortion}]
	\label{lemma:restate_distortion}
	If $T$ is $c$-covering, then for any $x,y \in X$ and any $\eps>0$,
	\begin{align*}
	(1-4c\cdot \eps)\cdot \delta(x,y) \leq d(x,y) \leq (1+4c\cdot \eps)\cdot \delta (x,y).
	\end{align*}
\end{lemma}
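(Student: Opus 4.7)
The plan is to reduce the lemma to a single triangle-inequality argument, anchored on the defining property of $h(x,y)$. Set $j := h_\eps(x,y)$ and write $u := \Par^{(j)}(x)$, $v := \Par^{(j)}(y)$. By Definition~\ref{definition:eps_smooth}, one has $\delta(x,y) = d(u,v)$ together with the critical-height lower bound $d(u,v) \geq 2^j/\eps$; equivalently, $2^j \leq \eps \cdot \delta(x,y)$. Since $T$ is $c$-covering, Fact~\ref{fact:des_dis} supplies the two displacement bounds $d(x,u) \leq c \cdot 2^{j+1}$ and $d(y,v) \leq c \cdot 2^{j+1}$, so the combined displacement is at most $c \cdot 2^{j+2} = 4c \cdot 2^j \leq 4c\eps \cdot \delta(x,y)$.

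From here the two inequalities follow immediately. For the upper bound, the ordinary triangle inequality gives
\[
d(x,y) \leq d(x,u) + d(u,v) + d(v,y) \leq \delta(x,y) + 4c \cdot 2^j \leq (1 + 4c\eps)\,\delta(x,y).
\]
For the lower bound, the reverse triangle inequality gives
\[
d(x,y) \geq d(u,v) - d(x,u) - d(v,y) \geq \delta(x,y) - 4c \cdot 2^j \geq (1 - 4c\eps)\,\delta(x,y).
\]

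There is essentially no obstacle: the only thing to notice is that the maximality of $j = h_\eps(x,y)$ is not used, only the inequality $d(u,v) \geq 2^j/\eps$, which converts the absolute ancestor-drift $O(c \cdot 2^j)$ into a relative error $O(c\eps) \cdot \delta(x,y)$. The proof is therefore a few lines and will be presented in full directly.
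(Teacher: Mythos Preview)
Your proof is correct and follows essentially the same approach as the paper: set $j=h_\eps(x,y)$, use the defining inequality $d(\Par^{(j)}(x),\Par^{(j)}(y))\ge 2^j/\eps$ together with Fact~\ref{fact:des_dis} to bound the ancestor drifts by $c\cdot 2^{j+1}$, and apply the triangle inequality in both directions. The only cosmetic difference is that you name $u,v$ and make the intermediate bound $4c\cdot 2^j \le 4c\eps\,\delta(x,y)$ explicit.
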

\begin{proof}
	Let $j = h(x,y)$. By definition,
	\begin{equation*}
	\delta(x,y) = d(\Par^{(j)}(x), \Par^{(j)}(y))\geq \frac{2^j}{\eps}.
	\end{equation*}
	By Fact~\ref{fact:des_dis}, $d(x, \Par^{(j)}(x)) \leq c\cdot 2^{j+1}$,
	and $d(y, \Par^{(j)}(y)) \leq c \cdot 2^{j+1}$.
	By the triangle inequality,
	\begin{align*}
	d(x,y)
	&\leq d(x, \Par^{(j)}(x)) + d(\Par^{(j)}(x), \Par^{(j)}(y)) + d(\Par^{(j)}(y), y) \\ &\leq (1 + 4c\cdot \eps)\cdot \delta(x, y).
	\end{align*}
	Similarly, one can see that
	\begin{align*}
	d(x, y)
	&\geq d(\Par^{(j)}(x), \Par^{(j)}(y)) - d(x, \Par^{(j)}(x)) - d(y, \Par^{(j)}(y)) \\
	&\geq (1 - 4c \cdot \eps) \cdot \delta(x,y).
	\end{align*}
	This finishes the proof of the lemma.
\end{proof}

\begin{lemma}[restatement of Lemma~\ref{lemma:delta_equ}]
	\label{lemma:restate_delta_equ}
	For any $x, y \in X$, assume that $j = h(x, y)$, $u = \Par^{(j)}(x)$ and $v = \Par^{(j)}(y)$. Then for any $x'\in \Des(u^{(j)})$ and $y'\in \Des(v^{(j)})$, we have $ \delta(x, y) = \delta(x', y') = d(u, v) $.
\end{lemma}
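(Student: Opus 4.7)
The plan is to establish the equality $\delta(x,y) = \delta(x',y') = d(u,v)$ in two stages. The first equality $\delta(x,y) = d(u,v)$ is immediate from Definition~\ref{definition:eps_smooth}, since $j = h(x,y)$ and $u,v$ are defined as the height-$j$ ancestors of $x,y$. So the real work is to show that $h(x',y')$ equals $j$ as well, which then gives $\delta(x',y') = d(\Par^{(j)}(x'), \Par^{(j)}(y'))$; combined with the observation that $\Par^{(j)}(x') = u$ and $\Par^{(j)}(y') = v$ (because $x' \in \Des(u^{(j)})$ and $y' \in \Des(v^{(j)})$), this finishes the argument.

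To show $h(x',y') = j$, I would prove the two-sided inequality. For $h(x',y') \geq j$: at height $j$ we have $\Par^{(j)}(x') = u$ and $\Par^{(j)}(y') = v$, and since $j = h(x,y)$, the definition of $h$ gives $d(u,v) \geq 2^j/\eps$; hence $j$ satisfies the defining condition for $h(x',y')$, so $h(x',y') \geq j$. For $h(x',y') \leq j$: pick any height $j' > j$; since the net tree is a tree, ancestors commute, so $\Par^{(j')}(x') = \Par^{(j')}(\Par^{(j)}(x')) = \Par^{(j')}(u) = \Par^{(j')}(x)$, and likewise $\Par^{(j')}(y') = \Par^{(j')}(y)$. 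By the maximality of $j = h(x,y)$, we know $d(\Par^{(j')}(x), \Par^{(j')}(y)) < 2^{j'}/\eps$, so the same inequality holds for $x',y'$, which forces $h(x',y') < j'$ for every $j' > j$.

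I do not anticipate any serious obstacle: the proof is purely a bookkeeping argument that leverages the tree structure (ancestors commute) together with the maximality built into the definition of $h_\eps$. The only thing one must be careful about is to argue that $\Par^{(j)}(x') = u$ for every $x' \in \Des(u^{(j)})$, which follows directly from the definition of $\Des$ and the fact that each node in the net tree has a unique parent at each higher level.
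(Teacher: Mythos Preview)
Your proposal is correct and follows essentially the same approach as the paper: both establish $\delta(x,y)=d(u,v)$ immediately from the definition, then prove $h(x',y')=j$ by showing the two inequalities using that $\Par^{(j)}(x')=u$, $\Par^{(j)}(y')=v$, and that ancestors at heights $j'>j$ coincide for $x,x'$ (resp.\ $y,y'$), so the maximality of $j=h(x,y)$ transfers to $x',y'$.
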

\begin{proof}
	By definition, we immediately have $\delta(x, y) = d(u, v)$.
	Observe that $\Par^{(j)}(x') = \Par^{(j)}(x)$, and $\Par^{(j)}(y) = \Par^{(j)}(y')$.
	By Definition \ref{definition:eps_smooth}, we know $d(\Par^{(j)}(x'),\Par^{(j)}(y'))=d(u,v) \geq \frac{2^j}{\eps}$. Thus, we have $h(x', y') \geq j$.
	On the other hand, for $j' > j = h(x, y)$, we have $d(\Par^{(j')}(x'),\Par^{(j')}(y'))=d(\Par^{(j')}(x),\Par^{(j')}(y)) < \frac{2^{j'}}{\eps}$, where the last inequality is by the definition of $h(x,y)$. Thus, we have $h(x', y') \leq j$. Therefore, $h(x, y) = h(x', y')$ which implies that $\delta(x', y') = d(u, v)$.
\end{proof}

\begin{lemma}[restatement of Lemma~\ref{lemma:ball_equal}]
	\label{lemma:restate_ball_equal}
	Suppose $\{N_i \mid i \leq L\}$ is a hierarchical net and $T$ is a $c$-covering net tree with respect to $\{N_i\}_i$.
	Consider $0<\eps\leq \frac{1}{8c}$ and $r>0$. 
	Let $\lambda := \frac{\eps\cdot (1-5c\eps)}{20 (1+4c\eps)}$.
	Define $j$ to be the integer satisfying that $2^{j-1} \leq \lambda \cdot r $.
	Then for any $x,x'\in X$, if $\Par^{(j)}(x) = \Par^{(j)}(x')$, we have $\Bdt(x, r) = \Bdt(x', r)$.
\end{lemma}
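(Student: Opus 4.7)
\textbf{Proof Plan for Lemma~\ref{lemma:ball_equal}.} The plan is to fix an arbitrary $v \in X$ and show $v \in B^{\delta}(x,r)$ iff $v \in B^{\delta}(x',r)$; by symmetry it suffices to prove the ``only if'' direction. The starting point is the estimate $d(x,x') \leq c\cdot 2^{j+2} \leq 8c\lambda\cdot r$, obtained by applying Fact~\ref{fact:des_dis} twice (once to $x$ and once to $x'$) and using the hypothesis $\Par^{(j)}(x) = \Par^{(j)}(x')$ together with $2^{j-1}\leq \lambda r$. This says $x$ and $x'$ are very close relative to $r$, which is the geometric content driving the whole argument.

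Next, I would split the argument according to how close $v$ is to $x$ (in the $\delta$-metric), choosing a threshold parameter $t := \frac{1-12c\eps}{1+4c\eps}$ (whose precise value is reverse-engineered to make both cases close cleanly). In the \emph{near case} $\delta(x,v) \leq t\cdot r$, I would translate $\delta(x,v)$ to $d(x,v)$ via Lemma~\ref{lemma:distortion}, apply the triangle inequality to bound $d(x',v) \leq d(x,x') + d(x,v)$, and then translate back to $\delta(x',v)$. After substituting the bound $d(x,x') \leq 8c\lambda r$ and the definitions of $t$ and $\lambda$, the resulting expression should simplify to $\delta(x',v) \leq r$, yielding $v \in B^{\delta}(x',r)$.

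In the \emph{far case} $\delta(x,v) > t\cdot r$, the strategy is instead to show $\delta(x,v) = \delta(x',v)$ by invoking Lemma~\ref{lemma:delta_equ}. For this I need to argue that $h(x',v) \geq j$, i.e., that $d(\Par^{(j)}(x'), \Par^{(j)}(v)) \geq 2^j/\eps$. Lower-bounding $d(x',v) \geq d(x,v) - d(x,x') \geq (1-4c\eps)\cdot t\cdot r - 8c\lambda r$ and then subtracting the parent offsets $c\cdot 2^{j+1}$ from both sides, the resulting expression should beat $2^j/\eps$ thanks to $2^{j-1}\leq \lambda r$ and the choice of $t$. Once $h(x',v) \geq j$, then since $\Par^{(j)}(x) = \Par^{(j)}(x')$, both $x$ and $x'$ lie in $\Des(\Par^{(j)}(x'))$, so Lemma~\ref{lemma:delta_equ} gives $\delta(x,v) = \delta(x',v)$, preserving membership in the ball.

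The main obstacle is the arithmetic bookkeeping: the constants $t$, $\lambda$, $c$, and $\eps$ interact through several triangle inequalities and the distortion factor $(1\pm 4c\eps)$, and one has to verify that the chosen $\lambda$ is small enough that \emph{both} case-inequalities close (with slack) under only the assumption $\eps \leq \frac{1}{8c}$. In particular, one delicate point is that the two cases must use the \emph{same} threshold $t$, so that the ``near'' bound gives $\delta(x',v)\leq r$ exactly when the ``far'' bound gives $h(x',v)\geq j$; this forces the specific form of $\lambda = \frac{\eps(1-5c\eps)}{20(1+4c\eps)}$. I would also note at the outset that it suffices to prove the claim for the specific $r$ in the statement, since if the conclusion held for a smaller $r'$ with the corresponding $j'\geq j$ we would automatically have $\Par^{(j')}(x) = \Par^{(j')}(x')$; this observation lets us assume WLOG that $2^{j-1}\leq \lambda r < 2^j$, tightening the bound $d(x,x') \leq 8c\lambda r$ used throughout.
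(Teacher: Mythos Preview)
Your proposal is correct and follows essentially the same approach as the paper's proof: reduce to the case $2^{j-1}\le\lambda r<2^j$, bound $d(x,x')\le c\cdot 2^{j+2}$ via Fact~\ref{fact:des_dis}, and split into a near case (translate via Lemma~\ref{lemma:distortion}, triangle inequality, translate back) and a far case (force $h(x',v)\ge j$ and invoke Lemma~\ref{lemma:delta_equ}). The only cosmetic difference is the threshold: the paper takes $t=\frac{1-5c\eps}{1+4c\eps}=\frac{20\lambda}{\eps}$ rather than your $t=\frac{1-12c\eps}{1+4c\eps}$, but either choice closes both cases given the stated $\lambda$; your remark that the constant is reverse-engineered is exactly right.
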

	\begin{proof}
		Observe that it suffices to prove the case that $2^{j-1} \leq \lambda \cdot r< 2^{j}$. 
		To see this, assume that we have proved this case. Consider some $r' > r$ and an
		integer $j'>j$ such that $2^{j'-1} \leq \lambda \cdot r' < 2^{j'}$. Then we have $\Par^{(j')}(x) = \Par^{(j')}(x')$ because $\Par^{(j)}(x) = \Par^{(j)}(x')$. 
		By the assumption, we have $\Bdt(x, r') = \Bdt(x', r')$.
		Thus, we focus on the case that $2^{j-1} \leq \lambda \cdot r< 2^{j}$ in the following.
		
		It suffices to show that $v \in \Bdt(x, r)$ if and only if $v \in \Bdt(x', r)$.
		We only need to show the direction that if $v \in \Bdt(x, r)$ then $v \in \Bdt(x', r)$ since the ``only if'' direction is symmetric.
		Define $t := \frac{1 - 5c\eps}{1 + 4c\eps} = \frac{20 \lambda}{\eps}$.
		We consider the following two cases.
		\begin{itemize}
			\item $\delta(x, v) \leq t\cdot r \leq r$.
			We prove that $\delta(x', v) \leq r$.
			By Fact~\ref{fact:des_dis} and the fact that $\Par^{(j)}(x)=\Par^{(j)}(x')$, we have $d(x, x') \leq d(x,\Par^{(j)}(x))+d(x',\Par^{(j)}(x')) \leq c\cdot 2^{j+2}$.
			Therefore by Lemma \ref{lemma:restate_distortion}, we have
			\begin{eqnarray*}
			\begin{split}
			\delta(x', v)
			\leq & \frac{1}{(1-4c\cdot \eps)}\cdot d(x', v) & (\text{Lemma \ref{lemma:restate_distortion}}) \\
			\leq & \frac{1}{(1-4c\cdot \eps)} \cdot ( d(x, x') + d(x, v) ) & (\text{triangle ineq.})\\
			\leq & \frac{1}{(1-4c\cdot \eps)} \cdot (c\cdot 2^{j+2} + (1+4c\cdot \eps)\cdot \delta(x,v)) & (d(x,x')\leq c\cdot 2^{j+2},~\text{Lemma \ref{lemma:restate_distortion}})\\
			\leq & \frac{1}{(1-4c\cdot \eps)} \cdot (8c\lambda + (1+4c\cdot\eps)\cdot t) \cdot r & (2^{j-1}\leq \lambda \cdot r,~\delta(x',v)\leq t\cdot r)\\  
			<& r. & (\text{Definitions of $\lambda$ and $t$})
			\end{split}
			\end{eqnarray*}
			\item $\delta(x, v) > t \cdot r$. We prove that $\delta(x, v) = \delta(x', v)$.
			By Fact~\ref{fact:des_dis},
			$d(x, x') \leq c\cdot 2^{j+2}$.
			Observe that by Lemma \ref{lemma:restate_distortion},
			\begin{eqnarray*}
			\begin{split}
			\delta(x', v)
			&\geq \frac{1}{1+4c\cdot \eps}\cdot d(x', v) & (\text{Lemma \ref{lemma:restate_distortion}})\\
			& \geq \frac{1}{1+4c\cdot \eps} \cdot (d(v, x) - d(x, x')) & 	(\text{triangle ineq.})\\
			&\geq \frac{1}{1+4c\cdot \eps}\cdot ( (1-4c\cdot \eps)\cdot \delta(x,v) - c\cdot 2^{j+2} ) & (\text{Lemma \ref{lemma:restate_distortion}},~d(x,x')\leq c\cdot 2^{j+2})\\
			&\geq \frac{1}{1+4c\cdot \eps}\cdot ((1-4c\cdot \eps)t - 8c\lambda) \cdot r. & (\delta(x',v)> t\cdot r,~2^{j-1}\leq \lambda \cdot r)
			\end{split}
			\end{eqnarray*}
			Therefore, we can see that
			\begin{eqnarray*}
			\begin{split}
			d(x', v)
			&\geq (1-4c\eps) \cdot \delta(x', v) & (\text{Lemma \ref{lemma:restate_distortion}}) \\
			& \geq (1-4c\eps) \cdot \frac{(1-4c\eps)t - 8c\lambda}{1+4c\eps}\cdot r \\
			&\geq (1-4c\eps) \cdot \frac{(1-4c\eps)t - 8c\lambda}{1+4c\eps} \cdot \frac{2^{j-1}}{\lambda} & (~2^{j-1}\leq \lambda \cdot r) \\
			&\geq (1-4c\eps) \cdot \frac{(20/\eps-88c)\cdot \lambda}{1+4c\eps} \cdot \frac{2^{j-1}}{\lambda} & (t= \frac{20\lambda}{\eps})\\
			&\geq \frac{2^{j+3}}{\eps}. & (0< \eps \leq \frac{1}{8c})
			\end{split}
			\end{eqnarray*}
			Hence, $d(\Par^{(j)}(x'), \Par^{(j)}(v)) \geq d(x', v) - d(\Par^{(j)}(x'), x') - d(\Par^{(j)}(v), v) \geq \frac{2^{j+3}}{\eps}- c\cdot 2^{j+2} \geq \frac{2^j}{\eps}$ since $\eps \leq \frac{1}{8c}$.
			This implies that $h(x', v) \geq j$.
			Hence, $\Par^{(j')}(x) = \Par^{(j')}(x')$ for $j' := h(x', v)$, since $\Par^{(j)}(x) = \Par^{(j)}(x')$ and $j'\geq j$.
			Thus by Lemma~\ref{lemma:restate_delta_equ}, $\delta(x, v) = \delta(x', v)$.
		\end{itemize}
	\end{proof}

\begin{lemma}[restatement of Lemma~\ref{lemma:ball_laminar}]
	\label{lemma:restate_ball_laminar}
	Suppose $\{N_i \mid i \leq L\}$ is a hierarchical net and $T$ is a $c$-covering net tree with respect to $\{N_i\}_i$.
	Consider $0<\eps\leq \frac{1}{8c}$ and $r>0$. 
	Let $\lambda := \frac{\eps\cdot (1-5c\eps)}{20 (1+4c\eps)}$.
	Suppose $j$ is an integer such that $2^{j-1} \leq \lambda \cdot r$.
	Then for any $x \in X$ and $v \in N_{j}$, either $\Des(v^{(j)}) \subseteq \Bdt(x, r)$ or $\Des(v^{(j)}) \cap \Bdt(x, r) = \emptyset$.
\end{lemma}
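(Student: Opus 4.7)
The plan is to deduce Lemma~\ref{lemma:ball_laminar} directly from the smooth property (Lemma~\ref{lemma:ball_equal}), together with a short reduction that lets us assume $2^{j-1} \leq \lambda r < 2^j$. The intuition is that Lemma~\ref{lemma:ball_equal} says the $\delta$-ball of radius $r$ does not distinguish points sharing the same height-$j$ ancestor; the cross-free property is the contrapositive reformulation of this fact, stated from the point of view of the subtree rooted at $v^{(j)}$.

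\textbf{Reduction to the boundary case.} First I would reduce to the case $2^{j-1} \leq \lambda r < 2^j$. Suppose the lemma holds for this specific height, call it $j_0$, and consider any smaller $j \leq j_0$ still satisfying $2^{j-1}\leq \lambda r$. Every $v\in N_j$ has its subtree $\Des(v^{(j)})$ contained in $\Des(w^{(j_0)})$ for a unique ancestor $w\in N_{j_0}$. By the $j_0$-case, $\Des(w^{(j_0)})$ is either contained in $B^\delta(x,r)$ or disjoint from it, and $\Des(v^{(j)})$ inherits the same dichotomy. So it suffices to handle $j$ with $2^{j-1}\leq \lambda r < 2^j$.

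\textbf{The main argument.} Fix such a $j$ and any $v\in N_j$, and suppose $\Des(v^{(j)}) \cap B^\delta(x,r) \neq \emptyset$; I need to show $\Des(v^{(j)}) \subseteq B^\delta(x,r)$. Pick $x' \in \Des(v^{(j)}) \cap B^\delta(x,r)$, so that $\delta(x,x') \leq r$. By symmetry of $\delta$, this means $x \in B^\delta(x',r)$. Now take any $y \in \Des(v^{(j)})$; then $\Par^{(j)}(y) = v = \Par^{(j)}(x')$, so Lemma~\ref{lemma:ball_equal} (applied to the pair $y,x'$ at radius $r$) gives $B^\delta(y,r) = B^\delta(x',r)$. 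In particular $x \in B^\delta(y,r)$, which by symmetry again yields $\delta(x,y) \leq r$, i.e.\ $y\in B^\delta(x,r)$. This holds for every $y\in \Des(v^{(j)})$, proving $\Des(v^{(j)}) \subseteq B^\delta(x,r)$.

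\textbf{Anticipated difficulty.} There is essentially no obstacle here beyond the bookkeeping of two small points: (i) verifying that the height requirement $2^{j-1}\leq \lambda r$ in Lemma~\ref{lemma:ball_equal} is consistent with ours, and (ii) being careful with the reduction step so that one does not apply Lemma~\ref{lemma:ball_equal} with a height $j$ that is strictly smaller than required (which is why we first reduce to $2^{j-1}\leq \lambda r < 2^j$ and then use the subtree-containment argument for smaller heights). Everything else is just the symmetry of $\delta$ and the definition of $\Des(v^{(j)})$ as the set of points whose height-$j$ ancestor is $v$.
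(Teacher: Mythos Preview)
Your proposal is correct and follows essentially the same approach as the paper: pick $x'\in \Des(v^{(j)})\cap B^\delta(x,r)$, use symmetry to get $x\in B^\delta(x',r)$, then invoke Lemma~\ref{lemma:ball_equal} to conclude $B^\delta(y,r)=B^\delta(x',r)$ for every $y\in\Des(v^{(j)})$, hence $y\in B^\delta(x,r)$. Your reduction to the boundary case $2^{j-1}\le\lambda r<2^j$ is harmless but in fact unnecessary, since Lemma~\ref{lemma:ball_equal} is already stated for every integer $j$ with $2^{j-1}\le\lambda r$; the paper's final proof accordingly skips that step.
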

\begin{proof}
	We use Lemma~\ref{lemma:restate_ball_equal} to prove this lemma.
	Fix $x\in X$ and $v\in N_j$.
	Suppose $\Des(v^{(j)}) \cap \Bdt(x, r) \neq \emptyset$. 
	We want to prove $\Des(v^{(j)}) \subseteq \Bdt(x, r)$ in this case.
	Let $x' \in \Des(v^{(j)}) \cap \Bdt(x, r)$.
	Then we have $\delta(x, x')\leq r$.
	This implies that $x \in \Bdt(x', r)$.
	It suffices to show that for any $y \in \Des(v^{(j)})$, $\delta(y, x) \leq r$.
	By Lemma~\ref{lemma:restate_ball_equal}, for any $y \in \Des(v^{(j)})$, $\Bdt(y, r) = \Bdt(x', r)$. Hence, $x\in \Bdt(y, r)$ by the fact that $x\in \Bdt(x', r)$. This implies $\delta(y, x) \leq r$, which means $y \in \Bdt(x, r)$.
	This completes the proof.
\end{proof}

\section{Proof of Theorem~\ref{thm:totalsen}}
\label{section:proof_sensitivity}
\begin{theorem}[restatement of Theorem~\ref{thm:totalsen}]
	\label{thm:restate_totalsen}
	Given a metric space $M(X,d)$ for the $(k,z)$-clustering problem, there exists an algorithm that computes an upper bound $\pi_x$ of $2\sigma_X(x)$ for any $x\in X$, such that
	\begin{align*}
	\sum_{x\in X}\pi_x =O(2^{O(z\log z)} k), \quad \quad \forall z>0,
	\end{align*}
	with probability at least $1-\tau$.
	Moreover, the running time is $\poly(n)$.
\end{theorem}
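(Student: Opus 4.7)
My plan follows the standard ``sensitivity sampling'' approach of Langberg--Schulman and Varadarajan--Xiao, adapted to general $z$. The first step is to run a polynomial-time bi-criteria or $O(1)$-approximation algorithm for $(k,z)$-clustering on $M(X,d)$ to obtain a $k$-point center set $C^\star \subseteq X$ whose cost $\Phi^\star := \sum_{x \in X} d^z(x, C^\star)$ is at most $\alpha \cdot \OPT$ for some absolute constant $\alpha$ (such an algorithm exists in general metrics for any fixed $z$; the $\tau$ failure probability only enters here, via a standard success amplification). Write $c_x^\star := \arg\min_{c \in C^\star} d(x,c)$, and let $P(c) := \{y \in X : c_y^\star = c\}$ denote the induced clustering.

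Next, for each $x \in X$ I would define
\[
\pi_x \;:=\; \frac{\alpha \cdot 2^{\Theta(z\log z)} \cdot d^z(x,c_x^\star)}{\Phi^\star} \;+\; \frac{\alpha \cdot 2^{\Theta(z \log z)}}{|P(c_x^\star)|} \cdot \frac{\sum_{y \in P(c_x^\star)} d^z(y, c_x^\star)}{\Phi^\star} \;+\; \frac{2^{\Theta(z \log z)}}{|P(c_x^\star)|},
\]
which is computable in $\poly(n)$ time from $C^\star$. To verify $\pi_x \ge 2\sigma_X(x)$, I would use the refined power-mean inequality $(a+b)^z \le (1+\varepsilon)^{z-1} a^z + (1+1/\varepsilon)^{z-1} b^z$ with $\varepsilon = 1/z$, which yields the useful form $(a+b)^z \le e \cdot a^z + 2^{O(z\log z)} b^z$. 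Given an arbitrary $C \in [X]^k$, let $c^C \in C$ be nearest to $c_x^\star$ and apply this inequality twice: once to bound $d^z(x,C) \le d^z(x,c^C) \le e \cdot d^z(x, c_x^\star) + 2^{O(z\log z)} d^z(c_x^\star, c^C)$, and once more (for every $y \in P(c_x^\star)$) to bound $d^z(c_x^\star, c^C) \le e \cdot d^z(c_x^\star, y) + 2^{O(z\log z)} d^z(y, C)$. Averaging the second bound over $y \in P(c_x^\star)$ and combining with $\Phi^\star \le \alpha \cdot \sum_{y \in X} d^z(y,C)$ produces exactly the three-term upper bound on $d^z(x,C)/\sum_y d^z(y,C)$ that matches the definition of $\pi_x$ above.

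Finally, summing $\pi_x$ over $x \in X$ telescopes cleanly: the first term sums to $\alpha \cdot 2^{\Theta(z\log z)} \cdot \Phi^\star / \Phi^\star = 2^{O(z \log z)}$; the second reorganizes as a sum over clusters of $(\sum_{y \in P(c)} d^z(y,c))/\Phi^\star$, which telescopes to $2^{O(z\log z)} \alpha$; and the third gives $2^{\Theta(z \log z)} \cdot k$ since $\sum_{x \in P(c)} 1/|P(c)| = 1$ for each of the $k$ clusters. Altogether $\sum_{x} \pi_x = O(2^{O(z\log z)} k)$, as desired. The main obstacle, and the only place care is needed, is calibrating the constants hidden in the two applications of the refined triangle inequality so that $\pi_x \geq 2 \sigma_X(x)$ holds deterministically while the factor in the total sum remains $2^{O(z \log z)}$; the choice $\varepsilon = 1/z$ is what produces the exponent $z\log z$ rather than the naive $2^{O(z)}$ obtained from $(a+b)^z \le 2^z(a^z+b^z)$, and all other steps are purely mechanical.
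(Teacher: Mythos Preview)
Your approach is essentially the same as the paper's, which simply cites two black-box lemmas: Varadarajan--Xiao~\cite{varadarajan2012sensitivity} gives $\sum_x \pi_x = O(c \cdot 2^{2z} k)$ from any $c$-approximate solution, and Gupta--Tangwongsan~\cite{gupta2008simpler} supplies a $2^{O(z\log z)}$-approximate solution in polynomial time; combining these yields $O(2^{O(z\log z)} k)$. You have unpacked the first lemma and, via the refined inequality $(a+b)^z \le (1+\varepsilon)^{z-1}a^z + (1+1/\varepsilon)^{z-1}b^z$ with $\varepsilon = 1/z$, obtained the $2^{O(z\log z)}$ factor directly in the sensitivity analysis rather than from the approximation ratio.

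One caveat: your claim that an \emph{absolute} constant-factor approximation (independent of $z$) is available in general metrics is not supported; the best known polynomial-time ratios for $(k,z)$-clustering do depend on $z$, and indeed the paper invokes the $2^{O(z\log z)}$ ratio of~\cite{gupta2008simpler}. Fortunately your argument is robust to this: substituting $\alpha = 2^{O(z\log z)}$ throughout still gives $\sum_x \pi_x = \alpha \cdot 2^{O(z\log z)} + 2^{O(z\log z)} k = O(2^{O(z\log z)} k)$. A second minor point: your refined inequality requires $z \ge 1$ (for $z<1$ it can fail, e.g.\ $a=1,b=0$), but in that regime the subadditivity $(a+b)^z \le a^z + b^z$ suffices and the constants only improve.
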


Theorem~\ref{thm:restate_totalsen} follows immediately from the following two lemmas.

\begin{lemma}[restatement of Theorems 7 and 9 in \cite{varadarajan2012sensitivity}]
	\label{lemma:vx12}
	Given a $k$ point set $B\subseteq X$ such that $\kdist_z(X,B)\leq c\cdot \min_{C\in [X]^k} \kdist_z(X,C)$ for some $c\geq 1$, we can compute an upper bound $\pi_x$ of $2\sigma_X(x)$ for each $x\in X$ satisfying $\sum_{x\in X}\pi_x =O(c\cdot 2^{2z}k)$.
	Moreover, the computation time is $O(nk)$.
	\footnote{Note that Theorems 7 and 9 in \cite{varadarajan2012sensitivity} only consider the Eucludean space. However, the proofs of them directly work for any metric space.}
\end{lemma}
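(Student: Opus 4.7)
The plan is to follow the standard Varadarajan--Xiao sensitivity-bounding approach: partition $X$ according to $B$, and for each point define $\pi_x$ as an explicit sum of a "self" contribution, a "cluster-average" contribution, and a "$1/|P_i|$" contribution, each scaled by the appropriate power-of-two and the approximation factor $c$. The correctness of the upper bound comes from an approximate triangle inequality for $d^z$, and the sum bound comes from telescoping the three contributions independently.

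First, partition $X = P_1 \sqcup \cdots \sqcup P_k$ where $P_i$ is the set of points assigned to the $i$-th center $b_i\in B$ (break ties arbitrarily). This takes $O(nk)$ time. Along the way, cache $d^z(x,b_{i(x)})$ for every $x$ and $\sum_{y\in P_i} d^z(y,b_i)$ and $|P_i|$ for every $i$.

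Next, the key inequality. For any $C\in[X]^k$ and $x\in P_i$, using $(a+b)^z \leq 2^{z-1}(a^z+b^z)$ twice, we have
\[
d^z(x,C) \leq 2^{z-1}d^z(x,b_i) + 2^{z-1}d^z(b_i,C),
\]
and for every $y\in P_i$,
\[
d^z(b_i,C) \leq 2^{z-1}d^z(y,b_i) + 2^{z-1}d^z(y,C).
\]
Averaging the latter over $y\in P_i$ and substituting yields
\[
d^z(x,C) \;\leq\; 2^{z-1}d^z(x,b_i) \;+\; \frac{2^{2z-2}}{|P_i|}\sum_{y\in P_i} d^z(y,b_i) \;+\; \frac{2^{2z-2}}{|P_i|}\sum_{y\in P_i} d^z(y,C).
\]
Divide by $\sum_y d^z(y,C)\geq \mathsf{OPT}\geq \kdist_z(X,B)/c$, and note $\sum_{y\in P_i} d^z(y,C)\leq \sum_y d^z(y,C)$. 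Taking the supremum over $C$ gives
\[
\sigma_X(x) \;\leq\; \frac{c\cdot 2^{z-1}\,d^z(x,b_i)}{\kdist_z(X,B)} + \frac{c\cdot 2^{2z-2}\sum_{y\in P_i} d^z(y,b_i)}{|P_i|\cdot \kdist_z(X,B)} + \frac{2^{2z-2}}{|P_i|}.
\]
Define $\pi_x$ to be twice the right-hand side; then $\pi_x\geq 2\sigma_X(x)$ as required.

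Finally, sum over $x$. The three contributions, summed over $x\in P_i$ then over $i\in[k]$, telescope cleanly: the first term totals $c\cdot 2^{z-1}\cdot\frac{\sum_x d^z(x,b_{i(x)})}{\kdist_z(X,B)} = c\cdot 2^{z-1}$; the second totals $c\cdot 2^{2z-2}\sum_i \frac{|P_i|\cdot\sum_{y\in P_i} d^z(y,b_i)}{|P_i|\cdot \kdist_z(X,B)} = c\cdot 2^{2z-2}$; the third totals $\sum_i |P_i|\cdot\frac{2^{2z-2}}{|P_i|} = 2^{2z-2}k$. Hence $\sum_x \pi_x = O(c\cdot 2^{2z} k)$. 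The computation of the nearest center in $B$ for every $x$ dominates the runtime at $O(nk)$; everything else is linear once those distances are known.

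The only subtle point is the simultaneous handling of all $C\in[X]^k$: one must be careful that the three pieces of the upper bound on $d^z(x,C)/\sum_y d^z(y,C)$ are each bounded by a quantity that does not depend on $C$. The first two use the lower bound $\sum_y d^z(y,C)\geq \kdist_z(X,B)/c$, while the third exploits $\sum_{y\in P_i} d^z(y,C)\leq \sum_y d^z(y,C)$ so the $d^z(y,C)$ factor cancels. This is the step where the $c$-approximation hypothesis on $B$ and the $2^{O(z)}$ loss from the approximate triangle inequality are both used, and it is the main (though purely calculational) obstacle.
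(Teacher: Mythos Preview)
The paper does not supply its own proof of this lemma; it is stated as a restatement of Theorems~7 and~9 of Varadarajan--Xiao and used as a black box (with a footnote noting that the original Euclidean proofs extend verbatim to general metrics). Your argument is exactly the standard Varadarajan--Xiao computation and is correct. One small caveat: the inequality $(a+b)^z\le 2^{z-1}(a^z+b^z)$ holds only for $z\ge 1$; for $0<z<1$ you instead use subadditivity $(a+b)^z\le a^z+b^z$, which yields even better constants and is absorbed by the $O(c\cdot 2^{2z}k)$ bound.
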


\begin{lemma}[\cite{gupta2008simpler}]
	We can compute a set $B\subseteq X$ of $k$ points in $\poly(n)$ time such that,
	$$
	\kdist_z(X,B)\leq 2^{O(z\log z)} \min_{C\in [X]^k} \kdist_z(X,C).
	$$
\end{lemma}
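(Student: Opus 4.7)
The plan is to construct $B$ via a local-search algorithm on the $(k,z)$-clustering objective, adapting the Gupta--Tangwongsan analysis (which is originally for $k$-median, i.e.\ the $z=1$ case). Initialize $B$ as any $k$-subset of $X$ and repeatedly perform single-center swaps: while there is a pair $(b,o)\in B\times (X\setminus B)$ with $\kdist_z\!\bigl(X,(B\setminus\{b\})\cup\{o\}\bigr)\leq (1-1/\poly(n))\cdot\kdist_z(X,B)$, execute the swap. Each iteration enumerates only $O(nk)$ candidate swaps and decreases the cost by a $(1-1/\poly(n))$ factor, so by a standard potential argument (the initial cost is bounded by $n\cdot \diam(X)^z$ and the minimum nonzero cost is bounded below) the algorithm terminates in $\poly(n)$ iterations, giving the claimed $\poly(n)$ running time.

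For the approximation analysis, let $C^\star=\{o_1^\star,\ldots,o_k^\star\}$ be the optimum and $B=\{b_1,\ldots,b_k\}$ be the returned local optimum. The central tool is the relaxed triangle inequality for the $z$-th power: for all $x,y,v\in X$,
\[
d^z(x,y)\;\leq\; c_z\bigl(d^z(x,v)+d^z(v,y)\bigr),
\]
with $c_z=2^{z-1}$ when $z\geq 1$ and $c_z=1$ when $0<z<1$. Following the standard swap-argument template of Arya et al., I would construct a capture-based mapping $\sigma:[k]\to B$ pairing each optimal center $o_i^\star$ with some $b_{\sigma(i)}\in B$, so that each $b\in B$ is captured by only $O(1)$ optimal centers and the induced single-swap reassignment $(B\setminus\{b_{\sigma(i)}\})\cup\{o_i^\star\}$ can be analyzed point by point. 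Local optimality yields $k$ inequalities of the form $\kdist_z\!\bigl(X,(B\setminus\{b_{\sigma(i)}\})\cup\{o_i^\star\}\bigr)\geq (1-1/\poly(n))\kdist_z(X,B)$. Upper-bounding each left-hand side by the explicit reassignment cost --- using the relaxed triangle inequality to route each displaced point from its old center in $B$ to the next available center through $o_i^\star$ --- and summing the resulting inequalities over $i\in[k]$ yields a bound of the form $\kdist_z(X,B)\leq f(z)\cdot \kdist_z(X,C^\star)$.

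The main obstacle will be controlling the function $f(z)$. Each application of the relaxed triangle inequality loses a factor of $c_z=2^{O(z)}$, and the swap-accounting scheme requires chaining such inequalities $O(\log z)$ times in order to bound the cost of points that are reassigned to centers far from any captured pair; equivalently, the bound can be obtained by iterating a single-round local-search amplification $O(\log z)$ times, each round improving an $\alpha$-approximation into an $\alpha^{1/2}\cdot 2^{O(z)}$-approximation. Either view yields the final factor $f(z)=2^{O(z\log z)}$. Implementing this requires substituting every appeal to the sharp triangle inequality in the Gupta--Tangwongsan argument by its relaxed counterpart and tracking the resulting multiplicative constants through the potential-function bookkeeping; the replacement is laborious but structurally routine, and the convergence argument for the polynomial running time carries over essentially unchanged from the $k$-median setting.
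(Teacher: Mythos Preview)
The paper does not give its own proof of this lemma; it is quoted as a black-box result attributed to \cite{gupta2008simpler} and immediately combined with Lemma~\ref{lemma:vx12} to conclude Theorem~\ref{thm:restate_totalsen}. So there is no argument in the paper to compare your proposal against.

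Your overall plan is the right one and matches the cited source: run single-swap local search on the $(k,z)$ objective and redo the Arya et al./Gupta--Tangwongsan swap accounting with the relaxed triangle inequality $d^z(x,y)\le 2^{z-1}\bigl(d^z(x,v)+d^z(v,y)\bigr)$. The part of your write-up that does not hold up is the explanation of the exponent. In the swap analysis the (relaxed) triangle inequality is applied a \emph{constant} number of times per displaced point --- one routes $x$ to its optimal center $o_x^\star$, bounds $d(o_x^\star,\sigma(o_x^\star))\le d(o_x^\star,x)+d(x,b_x)$, and combines --- so the total loss is $2^{O(z)}$. There is no ``chain of $O(\log z)$ applications'' and no ``$O(\log z)$-round amplification'' anywhere in the argument; those two sentences do not correspond to any real step and should be dropped. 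What the analysis actually yields is $\kdist_z(X,B)\le 2^{O(z)}\min_{C}\kdist_z(X,C)$, which is stronger than and hence implies the stated $2^{O(z\log z)}$ bound. (A minor aside on running time: ``initialize $B$ as any $k$-subset'' can make $\log(\text{initial cost}/\OPT)$ unbounded in the oracle model; the standard fix is to seed with any crude $\poly(n)$-approximation so the potential argument gives $\poly(n)$ iterations.)
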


\section{Proof of Lemma~\ref{lm:sen}}
\label{sec:lemma_sen}

\begin{lemma}[restatement of Lemma~\ref{lm:sen}]
	For any $x\in X$, $\theta_x$ is an integer satisfying that $\theta_x \geq n\cdot \max_{C\in \calO}\frac{\psi_x(C)}{\sum_{\psi_y\in \Psi}\psi_y(C)}$.
	Moreover, $\sum_{x\in X} \theta_x = O(2^{O(z \log z)} kn)$.\shaofeng{what is $F$???}\lingxiao{Modify the typos.}
\end{lemma}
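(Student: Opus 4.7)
The plan is to verify three things in sequence: that $\theta_x$ is integral, that $\theta_x$ dominates the required sensitivity-like ratio for $\psi_x$, and that the total $\sum_x \theta_x$ is $O(2^{O(z\log z)} k n)$. Integrality is immediate from the construction in Algorithm~\ref{alg:coreset}, since $\theta_x = 2^\zeta$ for an integer $\zeta$. Also, by the rounding rule $2^{\zeta-1} \leq n\pi_x < 2^\zeta$, we have the two-sided bound $n\pi_x \leq \theta_x \leq 2n\pi_x$, which will be used repeatedly.

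For the sensitivity inequality, I would relate $\psi_x$ back to the true distance function $d$ using the distortion guarantee of the random $(\eps/100z)$-smoothed distance function $\delta$ from Corollary~\ref{corollary:weighted_z_ball}. That guarantee gives $\delta^z(x,y) \in (1 \pm O(\eps))\, d^z(x,y)$ pointwise. Taking a minimum over $y \in C$ preserves this, so $\psi_x(C) \leq (1+O(\eps))\, d^z(x,C)$ and $\sum_{y\in X} \psi_y(C) \geq (1-O(\eps)) \sum_{y\in X} d^z(y,C)$. Hence for every $C\in [X]^k$,
\[
\frac{\psi_x(C)}{\sum_{y} \psi_y(C)} \leq \frac{1+O(\eps)}{1-O(\eps)} \cdot \frac{d^z(x,C)}{\sum_{y} d^z(y,C)} \leq 2\, \sigma_X(x),
\]
where in the last step I use $\eps < 1/100$ to absorb the distortion factor, together with the definition of $\sigma_X(x)$. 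Since Theorem~\ref{thm:totalsen} guarantees $\pi_x \geq 2\sigma_X(x)$, multiplying by $n$ gives $n \cdot \max_{C} \frac{\psi_x(C)}{\sum_y \psi_y(C)} \leq n\pi_x \leq \theta_x$, as required.

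For the total budget, I use the upper half of the rounding bound: $\theta_x \leq 2n\pi_x$, so
\[
\sum_{x\in X} \theta_x \leq 2n \sum_{x\in X} \pi_x = O(2^{O(z\log z)} k n),
\]
where the final estimate is Theorem~\ref{thm:totalsen}. No step requires significant new ideas beyond combining the bi-Lipschitz distortion of $\delta^z$ with the already-established total-sensitivity bound; the only point demanding care is checking that the $(1\pm O(\eps))$ factors compose through the $\min_{y\in C}$ and the global summation cleanly enough to give the clean factor of $2$ needed to apply $\pi_x \geq 2\sigma_X(x)$, which is exactly why the smoothing parameter was chosen as $\eps/100z$.
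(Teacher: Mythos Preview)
Your proof is correct and follows essentially the same approach as the paper: both arguments use the $(1\pm O(\eps))$ distortion between $\delta^z$ and $d^z$ (from Corollary~\ref{corollary:weighted_z_ball} with parameter $\eps/100z$) to transfer the sensitivity ratio from $\psi_x$ back to $d^z(x,\cdot)$, then invoke the bound $\pi_x \geq 2\sigma_X(x)$ from Theorem~\ref{thm:totalsen} together with the two-sided rounding $n\pi_x \leq \theta_x \leq 2n\pi_x$. The only cosmetic difference is that the paper writes the distortion step starting from $d^z$ and bounding $\psi_x$ below, whereas you start from $\psi_x$ and bound $d^z$ above; the content is identical.
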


\begin{proof}
	By definition, $\theta_x  $ is an integer larger than $n\pi_x$.\jian{typo??}\lingxiao{It is a typo. Fix it.}
	By the definition of $\pi_x$, we have
	\begin{equation}
	\label{eq:sen1}
	\theta_x> n\pi_x \geq 2n\cdot \max_{C\in \calO}\frac{d^z(x,C)}{\sum_{y\in X}d^z(y,C)}.
	\end{equation}
	By Corollary~\ref{corollary:weighted_z_ball} and the definition of $\psi_x$, we have $d^z(x,C)\in (1\pm \eps) \psi_x(C)$ since $d^z(x,y)\in (1\pm \eps) \delta^z(x,y)$ for any $y\in X$.
	Hence
	\begin{equation}
	\label{eq:sen2}
	\max_{C\in \calO}\frac{d^z(x,C)}{\sum_{C\in \calO}d^z(y,C)} \geq \max_{C\in \calO}\frac{(1-\eps) \psi_x(C)}{(1+\eps)\sum_{\psi_y\in \Psi}\psi_y(C)}\geq \frac{1}{2} \max_{C\in \calO}\frac{ \psi_x(C)}{\sum_{\psi_y\in \Psi}\psi_y(C)}
	\end{equation}
	Combining with Inequalities \eqref{eq:sen1} and \eqref{eq:sen2}, we prove the first part.
	By Theorem~\ref{thm:totalsen}, we have $\sum_{x\in X} \pi_x = O(2^{O(z\log z)}k)$.
	On the other hand, $\theta_x\leq 2 n \pi_x$ by definition.
	Thus, we have $$\sum_{x\in X} \theta_x \leq 2 n \cdot \sum_{x\in X} \pi_x = O(2^{O(z\log z)}kn).$$
	This completes the proof of the lemma.
\end{proof}

\section{Proof of Claim~\ref{claim:min_equal}}
\label{section:min_equal}
\begin{claim}[restatement of Claim~\ref{claim:min_equal}]
	\label{claim:restate_min_equal}	$|\ranges(\calG_H)| \leq |\ranges(\calF_H)|^k$.
\end{claim}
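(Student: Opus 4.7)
The plan is to show that every range of $\calG_H$ can be expressed as a union of $k$ ranges of $\calF_H$ taken at a common radius, and then count. The key identity I will establish is that for every $C=\{x_1,\ldots,x_k\}\in [X]^k$ and every $r\geq 0$,
\begin{equation*}
\{\,y\in H \mid g_y\in \range(\calG_H,C,r)\,\} \;=\; \bigcup_{x\in C}\{\,y\in H \mid f_y\in \range(\calF_H,x,r)\,\}.
\end{equation*}
The forward inclusion is immediate: if $g_y(C)=\min_{x\in C}\delta^z(x,y)/\theta_y\leq r$, pick the minimizer $x^\star\in C$; then $f_y(x^\star)=\delta^z(x^\star,y)/\theta_y\leq r$, so $y$ lies in the right-hand side. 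The reverse inclusion is equally direct: if $f_y(x)\leq r$ for some $x\in C$, then $g_y(C)\leq f_y(x)\leq r$. Since the index set of $\calG_H$ and $\calF_H$ is the same $H$, each range of $\calG_H$ is uniquely determined by the subset of $H$ it contains, so the identity above lets us map ranges of $\calG_H$ into $k$-tuples of ranges of $\calF_H$.

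With the identity in hand, I will define a map
\begin{equation*}
\Phi : \ranges(\calG_H)\longrightarrow \ranges(\calF_H)^k
\end{equation*}
as follows. For each $\calR\in \ranges(\calG_H)$, pick any witness $(C,r)$ with $C=\{x_1,\ldots,x_k\}$ such that $\calR=\range(\calG_H,C,r)$, and set $\Phi(\calR):=(\range(\calF_H,x_1,r),\ldots,\range(\calF_H,x_k,r))$. The identity shows that $\calR$ can be reconstructed from $\Phi(\calR)$ via union (after identifying ranges with their index subsets of $H$), so $\Phi$ is injective. This immediately yields $|\ranges(\calG_H)|\leq |\ranges(\calF_H)|^k$, which is the desired bound.

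The argument is purely combinatorial and requires no properties of $\delta$ or of the doubling metric beyond the very definition of $\calG_H$ and $\calF_H$; the only subtle point to be careful about is that the identity holds because the \emph{same} radius $r$ appears on both sides, which is exactly what allows the reconstruction map $\Phi$ to recover $\calR$. I expect no real obstacle here; the only thing to double-check is that the injectivity argument is sound when different witnesses $(C,r)$ and $(C',r')$ could represent the same $\calR$, but this is fine since we only need the existence of \emph{some} preimage tuple, and the induced map on subsets of $H$ is a genuine function.
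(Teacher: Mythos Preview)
Your proposal is correct and follows essentially the same approach as the paper: both establish the identity $\{y\in H\mid g_y\in\range(\calG_H,C,r)\}=\bigcup_{x\in C}\{y\in H\mid f_y\in\range(\calF_H,x,r)\}$ via the two straightforward inclusions, and then conclude the bound by counting the at most $|\ranges(\calF_H)|^k$ possible $k$-tuples of $\calF_H$-ranges. Your explicit injection $\Phi$ just spells out the counting step that the paper leaves implicit.
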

\begin{proof}
	The proof is similar to that of~\cite[Lemma 6.5]{FL11}.\jian{mention this proof is similar to the proof of xxx in FL11 again.}\shaofeng{addressed.}
	By the definition of $V$, each range $\range(\calG_H, C,r)$ of $(\calG,\ranges(\calG))$ corresponds to a unique point set $\left\{y\in X\mid g_y\in \range(\calG_H, C,r)\right\}$.
	Then it suffices to show that for any $C\in [X]^k$ and $r\geq 0$,
	$$\left\{y\in X\mid g_y\in \range(\calG_H, C,r)\right\}=\bigcup_{x\in C} \left\{y\in X\mid f_y\in \range(\calF_H, x,r)\right\}.$$
	For any $C\in \calO$ and $r\geq 0$, if $g_y\in \range(\calG_H, C,r)$, we have
	\[
	g_y(C) = \min_{x\in C} \delta^z(x,y)/\theta_y \leq r.
	\]
	Let $x^*=\arg \min_{x\in C} \delta^z(x,y)$.
	We have $f_y(x^*)=\delta^z(x^*,y)/\theta_y\leq r$ which implies that $f_y\in \range(\calF_H, x^*,r)\subseteq \bigcup_{x\in C}\range(\calF_H, x,r)$.
	Therefore, we have $$\left\{y\in X\mid g_y\in \range(\calG_H, C,r)\right\}\subseteq \bigcup_{x\in C} \left\{y\in X\mid f_y\in \range(\calF_H, x,r)\right\}.$$
	It remains to prove $\bigcup_{x\in C} \left\{y\in X\mid f_y\in \range(\calF_H, x,r)\right\}\subseteq \left\{y\in X\mid g_y\in \range(\calG_H, C,r)\right\}$.
	If $f_y\in  \bigcup_{x\in C}\range(\calF_H, x,r)$, there must exist some $x^*\in C$ such that $f_y\in \range(\calF_H, x^*,r)$.
	It implies that
	\[
	g_y(C)=\min_{x\in C}\delta^z(x,y)/\theta_y \leq \delta^z(x^*,y)/\theta_y = f_y(x^*)\leq r.
	\]
	Hence $g_y\in \range(\calG_H, C,r)$.
	Thus, we have $$\bigcup_{x\in C} \left\{y\in X\mid f_y\in \range(\calF_H, x,r)\right\}\subseteq \left\{y\in X\mid g_y\in \range(\calG_H, C,r)\right\},$$ and this completes the proof.
\end{proof}

\section{Missing Proofs for Theorem~\ref{thm:robust_coreset}}
\label{section:proof_robust}
Consider a doubling metric space $M=(X,d)$.
As noted in Theorem~\ref{theorem:doubling_high_ball_dim}, if we let $\calG=\{d^z(x,\cdot) \mid x\in X\}$ as in Remark~\ref{remark:robust}, then the range space $(\calG, \ranges(\calG))$ may not have bounded dimension, which makes it hard to achieve a succinct $\frac{\alpha}{2}$-approximation.
Hence, we use the same idea in Section~\ref{sec:coreset}, i.e., to construct a random $(\eps/100z)$-smoothed distance function $\delta$ resultant from Corollary~\ref{corollary:weighted_z_ball}. 
Then for each $x\in X$, let $g_x(\cdot)$ be a function from $[X]^k$ to $\mathbb{R}_{\geq 0}$ such that $g_x(C)=\delta^z(x,C)$. 
Let $\calG:=\left\{ g_x\mid x\in X \right\}$.
Then by the same argument as in Lemma~\ref{lm:apprange}, we have the following lemma.

\begin{lemma}
	\label{lemma:approximation}
	Let $S$ be a uniformly independent sample of
	\begin{align*}
	\Gamma := O\bigg(\frac{k}{\alpha^2}(\mathrm{ddim}(M)\cdot \log (z/\eps) +\log k+\log \log (1/\tau))+\frac{\log (1/\tau)}{\alpha^2}\bigg)
	\end{align*}
	points from $X$.
	Then with probability at least $1-\tau$, $\calG_S=\left\{g_x \mid x\in S \right\}$ is an $\frac{\alpha}{2}$-approximation of the range space $(\calG, \ranges(\calG))$.
\end{lemma}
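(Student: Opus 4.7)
The plan is to mirror the argument of Lemma~\ref{lm:apprange}, but specialized to the unweighted (uniform) case. The sampling procedure produces $\calG_S$ as a uniform sample of size $\Gamma$ from $\calG$, which is a random indexed function set with fixed index set $V = X$. So the natural tool is Lemma~\ref{lemma:restate_weak_app}: it suffices to exhibit a function $T(m,\gamma)$ that probabilistically upper bounds $|\ranges(\calG_H)|$ for every $H \subseteq X$, and then verify that the chosen value of $\Gamma$ makes the resulting $\alpha$-bound in Lemma~\ref{lemma:restate_weak_app} at most $\alpha/2$.

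First I would fix $H \subseteq X$ and bound $|\ranges(\calG_H)|$. Since $g_x(C) = \delta^z(x,C) = \min_{y \in C} \delta^z(x,y)$, this is essentially the $z$-power version of the unweighted setting (the gap-$2$ weight function being the constant weight $1$, trivially gap-$2$). Introduce the auxiliary indexed function set $\calF = \{f_x \mid x \in X\}$ with ground set $X$ given by $f_x(y) = \delta^z(x,y)$. Then Corollary~\ref{corollary:weighted_z_ball} applies and yields
\[
\Pr\Bigl[|\ranges(\calF_H)| \leq O\bigl(z/\eps\bigr)^{O(\DDim(M))} \cdot \log(|H|/\gamma) \cdot |H|^6\Bigr] \geq 1 - \gamma.
\]
Next, Claim~\ref{claim:min_equal} (which is purely a combinatorial statement about the min-operator lifting from singletons to $k$-subsets, and does not depend on weights) gives $|\ranges(\calG_H)| \leq |\ranges(\calF_H)|^k$. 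Combining, with probability at least $1-\gamma$,
\[
|\ranges(\calG_H)| \leq T(|H|, \gamma) := O\bigl(z/\eps\bigr)^{O(k\cdot \DDim(M))} \cdot \log^k(|H|/\gamma) \cdot |H|^{6k}.
\]

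Finally I would apply Lemma~\ref{lemma:restate_weak_app} with this $T$ and $m = \Gamma$, and plug $T(2\Gamma, \tau/4)$ into the formula
\[
\alpha := \sqrt{\frac{48\bigl(\log T(2\Gamma, \tau/4) + \log(8/\tau)\bigr)}{\Gamma}}.
\]
Since $\log T(2\Gamma, \tau/4) = O\bigl(k\cdot \DDim(M)\cdot \log(z/\eps) + k\log(\log \Gamma/\tau) + k \log \Gamma\bigr)$, setting
\[
\Gamma = O\Bigl(\tfrac{k}{\alpha^2}\bigl(\DDim(M)\log(z/\eps) + \log k + \log\log(1/\tau)\bigr) + \tfrac{\log(1/\tau)}{\alpha^2}\Bigr)
\]
with a sufficiently large constant absorbs the $k\log \Gamma$ contribution (standard algebraic manipulation) and drives the final bound below $\alpha/2$. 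Hence $\calG_S$ is an $(\alpha/2)$-approximation of $(\calG, \ranges(\calG))$ with probability at least $1 - \tau$.

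The potentially tricky step is ensuring the $\log \Gamma$ terms close up properly when solving for $\Gamma$, but this is a routine calculation identical to the one performed in Lemma~\ref{lm:apprange}, so no new difficulty arises. The reason the sample size here is smaller than in the coreset construction is that we do not need to absorb the $\sum_x \theta_x = O(2^{O(z\log z)} k n)$ factor from the sensitivity bound, which saves a $k^2$ and a $2^{O(z\log z)}$ factor relative to Theorem~\ref{thm:coreset}.
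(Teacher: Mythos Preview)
Your proposal is correct and follows essentially the same approach as the paper: both reduce to Lemma~\ref{lm:apprange}, invoking Corollary~\ref{corollary:weighted_z_ball} on the unweighted function set $\calF$, lifting to $\calG$ via Claim~\ref{claim:min_equal}, and then applying Lemma~\ref{lemma:restate_weak_app} with the same $T(m,\gamma)$. Your additional remark explaining why the sample size is smaller than in Theorem~\ref{thm:coreset} (no sensitivity factor $\sum_x \theta_x$ to absorb) is accurate and a nice clarification beyond what the paper states.
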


\begin{proof}
	
	The proof is almost identical to that in Lemma~\ref{lm:apprange}.
	For any $H\subseteq X$, recall that $\calG_H=\left\{g_x\mid x\in H \right\}\subseteq \calG$.
	We still want to apply Lemma~\ref{lemma:restate_weak_app}.
	By the same argument as in Lemma~\ref{lm:apprange}, we can show that for $T:\mathbb{N} \times \mathbb{R}_{\geq 0}$ such that
	\begin{align*}
	T(m, \gamma):= O\left(\frac{z}{\eps}\right)^{ O(k\cdot\DDim(M))}\cdot \log^k \frac{m}{\gamma}\cdot m^{6k},
	\end{align*}
	$\calG$ satisfies
	for any $H\subseteq V$ and $\gamma> 0$,
	\[
	\Pr[|\ranges(\calG_H)| \leq T(|H|,\gamma)] \geq 1 - \gamma.
	\]

	Now we are ready to apply Lemma~\ref{lemma:restate_weak_app}.
	Plugging in the values of $\Gamma$ and $T(2\Gamma, \tau/4)$ to Lemma~\ref{lemma:restate_weak_app}, we can verify that $\calG_S$ is an $\frac{\alpha}{2}$-approximation of the range space $(\calG, \ranges(\calG))$ with probability at least $1-\tau$.
	This completes the proof.
\end{proof}

Now we are ready to prove the main theorem.
\begin{theorem}[restatement of Theorem~\ref{thm:robust_coreset}]
	\label{thm:restate_robust_coreset}
	Let $M(X,d)$ be a doubling metric space (a $d$-dimensional Euclidean space resp.).
	Suppose $S$ is a uniform independent sample of $\Gamma$ ($\Gamma'$ resp.) points from $X$, where
	$$
	\Gamma := O\bigg(\frac{k}{\alpha^2}(\mathrm{ddim}(M)\cdot \log (z/\eps) +\log k+\log \log (1/\tau))+\frac{\log (1/\tau)}{\alpha^2}\bigg)
	$$
	and
	$$
	\Gamma' := O\bigg(\frac{1}{\alpha^2}(kd \log k+\log (1/\tau))\bigg).
	$$
	Then with probability at least $1-\tau$, $\calS$ is an $(\alpha,\eps)$-robust coreset ($(\alpha,0)$-robust coreset resp.)
	for the $(k,z)$-clustering problem with outliers.
\end{theorem}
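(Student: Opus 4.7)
The plan is to reduce the robust coreset construction to an $\alpha/2$-approximation problem and then invoke Lemma~\ref{rctech}. For each $x\in X$, associate a function $g_x(C) := d^z(x,C)$ (or, in the doubling-metric case, $g_x^\delta(C) := \delta^z(x,C)$ where $\delta$ is the random smoothed distance function from Corollary~\ref{corollary:weighted_z_ball}) and set $\calG := \{g_x : x\in X\}$ (resp.\ $\calG^\delta$). By Definition~\ref{rcoreset} together with Remark~\ref{remark:robust}, a subset $S$ yields an $(\alpha,0)$-robust coreset for the clustering problem \emph{iff} $\{g_x : x\in S\}$ is an $(\alpha,0)$-robust coreset of $\calG$ (similarly for $\calG^\delta$), so by Lemma~\ref{rctech} it suffices to show that the uniform sample is an $\alpha/2$-approximation of the corresponding range space.

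For the Euclidean case, the range space $(\calG,\ranges(\calG))$ has shattering dimension $O(kd)$ (balls in $\R^d$ intersected with $k$-center distance functions), so by the classical $\alpha$-approximation bound (e.g.,~\cite{DBLP:journals/jcss/LiLS01}) a uniform sample of size $\Gamma' = O(\alpha^{-2}(kd\log k + \log(1/\tau)))$ is an $\alpha/2$-approximation with probability $\geq 1-\tau$; Lemma~\ref{rctech} then gives the $(\alpha,0)$-robust coreset directly. For doubling metrics, the same deterministic-dimension argument fails by Theorem~\ref{theorem:doubling_high_ball_dim}, and we must instead work with the auxiliary family $\calG^\delta$. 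The key step is to replicate the argument of Lemma~\ref{lm:apprange}: using Corollary~\ref{corollary:weighted_z_ball} (applied with the trivial weight function $w\equiv 1$, which is trivially gap-$2$) together with Claim~\ref{claim:min_equal} to pass from the one-center ground set $X$ to the $k$-subset ground set $[X]^k$, we obtain
\[
\Pr_\delta\bigl[|\ranges(\calG^\delta_H)|\leq T(|H|,\gamma)\bigr]\geq 1-\gamma,\qquad T(m,\gamma) := O(z/\eps)^{O(k\,\DDim(M))}\cdot \log^k(m/\gamma)\cdot m^{6k}.
\]
Plugging this $T$ into Lemma~\ref{lemma:restate_weak_app} with sample size $\Gamma = O(\alpha^{-2}(k(\DDim(M)\log(z/\eps) + \log k + \log\log(1/\tau))) + \alpha^{-2}\log(1/\tau))$ guarantees that $\calG^\delta_S$ is an $\alpha/2$-approximation of $(\calG^\delta,\ranges(\calG^\delta))$ with probability at least $1-\tau$. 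Lemma~\ref{rctech} then promotes this to an $(\alpha,0)$-robust coreset for $\calG^\delta$.

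It remains to transfer the robust-coreset guarantee from $\calG^\delta$ back to the original metric. By Corollary~\ref{corollary:weighted_z_ball}, $\delta^z(x,y) \in (1\pm O(\eps z))\cdot d^z(x,y)$ for all $x,y\in X$ (and we chose the smoothing parameter $\eps/(100z)$ precisely so this distortion is $O(\eps)$). Fix any $C\in [X]^k$ and $\gamma$; since the $\gamma$-trimmed sum $\kdist_z^{-\gamma}(H,C)$ is monotone in each $d^z(x,C)$ and each pointwise distance is within $(1\pm O(\eps))$ of the corresponding $\delta^z$-distance, we obtain $\kdist_z^{-\gamma}(H,C)\in (1\pm O(\eps))\cdot (\calG^\delta)^{-\gamma}_H(C)/|H|$ for both $H=X$ and $H=S$. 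Chaining these with the two-sided $(\alpha,0)$-bound on $\calG^\delta$ gives the desired $(\alpha,\eps)$-robust coreset inequality for the $(k,z)$-clustering instance.

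The only genuinely delicate point is verifying that Claim~\ref{claim:min_equal} can be applied in this setting so that $|\ranges(\calG^\delta_H)|\leq |\ranges(\calF^\delta_H)|^k$ where $\calF^\delta$ is the single-center function set handled by Corollary~\ref{corollary:weighted_z_ball}; everything else is a bookkeeping chain through Lemma~\ref{rctech}, Lemma~\ref{lemma:restate_weak_app}, and the distortion bound. The Euclidean statement follows as a simpler special case where no smoothing is needed and $\eps$ can be set to $0$.
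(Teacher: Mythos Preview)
Your proposal is correct and follows essentially the same route as the paper: reduce to an $\alpha/2$-approximation, handle the Euclidean case via the classical $O(kd)$ shattering-dimension bound and~\cite{DBLP:journals/jcss/LiLS01}, handle the doubling case by passing to the auxiliary family $\calG^\delta$ built from the random smoothed distance of Corollary~\ref{corollary:weighted_z_ball} (with trivial weights), bound $|\ranges(\calG^\delta_H)|$ via Claim~\ref{claim:min_equal} and plug into Lemma~\ref{lemma:restate_weak_app}, apply Lemma~\ref{rctech}, and finally transfer back to $d$ using the $(1\pm O(\eps))$ distortion of $\delta^z$ on trimmed sums. This is exactly the paper's argument (Lemma~\ref{lemma:approximation} and the proof in Appendix~\ref{section:proof_robust}).
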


\begin{proof} 
	%
	For the Euclidean space $\R^d$, by \cite{DBLP:journals/jcss/LiLS01}, we can construct an $\frac{\alpha}{2}$-approximation of $\calG$ defined as in Remark~\ref{remark:robust}, by taking $O(\frac{kd \log k}{\alpha^2})$ uniform samples from $X$.
	Then by Lemma \ref{rctech}, we complete the proof for the Euclidean space.
	
	For doubling metrics, by Lemma~\ref{lemma:approximation}, $\calG_S$ is an $\frac{\alpha}{2}$-approximation of $\calG$ with probability at least $1-\tau$.
	Then by Lemma~\ref{rctech}, $\calG_S$ is also an $(\alpha,0)$-robust coreset of $\calG$ with probability at least $1-\tau$.
	In the following, we condition on the event that $\calG_S$ is an $(\alpha,0)$-robust coreset of $\calG$.

	Now we fix a number $\gamma$ such that $\alpha<\gamma< 1-\alpha$ and a subset $C\in [X]^k$.
	Since $\calG_S$ is an $(\alpha,0)$-robust coreset of $\calG$, we have
	\[
	\frac{\calG^{-(\gamma+\alpha)}(C)}{|\calG|}\leq \frac{\calG_S^{-\gamma}(C)}{|\calS|} \leq  \frac{\calG^{-(\gamma-\alpha)}(C)}{|\calG|}.
	\]
	
	On the other hand, we have $d^z(x,y)\in (1\pm \eps/10)\cdot \delta^z(x,y)$ for any $x,y\in X$, by the definition of $\delta$.
	Then by the same argument as in the proof of Theorem~\ref{thm:coreset},
	\[
	\frac{\calG^{-(\gamma+\alpha)}(C)}{|\calG|}\in (1\pm \eps/10)\cdot \frac{\kdist_z^{-(\gamma+\alpha)}(X, C)}{|X|},
	\]
	\[
	\frac{\calG^{-(\gamma-\alpha)}(C)}{|\calG|} \in (1\pm \eps/10)\cdot \frac{\kdist_z^{-(\gamma-\alpha)}(X, C)}{|X|},
	\]
	\[
	\frac{\calG_S^{-\gamma}(C)}{|\calS|} \in (1\pm \eps/10)\cdot \frac{\kdist_z^{-\gamma}(S, C)}{|S|}.
	\]

	By the above inequalities, we conclude that
	\[
	(1-\eps)\cdot \frac{\kdist_z^{-(\gamma+\alpha)}(X, C)}{|X|} \leq \frac{\kdist_z^{-\gamma}(S, C)}{|S|} \leq (1+\eps)\cdot \frac{\kdist_z^{-(\gamma-\alpha)}(X, C)}{|X|},
	\]
	which completes the proof.
\end{proof}

\section{Distortion Lower Bound for Embedding Snowflake Doubling Metrics into $\ell_2$}
\begin{proposition}
	\label{prop:expander}
	Suppose $0 < z < 1$.
	There exists a metric space $M(X, d)$ such that any embedding of $(X, d^{z})$ into $\ell_2$ has distortion at least $\Omega((\DDim(M)^z)$.
\end{proposition}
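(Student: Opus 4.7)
The plan is to realize $M(X,d)$ as the shortest-path metric on a bounded-degree expander graph, and then to use the expander's Poincar\'e (spectral gap) inequality to rule out small-distortion embeddings of the snowflake $(X,d^z)$ into $\ell_2$.

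First I would fix a constant-degree expander $G=(V,E)$ on $n$ vertices, with $|V|=n$ and spectral gap $\lambda_2=\Omega(1)$, and set $M(X,d):=(V,d_G)$ where $d_G$ is the graph metric. The doubling dimension step is essentially counting: in a graph of degree $\Delta=O(1)$, a ball of radius $r$ contains at most $\Delta^r$ points, so covering a ball of radius $R=\Theta(\log_\Delta n)$ (which covers all of $V$) by balls of radius $R/2$ requires $\Omega(\sqrt n)$ of them, giving $\DDim(M)=\Omega(\log n)$; the upper bound $\DDim(M)\le O(\log n)$ is trivial since any $n$-point metric is $O(\log n)$-doubling. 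Hence $\DDim(M)=\Theta(\log n)$.

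Next, I would run the standard Poincar\'e argument. Let $f\colon V\to\ell_2$ be any embedding of $(V,d_G^z)$ with distortion $D$, normalized so that $\|f(x)-f(y)\|\le d_G^z(x,y)$ and $\|f(x)-f(y)\|\ge d_G^z(x,y)/D$. The spectral gap of $G$ yields a Poincar\'e-type inequality for $\ell_2$-valued functions of the form
\begin{equation*}
\frac{1}{n^2}\sum_{x,y\in V}\|f(x)-f(y)\|^2 \;\le\; \frac{C}{\lambda_2 |E|}\sum_{\{u,v\}\in E}\|f(u)-f(v)\|^2
\end{equation*}
for an absolute constant $C$. Every edge satisfies $d_G(u,v)=1$, so the right side is bounded by $C/(\lambda_2 n)\cdot O(n)=O(1)$. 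On the other hand, an expander has diameter $\Theta(\log n)$ and in fact a constant fraction of pairs are at distance $\Theta(\log n)$, so the left side is at least $\Omega((\log n)^{2z}/D^2)$. Comparing the two bounds gives $D\ge \Omega((\log n)^{z})$, and substituting $\log n=\Theta(\DDim(M))$ yields $D=\Omega(\DDim(M)^{z})$, as required.

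The only real obstacle I foresee is justifying the two quantitative inputs cleanly: the Poincar\'e inequality in $\ell_2$-valued form (which follows from applying the scalar spectral gap inequality coordinatewise and summing) and the claim that a constant fraction of pairs in a bounded-degree expander lie at distance $\Omega(\log n)$ (which follows from the exponential growth bound $|B(x,r)|\le\Delta^r$ applied to each vertex). Both are standard; neither requires new ideas, so the proof reduces to assembling these ingredients carefully and tracking the dependence on $z$ through the snowflake exponent.
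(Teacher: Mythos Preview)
Your argument is correct and takes a somewhat different route than the paper. Both use the shortest-path metric on a bounded-degree expander, but the paper does not redo a Poincar\'e computation. Instead it invokes the Linial--London--Rabinovich lower bound that any $\ell_2$-embedding of $(X,d)$ itself has distortion $\Omega(\log n)$, and then transfers: the identity map from $(X,d^z)$ to $(X,d)$ has distortion at most $\diam(X)^{1-z}=O((\log n)^{1-z})$ (since $d(x,y)/d^z(x,y)\le\diam(X)^{1-z}$ and the minimum distance is $1$), so any $\ell_2$-embedding of $(X,d^z)$ with distortion $D$ composes to one of $(X,d)$ with distortion $O(D\cdot(\log n)^{1-z})$, forcing $D=\Omega((\log n)^z)$. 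Your direct Poincar\'e calculation on the snowflake is more self-contained and makes the mechanism explicit; the paper's black-box-plus-composition argument is shorter and immediately extends to any metric family for which one already has an $\ell_2$ nonembeddability result and a diameter bound. A small remark: to conclude $D=\Omega(\DDim(M)^z)$ you only need the trivial upper bound $\DDim(M)\le\log_2 n$; the matching lower bound you sketch on $\DDim(M)$ is nice context (it shows the example is genuinely high-dimensional) but is not logically required for the stated inequality.
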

\begin{proof}
	Let $k$ be a sufficiently large integer.
	Let $G$ be a constant degree (that is independent of $k$) expander graph with $2^k$ vertices. Let $M(X, d)$ be the shortest path metric of $G$, so $\DDim(M) \leq k$.
	It was shown in~\cite[Proposition 4.2]{DBLP:journals/combinatorica/LinialLR95} that, for any $n$ vertices constant degree expander graph, every embedding of its shortest path metric into $\ell_p$ has distortion at least $\Omega(\log{n})$, for any fixed $1 \leq p \leq 2$. Therefore, every embedding of $M$ into $\ell_2$ has distortion at least $\Omega(k)$.
	
	Observe that $\frac{d(x, y)}{d^z(x, y)} \leq \diam(X)^{1-z}$. This implies $(X, d^z)$ may be embedded into $(X, d)$ with distortion at most $\diam(X)^{1-z}$.\jian{grammar}\shaofeng{fixed.}
	Since it is well known that the diameter of an $n$ vertices constant degree expander graph is $O(\log{n})$, we have $\diam(X) \leq O(k)$.
	So there exists an embedding of $(X, d^z)$ into $(X, d)$ with distortion at most
	$O(k^{1-z})$.
	
	Therefore, if for the contrary that there exists an embedding of $(X, d^z)$ into $\ell_2$ with distortion at most $O(\DDim(M)^z) \leq O(k^z)$, then this would imply an embedding of $(X, d)$ into $\ell_2$ with distortion at most $O(k)$. This leads to a contradiction.\jian{grammar}\shaofeng{fixed.}
\end{proof}


\end{document}